\tikzset{dotmark/.style={circle,fill,inner sep=1.5pt}}
\tikzset{emptymark/.style={circle,draw,fill=white,inner sep=1.5pt}}
\tikzset{crossmark/.style={thick,inner sep=1.5pt}}
\newcommand{\eps}{\varepsilon}
\def\ShowAuthNotes{1}
\newcommand{\authnote}[3]{\textcolor{#3}{[{\bf #1:} { {#2}}]}}
\newcommand{\authnote}[3]{}
\newcommand{\unbalanced}[1]{#1}
\newtheorem{assumption}{Assumption}
\newcommand{\Oh}{\mathcal{O}}
\newcommand{\Ohtilde}{\tilde{\Oh}}
\def\fragmentco#1#2{\bm{[}\,#1\,\bm{.\,.}\,#2\,\bm{)}}
\def\fragmentoc#1#2{\bm{(}\,#1\,\bm{.\,.}\,#2\,\bm{]}}
\def\fragmentoo#1#2{\bm{(}\,#1\,\bm{.\,.}\,#2\,\bm{)}}
\def\fragment#1#2{\bm{[}\,#1\,\bm{.\,.}\,#2\,\bm{]}}
\def\position#1{\bm{[}\,#1\,\bm{]}}
\def\twothirds{{}^2{\mskip -3.5mu/\mskip -3mu}_3\,}
\newcommand{\ed}{\mathsf{ed}}
\newcommand{\rev}{\mathsf{rev}}
\newcommand{\dist}{\mathsf{dist}}
\newcommand{\similarity}{\mathsf{sim}}
\newcommand{\sub}{\mathsf{sub}}
\newcommand{\Left}{\mathsf{l}}
\newcommand{\Right}{\mathsf{r}}
\newcommand{\Root}{\mathsf{root}}
\newcommand{\MUL}{\textsf{MUL}\xspace}
\newcommand{\MonMUL}{\textsf{MonMUL}\xspace}
\newcommand{\TED}{\textsf{TED}\xspace}
\newcommand{\SED}{\textsf{SED}\xspace}
\newcommand{\SASED}{\textsf{SASED}\xspace}
\newcommand{\DISED}{\textsf{DISED}\xspace}
\newcommand{\UDISED}{\textsf{UDISED}\xspace}
\newcommand{\ED}{\textsf{ED}\xspace}
\newcommand{\APSP}{\textsf{APSP}\xspace}
\newcommand{\FED}{\textsf{FED}\xspace}
\newcommand{\UFED}{\textsf{UFED}\xspace}
\newcommand{\ETED}{\textsf{All-Subtrees-TED}\xspace}
\newcommand{\BBD}{\textsf{BBD}\xspace}
\newcommand{\LRBBD}{\textsf{LRBBD}\xspace}
\newcommand{\OMv}{\textsf{OMv}\xspace}
\newcommand{\threeSUM}{\textsf{3SUM}\xspace}
\newcommand{\SETH}{\textsf{SETH}\xspace}
\newcommand{\NCSETH}{\textsf{NC-SETH}\xspace}
\newcommand{\mA}{\mathcal{A}}
\newcommand{\sA}{\mathsf{A}}
\newcommand{\sB}{\mathsf{B}}
\newcommand{\sT}{\mathsf{T}}
\newcommand{\bF}{\mathbf{F}}
\newcommand{\bS}{\mathbf{S}}
\newcommand{\bG}{\mathbf{G}}
\newcommand{\bH}{\mathbf{H}}
\newcommand{\bL}{\mathbf{L}}
\newcommand{\bR}{\mathbf{R}}
\newcommand{\bP}{\mathbf{P}}
\newcommand{\bT}{\mathbf{T}}
\newcommand{\w}{\mathrm{w}}
\newcommand{\D}{\mathrm{D}}
\renewcommand{\epsilon}{\varepsilon}
\DeclarePairedDelimiter\abs{\lvert}{\rvert}
\DeclarePairedDelimiter\norm{\lVert}{\rVert}
\let\oldabs\abs
\def\abs{\@ifstar{\oldabs}{\oldabs*}}
\let\oldnorm\norm
\def\norm{\@ifstar{\oldnorm}{\oldnorm*}}
\def\problembox#1{%
    \vspace{2mm}%
    \noindent\fbox{%
    \begin{minipage}{.985\linewidth}%
        #1
    \end{minipage}%
    }%
    \vspace{2mm}%
}
\newcommand{\defproblem}[3]{%
    \problembox{%
        \textbf{#1}\\
        {\bf{Input:}} #2 ~\\
        {\bf{Output:}} #3
    }%
}
\newcommand{\defproblemwlist}[3]{%
    \problembox{%
        \textbf{#1}\\
        {\bf{Input:}} #2
        {\bf{Output:}} #3
    }%
}
\renewenvironment{cases}{%
  \matrix@check\cases\env@cases
}{%
  \endarray\right.%
}
\def\env@cases{%
  \let\@ifnextchar\new@ifnextchar
  \left\lbrace
  \def\arraystretch{1.1}%
  \array{@{\;}c@{\quad}l@{}}%
}
\def\mid{\ensuremath :}
\def\emptyset{\varnothing}
\newcommand\numberthis{\addtocounter{equation}{1}\tag{\theequation}}
\newcommand\thefont{\expandafter\string\the\font}
\title{Faster Weighted and Unweighted Tree Edit Distance
\texorpdfstring{\\}{}
and APSP Equivalence}
\author{Jakob Nogler}{ETH Zurich,\\Zurich, Switzerland}{jnogler@ethz.ch}{https://orcid.org/0009-0002-7028-2595}{}
\author{Adam Polak}{Bocconi University,\\Milan, Italy}{adam.polak@unibocconi.it}{https://orcid.org/0000-0003-4925-774X}{}
\author{Barna Saha}{University of California, San Diego\\La Jolla, United States}{barnas@ucsd.edu}{https://orcid.org/0000-0002-6494-3839}{}
\author{Virginia Vassilevska Williams}{Massachusetts Institute of Technology\\Cambridge, United States}{virgi@mit.edu}{https://orcid.org/0000-0003-4844-2863}{}
\author{Yinzhan Xu}{University of California, San Diego\\La Jolla, United States}{xyzhan@ucsd.edu}{https://orcid.org/0009-0002-6809-2514}{}
\author{Christopher Ye}{University of California, San Diego\\La Jolla, United States}{czye@ucsd.edu}{https://orcid.org/0009-0004-0528-5639}{}
\authorrunning{J. Nogler, A. Polak, B. Saha, V. Vassilevska Williams, Y. Xu, and C. Ye}
\titlerunning{Faster Weighted and Unweighted Tree Edit Distance and $\APSP$ Equivalence}
\begin{document}
\maketitle
\begin{abstract}
The tree edit distance ({\sf TED}) between two rooted ordered trees with $n$ nodes labeled from an alphabet $\Sigma$ is the minimum cost of transforming one tree into the other by a sequence of valid operations consisting of insertions, deletions and relabeling of nodes. The tree edit distance is a well-known generalization of string edit distance and has been studied since the 1970s. Its running time has seen steady improvements starting with an $\mathcal{O}(n^6)$ algorithm [Tai, J.ACM 1979], improved to $\mathcal{O}(n^4)$ [Shasha, Zhang, SICOMP 1989] and to $\mathcal{O}(n^3\log{n})$ [Klein, ESA 1998], and culminating in an $\mathcal{O}(n^3)$ algorithm [Demaine, Mozes, Rossman, Weimann, ACM TALG 2010]. The latter is known to be optimal for any dynamic programming based algorithm that falls under a certain decomposition framework that captures all known sub-$n^4$ time algorithms. Fine-grained complexity casts further light onto this hardness showing that a truly subcubic time algorithm for {\sf TED} implies a truly subcubic time algorithm for All-Pairs Shortest Paths ({\sf APSP}) [Bringmann, Gawrychowski, Mozes, Weimann, ACM TALG 2020]. Therefore, under the popular {\sf APSP} hypothesis, a truly subcubic time algorithm for {\sf TED} cannot exist. However, unlike many problems in fine-grained complexity for which conditional hardness based on {\sf APSP} also comes with  equivalence to {\sf APSP}, whether {\sf TED} can be reduced to {\sf APSP} has remained unknown. 

In this paper, we resolve this. Not only we show that {\sf TED} is fine-grained {\em equivalent} to {\sf APSP}, our reduction is tight enough, so that combined with the fastest {\sf APSP} algorithm to-date [Williams, SICOMP 2018] it gives the first ever subcubic time algorithm for {\sf TED} running in $n^3/2^{\Omega(\sqrt{\log{n}})}$ time.

We also consider the unweighted tree edit distance problem in which the cost of each edit (insertion, deletion, and relabeling) is one. For unweighted {\sf TED}, a truly subcubic algorithm is known due to Mao [Mao, FOCS 2022], and later improved slightly by D\"{u}rr [D\"{u}rr, IPL 2023] to run in $\mathcal{O}(n^{2.9148})$ time. Since their algorithm uses bounded monotone min-plus product as a crucial subroutine,  and the best running time for this product is $\tilde{\mathcal{O}}(n^{\frac{3+\omega}{2}})\leq \mathcal{O}(n^{2.6857})$ (where $\omega$ is the exponent of fast matrix multiplication), the much higher running time of unweighted {\sf TED} remained unsatisfactory. In this work, we close this gap and give an algorithm for unweighted {\sf TED} that runs in $\tilde{\mathcal{O}}(n^{\frac{3+\omega}{2}})$ time.
\end{abstract}

\newpage

\section{Introduction}

First introduced by Selkow in the late 1970s \cite{Selkow77}
as a generalization of the more than classical (String) Edit Distance Problem (\ED), the
Tree Edit Distance Problem
(\TED) is a problem of significant interest with applications
spanning computational biology
\cite{gusfield_1997,10.1093/bioinformatics/6.4.309,HochsmannTGK03,waterman1995introduction},
structured data analysis \cite{KochBG03,Chawathe99,FerraginaLMM09},
image processing \cite{BellandoK99,KleinTSK00,KleinSK01,SebastianKK04},
 compiler optimization \cite{DMRW10} and more.

In the classical formulation of \TED, two rooted trees, \( \bT \) and \( \bT' \), are given, 
with nodes arranged in a left-to-right order and labeled from a set \( \Sigma \).
The goal is to compute the \emph{tree edit distance} between \( \bT \) and \( \bT' \),
denoted by \( \ed(\bT, \bT') \),
 defined as the minimum cost required to transform \( \bT \) into \( \bT' \)
using a sequence of valid operations, which can be of three types:
changing a label \( \ell \) to \( \ell' \) at a cost \( \delta(\ell,\ell') \);
removing a vertex with label \( \ell \) at a cost \( \delta(\ell, \epsilon) \), while reattaching its children to its parent in the original order; or
inserting a vertex with label \( \ell \) at a cost \( \delta(\epsilon, \ell) \)
between an existing node and a subsequence of consecutive children of that node.
In the \emph{unweighted tree edit distance} problem, we specify all operations to have cost $1$.

Over the past three decades, the algorithms for \TED have
 been progressively improved, culminating in the current best-known $\Oh(n^3)$ time algorithm by Demaine, Mozes, Rossman, and Weinmann \cite{DMRW10} for two trees on $n$ nodes
(see \cref{tab:complexities} for a summary). 

For the unweighted \TED, a slightly subcubic running time $\Oh(n^{2.9546})$ was recently shown by Mao~\cite{M22} and later improved by D{\"{u}}rr \cite{Durr23} to $\Oh(n^{2.9148})$. 
Nevertheless,
even the $\Oh(n^3)$ time algorithm for the more general weighted \TED was not easy to obtain. The previous best \cite{Klein98} ran in $\Oh(n^3\log n)$ time, and it is unclear whether further logs can be shaved. 

\begin{center}
{\em Question 1: Is there an $o(n^3)$ time algorithm for \TED?}
\end{center}

To address this, \cite{BGMW20} use fine-grained complexity. They show that a truly subcubic time ($\Oh(n^{3-\eps})$ for constant $\eps>0$) algorithm for \TED would imply a truly subcubic time algorithm for the All-Pairs Shortest Paths (\APSP) problem, thus refuting the popular \APSP hypothesis of fine-grained complexity (see the survey \cite{vsurvey}) which states that $n^{3-o(1)}$ is needed for \APSP on $n$-node graphs in the word-RAM model of computation.

This reduction explained why the exponent of the running time is stuck at $3$ but did not address whether any tiny sub-polynomial improvement can be obtained over $n^3$. More frustratingly, no reduction {\em from \textnormal{\TED} to \textnormal{\APSP}} is known to exist. This makes \TED seem {\em different} from the other problems for which \APSP-based conditional lower bounds have been proven (e.g. graph radius, Wiener index, dynamic maximum matching etc., see \cite{vsurvey,VW18,popular14,HenzingerKNS15}) which are all either known to be fine-grained {\em equivalent} to \APSP or whose hardness can be based on even harder problems, such as \OMv \cite{HenzingerKNS15}, or the Exact Triangle problem (which is known to be at least as hard as both \threeSUM and \APSP, see \cite{vsurvey}). 
A tantalizing open question is thus:

\begin{center}
{\em Question 2: Is \TED fine-grained equivalent to \APSP, or can its hardness be based on a harder problem such as Exact Triangle?}
\end{center}

\TED was originally conceived as a generalization of string Edit Distance, and the latter problem has been shown to be very hard within fine-grained complexity: Edit Distance requires $n^{2-o(1)}$ not only under the Orthogonal Vectors conjecture and the Strong Exponential Time Hypothesis (\SETH) \cite{LBStringED15} but also under even more believable hypotheses  such as \NCSETH, and even $\Oh(n^2/\log^c(n))$ time algorithms for Edit Distance for large enough $c$ would imply so-far unattainable Circuit Lower Bounds \cite{LB3StringED15}.

Because of all this, it is conceivable that \TED is similarly difficult  and that (w.r.t. Question 1) only small polylogarithmic improvements are potentially attainable and (w.r.t. Question 2) it is truly harder than \APSP.

\begin{table}[t]
   \centering
   \begin{tabular}{lll} %
      \toprule
      \textbf{Work} & \textbf{Setting} & \textbf{Complexity} \\
      \midrule
      Tai \cite{Tai79} & weighted & $\Oh(n^6)$ \\
      Shasha, Zhang \cite{ShashaZhang89} & weighted & $\Oh(n^4)$ \\
      Klein \cite{Klein98} & weighted & $\Oh(n^3 \log n)$ \\
      Demaine, Mozes, Rossman, Weimann \cite{DMRW10} & weighted & $\Oh(n^3)$ \\
      Bringmann, Gawrychowski, Mozes, Weinmann \cite{BGMW20} & weighted & no $\Oh(n^{3 - \epsilon})$ algorithm under \APSP \\
      Mao \cite{M22} & unweighted & $\Ohtilde(n^{(4 \omega + 12)/(\omega + 5)}) = \Oh(n^{2.9148})$ \\
      \textbf{This work} & \textbf{weighted} & $\bm{n^3/2^{\Omega(\sqrt{\log n})}}$ \\
      \textbf{This work} & \textbf{unweighted} & $\bm{\Ohtilde(n^{(3 + \omega)/2})} = \bm{\Oh(n^{2.6857})}$ \\
      \bottomrule
   \end{tabular}
   \medskip
   \caption{Computational bounds of \TED across different works. 
   In the unweighted setting, complexities are computed using the best known bound on the matrix multiplication exponent $\omega \leq 2.371339$ from \cite{ADVXXZ24}.
   The stated bound for \cite{M22} additionally uses results on Rectangular Bounded Monotone Min-plus Product by D\"{u}rr \cite{Durr23}.
   }
   \label{tab:complexities}
\end{table}

A third question concerns the unweighted \TED problem. Mao \cite{M22} showed that a truly subcubic running time is possible for the problem. The current record for the running time is by D\"{u}rr \cite{Durr23} and is $\Ohtilde(n^{(4\omega + 12)/(\omega + 5}) = \Oh(n^{2.9148})$.\footnote{$\Ohtilde$ hides poly-logarithmic factors. }
Here, $\omega$ is the $n\times n$ matrix multiplication exponent, where one can multiply two $n \times n$ matrices in $\Oh(n^{\omega+\eps})$ time for all $\eps>0$.\footnote{Slightly abusing notation, we omit this $\eps$ from the rest of the paper, which is in line with the literature. The reader should be aware that in many running times in the literature that one states in terms of $\omega$, a secret $\eps$ is always hiding. The running times with a numerical-valued exponent use a strict upper bound on $\omega$ (such as the $\Oh(n^{2.9546})$ running time for unweighted \TED of \cite{M22}), so this $\eps$ does not appear. }
The current bound on $\omega$, due to \cite{ADVXXZ24}, is $\omega \leq 2.371339$.

There are many structured variants of problems related to \APSP
and for a very large number of them, shortly after a truly subcubic time algorithm was found, an $\Ohtilde(n^{(3+\omega)/2})$ time (or better) algorithm was also found.
There are many such examples, we present a few: 
\begin{itemize}
\item The All Pairs Bottleneck Paths \cite{apbp} and All pairs nondecreasing paths \cite{nondecpaths} problems were first shown to have truly subcubic time algorithms in the first decade of the century and are now both known to be solvable in $\Ohtilde(n^{(3+\omega)/2})$ time \cite{DuanJW19,DuanP09}. 

\item The Min-Plus product of $n\times n$ matrices (given $A, B$, compute $C$ with $C[i,j]=\min_k (A[i,k]+B[k,j]$) is known to be equivalent to \APSP in $n$-node graphs \cite{fischermeyer} so that under the \APSP Hypothesis it requires $n^{3-o(1)}$ time. 
Various structured variants of the Min-Plus matrix product have been shown to have truly subcubic time algorithms, e.g. when the matrices have ``bounded differences'' \cite{BGSV19} or whose entries are bounded by $\Oh(n)$ and are monotone (non-decreasing in the rows or columns) \cite{DBLP:conf/soda/WilliamsX20, DBLP:conf/icalp/Gu0WX21}. Later, \cite{CDXZ22} showed that these variants can be solved in $\Ohtilde(n^{(3+\omega)/2})$ time. These structured Min-Plus products have many applications for a variety of fundamental problems, e.g. to many sequence similarity problems such as Language Edit Distance (aka Scored Parsing \cite{aho1972minimum, s17, BGHS19}), RNA Folding \cite{NJ80, akutsu1999RNA, zakov2011RNA, venkatachalam2014RNA, s17} and Dyck Edit Distance \cite{s14,BGHS19,das2021improved,fried2024improved}. All of these problems now have $\Ohtilde(n^{(3+\omega)/2})$ time algorithms.
\end{itemize}

As \TED seems related to \APSP and since its unweighted version is now known to have a truly subcubic algorithm, a natural question is:
\begin{center}{\em Question 3: Can unweighted \TED be solved in $\Ohtilde(n^{(3+\omega)/2})$ time?}\end{center}

Mao's truly subcubic time algorithm for unweighted \TED \cite{M22} can be viewed as a reduction to  Bounded Monotone Min-Plus product which can be solved in $\Ohtilde(n^{(3+\omega)/2})$ time. The reduction, however, is not efficient enough to achieve the same running time for unweighted \TED. One way to resolve Question 3 is by presenting a tight reduction. Is such a reduction possible?

\paragraph*{Our Results.}
We resolve all three questions above.
Similarly to the prior work on \TED, we focus on solving the more general edit distance problem on ordered {\em forests} (rather than just single trees): collections of rooted trees with a left-to-right ordering. Our first main theorem is a fine-grained reduction from (forest) \TED to the Min-Plus product problem mentioned earlier:

\defproblem
{Min-Plus Matrix Multiplication (\MUL)}
{Two $m \times m$ matrices $A = (a_{i,j})$ and $B = (b_{i,j})$.}
{The distance matrix $C = A \star B$, where $C = (c_{i,j})$ is defined as $c_{i,j} = \min_{k \in \fragment{1}{m}} \{a_{i,k} + b_{k,j}\}$.}

Let $\sT_{\MUL}(N)$ denote the running time for solving \MUL on two $N\times N$ matrices, or equivalently (\cite{fischermeyer}) for solving \APSP in $N$ node graphs.

The formal theorem for our reduction is as follows:
\begin{restatable}{mtheorem}{TEDtheorem}
    \label{thm:ted}
    Let $\bF, \bF'$ be forests of size $n = \abs{\bF}, m = \abs{\bF'}$ with $n \geq m$.
    Then, there is an algorithm computing Tree Edit Distance between $\bF, \bF'$ in time $\Ohtilde\left((n/m)^{1+o(1)} \cdot \left(\sT_{\MUL}(m) + m^{2 + o(1)}\right)\right)$.
\end{restatable}

If $m=n$, the theorem states that \TED on $n$-node forests can be solved in $\Ohtilde(\sT_{\MUL}(n))$.
We thus complete the missing direction in establishing the equivalence between \TED and \APSP, resolving Question 2. 

In fact, because our reduction is very efficient and only adds polylogarithmic factors over the \APSP running time, we are also able to resolve Question 1 using Williams' \cite{DBLP:journals/siamcomp/Williams18} \( m^3 / 2^{\Omega(\sqrt{\log m})} \) running time for \APSP.

\begin{corollary}\label{cor:ted}
    Let $\bF, \bF'$ be forests.
    Then, there is an algorithm for \TED running in time $n^3/2^{\Omega(\sqrt{\log n})}$, where $n = \max(|\bF|,|\bF'|)$. \lipicsEnd
\end{corollary}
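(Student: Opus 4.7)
The plan is straightforward: apply \cref{thm:ted} and instantiate $\sT_{\MUL}$ with Williams'~\cite{DBLP:journals/siamcomp/Williams18} subcubic \APSP algorithm, which via the standard \APSP/\MUL equivalence~\cite{fischermeyer} gives $\sT_{\MUL}(m) \leq m^3/2^{\Omega(\sqrt{\log m})}$. Since \TED is symmetric in its two arguments, we may first swap inputs if needed so that $n = |\bF| \geq m = |\bF'|$, and \cref{thm:ted} then bounds the running time by
\[
\Ohtilde\!\left((n/m)^{1+o(1)} \cdot \left(m^{3}/2^{\Omega(\sqrt{\log m})} + m^{2+o(1)}\right)\right).
\]

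From here I would simplify with a short case split on $m$ versus $n$. When $m \leq \sqrt{n}$, both $(n/m)^{1+o(1)} \cdot \sT_{\MUL}(m)$ and $(n/m)^{1+o(1)} \cdot m^{2+o(1)}$ are easily bounded by $n^{2+o(1)}$, which is far under the target. When $m > \sqrt{n}$, we have $\sqrt{\log m} \geq \sqrt{\log n}/\sqrt{2}$, so $2^{\Omega(\sqrt{\log m})} = 2^{\Omega(\sqrt{\log n})}$, and the dominant term reduces to $n^{1+o(1)} m^{2-o(1)}/2^{\Omega(\sqrt{\log n})} \leq n^{3+o(1)}/2^{\Omega(\sqrt{\log n})}$.

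The only step requiring real care — and the piece of bookkeeping I expect to be the main (minor) obstacle — is ensuring that the $n^{o(1)}$ overhead coming from \cref{thm:ted} does not erase Williams' $2^{\Omega(\sqrt{\log n})}$ savings. Concretely, one needs the hidden $o(1)$ to satisfy $n^{o(1)} = 2^{o(\sqrt{\log n})}$ (for instance, if the $o(1)$ is of order $O(1/\sqrt{\log n})$), which should be read off directly from the proof of \cref{thm:ted}: its $o(1)$ terms originate from polylogarithmic factors and a controlled recursion whose depth contributes at most sub-$\sqrt{\log n}$ additive factors in the exponent, strictly dominated by $2^{\sqrt{\log n}}$. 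Once this is verified, the $n^{o(1)}$ is absorbed by shaving a tiny constant from $\Omega(\sqrt{\log n})$, completing the proof.
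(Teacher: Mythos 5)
Your plan matches the paper's intent exactly: \cref{cor:ted} is stated without an explicit proof because it is meant to follow directly from \cref{thm:ted} instantiated with Williams' \APSP algorithm via the \APSP/\MUL equivalence, and the $m \le \sqrt{n}$ versus $m > \sqrt{n}$ split you use is the right way to fill in the details.

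The one piece of reasoning I would push back on is the framing of the $n^{o(1)}$ issue. You claim one needs the hidden $o(1)$ in \cref{thm:ted} to satisfy $n^{o(1)} = 2^{o(\sqrt{\log n})}$, suggesting the proof of \cref{thm:ted} yields an exponent of order $O(1/\sqrt{\log n})$. Looking at where that $o(1)$ actually comes from (the recursions in \cref{cor:bbd_algo}, \cref{cor:lrbbd_algo}, \cref{cor:dised}, \cref{lem:dised_unbalanced}), it is a ``for every fixed constant $\epsilon>0$'' bound, with the hidden constant depending on $\epsilon$ through the branching parameter $\alpha$; there is no reason to expect it to decay like $1/\sqrt{\log n}$, and one would need to be careful because the constant blows up as $\epsilon\to 0$. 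Fortunately, you do not need such a fast decay: instantiate with any fixed $\epsilon<1$, say $\epsilon=0.1$, giving a running time of the form
\[
\polylog(n)\cdot (n/m)^{1+\epsilon}\bigl(m^3/2^{\Omega(\sqrt{\log m})}+m^{2+\epsilon}\bigr).
\]
Since $m\le n$, we have $(n/m)^{1+\epsilon}m^{3}\le n^{1+\epsilon}m^{2-\epsilon}\le n^3$, so the first term is at most $\polylog(n)\cdot n^3/2^{\Omega(\sqrt{\log m})}$, which is $n^3/2^{\Omega(\sqrt{\log n})}$ when $m\ge\sqrt{n}$ and at most $n^{2+\epsilon/2}\cdot\polylog(n)$ when $m<\sqrt{n}$; the second term $(n/m)^{1+\epsilon}m^{2+\epsilon}\le n^{1+\epsilon}m\le n^{2+\epsilon}$ is always polynomially below $n^3$. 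It is this polynomial slack (plus the fact that $\polylog(n)=2^{O(\log\log n)}=2^{o(\sqrt{\log n})}$), not a sub-$\sqrt{\log n}$ decay of the theorem's $o(1)$ exponent, that absorbs all the overhead. So your conclusion is correct, but the justification for the absorption step should be replaced by this ``fix $\epsilon$ and use polynomial slack'' argument.
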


Beyond providing a faster algorithm for \TED, \cref{cor:ted} underscores once again
the difference in nature between \ED and \TED by demonstrating that,
while we (probably) cannot eliminate an arbitrary number of logarithmic factors for the former,
we can do so for the latter.

The last several \TED algorithms (prior to ours) all use the same dynamic programming approach which was
formalized as a decomposition framework for solving \TED \cite{DT03,DT05}. Formalizing the framework allowed for proving lower bounds on the running time of any algorithm that falls into that framework. The first such lower bound was by \cite{DT03} who showed such algorithms must take $\Omega(n^2\log^2 n)$ time. The approach culminated in obtaining an $\Omega(n^3)$ time lower bound \cite{DMRW10}  for any algorithm that falls within the framework.
Since our algorithm runs faster than cubic time, it falls outside the framework.

We also resolve Question 3 by providing an $\Ohtilde(n^{(3+\omega)/2})$ time algorithm for unweighted \TED. This is a significant improvement over Mao's result \cite{M22}.

We achieve our result via a tight reduction to the Bounded Monotone Min-Plus product problem.

\defproblem
{Bounded Monotone Min-Plus Matrix Multiplication (\MonMUL)}
{An $m\times n$ matrix  $A = (a_{i,j})$ and an $n\times \ell$ matrix $B = (b_{i,j})$ such that either for all $i\in [m], j\in [n]$, $a_{i,j}\leq a_{i,j+1}$ (``row-monotone'') or for all $i\in [m], j\in [n]$, $a_{i,j}\leq a_{i+1,j}$ (``column-monotone'').}
{The distance matrix $C = A \star B$, where $C = (c_{i,j})$ is defined as $c_{i,j} = \min_{k \in \fragment{1}{n}} \{a_{i,k} + b_{k,j}\}$.}

When $m=n=\ell$ and $D=\Oh(n)$, $\MonMUL$ can be solved in $\Ohtilde(n^{(3 + \omega)/2})$ \cite{CDXZ22}. 

We denote by $\sT_{\MonMUL}(m,n,\ell,D)$ the running time for computing $\MonMUL$. When $m=n=\ell=D$, we simply write $\sT_{\MonMUL}(m)$ for the running time of $\MonMUL$.

\begin{restatable}{mtheorem}{unweightedTED}
    \label{thm:unweighted-ted}
    Let $\bF, \bF'$ be two forests of size $n = \abs{\bF}, m = \abs{\bF'}$ with $n \geq m$. Suppose that $\sT_{\MonMUL}(N, N, N, \D) = \Oh(f(N) g(\D))$.
    Then, there is an algorithm computing Unweighted Tree Edit Distance between $\bF, \bF'$ in time $\Ohtilde\left((n/m)^{1+o(1)} \cdot \left( \sT_{\MonMUL}(m) + m^{2 + o(1)} g(m) \right)\right)$.
\end{restatable}

The best known bound for the $\MonMUL$ running time \cite{CDXZ22,Durr23} is $\sT_{\MonMUL}(N, N, N, D) = \Ohtilde(N^{(2+\omega)/2} D^{1/2})$. Hence, \cref{thm:unweighted-ted} implies that given two forests of equal size, we can compute Unweighted Tree Edit Distance in $\Ohtilde(m^{(3+\omega)/2})$ time, matching the complexity of bounded monotone min-plus product~\cite{CDXZ22} and resolving Question 3.

More generally, we obtain the following result.

\begin{corollary}
    \label{cor:unweighted-ted}
   There is an algorithm for Unweighted \TED running in time $n^{1+o(1)} m^{(1 + \omega)/2}$, where $n = \max(|\bF|,|\bF'|), m = \min(|\bF|, |\bF'|)$. \lipicsEnd
\end{corollary}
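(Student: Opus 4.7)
The plan is to derive this corollary as a direct substitution into Theorem~\ref{thm:unweighted-ted}. First I would invoke the current best bound on bounded monotone min-plus product, namely $\sT_{\MonMUL}(N, N, N, \D) = \Ohtilde(N^{(2+\omega)/2} \D^{1/2})$ due to~\cite{CDXZ22,Durr23}. This identifies the factorisation required by the hypothesis of Theorem~\ref{thm:unweighted-ted}, with $f(N) = N^{(2+\omega)/2}$ and $g(\D) = \D^{1/2}$ (up to the polylogarithmic factors absorbed by $\Ohtilde$).

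Next, I would evaluate the two terms appearing inside the bracket of the running time from Theorem~\ref{thm:unweighted-ted}. The first term is $\sT_{\MonMUL}(m) = \Ohtilde(m^{(2+\omega)/2} \cdot m^{1/2}) = \Ohtilde(m^{(3+\omega)/2})$, while the second term is $m^{2 + o(1)} g(m) = m^{5/2 + o(1)}$. The observation that makes this clean is $\omega \geq 2$, which gives $(3+\omega)/2 \geq 5/2$, so the first term dominates and the bracket collapses to $\Ohtilde(m^{(3+\omega)/2})$.

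Finally, I would multiply by the outer $(n/m)^{1 + o(1)}$ factor from Theorem~\ref{thm:unweighted-ted}, yielding
\[
(n/m)^{1+o(1)} \cdot m^{(3+\omega)/2} \;=\; n^{1+o(1)} \cdot m^{(3+\omega)/2 - 1 - o(1)} \;=\; n^{1+o(1)} \cdot m^{(1+\omega)/2 - o(1)},
\]
which is at most $n^{1+o(1)} m^{(1 + \omega)/2}$ since $m \geq 1$, matching the claim. There is no real obstacle here: the corollary is a purely algebraic consequence of Theorem~\ref{thm:unweighted-ted} combined with the current best $\MonMUL$ upper bound, so the entire difficulty has been offloaded to the reduction theorem itself.
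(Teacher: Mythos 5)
Your proposal is correct and follows essentially the same path the paper takes: the corollary is obtained by plugging the D\"urr/CDXZ bound $\sT_{\MonMUL}(N,N,N,\D)=\Ohtilde(N^{(2+\omega)/2}\D^{1/2})$ into Theorem~\ref{thm:unweighted-ted} (which the paper in turn derives from Theorem~\ref{thm:unweighted-used} via Lemma~\ref{lem:sed_to_ted}), observing that $\omega\geq 2$ makes the $\sT_{\MonMUL}(m)$ term dominate, and simplifying $(n/m)^{1+o(1)}\cdot m^{(3+\omega)/2}$ to $n^{1+o(1)}m^{(1+\omega)/2}$. A slightly cleaner version of your final step is to absorb $(n/m)^{o(1)}\leq n^{o(1)}$ directly, but your bound via $m\geq 1$ is equally valid.
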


\subparagraph*{At the core of our results: \TED alignment graphs.}

The edit distance between two length-$n$ strings
can be represented by examining the classical dynamic computation on an $\fragment{1}{(n+1)}\times\fragment{1}{(n+1)}$ grid.
The solution is then traced by following the shortest path from $(1,1)$ to $(n+1,n+1)$.
Such graph-based representation is commonly referred to as an \emph{alignment graph}.

The power of this representation is evident in the numerous results
for string \ED that have either emerged directly from this visualization or can be clearly illustrated through it (including but not limited to \cite{LV88,T06,CKW23,GKS19,CKW23, KNW24, GJKT24,CKM20, Kociumaka23, GK24}). 
Consequently, it is not surprising that specific properties of alignment graphs, such as border-to-border distances, play a central role. For string \ED, these distances can be computed in $\Oh(n^{2})$ time \cite{S95, T06, ACS08, K05}.

Alignment graphs for \TED were introduced as a counterpart to those for \ED and serve as a visualization tool for \TED solutions. However, they appear predominantly in less recent works \cite{T05, BCHMRWZ07, MTWZ09} and have played a less central role than in \ED.

In this work, we reestablish the alignment graph for \TED as a central tool and demonstrate that it provides a more powerful language and visualization framework for capturing the combinatorial structure of solutions than previously recognized. As a technical contribution, we show that in the alignment graph for \TED, border-to-border distances can be computed in \APSP time, while for unweighted \TED, they can be determined in $\Ohtilde(n^{(3 + \omega)/2})$.

\paragraph*{Prior attempts to reduce \TED to \APSP and \MUL.}

At least two prior papers have attempted to leverage min-plus products for computing \TED. 
The first of these works was by Chen \cite{CHEN01},
who introduced a dynamic programming scheme formulated using applications of \MUL.
However, this did not result in a true reduction, as the algorithm ran in $\Oh(n^4)$ time (Chen initially claimed a running time of $\Oh(n^{3.5})$,
but this was later corrected in \cite{SPA17}).
The second work, by \cite{M22},
uses bounded difference min-plus products (a special case of bounded monotone min-plus multiplication)
to achieve truly subcubic time for the unweighted (and very small weight) tree edit distance problem. However, Mao's scheme was not powerful enough to achieve a fine-grained reduction from the general weight case of \TED to \MUL.

\subparagraph*{Other related works.}

Recently, there has been significant interest in \TED approximation algorithms \cite{BGHS19, Seddighin22}
as well as \TED when the distance is bounded \cite{AkmalJin21, Kociumaka22, Kociumaka23}.

\section{Techical Overview}
\label{sec:to}

\subsection{Similarity and alignment graphs}

We examine \TED through its \emph{mapping formulation} (as in~\cite{M22}).
This means that rather than thinking of \TED as finding a least-cost transformation via insertions, deletions, and matches,
we shift our perspective towards seeking a maximum weight mapping between two forests.

\subparagraph*{String similarity.}

To ease into this formulation,
we first discuss the mapping formulation of (weighted) string edit distance.
In the (String) Edit Distance Problem (\ED), we are given two strings
$A = a_1 a_2 \cdots a_n$ and $B = b_1 b_2 \cdots b_m$,
and a cost function $\delta$.
The goal is to find the least cost needed to transform $A$ into $B$,
by substituting, inserting or deleting characters.
The costs $\delta(a_i, b_j)$, $\delta(a_i, \epsilon)$, and $\delta(\epsilon, b_j)$
describe the cost of substituting $a_i$ with $b_j$, deleting $a_i$, and inserting $b_j$, respectively.

We define $\eta(a_i, b_j) \coloneqq \delta(a_i, \epsilon) + \delta(\epsilon, b_j) - \delta(a_i, b_j)$
to be \emph{the weight of matching $a_i$ with $b_j$}.
Determining the string edit distance between $A$ and $B$ translates
into calculating the \emph{similarity between $A$ and $B$},
defined as
\[
   \similarity(A, B) =
   \max\nolimits_{\substack{i_1 < \cdots < i_k \in \fragment{1}{n}\\j_1 < \cdots < j_k \in \fragment{1}{m}}}
   \Big\{ \
      \eta(a_{i_1}, b_{j_1}) + \eta(a_{i_2}, b_{j_2}) + \cdots + \eta(a_{i_k}, b_{j_k})
   \ \Big\}.
\]
Thereby, we obtain $\similarity(A, B) = \sum_i \delta(a_i, \epsilon) + \sum_j \delta(\epsilon, b_j) - \ed(A, B)$.

\subparagraph*{String alignment graphs.}

The value $\similarity(A, B)$ can be computed using the following recurrence
\begin{align*}%
   \similarity(A, B) =
   \begin{cases}
       0, & \text{if $|A| = 0$ or $|B| = 0$,} \\
       \max
       \left\{\begin{aligned}
           &\similarity(A\fragment{2}{n}, B),\\
           &\similarity(A, B\fragment{2}{m}), \\
           &\similarity(A\fragment{2}{n}, B\fragment{2}{m}) + \eta(a_1, b_1)
       \end{aligned}\right\},
       & \text{otherwise}.
   \end{cases}
\end{align*}
These computations can be reformulated as finding a longest path on a directed weighted acyclic graph,
commonly referred to as the \emph{alignment graph}.
This graph has as the vertex set the grid $\fragment{1}{(n+1)} \times \fragment{1}{(m+1)}$,
where each node $(i, j)$ is connected with an edge of weight zero to its right and upper neighbors, i.e., $(i+1, j)$ and $(i, j+1)$, if they are within the borders.
Additionally, from each node $(i, j) \in \fragment{1}{n} \times \fragment{1}{m}$, there is an edge to $(i+1, j+1)$ with weight $\eta(a_i, b_j)$.
Computing $\similarity(A, B)$ translates to finding a longest path between $(1, 1)$ and $(n+1, m+1)$ in this graph.
Each time the longest path traverses an edge from $(i, j)$ to $(i+1, j+1)$ we map $a_i$ to $b_j$.

It is worth noting that in the literature several studies~\cite{S95, T06, ACS08, K05} have focused on computing all longest distances from the lower-left border to the upper-right border in an alignment graph.
For both weighted and unweighted edit distance, this task can be accomplished in time $\mathcal{O}((n + m)^2)$.

\subparagraph*{Tree similarity.}

Similarly, given two forests $\bF$ and $\bF'$,
define $\eta(v, v') \coloneqq \delta(v, \epsilon) + \delta(\epsilon, v') - \delta(v, v')$.
The mapping we consider for \TED corresponds to two sequences of distinct nodes $v_1, \ldots v_k \in \bF$ and $v_1', \ldots v_k' \in \bF'$ such that for all $1 \leq i < j \leq k$:
\begin{itemize}
   \item $v_i$ is an ancestor of $v_j$ in $\bF$ if and only if $v_i'$ is an ancestor of $v_j'$ in $\bF'$,
   \item $v_j$ is an ancestor of $v_i$ in $\bF$ if and only if $v_j'$ is an ancestor of $v_i'$ in $\bF'$, and
   \item if neither $v_i$ nor $v_j$ is the ancestor of the other, $v_i$ comes before $v_j$ in the pre-order traversal of $\bF$
   if and only if $v_i'$ comes before $v_j'$ in the pre-order traversal of $\bF$.
\end{itemize}
The \emph{similarity between $\bF$ and $\bF'$}, denoted as $\similarity(\bF, \bF')$, maximizes $\sum_{1 \le i \le k}{\eta(v_i, v_i')}$,
where the maximum is taken over all such mappings.

As for strings, $\similarity(\bF, \bF') = \sum_{v\in \bF} \delta(v, \epsilon) + \sum_{v'\in\bF'} \delta(\epsilon, v') - \ed(\bF, \bF')$.

\subparagraph*{Forest alignment graphs.}
Suppose we are given $\similarity(\sub(v), \sub(v'))$ for all $v \in \bF$ and $v' \in \bF'$,
where $\sub(v)$ indicates the subtree rooted at node $v$.
Then, the value $\similarity(\bF, \bF')$ can be computed using Shasha and Zhang's recurrence scheme \cite{SZ89}.
Given forests $\bF$ and $\bF'$ with pre-order $v_1, \ldots, v_{|\bF|}$ and $v_1', \ldots, v_{|\bF'|}'$, they compute:

\begin{align*}%
   \similarity(\bF, \bF') =
   \begin{cases}
       0, & \text{if $\bF = \emptyset$ or $\bF' = \emptyset$,} \\
       \max
       \left\{\begin{aligned}
           &\similarity(\bF \setminus v_1, \bF'),\\
           &\similarity(\bF, \bF' \setminus v_1'), \\
           &\similarity(\bF \setminus \sub(v_1), \bF' \setminus \sub(v_1')) + \similarity(\sub(v_1), \sub(v_1'))
       \end{aligned}\right\},
       & \text{otherwise}.
   \end{cases}
\end{align*}

Once again,
these computations can be rephrased as finding the longest path in a directed acyclic graph with a grid as the vertex set,
but only under the condition that we have the values $\similarity(\sub(v), \sub(v'))$ for all $v \in \bF$ and $v' \in \bF'$.
With this condition in place, we can construct a grid $\fragment{1}{(\abs{\bF}+1)} \times \fragment{1}{(\abs{\bF'}+1)}$.
Similar to the case of strings, each node $(i, i')$ is connected to its right and upper neighbors with an edge of weight zero,
i.e., $(i+1, i')$ and $(i, i'+1)$ (if they are within the borders).
Additionally, from each node $(i, i') \in \fragment{1}{\abs{\bF}} \times \fragment{1}{\abs{\bF'}}$,
there is an edge to $(j, j')$ with weight $\similarity(\sub(v_i), \sub(v_{i'}'))$.
Here, $j$ and $j'$ are the smallest integers $j \geq i$ and $j' \geq i'$ such that $v_j \notin \sub(v_i)$ and $v_{j'}' \notin \sub(v_{i'}')$.

For an illustration (and a more formal definition) of such an alignment graph,
refer to \cref{fig:align_graph} in \cref{sec:fed},
where we reduce to \APSP the problem of computing all longest distances from the lower-left border to the upper-right in a forest alignment graph.

\subsection{Warm-up: \TED on caterpillar trees.}

To better understand our algorithm for \TED,
let us first examine a more restricted yet significant case.

\subparagraph*{Caterpillar trees.}
\emph{Caterpillar trees} consist of a central path with nodes labeled $c_1, c_2, \ldots, c_n$, where $c_1$ is the root.
Additional, each central node $c_i$ has both a left child $l_i$ and a right child $r_i$.\footnote{
In standard literature, caterpillar trees can take a more general form. Here, we focus on a specific type of caterpillar tree, though we continue to refer to them simply as caterpillar trees.}

\begin{figure}[htbp]
   \centering
   \usetikzlibrary{matrix}

\begin{tikzpicture}[>=stealth, scale=0.8]
    \foreach \x in {0,...,6}
        \foreach \y in {0,...,6} {
            \node[circle, draw, fill, scale=0.5] (\x\y) at (\x,\y) {};
            \pgfmathtruncatemacro{\lbx}{\x + 1}
            \pgfmathtruncatemacro{\lby}{\y + 1}

            \ifnum\y=0
                \node[below=10pt] at (\x\y) {\lbx};
            \fi
            \ifnum\x=0
                \node[left=10pt] at (\x\y) {\lby};
            \fi
        }

    \foreach \x in {0,...,5}
        \foreach \y in {0,...,6} {
            \pgfmathtruncatemacro{\nextx}{\x + 1}
            \draw[->] (\x\y) -- (\nextx\y);
        }
    \foreach \x in {0,...,6}
        \foreach \y in {0,...,5} {
            \pgfmathtruncatemacro{\nexty}{\y + 1}
            \draw[->] (\x\y) -- (\x\nexty);
        }

    \foreach \x in {0,...,5}
        \foreach \y in {0,...,5} {
            \pgfmathtruncatemacro{\nextx}{\x + 1}
            \pgfmathtruncatemacro{\nexty}{\y + 1}
            \draw[->] (\x\y) -- (\nextx\nexty);
        }

    \draw[dashed] (-0.35, -0.35) -- (-0.35, 1.35) -- (1.35, 1.35) -- (1.35, -0.35) -- (-0.35,-0.35);
    \draw[dashed] (-6.5, -0.5) -- (-3, -0.5) -- (-3, 4) -- (-6.5, 4) -- (-6.5,-0.5);
    \draw[dashed] (-3, -0.5) -- (-0.35, -0.35);
    \draw[dashed] (-3, 4) -- (-0.35, 1.35);

    \draw[blue, opacity=0.7, line width=1.5pt, shift={(-0.07,0.07)}] (0,0) -- (0,1) -- (1,2) -- (1, 3) -- (2, 3) -- (3,3) -- (4,4) -- (4,5) -- (5,5) -- (5,6) -- (6, 6);
    \draw[teal, opacity=0.7, line width=1.5pt, shift={(0.07,-0.07)}] (0,0) -- (0,1) -- (1,2) -- (2, 2) -- (3,3) -- (3,4) -- (4,5) -- (5,5) -- (6, 6);

    \node[circle, draw, red, thick] (gc1) at (0, 1) {};
    \node[circle, draw, red, thick] (gc2) at (3, 3) {};
    \node[circle, draw, red, thick] (gc3) at (4, 5) {};

    \begin{scope}[shift={(-12, 5)}]
        \foreach \x in {0,...,5} {
            \node[circle, draw, fill, scale=0.5] (c\x) at (0,-\x) {};
            \node[circle, draw, fill, scale=0.5] (l\x) at (-0.5,-\x-0.5) {};
            \node[circle, draw, fill, scale=0.5] (r\x) at (+0.5,-\x-0.5) {};
            \draw[->] (c\x) -- (l\x);
            \draw[->] (c\x) -- (r\x);
        }
        \foreach \x in {0,...,4} {
            \pgfmathtruncatemacro{\nextx}{\x + 1}
            \draw[->] (c\x) -- (c\nextx);
        }
        \node[circle, draw, red] (mc1) at (0, 0) {};
        \node[circle, draw, red] (mc2) at (0, -3) {};
        \node[circle, draw, red] (mc3) at (0, -4) {};

        \draw[blue, opacity=0.3, line width=4pt] (-0.5,0) -- (-0.5,-6);
        \draw[teal, opacity=0.3, line width=4pt] (+0.5,0) -- (+0.5,-6);

        \node at (-1.2, 0) {$\bT$};
    \end{scope}

    \begin{scope}[shift={(-9, 5)}]
        \foreach \x in {0,...,5} {
            \node[circle, draw, fill, scale=0.5] (cp\x) at (0,-\x) {};
            \node[circle, draw, fill, scale=0.5] (lp\x) at (-0.5,-\x-0.5) {};
            \node[circle, draw, fill, scale=0.5] (rp\x) at (+0.5,-\x-0.5) {};
            \draw[->] (cp\x) -- (lp\x);
            \draw[->] (cp\x) -- (rp\x);
        }
        \foreach \x in {0,...,4} {
            \pgfmathtruncatemacro{\nextx}{\x + 1}
            \draw[->] (cp\x) -- (cp\nextx);
        }
        \node[circle, draw, red] (mc1p) at (0, -1) {};
        \node[circle, draw, red] (mc2p) at (0, -3) {};
        \node[circle, draw, red] (mc3p) at (0, -5) {};

        \draw[blue, opacity=0.3, line width=4pt] (-0.5,0) -- (-0.5,-6);
        \draw[teal, opacity=0.3, line width=4pt] (+0.5,0) -- (+0.5,-6);

        \node at (1.2, 0) {$\bT'$};
    \end{scope}

    \begin{scope}[shift={(-1, 0.2)}]
        \draw[red] (mc1) -- (mc1p);
        \draw[red] (mc2) -- (mc2p);
        \draw[red] (mc3) -- (mc3p);

        \node[circle, draw, fill, scale=0.5, color=blue] (l11) at (-5,0) {};
        \node[circle, draw, fill, scale=0.5, color=blue] (l21) at (-4,0) {};
        \node[circle, draw, fill, scale=0.5, color=blue] (l12) at (-5,1) {};
        \node[circle, draw, fill, scale=0.5, color=blue] (l22) at (-4,1) {};
        \draw[->, blue] (l11) to node[below] {$0$} (l21);
        \draw[->, blue] (l11) to node[left] {$0$} (l12);
        \draw[->, blue] (l11) to node[right] {$\eta(l_1, l_1')$} (l22);
    \end{scope}

    \begin{scope}[shift={(-0.5, 2)}]
        \node[circle, draw, fill, scale=0.5, color=teal] (r11) at (-5,0) {};
        \node[circle, draw, fill, scale=0.5, color=teal] (r21) at (-4,0) {};
        \node[circle, draw, fill, scale=0.5, color=teal] (r12) at (-5,1) {};
        \node[circle, draw, fill, scale=0.5, color=teal] (r22) at (-4,1) {};
        \draw[->, teal] (r11) to node[below] {$0$} (r21);
        \draw[->, teal] (r11) to node[left] {$0$} (r12);
        \draw[->, teal] (r11) to node[right] {$\eta(r_1, r_1')$} (r22);
    \end{scope}

    \draw[dotted] (l11) -- (r11);
    \draw[dotted] (l12) -- (r12);
    \draw[dotted] (l21) -- (r21);
    \draw[dotted] (l22) -- (r22);

\end{tikzpicture}
   \caption{Overlaying the alignment graphs for string edit distance for $(L, L')$ and $(R, R')$
   leads to an intuitive visualization of \TED on caterpillar trees under the assumption that central nodes are only mapped to central nodes, left children only to left ones, and right children only to right ones (\cref{assm:cated_mapping}).
   The problem can be visualized as two paths in the two graphs, which, whenever they intersect, allow mapping of central nodes to central nodes.}
   \label{fig:cat_trees}
\end{figure}
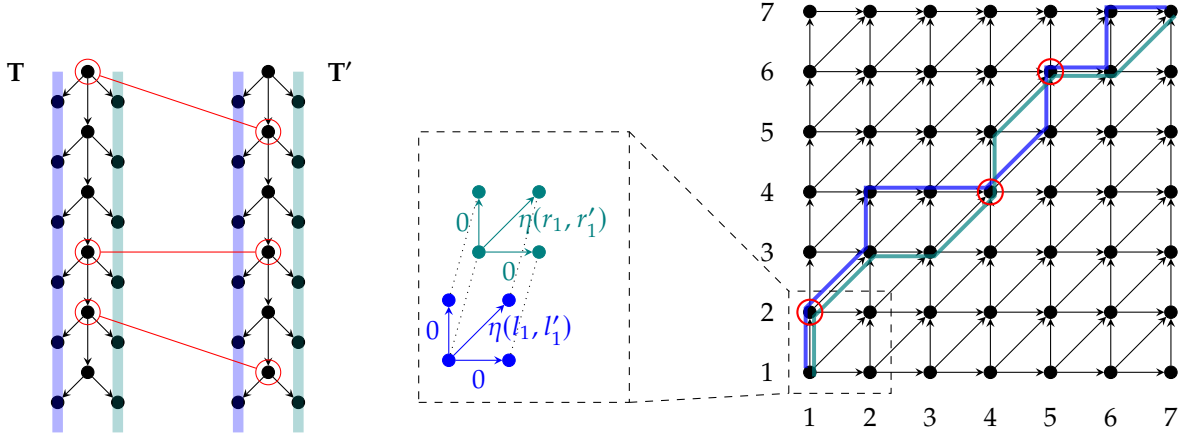

Let us examine the similarity mapping for two such caterpillar trees $\bT$ and $\bT'$,
of size $\abs{\bT} = 3n, \abs{\bT'} = 3n'$ with node sets $\{l_i, c_i, r_i\}_{i\in \fragment{1}{n}}$ and $\{l_i', c_i', r_i'\}_{i\in \fragment{1}{n'}}$.
We examine such a mapping under an additional simplifying assumption: nodes from one side of the caterpillar $\bT$ are always mapped to nodes of the same side of the caterpillar $\bT'$.

\begin{assumption}\label{assm:cated_mapping}
The nodes $\{c_i\}_i$, $\{l_i\}_i$ and $\{r_i\}_i$ are always only mapped by $\similarity(\bT, \bT')$ to nodes in  $\{c_i'\}_i$, $\{l_i'\}_i$ and $\{r_i'\}_i$, respectively.
\lipicsEnd
\end{assumption}

Under \cref{assm:cated_mapping}, suppose $\similarity(\bT, \bT')$ maps $c_{i}$ to $c_{i'}'$ and $c_j$ to $c_{j'}'$ for $i < j$, and no further node $c_{i+1}, \ldots, c_{j-1}$ is mapped.
Then, the nodes from $l_{i}, \ldots, l_{j-1}$ are mapped to nodes from $l_{i'}', \ldots, l_{j'-1'}$
as in $\similarity(L\fragmentco{i}{j}, L'\fragmentco{i'}{j'})$,
where $L = l_{1} \cdots l_{n}, L' = l_{1'}' \cdots l_{n'}'$ are strings built from the left children of the central nodes.
Similarly, the nodes from $r_{i}, \ldots, r_{j-1}$ are mapped to nodes from $r_{i'}', \ldots, r_{j'-1'}$
as in $\similarity(R\fragmentco{i}{j}, R'\fragmentco{i'}{j'})$,
where $R = r_{1} \cdots r_{n}, R' = r_{1'}' \cdots r_{n'}'$ are strings built from the right children of the central nodes.

This brings us to the visualization of $\similarity(\bT, \bT')$ (under \cref{assm:cated_mapping}) illustrated in \cref{fig:cat_trees}.
Consider a $\fragment{1}{(n+1)} \times \fragment{1}{(n'+1)}$ grid, and overlay on it the
alignment graphs for string edit distance for $(L, L')$ and $(R, R')$.
Then, $\similarity(\bT, \bT')$ can be visualized as two paths from $(1,1)$ to $(n+1,n'+1)$ in the two respective alignment graphs.
Whenever these paths intersect at a grid point $(i,i')$, $\similarity(\bT, \bT')$ has the opportunity to map $c_i$ to $c_{i'}'$, provided neither $c_i$ nor $c_{i'}'$ has been mapped previously.

Henceforth, we denote by $\similarity((x, x'), (y,y'))$ the maximum value achievable by a sum of three terms:
\begin{enumerate}[(1)]
    \item the weight of a path from $(x, x')$ to $(n+1, n'+1)$ in the alignment graph of $\similarity(L, L')$;
    \item the weight of a path from $(y, y')$ to $(n+1, n'+1)$ in the alignment graph of $\similarity(R, R')$; and
    \item a sum of values of the form $\eta(c_i, c_{i'}')$ for $(i, i')$ where the two paths intersect,
provided each $c_i$ and $c_{i'}'$ appears at most once.
\end{enumerate}
By the previous discussion, $\similarity(\bT, \bT') = \similarity((1, 1), (1,1))$ under \cref{assm:cated_mapping}.

\subparagraph*{Reduction of caterpillar \TED (under \cref{assm:cated_mapping}) to \APSP.}
To compute $\similarity((1, 1), (1,1))$,
we utilize a divide-et-impera scheme.
Given a rectangle in the grid defined by the lower-left corner $(a,a')$ and the upper-right corner $(b,b')$,
along with $\similarity((x, x'), (y,y'))$ for all $(x,x'),(y,y')$ on the upper-right border of the rectangle,
our task is to determine $\similarity((x, x'), (y,y'))$ for all $(x,x'),(y,y')$ on the lower-left border of the rectangle.
Note that computing the inputs becomes trivial when the rectangle we are considering is the whole grid.

\begin{figure}[htbp]
   \centering
   \usetikzlibrary{decorations.pathmorphing}

\tikzset{snake it/.style={decorate, decoration=snake}}

\begin{tikzpicture}
    \draw[thick] (0,0) node [left] {$(1, 1)$} rectangle (10,6) node [right] {$(n+1, n'+1)$};
    \draw[thick] (2,1) node [below left] {$(a, a')$} rectangle (7,4) node [above right] {$(b, b')$};

    \draw[dashed, thick] (4.5,1) node [below] {$(r, a')$} -- (4.5,4) node [above] {$(r,b')$};

    \draw [rounded corners, color=teal] (4.5, 2) node [left] {$(r,y')$} -- (6, 2) -- (6, 2.5) -- (7, 2.5) node[below right] {$(w, w')$} -- (7.5, 2.5) -- (7.5, 3) -- (8, 3) -- (8, 5.25) -- (8.5, 5.25) -- (8.5, 5.5) -- (9.5, 5.5) -- (9.5, 6) -- (10, 6){};
    \draw [rounded corners, color=blue] (3, 4) node [below] {$(x,b')$} -- (3, 4.5) -- (4, 4.5) -- (4, 5) -- (8, 5) -- (9, 5) -- (9, 5.75) -- (10, 5.75) -- (10, 6){};

    \node[circle, fill, draw, red, thick, scale=0.4] (gc1) at (8, 5) {};
    \node[circle, fill, draw, red, thick, scale=0.4] (gc1) at (9, 5.5) {};

    \draw[orange, line width=4pt, opacity=0.5] (2,4) -- (7,4) -- (7,1);
    \draw[purple, line width=4pt, opacity=0.5] (2,4) -- (2,1) -- (7,1);

\end{tikzpicture}
   \caption{
      The figure illustrates an instance of the recursive scheme used to solve \TED on caterpillars.
      The inputs can be visualized as fixing the starting points of the two paths on the upper right border (in orange) of the rectangle,
      and we are determining the maximum value achievable from there onwards.
      We are required to compute the same values for the lower left border (in purple).
      The divide-et-impera scheme divides the rectangle vertically into two smaller ones,
      parameterized by the corners $(a,a'), (r,b')$ and $(r,a'), (b,b')$.
      The figure also demonstrates how to compute the inputs for the scheme on the left subrectangle for one specific case:
      one path leaves the upper border, and the other exits from the right border.
   }
   \label{fig:rec_scheme}
\end{figure}
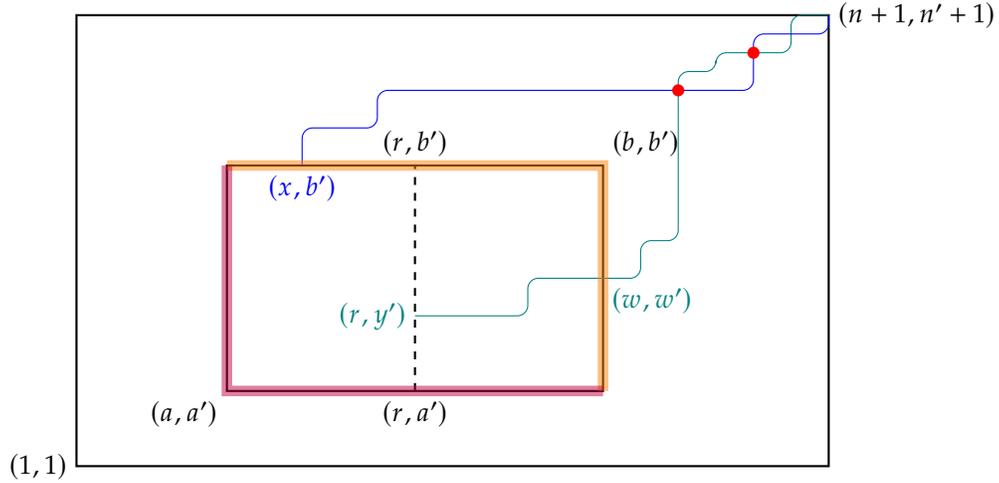

Let $a < r < b$. We split the rectangle vertically into two smaller rectangles (see \cref{fig:rec_scheme}):
one with corners $(a,a')$ and $(r,b')$ and the other with corners $(r,a')$ and $(b,b')$.
We aim to recurse on those two subrectangles.

For the right subrectangle, we can directly recurse since its inputs are a subset of those for the full rectangle.

For the left subrectangle, we must compute its input before recursing on it. First, let us consider $\similarity((x,x'), (y,y'))$ for all $(x,x')$, $(y,y') \in \fragment{a}{r} \times b'$. These values corresponds to the case where both paths cross the upper border $\fragment{a}{r} \times b'$ in \cref{fig:rec_scheme}. Note that these values are part of the input for the entire rectangle.

Similarly, when both paths cross the right border of the left subrectangle, i.e. $r \times \fragment{a'}{b'}$, no work needs to be done, as we can obtain the inputs from the right subrectangle's output.

Next, suppose the path in $\similarity(L, L')$ crosses the upper border at $(x, b')$ for some $x \in \fragment{a}{r}$,
and the path in $\similarity(R, R')$ crosses the right border at $(r, y')$ for some $y' \in \fragment{a'}{b'}$.
Then, we can decompose the path in $\similarity(R, R')$ into two parts (see \cref{fig:rec_scheme}), as
\begin{align} 
    \label{eq:max-plus}
   \similarity((x, b'), (r, y')) = \max_{(w,w') \in (\fragment{r}{b} \times b') \cup (b \times \fragment{a'}{b'})} \Big \{ \ \similarity(R\fragmentco{r}{w}, R'\fragmentco{y'}{w'}) + \similarity((x, b'), (w, w')) \ \Big\}.
\end{align}
In this maximization, the latter summands are part of the input for the full rectangle,
and the former summands are border-to-border paths in an alignment graph which can be computed in time $\Oh(m^2)$,
where $m = \max(b - a, b' - a')$.
\Cref{eq:max-plus} can be computed concurrently for all such $x$ and $y'$
via a max-plus product\footnote{In max-plus products, the minimum operator is replaced by a maximum. Note that min and max products are equivalent, as one can be transformed into the other by appropriately reversing the signs of the matrices. For the remainder of this paper, we treat them as equivalent.} in time $\Oh(\sT_{\MUL}(m))$.

The case where the path in $\similarity(L, L')$ crosses the right border of the left subrectangle,
and the the path in $\similarity(R, R')$ crosses the upper border of the left subrectangle can be handled symmetrically.
This allows us now to recurse on the left subrectangle as well.

It remains to discuss how to patch together the outputs of the two subrectangles
to get the outputs of the full rectangle.
For the sake of brevity, we omit discussing this here, but it is not difficult to see that
by employing similar calculations to those before (border-to-border paths in an alignment graph
combined with the outputs of the two subrectangles via max-plus products), we can calculate
$\similarity((x, x'), (y,y'))$ for all $(x,x'),(y,y')$ on the lower-left border of the rectangle,
ignoring the contribution arising from $c_a$ or $c_{a'}'$ being mapped to other central nodes.
With some additional computations
(which requires redefining $\similarity((x, x'), (y,y'))$ to compute $2^4$ values instead of one, depending on whether central nodes $c_a,c_b, c_{a'}',c_{b'}'$ were already mapped or not), we can factor in these contributions and compute the desired output for the entire rectangle.

This shows how the divide-et-impera scheme can handle vertical cuts.
By symmetry of the problem, the scheme can handle horizontal cuts as well,
allowing us to split a single instance in four roughly equal parts to recurse on.

For a rough analysis, let us focus on the case when $n = n' = 2^k$
for which the scheme always recurses on squares.
This results in the recurrence relation $\sT(n) = 4 \cdot \sT(n/2) + \Oh(\sT_{\MUL}(n))$,
which implies $\sT(n) = \Oh(\sT_{\MUL}(n))$.\footnote{Here, we assume $\sT_{\MUL}(n) = \Omega(n^{2 + \epsilon})$ for some small $\epsilon > 0$, a reasonable assumption given the widely accepted conjecture that no truly subcubic algorithms exist for \MUL.}

\subparagraph*{Getting rid of \cref{assm:cated_mapping}.}

Up to this point, we argued how to determine the similarity between two caterpillar trees, $\bT$ and $\bT'$,
under \cref{assm:cated_mapping}.
In the general case, the mapping of $\similarity(\bT, \bT')$ (when observed from the root downwards) preserves
this assumption up to some point,
i.e., there exist $a,b, a', b'$ such that nodes in $l_{1}, \ldots, l_{a-1}$ are only mapped to nodes in
$l_{1}', \ldots, l_{a'-1}'$, nodes in $r_{1}, \ldots, r_{b-1}$ are only mapped to nodes in
$r_{1}', \ldots, r_{b'-1}'$, nodes in $c_{1}, \ldots, c_{\min(a, b)-1}$ are only mapped to nodes in $c_{1}', \ldots, c_{\min(a', b')-1}'$.
Then, one of two cases can occur:
\begin{enumerate}[(a)]
\item $l_a$ is mapped to $c_{a'}'$, nodes in $l_{a+1}, \ldots, l_{b-1}$ are exclusively mapped to nodes $r_{a'-1}', \ldots, r_{b'+1}'$,
and $c_b$ is mapped to $r_{b'}'$.
The first and third conditions are optional; if they do not hold, we include $l_a$, $r_{a'}'$, and $l_b$, $r_{b'}'$,
respectively, in the middle condition.
\label{it:map:a}
\item$r_a$ is mapped to $c_{a'}'$, nodes in $r_{a-1}, \ldots, r_{b+1}$ are mapped to nodes $l_{a'+1}', \ldots, l_{b'-1}'$,
and $c_b$ is mapped to $l_{b'}'$.
As before, the first and third conditions are optional; if they do not hold, we include $r_a$, $r_{b}$ and $l_{a'}'$, $l_{b'}'$,
respectively, in the middle condition.
\label{it:map:b}
\end{enumerate}
Without loss of generality, we focus on the former case, as the two cases are symmetric.

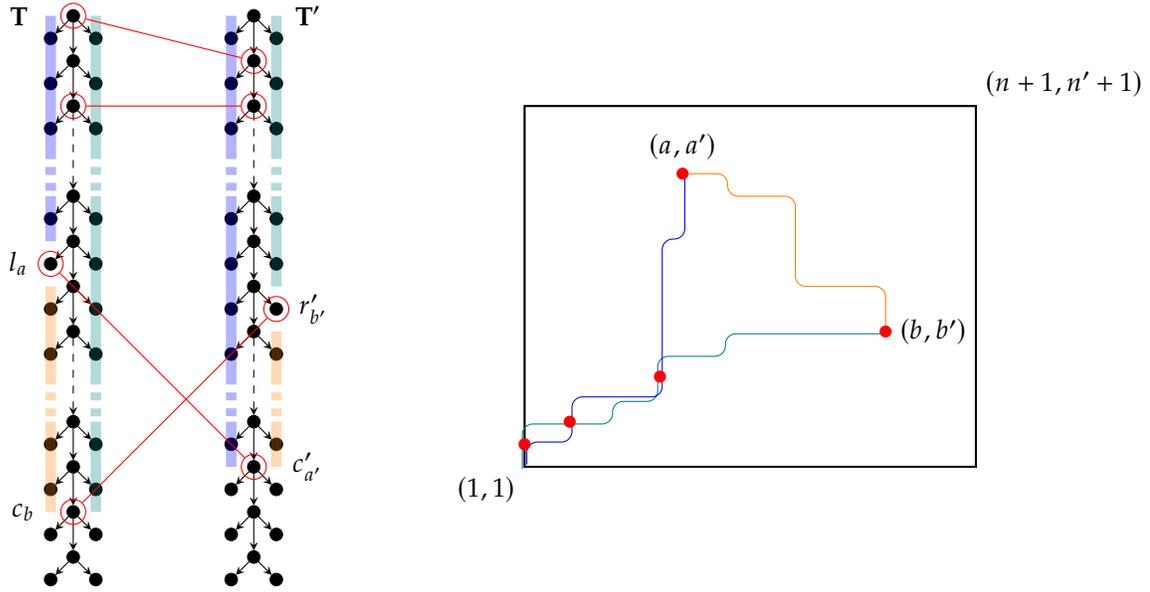
\begin{figure}[htbp]
   \centering
   \usetikzlibrary{matrix}

\begin{tikzpicture}[>=stealth, scale=0.6]

    \draw[thick] (0,-2) node[below left] {$(1, 1)$} rectangle (10,6) node[above right] {$(n+1, n'+1)$};

    \draw[rounded corners, teal, shift={(-0.05,-0.05)}] (0,-2) -- (0,-1) -- (2,-1) -- (2, -0.5) -- (3,-0.5) -- (3, 0.5) -- (4.5, 0.5) -- (4.5, 1) -- (8, 1);
    \draw[rounded corners, blue, shift={(0.05,0.05)}] (0,-2) -- (0,-1.5) -- (1,-1.5) -- (1, -0.5) -- (2, -0.5) -- (3,-0.5) -- (3, 3) -- (3.5,3) -- (3.5, 4.5);
    \draw[rounded corners, orange] (3.5, 4.5) -- (4.5, 4.5) -- (4.5,4)  -- (6, 4) -- (6, 2) -- (8, 2) -- (8, 1);

    \node[circle, fill, draw, red, thick, scale=0.4] (gc1) at (0, -1.5) {};
    \node[circle, fill, draw, red, thick, scale=0.4] (gc2) at (1, -1) {};
    \node[circle, fill, draw, red, thick, scale=0.4] (gc3) at (3, 0) {};

    \node[circle, fill, draw, red, thick, scale=0.4, label={right:$(b,b')$}] (gc4) at (8, 1) {};
    \node[circle, fill, draw, red, thick, scale=0.4, label={above:$(a,a')$}] (gc5) at (3.5, 4.5) {};

    \begin{scope}[shift={(-10, 8)}]
        \foreach \x in {0,...,2} {
            \node[circle, draw, fill, scale=0.5] (c\x) at (0,-\x) {};
            \node[circle, draw, fill, scale=0.5] (l\x) at (-0.5,-\x-0.5) {};
            \node[circle, draw, fill, scale=0.5] (r\x) at (+0.5,-\x-0.5) {};
            \draw[->] (c\x) -- (l\x);
            \draw[->] (c\x) -- (r\x);
        }
        \foreach \x in {0,...,1} {
            \pgfmathtruncatemacro{\nextx}{\x + 1}
            \draw[->] (c\x) -- (c\nextx);
        }

        \foreach \x in {4,...,7} {
            \node[circle, draw, fill, scale=0.5] (c\x) at (0,-\x) {};
            \node[circle, draw, fill, scale=0.5] (l\x) at (-0.5,-\x-0.5) {};
            \node[circle, draw, fill, scale=0.5] (r\x) at (+0.5,-\x-0.5) {};
            \draw[->] (c\x) -- (l\x);
            \draw[->] (c\x) -- (r\x);
        }
        \foreach \x in {4,...,6} {
            \pgfmathtruncatemacro{\nextx}{\x + 1}
            \draw[->] (c\x) -- (c\nextx);
        }

        \foreach \x in {9,...,12} {
            \node[circle, draw, fill, scale=0.5] (c\x) at (0,-\x) {};
            \node[circle, draw, fill, scale=0.5] (l\x) at (-0.5,-\x-0.5) {};
            \node[circle, draw, fill, scale=0.5] (r\x) at (+0.5,-\x-0.5) {};
            \draw[->] (c\x) -- (l\x);
            \draw[->] (c\x) -- (r\x);
        }
        \foreach \x in {9,...,11} {
            \pgfmathtruncatemacro{\nextx}{\x + 1}
            \draw[->] (c\x) -- (c\nextx);
        }

        \draw[dashed, ->] (c2) -- (c4);
        \draw[dashed, ->] (c7) -- (c9);

        \node[circle, draw, red] (mc1) at (0, 0) {};
        \node[circle, draw, red] (mc2) at (0, -2) {};
        \node[circle, draw, red, label={left:$l_a$}] (mc3) at (-0.5, -5.5) {};
        \node[circle, draw, red, label={[label distance=6pt]left:$c_b$}] (mc4) at (0, -11) {};

        \draw[blue, opacity=0.3, line width=4pt] (-0.5,0) -- (-0.5,-3);
        \draw[blue, opacity=0.3, line width=4pt, dashed] (-0.5,-3) -- (-0.5,-4);
        \draw[blue, opacity=0.3, line width=4pt] (-0.5,-4) -- (-0.5,-5);

        \draw[teal, opacity=0.3, line width=4pt] (0.5,0) -- (0.5,-3);
        \draw[teal, opacity=0.3, line width=4pt, dashed] (0.5,-3) -- (0.5,-4);
        \draw[teal, opacity=0.3, line width=4pt] (0.5,-4) -- (0.5,-8);
        \draw[teal, opacity=0.3, line width=4pt, dashed] (0.5,-8) -- (0.5,-9);
        \draw[teal, opacity=0.3, line width=4pt] (0.5,-9) -- (0.5,-11);

        \draw[orange, opacity=0.3, line width=4pt] (-0.5,-6) -- (-0.5,-8);
        \draw[orange, opacity=0.3, line width=4pt, dashed] (-0.5,-8) -- (-0.5,-9);
        \draw[orange, opacity=0.3, line width=4pt] (-0.5,-9) -- (-0.5,-11);

        \node at (-1.2, 0) {$\bT$};
    \end{scope}

    \begin{scope}[shift={(-6, 8)}]
        \foreach \x in {0,...,2} {
            \node[circle, draw, fill, scale=0.5] (cp\x) at (0,-\x) {};
            \node[circle, draw, fill, scale=0.5] (lp\x) at (-0.5,-\x-0.5) {};
            \node[circle, draw, fill, scale=0.5] (rp\x) at (+0.5,-\x-0.5) {};
            \draw[->] (cp\x) -- (lp\x);
            \draw[->] (cp\x) -- (rp\x);
        }
        \foreach \x in {0,...,1} {
            \pgfmathtruncatemacro{\nextx}{\x + 1}
            \draw[->] (cp\x) -- (cp\nextx);
        }

        \foreach \x in {4,...,7} {
            \node[circle, draw, fill, scale=0.5] (cp\x) at (0,-\x) {};
            \node[circle, draw, fill, scale=0.5] (lp\x) at (-0.5,-\x-0.5) {};
            \node[circle, draw, fill, scale=0.5] (rp\x) at (+0.5,-\x-0.5) {};
            \draw[->] (cp\x) -- (lp\x);
            \draw[->] (cp\x) -- (rp\x);
        }
        \foreach \x in {4,...,6} {
            \pgfmathtruncatemacro{\nextx}{\x + 1}
            \draw[->] (cp\x) -- (cp\nextx);
        }

        \foreach \x in {9,...,12} {
            \node[circle, draw, fill, scale=0.5] (cp\x) at (0,-\x) {};
            \node[circle, draw, fill, scale=0.5] (lp\x) at (-0.5,-\x-0.5) {};
            \node[circle, draw, fill, scale=0.5] (rp\x) at (+0.5,-\x-0.5) {};
            \draw[->] (cp\x) -- (lp\x);
            \draw[->] (cp\x) -- (rp\x);
        }
        \foreach \x in {9,...,11} {
            \pgfmathtruncatemacro{\nextx}{\x + 1}
            \draw[->] (cp\x) -- (cp\nextx);
        }

        \draw[dashed, ->] (cp2) -- (cp4);
        \draw[dashed, ->] (cp7) -- (cp9);

        \node[circle, draw, red] (mc1p) at (0, -1) {};
        \node[circle, draw, red] (mc2p) at (0, -2) {};
        \node[circle, draw, red, label={[label distance=6pt]right:$c_{a'}'$}] (mc3p) at (0, -10) {};
        \node[circle, draw, red, label={right:$r_{b'}'$}] (mc4p) at (0.5, -6.5) {};

        \draw[teal, opacity=0.3, line width=4pt] (0.5,0) -- (0.5,-3);
        \draw[teal, opacity=0.3, line width=4pt, dashed] (0.5,-3) -- (0.5,-4);
        \draw[teal, opacity=0.3, line width=4pt] (0.5,-4) -- (0.5,-6);

        \draw[blue, opacity=0.3, line width=4pt] (-0.5,0) -- (-0.5,-3);
        \draw[blue, opacity=0.3, line width=4pt, dashed] (-0.5,-3) -- (-0.5,-4);
        \draw[blue, opacity=0.3, line width=4pt] (-0.5,-4) -- (-0.5,-8);
        \draw[blue, opacity=0.3, line width=4pt, dashed] (-0.5,-8) -- (-0.5,-9);
        \draw[blue, opacity=0.3, line width=4pt] (-0.5,-9) -- (-0.5,-10);

        \draw[orange, opacity=0.3, line width=4pt] (0.5,-7) -- (0.5,-8);
        \draw[orange, opacity=0.3, line width=4pt, dashed] (0.5,-8) -- (0.5,-9);
        \draw[orange, opacity=0.3, line width=4pt] (0.5,-9) -- (0.5,-10);

        \node at (1.2, 0) {$\bT'$};
    \end{scope}

    \draw[red] (mc1) -- (mc1p);
    \draw[red] (mc2) -- (mc2p);
    \draw[red] (mc3) -- (mc3p);
    \draw[red] (mc4) -- (mc4p);

\end{tikzpicture}
   \caption{The general case for similarity mappings on caterpillar trees.}
   \label{fig:cat_map}
\end{figure}

This scenario is depicted in \cref{fig:cat_map}
and can be visualized by overlaying an additional string similarity graph onto the existing ones in the grid (specifically, the one involved in $\similarity(L, \rev(R'))$,
where $\rev(R')$ denotes the reversed string $R'$).
When overlaying this similarity graph, we ensure that the indices of $\rev(R')$ align with the indexing of the grid,
and the paths in this string alignment graph travel from the upper-left border to the lower-right border.
The two paths that were previously observed now extend from $(1,1)$ to $(a,a')$ and $(b,b')$,
where $l_a$ is mapped to $c_{a'}'$ and $c_b$ is mapped to $l_{b'}'$.
Within this new alignment graph, we search for a path connecting $(a+1,a')$ with $(b,b'+1)$.
If $l_a$ is not mapped to $c_{a'}'$ or $c_b$ is not mapped to $l_{b'}'$, then this path starts/ends at $(a,a')$ and $(b,b')$, respectively.

The details necessary for considering this final path can be incorporated into the earlier divide-et-impera framework,
although it necessitates additional inputs and outputs for the scheme. For the sake of brevity, we omit details here.

\subsection{From \TED on caterpillar trees to spine edit distance}

\subparagraph*{Spine edit distance.}

Our algorithm for \TED on caterpillar trees
generalizes to a problem known as \emph{Spine Edit Distance} (\SED),
initially proposed in~\cite{BGHS19}.

In \SED, besides two forests $\bF$ and $\bF'$,
we are provided with a spine for each forest.
A spine $\bS \subseteq \bF$ is any root-to-leaf path within a forest
(in cases where the forest contains multiple trees, the spine can start from any root of a tree contained in the forest).
In \SED we are given the similarity between all pairs of subtrees of $\bF$ and $\bF'$,
provided at least one of the two roots does not lie on a spine.
The task is to compute the similarity for all missing pairs of subtrees.
Put more formally:

\defproblem
{Spine Edit Distance (\SED)}
{Two forests $\bF,\bF'$, two spines $\bS \subseteq \bF, \bS'\subseteq \bF'$, and $\similarity(\sub(v), \sub(v'))$ for all $(v, v') \in (\bF \times \bF') \setminus (\bS \times \bS')$.}
{$\similarity(\sub(v), \sub(v'))$ for all $(v, v') \in \bS \times \bS'$.}

As already noticed in~\cite{BGHS19}, \TED
can be reduced to \SED by employing appropriate tree decompositions.
In \cref{sec:ted}, we will prove the following:

\begin{restatable}{lemma}{sedtoted}\label{lem:sed_to_ted}
Suppose there exists an algorithm for \SED on two forests $\bH,\bH'$
running in time $\sT_{\SED}(m, m') = \Oh(f(m) g(m'))$,
where $m = |\bH|, m' = |\bH'|$ and $f(m) = \Omega(m), g(m') = \Omega(m')$ are some functions.
Then, there is an algorithm for \TED on two forests
$\bF,\bF'$ running in time $\Oh(f(n) g(n') \log^2 \max(n', n))$,
where $n = |\bF|, n' = |\bF'|$.
\end{restatable}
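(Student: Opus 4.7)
The plan is to reduce \TED on $(\bF, \bF')$ to a collection of \SED subproblems by means of a \emph{heavy-path decomposition} (HPD) of each input, exploiting the Shasha--Zhang recurrence that expresses $\similarity(\bF,\bF')$ through similarities of subtree pairs. After prepending a virtual root to each forest (so that both become single trees, at a constant-factor cost in $n$), I classify each non-root node as \emph{heavy} if it has the largest subtree among its siblings and \emph{light} otherwise; roots count as light by convention. For every light node $u$, the maximal downward path from $u$ along heavy edges is a root-to-leaf path $\bS_u \subseteq \sub(u)$ and thus a valid spine. The standard HPD invariant that every node has $\Oh(\log n)$ light ancestors gives $\sum_{u \text{ light}} |\sub(u)| = \Oh(n \log n)$.

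I then enumerate all pairs $(u, u')$ of light nodes with $u \in \bF$ and $u' \in \bF'$ in any order refining the product partial order by subtree containment, for instance a topological sort by increasing $(|\sub(u)|, |\sub(u')|)$. For each such pair, I invoke the assumed \SED algorithm on $(\sub(u), \sub(u'))$ with spines $(\bS_u, \bS_{u'})$; the call outputs and I store in a global table $\similarity(\sub(v), \sub(v'))$ for every $(v,v') \in \bS_u \times \bS_{u'}$. The final answer $\similarity(\bF, \bF')$ is extracted from the top-level call at the virtual-root pair.

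Correctness reduces to verifying that each \SED call's inputs are ready when it is invoked: we need $\similarity(\sub(v), \sub(v'))$ for every $(v, v') \in (\sub(u) \times \sub(u')) \setminus (\bS_u \times \bS_{u'})$. If $v \notin \bS_u$, let $u_v$ be the unique light node whose spine contains $v$; by the HPD structure $u_v$ is a strict descendant of $u$ inside $\sub(u)$. Taking $u'_{v'}$ analogously (with $u'_{v'} = u'$ when $v' \in \bS_{u'}$), the pair $(u_v, u'_{v'})$ is strictly smaller than $(u, u')$ in at least one coordinate, hence was processed earlier and produced $\similarity(\sub(v), \sub(v'))$ as one of its outputs. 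The symmetric case $v' \notin \bS_{u'}$ is identical.

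For the running time, the total work is bounded by
\[
\sum_{(u, u') \text{ light}} \sT_{\SED}(|\sub(u)|, |\sub(u')|) \leq \Big(\sum_{u \text{ light in } \bF} f(|\sub(u)|)\Big) \Big(\sum_{u' \text{ light in } \bF'} g(|\sub(u')|)\Big).
\]
The HPD invariant, combined with the mild regularity that $f(m)/m$ is non-decreasing (automatic for the polynomial $f$ in our applications), yields $\sum_u f(|\sub(u)|) = \Oh(f(n)\log n)$ by a standard grouping argument: light nodes at the same HPD-depth $k$ have pairwise disjoint subtrees and each satisfies $|\sub(u)| \leq n / 2^{k-1}$, so the per-depth contribution is at most $2^{k-1} f(n/2^{k-1})$ and summing a geometric series over $\Oh(\log n)$ depths gives the bound. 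A symmetric inequality holds for $g$, delivering the claimed $\Oh(f(n)g(n')\log^2 \max(n,n'))$. I expect the main subtlety to lie not in the algorithmic idea but in the bookkeeping for degenerate cases arising from the virtual roots, from singleton spines, and from light keyroots whose subtrees are trivial, so that the dependency order remains well-defined and no subtree pair is ever computed twice.
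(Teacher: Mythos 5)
Your proof is correct and takes a genuinely different route from the paper's. The paper reduces \TED to \SED via two intermediate problems: it first defines a \emph{balanced spine} (\cref{prp:heavy-light}) for which both the left and right off-spine parts have size at most half the forest, then recursively decomposes one forest (\SASED, giving one $\log$ factor) and then both forests (\ETED, giving the second $\log$), invoking \SED at each node of the recursion tree. You instead use the standard heavy-path decomposition and process all pairs of light (path-top) nodes in a single bottom-up sweep, with a dependency argument showing each \SED call's inputs were produced by strictly smaller calls. Both arrive at the same $\Oh(f(n)g(n')\log^2 \max(n,n'))$ bound. Your approach is more direct --- it avoids introducing two intermediate problems and reveals the total work cleanly as a product $\bigl(\sum_u f(|\sub(u)|)\bigr)\bigl(\sum_{u'} g(|\sub(u')|)\bigr)$ factored over the two forests --- while the paper's is more modular and derives the time bound from a pure divide-and-conquer recurrence.

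One remark on rigor: your per-depth charging argument uses the regularity that $f(m)/m$ is non-decreasing, which is not stated in the lemma (only $f(m)=\Omega(m)$ is). You flag this yourself, and to be fair the paper's recurrence $\sT(n,n')=2\sT(n,n'/2)+\Oh(f(n)g(n'))$ implicitly needs a comparable smoothness of $g$ to yield $\Oh(f(n)g(n')\log n')$, so you are at the same level of rigor; both suffice for the polynomial $f,g$ used downstream. Also, your reduction from \TED to the all-subtrees quantity needs the same virtual-root weight trick the paper uses (setting $\delta$ so the added roots are forced to be deleted); you gesture at this in your final paragraph, and it is indeed just bookkeeping.
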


\subparagraph*{Reducing spine edit distance to \APSP.}
In the remaining part of this (sub)section, we describe the main ingredients we need to reduce \SED to \APSP.

We can assume, without loss of generality, that $\bF$ and $\bF'$ are trees (by adding a virtual root and defining weights accordingly to enforce their deletion).

Let us further decompose $\bF$ and $\bF'$ into $\bL, \bS, \bR$ and $\bL', \bS', \bR'$,
where $\bR$ is obtained from $\bF$ by removing all nodes in $\bS$ and those to the left of it.
Similarly, $\bL$ is obtained from $\bF$ by removing all nodes in $\bS$ and those to the right of it.
We define $\bL'$ and $\bR'$ symmetrically.
For simplicity, let us make a similar assumption as in the caterpillar case.

\begin{assumption}\label{assm:sed_mapping}
    For all $(v, v') \in \bS \times \bS'$,
    nodes from $\bL, \bS$, and $\bR$ are always only mapped by $\similarity(\sub(v), \sub(v'))$
    to nodes of $\bL', \bS'$, and $\bR'$, respectively. \lipicsEnd
\end{assumption}

At this point,
we can attempt to approach the problem similarly to how we did for caterpillar trees.
Provided with the values $\similarity(\sub(v), \sub(v'))$ for all $(v, v') \in (\bF \times \bF') \setminus (\bS \times \bS')$ in \SED,
we can construct the forest alignment graph of $\similarity(\bL, \bL')$ and $\similarity(\bR, \bR')$,
aiming to overlay them on top of each other.
Refer to \cref{fig:cat_to_sed} for a visualization and for some more (informal) discussion.

\begin{figure*}[htbp]
   \centering
   \input{tikz/cat_to_sed}
   \caption{
    To compute \(\similarity(\bF, \bF')\) for the two depicted trees \(\bT\) and \(\bT'\) under \cref{assm:sed_mapping}, we can use the same visualization approach as before. Specifically, we overlay two tree alignment graphs corresponding to the concatenated left and right subtrees, tracing two paths that, whenever they intersect at two spine nodes, provide the possibility to map them. These graphs correspond to \(L_1 L_2 \cdots L_m\) vs. \(L_1' L_2' \cdots L'_{m'}\) and \(\rev(R_1 R_2 \cdots R_m)\) vs. \(\rev(R_1' R_2' \cdots R'_{m'})\), shown in blue and teal, respectively.
    Given the similarity values \(\similarity(\sub(v), \sub(v'))\) for all \((v, v') \in (\bF \times \bF') \setminus (\bS \times \bS')\), we can construct these trees. \\ 
    In Spine Edit Distance (\SED), instead of computing only \(\similarity(\bF, \bF')\), we also need to determine \(\similarity(\sub(s), \sub(s'))\) for all \((s, s') \in (\bS \times \bS')\). Fortunately, by employing a similar divide-and-conquer approach as used for caterpillars, computing \(\similarity(\bF, \bF')\) naturally leads to obtaining these values as well.  \\
    When designing such a divide-and-conquer scheme, the subproblems must be indexed by rectangles whose edges align with coordinates corresponding to spine nodes. This ensures that no diagonal edges in the tree alignment graphs ``jump over'' the sides of the rectangle, allowing for a ``clean'' partitioning into subproblems. \\
    As with caterpillar trees, removing \cref{assm:sed_mapping} would introduce a third path between the two existing ones.
    }
   \label{fig:cat_to_sed}
\end{figure*}

However, this approach presents two challenges:
\begin{itemize}
   \item The two forest alignment graphs might have different grid sizes.
   \item Identifying meeting points of paths in the two alignment graphs where nodes from $\bS$ can be mapped to nodes of $\bS'$ is not as straightforward as for \TED on caterpillar trees.
\end{itemize}
In \cref{sec:sed}, despite these challenges, we prove the following result:
\begin{restatable}{mtheorem}{sed}\label{thm:sed}
Suppose there exists an algorithm computing the min-plus product of two $m \times m$ matrices in time $\sT_{\MUL}(m)$.
Then, there is an algorithm for \SED running in time $\Oh(\sT_{\MUL}(n) + n^{2+o(1)})$, where $n = \max(|\bF|,|\bF'|)$. 
\end{restatable}

\noindent
Observe that \cref{thm:sed} together with \cref{lem:sed_to_ted} yields \cref{thm:ted}.
Moreover,
note that we establish not only the equivalence between \TED and \APSP,
but also between \SED and \APSP.

Central to overcome the aforementioned challenges
is the notation introduced by Mao in \cite{M22}.
This notation allows us to more directly formalize the problem in terms of forests,
without relying on paths as an intermediary.
Additionally, it leads to a more concise description of the
divide-et-impera strategy that we employ to solve \SED.
However, this comes at the cost of obscuring the intuitive understanding provided by the two crossing paths,
which remain essential for visualizing the structure of the problem.

\subparagraph*{Forest edit distance.}

Generalizing \TED on caterpillar trees on \SED involves a shift from alignment graphs on strings to alignment graphs on forests.
While there are algorithms finding border-to-border paths in the former in quadratic time,
we still need to come up with an efficient way to compute such distances in the latter.
The objective here is to develop either a truly subcubic algorithm or a reduction to \APSP.
We call the problem of finding border-to-border distances in a forest alignment graph the \emph{Forest Edit Distance} problem (\FED),
and we present it here formulated using Mao's notation (to be introduced in the next section).\footnote{We remark that a different variant of \FED was introduced already in \cite{BGHS19}.
There, it is defined with the same input, but the only required output is $\similarity(\bF, \bF')$.}

Before introducing \FED, let us briefly introduce some notation and the bi-order traversal of a forest, already used in \cite{M22}.
The bi-order traversal of a forest $\bF$ is a sequence of length $2 \abs{\bF}$ generated by starting a depth-first traversal from the virtual root, and adding a node to the sequence whenever we enter or leave a node. 
We then use $\bF\fragmentco{x}{y}$ to denote the induced sub-forest of $\bF$ consisting of nodes that appear twice in the segment $\fragment{x}{y-1}$ in the bi-order traversal of $\bF$.
For example $\bF = \bF\fragmentco{1}{2\abs{\bF} + 1}$.
For a node $v \in \bF$, we let $\Left(v)$ denote the first occurrence of $v$ in the bi-order traversal of $\bF$ and $\Right(v)$ denote one \emph{plus} the last occurrence of $v$.

\defproblem
{Forest Edit Distance (\FED)}
{Two forests $\bF$ and $\bF'$ and $\similarity(\sub(v), \sub(v'))$ for all $(v, v') \in \bF \times \bF'$.}
{The following values:
\begin{itemize}
\item $\similarity(\bF\fragmentco{x}{y}, \bF')$ for all $x,y \in \fragment{1}{(2|\bF|+1)}$,
\item $\similarity(\bF\fragmentco{x}{(2|\bF|+1)}, \bF'\fragmentco{1}{y'})$ for all $x \in \fragment{1}{(2|\bF|+1)}, y' \in \fragment{1}{(2|\bF'|+1)}$,
\item $\similarity(\bF, \bF'\fragmentco{x'}{y'})$ for all $x',y' \in \fragment{1}{(2|\bF'|+1)}$, and
\item $\similarity(\bF\fragmentco{1}{y}, \bF'\fragmentco{x'}{(2|\bF|+1)})$ for all $y \in \fragment{1}{(2|\bF|+1)}, x' \in \fragment{1}{(2|\bF'|+1)}$.
\end{itemize}
}

\medskip

In \cref{sec:fed}, we demonstrate that \FED can indeed be reduced to \APSP, and we prove:

\begin{restatable}{mtheorem}{fed}
    \label{thm:fed}
    Suppose there exists an algorithm computing the min-plus product of two $m \times m$ matrices in time $\sT_{\MUL}(m)$.
    Then, there is an algorithm for \FED running in time $\Oh(\sT_{\MUL}(n) + n^{2+o(1)})$, where $n = \max(|\bF|,|\bF'|)$.
\end{restatable}

To address \FED, we adopt a divide-et-impera approach.
Our recursive scheme presented in \cref{thm:fed} builds upon Mao's decomposition scheme for forests introduced in \cite{M22}.

\subsection{Unweighted Tree Edit Distance}
\label{sec:unweighted-ted-overview}

We now discuss our algorithm for Unweighted Tree Edit Distance.
Consider two forests $\bF, \bF'$ with $n, n'$ nodes respectively.
We can make two simple observations with respect to the matrices involved in max-plus products in our reduction:
\begin{itemize}
    \item The similarity matrices are row-monotone and column-monotone.
    \item The entries of the similarity matrices are bounded by $\Oh(\min(n, n'))$.
\end{itemize}

Note that the first observation holds even in the weighted setting. 
For example, in Equation~\eqref{eq:max-plus}, for a fixed $x$, $\similarity((x, b'), (r, y'))$ must be non-decreasing in $y'$. 
In the unweighted setting, the second observation additionally holds, since the value of any alignment is at most twice the number of nodes.
In particular, in computing \FED (\Cref{thm:fed}) and \SED (\Cref{thm:sed}), we can instead use Bounded Monotone Min-Plus Matrix Multiplication.

Recall that the min-plus (or max-plus) product of an $n \times n$ arbitrary integer matrix and an $n \times n$ bounded monotone matrix can be computed in time $\Ohtilde(n^{(3 + \omega)/2})$ \cite{CDXZ22}, with subsequent generalizations to arbitrary rectangular matrices \cite{Durr23, SY24}.

Directly applying the observation, we have that \SED between two forests of size $m, m'$ can be computed in time $\Ohtilde(\max(m, m')^{(3 + \omega)/2})$ \cite{CDXZ22} which may be improved to $\Ohtilde(\sqrt{\min(m, m')} \cdot \max(m, m')^{(2 + \omega)/2})$ by our bound on entries \cite{Durr23}.

Let us see the result we can obtain via \Cref{lem:sed_to_ted}.

Recall that \Cref{lem:sed_to_ted} states that given an $\Oh(f(m) g(m'))$ algorithm for \SED, there is an $\Ohtilde(f(n) g(n'))$ algorithm for \TED on two forests of size $n, n'$.
Suppose $f(x) = \Oh(x^{a}), g(x) = \Oh(x^{b})$.
Setting $m' = 1$, we have $m^{(2 + \omega)/2} = \Oh(m^{a} m'^{b}) = \Oh(m^{a})$, so that $a \geq (2 + \omega) / 2$. 
Similarly, we obtain $b \geq (2 + \omega) / 2$, so that we only obtain a \TED algorithm running in time $\Oh(n^{2 + \omega})$ for two forests of size $n$.
Even if $\omega = 2$, the running time $\Oh(n^{4})$ is prohibitively expensive.

In general, a running time $\Ohtilde(\min(m, m')^c \max(m, m')^d)$ can be upper bounded by $\Ohtilde(m^{\max(c, d)} m'^{d})$. Hence, in order to keep the total exponents of the two formulations the same, we hope to obtain a running time with $c \ge d$ without increasing $c + d$, i.e., we hope for an algorithm for \SED with running time $\Ohtilde(\min(m, m')^c \max(m, m')^{d})$ for $c \ge d$ and $c + d = (3+\omega)/2$. This is achieved by the following theorem:

\begin{restatable}{theorem}{unweightused}
    \label{thm:unweighted-used}
    There is an $(n/n')^{1+o(1)} \cdot \left(\sT_{\MonMUL}(n') + n'^{2 + o(1)} g(n') \right)$ algorithm for unweighted \SED, where $n = \abs{\bF}, n' = \abs{\bF'}$, $n \geq n'$, and $\sT_{\MonMUL}(n', n', n', \D) = \Oh(f(n') g(\D))$ for some functions $f, g$.
\end{restatable}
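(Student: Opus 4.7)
The plan is to adapt the divide-and-conquer algorithm for weighted \SED from \cref{thm:sed} to the unweighted setting by exploiting two structural properties of the similarity matrices that appear at every max-plus product: \emph{monotonicity}, since enlarging any sub-forest argument of $\similarity$ can only increase its value, and \emph{boundedness}, since in the unweighted setting every $\eta(v,v') \in \{-2,-1,0,1,2\}$, so a similarity over a sub-forest of size $d$ lies in $[-\Oh(d),\Oh(d)]$. Together these imply that every \MUL call in the weighted \SED algorithm is in fact a bounded monotone product and can be replaced by an \MonMUL call.

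I would first handle the balanced case $n=n'$ by replaying the proof of \cref{thm:sed} verbatim, but invoking \MonMUL with bound $\Oh(n')$ wherever the weighted version invokes \MUL. The main recurrence becomes $\sT(n') = 4\,\sT(n'/2) + \Ohtilde(\sT_{\MonMUL}(n'))$ with $\Ohtilde(n'^{2+o(1)} g(n'))$ lower-order bookkeeping, which resolves to $\Ohtilde(\sT_{\MonMUL}(n') + n'^{2+o(1)} g(n'))$. Monotonicity is preserved through the recursion because every matrix produced at an internal recursion node is a restriction of a larger already-monotone matrix, and boundedness is preserved because the underlying sub-forests only shrink.

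For the unbalanced case $n \ge n'$, the plan is to partition $\bF$ (and in particular its spine $\bS$) into $\Ohtilde(n/n')$ pieces of size $\Oh(n')$ via a heavy-path-style decomposition, apply the balanced algorithm to each piece against $\bF'$, and glue the per-piece outputs along $\bS$. Gluing adjacent pieces corresponds to a further max-plus product that is again bounded monotone---spine similarities accumulate by choosing a cut point, and are themselves bounded by the sub-forest sizes---so it fits into the $\sT_{\MonMUL}(n')$ budget. Summing over $\Ohtilde(n/n')$ balanced calls and their glue products produces the stated $(n/n')^{1+o(1)} \cdot (\sT_{\MonMUL}(n') + n'^{2+o(1)} g(n'))$ bound.

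The main obstacle I anticipate is ensuring the partition is \SED-compatible: for each piece, all required inputs $\similarity(\sub(v),\sub(v'))$ for pairs $(v,v')$ outside that piece's local spine-product region must either be supplied by the global \SED input or be computable from already-processed pieces. The delicate step is to order the pieces bottom-up along $\bS$ and thread the boundary similarity values between adjacent pieces while maintaining the bounded monotone structure of all intermediate products; this is what allows the $(n/n')^{1+o(1)}$ factor to appear as an additive overhead in the exponent rather than blowing up into an extra polynomial factor.
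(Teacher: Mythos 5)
Your high-level strategy — observe that similarity matrices are monotone with entries bounded by $\Oh(n')$, so every max-plus product in the weighted \SED algorithm becomes a bounded monotone product, and then decompose only the larger forest $\bF$ in the unbalanced case — matches the paper's plan. The balanced-case reasoning (replaying \cref{thm:sed} with \MonMUL) is essentially the paper's \cref{thm:unweighted-sed} / \cref{lem:used_balance}. But there is a genuine gap in the unbalanced-case step.

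You assert that $\bF$ (and $\bS$) can be ``partition[ed] \dots into $\Ohtilde(n/n')$ pieces of size $\Oh(n')$'' and then glued along $\bS$. This is not achievable in general: a single spine edge from $s_i$ to $s_{i+1}$ can have $\abs{\sub(s_i) \setminus \sub(s_{i+1})} = \Theta(n)$, in which case no spine-aligned partition can keep that piece at size $\Oh(n')$. The paper's partition routine (\cref{alg:sed_count}) explicitly falls back, in exactly this case, to ``$s_i$ immediately precedes $s_{i+1}$,'' and handling that case efficiently is the crux of the whole unbalanced argument. A naive ``glue'' step there would need the full border-to-border (\FED) output for $\bF\fragmentco{\Left(s_i)}{\Left(s_{i+1})}$ against $\bF'$, which is $\Omega(n^2)$-sized no matter how small the entries are. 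The paper avoids this with the Unbalanced Forest Edit Distance (\UFED) / \LRBBD machinery of \cref{sec:fed-unbalance}, which computes only the three of the four \FED outputs that the \SED recursion actually consumes (sizes $\Oh(nn')$ and $\Oh(n'^2)$), and shows this can be done in $(n/n')^{1+o(1)}(\sT_{\MonMUL}(n') + n'^{2+o(1)}g(n'))$ time via a separate decomposition of the alignment graph. Your proposal never identifies the need to restrict to a subset of the \FED outputs, so as written it would incur an $\Omega(n^2)$ cost in the glue step and fail to reach the claimed bound. A secondary, smaller discrepancy: the $(n/n')^{1+o(1)}$ factor in the paper arises from a genuine recursive divide-and-conquer with threshold $\Delta = m/\alpha$ and a Master-theorem analysis (\cref{lem:dised_unbalanced}), not from a single flat partition into $\Oh(n/n')$ pieces; your plan as stated would give an $\Ohtilde(n/n')$ factor only if the flat partition existed, which it does not.
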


We now illustrate the techniques required to obtain the above result.
Suppose we have two trees $\bF, \bF'$ of size $n, n'$ with spines $\bS = \{r, \cdots, q\}, \bS' = \{r', \cdots, q'\}$.
Assume without loss of generality that $n \gg n'$.
For simplicity, we again assume \Cref{assm:sed_mapping}.

We can formulate the \SED problem using a divide-et-impera scheme.
Since $n \gg n'$, we only decompose the larger tree $\bF$.
For a given threshold $\gamma$, we can efficiently find a subsequence of spine vertices $I =\fragment{s_{1}}{s_{d}}$ with $r = s_{1}, q = s_{d}$ such that for all $i < d$, either $\abs{\sub(s_{i}) \setminus \sub(s_{i + 1})} \leq \gamma$ or $s_{i}$ immediately precedes $s_{i + 1}$ in $\bS$.

We can then compute \SED recursively in a bottom-up manner.
We begin by computing $\similarity(\sub(s_{d}), \bF'\fragmentco{x'}{y'})$ for all $x', y' \in \fragmentco{1}{2\abs{\bF'} + 1}$.
Since $\sub(s_{d})$ is a single node, we can compute this efficiently.
Next, for $i < d$, given $\similarity(\sub(s_{i + 1}), \bF'\fragmentco{x'}{y'})$ for all $x', y'$, our hope is to compute $\similarity(\sub(s_{i}), \bF'\fragmentco{x'}{y'})$ for all $x', y'$.
First, suppose $\abs{\sub(s_{i}) \setminus \sub(s_{i + 1})} \leq \gamma$.
In this case we recurse on the smaller problem of size $(\gamma, n')$.
Since there are at most $3n/\gamma$ sub-problems in $I$, the total time on the first case is
\begin{equation*}
    \sT(n, n') = (3n/\gamma) \cdot \sT(\gamma, n')
\end{equation*}
which yields total time $(n/n')^{1 + o(1)} \cdot n'^{(3 + \omega)/2}$, since when $\gamma = \Oh(n')$, we apply the recursive scheme employed in \Cref{thm:sed} in time $n'^{(3 + \omega)/2}$.

Thus, it remains to handle the case where $s_{i}$ immediately precedes $s_{i + 1}$.
Under \Cref{assm:sed_mapping}, we can decompose the similarity when $s_{i}$ is not aligned as follows:
\begin{align*}
    \similarity(\sub(s_{i}), \bF'\fragmentco{x'}{y'})
    = \max\nolimits_{\substack{
        w' \in \fragment{\Left(r')}{\Left(q')} \\
        z' \in \fragment{\Right(q')}{\Right(r')}}} \Big\{ \
    &\similarity(\bF\fragmentco{\Left(s_{i})}{\Left(s_{i + 1})}, \bF'\fragmentco{x'}{w'}) \\
    &+ \similarity(\sub(s_{i + 1}), \bF'\fragmentco{w'}{z'}) \\
    &+ \similarity(\bF\fragmentco{\Right(s_{i + 1})}{\Right(s_{i})}, \bF'\fragmentco{z'}{y'}) \ \Big\}. \numberthis \label{eq:unweighted_assm_b}
\end{align*}
To handle the case where $s_{i}$ is aligned, say to vertex $v$, we can add $\eta(s_{i}, v)$ to the value of $\similarity(\sub(s_{i}), \bF'\fragmentco{\Left(v) + 1}{\Right(v)})$.
For completeness, we also ensure monotonicity after updating the values in a post-processing step (i.e. $\similarity(\sub(s_i), \bF'\fragmentco{x'}{y'}) \geq \sub(s_{i}, \bF'\fragmentco{\Left(v)}{\Right(v)})$ if $x' \preceq \Left(v) \prec \Right(v) \preceq y'$).
The computation can be written as a max-plus product of three $\Oh(n') \times \Oh(n')$ matrices with entries bounded by $\Oh(n')$ since each entry is a similarity measure between two forests, one of size at most $\Oh(n')$.
Since the matrices are also monotone, the summands can be combined in time $\sT_{\MonMUL}(n') = \Ohtilde(n'^{(3 + \omega)/2})$.

\begin{figure}[htbp]
   \centering
   \usetikzlibrary{matrix}

\begin{tikzpicture}[>=stealth, scale=0.6]

    \draw[thick] (2,-2) rectangle (12,6) ;

    \draw[rounded corners, orange, shift={(-0.05,-0.05)}] (2,-1) -- (3,-1) -- (3, -0.5) -- (5.5, 1) -- (7, 1) -- (7, 3.5) -- (9, 3.5) -- (10, 5) -- (10.5, 5) -- (10.5, 5.5) -- (11.5, 5.5) -- (12, 5.5);
    \draw[rounded corners, blue, shift={(0.05,0.05)}] (2,0) -- (3, 0) -- (5, 1) -- (5, 2) -- (7, 2) -- (8, 3) -- (8.5, 3) -- (10, 3) -- (10, 4.5) -- (12, 4.5);

    \node[circle, fill, draw, red, thick, scale=0.4, label={left:$(\Left(s_{i}), x')$}] (gc4) at (2, -1) {};
    \node[circle, fill, draw, red, thick, scale=0.4, label={left:$(\Right(s_{i}, y')$}] (gc5) at (2, 0) {};
    
    \node[circle, fill, draw, red, thick, scale=0.4, label={right:$(\Left(s_{i + 1}, w')$}] (gc4) at (12, 5.5) {};
    \node[circle, fill, draw, red, thick, scale=0.4, label={right:$(\Right(s_{i + 1}, z')$}] (gc5) at (12, 4.5) {};

    \begin{scope}[shift={(-10, 0)}]
        \draw (0,-1) -- (4,-1) -- (2, 5) -- (0, -1);
        \draw (2, 5) -- (2, -1);
        \draw[fill, opacity=0.5, orange] (0.5, -1) -- (2, 3.5) -- (2, 0.5) -- (1.5, -1) -- (0.5, -1);
        \draw[fill, opacity=0.5, blue] (2, 3.5) -- (3.5,-1) -- (2.5, -1) -- (2, 0.5) -- (2, 3.5);
        \draw[dotted] (2, 3.5) -- (1, 3.5) node[left] {$s_{i}$};
        \draw[dotted] (2, 0.5) -- (0, 0.5) node[left] {$s_{i + 1}$};
    \end{scope}

    \begin{scope}[shift={(-5, 0)}]
        \draw (0,-1) -- (4,-1) -- (2, 5) -- (0, -1);
        \draw (2, 5) -- (2, -1);
        \draw[fill, opacity=0.5, orange] (0.25, -1) -- (2, 4) -- (2, 1) -- (1.25, -1) -- (0.25, -1);
        \draw[fill, opacity=0.5, blue] (2, 3.5) -- (3.75,-1) -- (3, -1) -- (2, 0.5) -- (2, 3.5);
        \draw[dotted] (2, 5) -- (1, 5) node[left] {$r'$};
        \draw[dotted] (2, -1) -- (2, -1.25) node[below] {$q'$};
        \draw[dotted] (0.25, -1) -- (0.25, -1.25) node[below] {$x'$};
        \draw[dotted] (1.25, -1) -- (1.25, -1.25) node[below] {$w'$};
        \draw[dotted] (3, -1) -- (3, -1.25) node[below] {$z'$};
        \draw[dotted] (3.75, -1) -- (3.75, -1.25) node[below] {$y'$};
    \end{scope}
    
\end{tikzpicture}
   \caption{The alignment whose value is computed in \Cref{eq:unweighted_assm_b} visualized as a Border-to-Border (\BBD) distance computation. 
   The value of the orange path corresponds to the first summand, or the alignment between $\bF\fragmentco{\Left(s_{i})}{\Left(s_{i + 1}}$ and $\bF'\fragmentco{x'}{w'}$.
   The value of the blue path corresponds to the last summand, or the alignment between $\bF\fragmentco{\Right(s_{i + 1})}{\Right(s_{i}}$ and $\bF'\fragmentco{z'}{y'}$.
   For every pair of points on the right border, we have the optimal alignment between $\sub(s_{i + 1})$ and $\bF'\fragmentco{w'}{z'}$.}
   \label{fig:u_sed_overview}
\end{figure}

The second summand is available to us as part of the bottom-up recursion. 
It remains to compute the first and last summand.
Consider the first summand.
Observe that it is exactly the third output of \FED between $\bF\fragmentco{\Left(s_{i})}{\Left(s_{i + 1})}$ and $\bF'\fragmentco{\Left(r')}{\Left(q')}$.
Similarly, the third summand is exactly the third output of \FED between $\bF\fragmentco{\Right(s_{i + 1})}{\Right(s_{i})}$ and $\bF'\fragmentco{\Right(q')}{\Right(r')}$.
Since none of the forests contain any spine nodes $\bS, \bS'$, all inputs to the \FED instances are given by the inputs of the \SED instance.

Our goal is now to compute these two \FED instances.
We can bound the size of $\bF\fragmentco{\Left(s_{i})}{\Left(s_{i + 1})}$ and $\bF\fragmentco{\Right(s_{i + 1})}{\Right(s_{i})}$ by $\abs{\sub(s_{i}) \setminus \sub(s_{i + 1})}$.
However, it is possible that $\abs{\sub(s_{i}) \setminus \sub(s_{i + 1})} = \Theta(n)$, in which case directly applying \Cref{thm:fed} on the \FED instance already takes $\Oh(\sT_{\MonMUL}(n))$ time in this one step alone. Even if the entries are bounded by $\Oh(n')$, the output of \FED instance is already size $\Omega(n^2)$.
How can we reduce the dependence on $n$?
In our \SED application, we have only used the third output of \FED. 
In fact, we show that all outputs except the first output (which has size $\Omega(n^2)$) can be computed efficiently.
Formally, we define the Unbalanced Forest Edit Distance (\UFED) problem as computing all but the first output of \FED and show the following theorem.
\sloppy
\begin{restatable}{theorem}{UnweightFEDUnbalance}
    \label{thm:fed-unbalance-unweighted}
    \unbalanced{There is an algorithm for unweighted \UFED running in time $(n/n')^{1+o(1)} \cdot \left( \sT_{\MonMUL}(n') + n'^{2 + o(1)} g(n') \right)$ where $n = \abs{\bF}$, $n' = \abs{\bF'}$, $n \geq n'$ and $\sT_{\MonMUL}(n', n', n', \D) = \Oh(f(n') g(\D))$  for some functions $f, g$.} 
\end{restatable}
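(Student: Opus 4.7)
The plan is to prove \cref{thm:fed-unbalance-unweighted} by adapting the divide-and-conquer underlying the unweighted analog of \cref{thm:fed} to the unbalanced regime $n \geq n'$. The central observation, already noted in the overview, is that while the first output of \FED has size $\Omega(n^2)$ and is prohibitively large, the three remaining outputs in \UFED each have size $\Oh(n n')$. This suggests decomposing the large forest $\bF$ into $\Oh(n/n')$ pieces of size $\Oh(n')$, invoking balanced unweighted \FED on each piece, and recombining via bounded monotone max-plus products. The key structural facts that enable this in the unweighted setting are that all similarity matrices are both row- and column-monotone, and their entries are bounded by $\Oh(n')$.

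Concretely, I would first build a recursive bi-order-based decomposition of $\bF$ following the scheme underlying \cref{thm:fed} (Mao's scheme), producing a balanced decomposition tree $\mT$ of depth $\Oh(\log(n/n'))$ whose leaves correspond to sub-forests of size $\Oh(n')$. At each leaf sub-forest $\bG$, I would invoke the unweighted version of \cref{thm:fed} on $(\bG, \bF')$; since the instance is now balanced, this yields the full \FED output (including output~1) in time $\sT_{\MonMUL}(n') + n'^{2+o(1)} g(n')$, with the min-plus products becoming bounded monotone by the unweighted hypothesis.

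Next, at each internal node of $\mT$ whose children correspond to sub-forests $\bG_L, \bG_R$ that together form the parent sub-forest $\bG$ via a clean bi-order split, I would combine the children's \UFED outputs using $\Oh(1)$ bounded monotone max-plus products of $\Oh(n') \times \Oh(n')$ matrices. For \UFED output~3, we have the identity
\[ \similarity(\bG, \bF'\fragmentco{x'}{y'}) \;=\; \max_{z'}\, \similarity(\bG_L, \bF'\fragmentco{x'}{z'}) + \similarity(\bG_R, \bF'\fragmentco{z'}{y'}), \]
which is precisely such a product; outputs~2 and~4 admit analogous combinations, each mixing output~2 (resp.\ output~4) of one child with output~3 of the other. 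All participating matrices remain monotone and entry-bounded by $\Oh(n')$, so each combine step costs $\Oh(\sT_{\MonMUL}(n'))$.

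The hardest part, I expect, lies in carrying out the decomposition cleanly when $\bF$ has a skewed structure (e.g., a long spine or a root with one dominant subtree), since naive bi-order splits can fail to partition the nodes. This is handled by Mao's recursive scheme, which aligns the recursion with the tree structure and, if necessary, decomposes further inside large subtrees; importing that machinery here is the bulk of the technical work. Given the decomposition, the analysis is routine: $\Oh(n/n')$ leaves at cost $\sT_{\MonMUL}(n') + n'^{2+o(1)} g(n')$ each, plus $\Oh(n/n')$ internal combines at cost $\sT_{\MonMUL}(n')$ each, sum to $(n/n')^{1+o(1)} \cdot (\sT_{\MonMUL}(n') + n'^{2+o(1)} g(n'))$, as claimed.
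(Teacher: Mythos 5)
Your proposal shares the paper's overall architecture — decompose $\bF$ via Mao's scheme into pieces of size $\Oh(n')$, solve balanced unweighted \FED at the leaves, and recombine with bounded monotone max-plus products — and the overall cost accounting is essentially right. But there is a genuine gap in the combine step. Mao's decomposition uses two transition types: (I) merging a concatenation $\bF_1 + \bF_2$, and (II) removing a synchronous subforest $\bF_s$ from $\bF$ to handle precisely the skewed cases you flag (long spines, one dominant subtree). Your combine identity
\[
\similarity(\bG, \bF'\fragmentco{x'}{y'}) = \max_{z'} \bigl\{\, \similarity(\bG_L, \bF'\fragmentco{x'}{z'}) + \similarity(\bG_R, \bF'\fragmentco{z'}{y'}) \,\bigr\}
\]
is correct for a type-I split $\bG = \bG_L + \bG_R$, but it does not apply to a type-II transition, since $\bF_\ell$ and $\bF_r$ are not adjacent: nodes in $\bF_\ell$ can have $\pi$-edges jumping over the whole of $\bF_s$ into $\bF_r$. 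That case requires a fourth recursive instance on $\bF \setminus \bF_s$, and the combine formula becomes a two-part expression (either the path stays in $\bF \setminus \bF_s$, or it passes through both separators $i_s$ and $i_r$), realized as a three-way max-plus product rather than a single one. This is exactly \cref{lem:fed_patch_three_lr} in the paper, and it is the part your sketch labels ``importing that machinery'' without saying how the combine actually works; it is not a variant of the type-I merge. The practical consequence is that the decomposition tree is not a clean binary tree either — some nodes spawn the extra $(\w, \bF\setminus\bF_s, \bF')$ child — so the recurrence bookkeeping also needs to account for that (the paper does this by bounding $t_{\mathrm I} + t_{\mathrm{II}} = \Oh(n/\Delta)$ and charging each step $\Oh(\sT_{\MonMUL}(n',n',n,n'))$). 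A secondary issue: your output numbering (``outputs 2, 3, 4'') follows \FED, whereas the paper's \UFED definition re-numbers them 1--3; harmless, but worth aligning with whichever convention you commit to.

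Once the type-II combine lemma is in place (mirroring \cref{lem:fed_patch_three} in the balanced case, specialized to distances originating only from the left border so the matrices stay $\Oh(n') \times \Oh(n')$ or $\Oh(n') \times \Oh(\text{size})$, monotone, and $\Oh(n')$-bounded), the rest of your argument goes through and matches the paper's route via \LRBBD and \cref{cor:lrbbd_algo}.
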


In the remainder of the technical overview of the unweighted algorithm, we give an overview of the proof of \Cref{thm:fed-unbalance-unweighted}. 
Further simplifying under \Cref{assm:sed_mapping}, our previous discussion shows that it is enough to only compute the third \FED output.
To this end, consider the alignment graph $\bar{\bG}$ of two forests $\bF, \bF'$ with vertices $\bF \times \bF'$ arranged in a grid where both trees are ordered in pre-order traversal.

The grid contains horizontal and vertical edges of weight $0$ (corresponding to vertex deletions) and edges from $(v, v')$ to $(w, w')$ of weight $\similarity(\sub(v), \sub(v'))$ where $w$ (resp.\ $w'$) is the first vertex after $\sub(v)$ (resp.\ $\sub(v')$) in pre-order traversal i.e.\ the first vertices that can be aligned after aligning $\sub(v)$ and $\sub(v')$.
The desired output of \FED then corresponds to computing all distances from the left border to the right border of the grid.
In contrast, the general \FED problem asks to compute all distances from the left and bottom borders to the right and top borders.

The alignment graph $\bG$ has dimension $n \times n'$, so ideally we hope to be able to breaking the it into $\Oh(n/n')$ subgraphs, each of size $\Oh(n') \times \Oh(n')$. Then, we can compute \FED on each subgraph and combine the results using $\Oh(n/n')$ bounded monotone max-plus products, thus computing the desired \FED output in total time $\Oh((n/n') \cdot \sT_{\MonMUL}(n'))$. However, in reality, it is not simple to decompose $\bG$ to $\Oh(n/n')$ subgraphs and combine the results, as there can be edges connecting two nodes that are far away in the alignment graph. Hence, in our actual algorithm, we apply a divide-et-impera scheme utilizing Mao's tree decomposition~\cite{M22}, which incurs an additional $(n/n')^{o(1)}$ factor. Still, we obtain the desired almost-linear dependence on the size of the larger forest $n = \abs{\bF}$.
In particular, at least under \Cref{assm:sed_mapping}, both \FED and \SED can be computed in time $\Oh(\min(m, m')^{(1 + \omega)/2} \max(m, m')^{1+o(1)})$.

To remove \Cref{assm:sed_mapping}, we must be able to additionally compute distances from the left border to the top border in \BBD instances, corresponding to the second and fourth outputs of the \FED problem (see \Cref{sec:fed-unbalance}).
In the \SED instance we then proceed by careful case analysis to ensure that we consider all possible alignments between the two forests (see \Cref{sec:sed-unbalance}).
Applying \Cref{lem:sed_to_ted} then yields an $n^{1 + o(1)} n'^{(1 + \omega)/2}$-time algorithm for unweighted tree edit distance between forests of size $n, n'$ and $n \geq n'$, proving \Cref{cor:unweighted-ted}.

\subsection{Organization of the paper}

The structure of this paper is as follows. 
In \cref{sec:prelims}, we introduce the remaining notation not covered in this section. 
Next, in \cref{sec:fed}, we present an algorithm for computing border-to-border distances in forest alignment graphs within \APSP time.
With this algorithm in hand, we extend the caterpillar algorithm to the \SED problem in \cref{sec:sed}, where we shift from a path-based approach to more formal and concise notation, offering a clearer treatment of the problem. 
Both \cref{sec:fed} and \cref{sec:sed} are structured to first discuss complexity in terms of the larger of the two input forests which is optimal when forests are similar in size. 
In the second part of these sections, we provide a more detailed analysis suited for unbalanced cases by parameterizing the complexity by both forest sizes.
In these sections, we provide algorithms for both weighted and unweighted \TED, in particular noting when the unweighted \TED problem can be computed more efficiently.
In \cref{sec:ted}, we show how to solve \TED using \SED, providing a generic reduction for both weighted and unweighted \TED, thus completing the proofs in this paper.

\section{Preliminaries}
\label{sec:prelims}

\subparagraph{Set notation.}

For integers \(i, j \in \mathbb{Z}\), we use the notation \(\fragment{i}{j}\) to represent the set \(\{i, \dots, j\}\), and \(\fragmentco{i}{j}\) to denote the set \(\{i, \dots, j - 1\}\).
We define \(\fragmentoc{i}{j}\) and \(\fragmentoo{i}{j}\) similarly.

Consider the infinite grid $\mathbb{Z}\times\mathbb{Z}$,
and consider a subset of the form of a rectangle $\fragment{a}{b} \times \fragment{a'}{b'}$
for some $a,b,a',b'$.
We define $\sB^{\bot}(a,a',b,b')$ and $\sB^{\top}(a,a',b,b')$ to be
set of grid points located at the bottom-left
and upper-right border of such rectangle $\fragment{a}{b} \times \fragment{a'}{b'}$.
Put more formally:

\begin{definition}
    For integers $a,a',b,b'$ we define
    \begin{align*}
        \sB^{\top}(a,a',b,b') &\coloneqq (\{b\} \times \fragment{a'}{b'}) \cup (\fragment{a}{b} \times \{b'\}),
        \text{ and } \\
        \sB^{\bot}(a,a',b,b') &\coloneqq (\{a\} \times \fragment{a'}{b'}) \cup (\fragment{a}{b} \times \{a'\}). \qedhere
    \end{align*}
\end{definition}

\subparagraph{Notation on ordered trees.}
For consistency, we adopt most of the notations for \(\TED\) from \cite{M22}.

In \TED we work with ordered trees.
For a tree \(\bT\), we denote its root as \(\Root(\bT)\).
We also treat a forest \(\bF\) as ordered, meaning that the sequence of trees within the forest is relevant.
In this context, we can view a forest \(\bF\) as a tree with a \emph{virtual root}, whose children are the roots of the trees in \(\bF\),
ordered from left to right as they appear in the forest.

For a tree \(\bT \in \bF\), let \(\bF \setminus \bT\) represent the forest obtained by removing \(\bT\) from \(\bF\)
while preserving the order of the remaining trees.
We use \(\abs{\bF}\) to denote the number of nodes in \(\bF\).
Occasionally, we may also abuse notation by using \(\bF\) to refer to the set of nodes in the forest,
though the context will make this clear.
For two forests \(\bF\) and \(\bF'\),
we denote their concatenation (from left to right) as \(\bF + \bF'\).
An empty forest is denoted by \(\emptyset\).
For a node \(v \in \bF\), we let \(\sub(v)\) represent the subtree rooted at \(v\).

Finally, we write $\rev(\bF)$ for the \emph{reversed forests of $\bF$},
obtained by taking each node of $\bF$ (including its virtual root), and reversing the order of the children.

\subparagraph{Bi-order traversal.}
To compute the similarity between forests,
we use the notation introduced by \cite{M22},
which allows us to conveniently index subforests of \(\bF\) and express calculations in a way that is well-suited to max-plus products.

\begin{definition}[{\cite[Definition~2.3]{M22}}]
Consider the depth-first traversal of a forest $\bF$ starting from the virtual root,
with subtrees recursively traversed from left to right.
From that we can generate an \emph{bi-order traversal sequence} of length $2\abs{\bF}$,
where each node appears twice, in the following way:
\begin{itemize}
    \item Start from the empty sequence.
    \item Every time we enter or leave a node,
    we attach the node to the end of the sequence (do not attach the virtual root).
\end{itemize}
We use $\bF\position{i}$ to denote the $i$-th node in such sequence.
\end{definition}

Observe that bi-order traversal of $\rev(\bF)$ corresponds to the reversed bi-order traversal of $\bF$.

\begin{definition}[{\cite[Definition~2.4]{M22}}]
For $1 \le \ell \le r \le 2\abs{\bF} + 1$, consider bi-order traversal $\bF\position{1}, \bF\position{2}, \ldots, \bF\position{2\abs{\bF}}$ of $\bF$. We use $\bF\fragmentco{\ell}{r}$
to denote the forest obtained by removing from $\bF$ all nodes that appear at least once in
$\bF\position{1}, \bF\position{2}, \ldots, \bF\position{\ell - 1}$ or $\bF\position{r}, \bF\position{r + 1}, \ldots, \bF\position{2\abs{\bF}}$,
and we call such forest a \emph{subforest of $\bF$}.
\end{definition}

Consider the bi-order traversal sequence of a forest $\bF$.
For a node \(v \in \bF\), we define \(\Left(v)\) as the first index where \(v\) appears in the bi-order traversal sequence,
and \(\Right(v)\) as one plus the second index where \(v\) appears.
By these definitions, \(\bF\fragmentco{\ell}{r}\) contains \(v\) if and only if \(\ell \leq \Left(v)\) and \(\Right(v) \leq r\).
Additionally, \(\bF\fragmentco{\Left(v)}{\Right(v)}\) corresponds to \(\sub(v)\).
Note that \(\bF\fragmentco{\ell}{r}\) and \(\bF\fragmentco{\ell'}{r'}\) may represent the same forest for distinct pairs \((\ell, r)\) and \((\ell', r')\).

\begin{definition}[{\cite[Definition~2.5]{M22}}] \label{def:syn}
    For a forest $\bF$, we say subforest $\bF'$ of $\bF$ is a \emph{synchronous subforest of $\bF$} if there exists a node $v \in \bF$
    that is either a node in $\bF$ or the virtual root of $\bF$,
    and such that $\bF'$ is the union of subtrees of subsequent children of $v$.
\end{definition}

\subparagraph{Anchors and anchor sets.} 
A key point in working with a similarity of the form $\similarity(\bF\fragmentco{x}{y}, \bF'\fragmentco{x'}{y'})$ between two subforests $\bF\fragmentco{x}{y}$ and $\bF'\fragmentco{x'}{y'}$ in bi-order traversal is realizing that the following inequality holds for all $(z, z') \in \fragment{x}{y} \times \fragment{x'}{y'}$:
\[
    \similarity(\bF\fragmentco{x}{y}, \bF'\fragmentco{x'}{y'}) \geq \similarity(\bF\fragmentco{x}{z}, \bF'\fragmentco{x'}{z'}) +  \similarity(\bF\fragmentco{z}{y}, \bF'\fragmentco{z'}{y'}).
\]

An \emph{anchor}, is a pair $(z,z')$ where such last inequality holds with equality.

\begin{definition}[Anchors and anchor sets]
Let $\bF,\bF'$ be two forests, and let $\bF\fragmentco{x}{y},\bF'\fragmentco{x'}{y'}$ be two subforests.
We say that \emph{$(z,z') \in \fragment{1}{(2|\bF|+1)} \times \fragment{1}{(2|\bF'|+1)}$ is an anchor of $\similarity(\bF\fragmentco{x}{y}, \bF'\fragmentco{x'}{y'})$} if $x \leq z \leq y$, and $x' \leq z' \leq y'$ and
\[
    \similarity(\bF\fragmentco{x}{y}, \bF'\fragmentco{x'}{y'}) = \similarity(\bF\fragmentco{x}{z}, \bF'\fragmentco{x}{z'})
                        + \similarity(\bF\fragmentco{z}{y}, \bF'\fragmentco{z'}{y}).
\]

Further, we say \emph{$\sB \subseteq \fragment{1}{(2|\bF|+1)} \times \fragment{1}{(2|\bF'|+1)}$ is an anchor set of $\similarity(\bF\fragmentco{x}{y}, \bF'\fragmentco{x'}{y'})$} if there exists $(z,z') \in \sB$ such that $(z,z')$ is an anchor of $\similarity(\bF\fragmentco{x}{y}, \bF'\fragmentco{x'}{y'})$. 
\end{definition}

We note that we use the concept of anchor sets, defined as subsets of \(\fragment{1}{(2|\bF|+1)} \times \fragment{1}{(2|\bF'|+1)}\), not only for similarities of the form \(\similarity(\bF\fragmentco{x}{y}, \bF'\fragmentco{x'}{y'})\) but also for other subforests where the same bi-order indexing applies, such as \(\similarity(\bF\fragmentco{x}{z} + \bF\fragmentco{z}{y}, \bF'\fragmentco{x'}{y'})\). 

We extend the concept of anchors to \emph{paired anchors}.

\begin{definition}[Paired anchors and anchor sets]
    Let $\bF,\bF'$ be two forests, and let $\bF\fragmentco{x}{y},\bF'\fragmentco{x'}{y'}$ be two subforests.
    We say that \emph{$(z,z'),(w,w') \in \fragment{1}{(2|\bF|+1)} \times \fragment{1}{(2|\bF'|+1)}$ are paired anchors of $\similarity(\bF\fragmentco{x}{y}, \bF'\fragmentco{x'}{y'})$} if $x \leq z \leq w \leq y$, and $x' \leq z' \leq w' \leq y'$ and
    \begin{align*}
    \similarity(\bF\fragmentco{x}{y}, \bF'\fragmentco{x'}{y'}) &= \similarity(\bF\fragmentco{x}{z}, \bF'\fragmentco{x}{z'}) \\
                        &+ \similarity(\bF\fragmentco{z}{w}, \bF'\fragmentco{z'}{w'})
                        + \similarity(\bF\fragmentco{w}{y}, \bF'\fragmentco{w'}{y'}).
    \end{align*}
    
    Further, we say \emph{$\sB, \sB' \subseteq \fragment{1}{(2|\bF|+1)} \times \fragment{1}{(2|\bF'|+1)}$ are paired anchor sets of $\similarity(\bF\fragmentco{x}{y}, \bF'\fragmentco{x'}{y'})$} if there are $(z,z) \in \sB$ and $(w,w') \in \sB'$ such that $(z,z')$ and $(w,w')$ are paired anchors.
\end{definition}

Observe that a set $\sB$ is an anchor set of $\similarity(\bF\fragmentco{x}{y}, \bF'\fragmentco{x'}{y'})$ if and only if
\begin{equation*}
    \similarity(\bF\fragmentco{x}{y}, \bF'\fragmentco{x'}{y'}) = \max_{(z,z') \in \sB \mid x \leq z \leq y, x' \leq z' \leq y'} \Big\{ \ \similarity(\bF\fragmentco{x}{z}, \bF'\fragmentco{x}{z'})
                        \ + \ \similarity(\bF\fragmentco{z}{y}, \bF'\fragmentco{z'}{y'}) \ \Big\}.
\end{equation*}
Moreover, for sets $\sB$ and $\sB'$ are paired anchor sets holds
\begin{align*}
    \similarity(\bF\fragmentco{x}{y}, \bF'\fragmentco{x'}{y'}) = \max\nolimits_{\substack{(z,z') \in \sB \mid x \leq z \leq y, x' \leq z' \leq y' \\ (w,w') \in \sB' \mid z \leq w \leq y, z' \leq y' \leq y' }} 
    & \Big\{ \ \similarity(\bF\fragmentco{x}{z}, \bF'\fragmentco{x'}{z'}) \\
                        & + \ \similarity(\bF\fragmentco{z}{w}, \bF'\fragmentco{z'}{w'}) \\
                        & + \similarity(\bF\fragmentco{w}{y}, \bF'\fragmentco{w'}{y'}) \ \Big\}.
\end{align*}

In previous works, such as {\cite[Equation~(9)]{M22}},
it was observed that when \(\bF\) consists of more than one tree,
there exists an anchor set with a simple representation.
More specifically, if we can write $\bF = \bF_{\Left} + \bF_{\Right}$ for two forests $\bF_{\Left}, \bF_{\Right}$, then $(2|\bF_{\Left}|+1) \times \fragment{1}{(2|\bF'|+1)}$
is an anchor set  of $\similarity(\bF, \bF')$.
We rephrase this observation for subforests as follows.

\begin{proposition} \label{rmk:mao}
    Suppose that for two subforests $\bF\fragmentco{x}{y}, \bF'\fragmentco{x'}{y'}$
    and some $z \in \fragment{x}{y}$,
    we have that $\similarity(\bF\fragmentco{x}{y}, \bF'\fragmentco{x'}{y'}) = \similarity(\bF\fragmentco{x}{z} + \bF\fragmentco{z}{y} , \bF'\fragmentco{x'}{y'})$ holds.
    Then, $z \times \fragment{1}{(2\abs{\bF'}+1)}$ is an anchor set of $\similarity(\bF\fragmentco{x}{y}, \bF'\fragmentco{x'}{y'})$.
    \lipicsEnd
\end{proposition}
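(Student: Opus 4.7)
The plan is to show existence of some $z' \in \fragment{1}{(2\abs{\bF'}+1)}$ with $x' \le z' \le y'$ such that $(z,z')$ is an anchor. Since the inequality $\similarity(\bF\fragmentco{x}{y}, \bF'\fragmentco{x'}{y'}) \ge \similarity(\bF\fragmentco{x}{z}, \bF'\fragmentco{x'}{z'}) + \similarity(\bF\fragmentco{z}{y}, \bF'\fragmentco{z'}{y'})$ always holds (one can concatenate an optimal mapping on each side into one valid mapping), the entire content of the claim lies in producing a $z'$ witnessing the reverse inequality.

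I will fix a mapping $M$ realizing the value $\similarity(\bF\fragmentco{x}{z} + \bF\fragmentco{z}{y}, \bF'\fragmentco{x'}{y'})$, which by hypothesis equals $\similarity(\bF\fragmentco{x}{y}, \bF'\fragmentco{x'}{y'})$, and partition its matched pairs into $A$ (those whose first coordinate lies in $\bF\fragmentco{x}{z}$) and $B$ (those in $\bF\fragmentco{z}{y}$). Let $A',B' \subseteq \bF'\fragmentco{x'}{y'}$ be the corresponding sets of images. The crucial structural step is that in the concatenation $\bF\fragmentco{x}{z} + \bF\fragmentco{z}{y}$ every node of the left summand lies in a different top-level tree than every node of the right summand, so no such pair is in ancestor-descendant relation, and every $A$-node precedes every $B$-node in the pre-order of the concatenation. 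Applying the three conditions in the definition of a similarity mapping then forces, for every $v' \in A'$ and every $w' \in B'$, that $v'$ and $w'$ are not in ancestor-descendant relation and that $v'$ precedes $w'$ in the pre-order of $\bF'\fragmentco{x'}{y'}$. In a forest, two nodes with these two properties have disjoint subtrees whose bi-order occurrences do not interleave, so $\Right(v') \le \Left(w')$.

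From this I choose $z'$ in the (nonempty) interval $[\max_{v' \in A'} \Right(v'),\, \min_{w' \in B'} \Left(w')]$, defaulting to $z'=x'$ if $A'=\emptyset$ and $z'=y'$ if $B'=\emptyset$; in all cases $x' \le z' \le y'$. By construction every $v' \in A'$ satisfies $x' \le \Left(v') \le \Right(v') \le z'$ and so lies in $\bF'\fragmentco{x'}{z'}$, and symmetrically $B' \subseteq \bF'\fragmentco{z'}{y'}$. The restriction of $M$ to $A$ is then a valid mapping between $\bF\fragmentco{x}{z}$ and $\bF'\fragmentco{x'}{z'}$ (all three mapping conditions survive restriction), and analogously for $B$ on the right side. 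Summing the weights of the two restrictions gives exactly the weight of $M$, which is $\similarity(\bF\fragmentco{x}{y}, \bF'\fragmentco{x'}{y'})$, yielding the desired upper bound and making $(z,z')$ an anchor in $z \times \fragment{1}{(2\abs{\bF'}+1)}$.

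I expect the only subtle step to be the implication that the three mapping conditions, applied to a pair $(v,w)$ with $v \in \bF\fragmentco{x}{z}$ and $w \in \bF\fragmentco{z}{y}$, really yield $\Right(v') \le \Left(w')$ in $\bF'$ rather than merely $\Left(v') < \Left(w')$; this relies on combining condition~(3) with the absence of ancestor relations given by conditions~(1)--(2), together with the basic fact that in an ordered forest a node's bi-order interval is either nested in another's or disjoint from it.
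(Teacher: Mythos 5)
Your proof is correct. Note that the paper provides no proof of this proposition at all: it is presented (with a trailing \lipicsEnd and no proof environment) as a rephrasing, adapted to subforests, of an observation cited from Mao~\cite{M22}. So what you have done is supply a self-contained proof of a fact the paper leaves to citation.

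Your argument is the natural one and it is sound. The key reductions are all handled correctly: you work with a mapping realizing $\similarity(\bF\fragmentco{x}{z} + \bF\fragmentco{z}{y}, \bF'\fragmentco{x'}{y'})$, which by the hypothesis equals the target quantity; you note that in the concatenation the two summands sit in disjoint top-level trees, so cross-pairs are never in ancestor relation and have a fixed pre-order; you push this through the three mapping conditions to conclude that for $v'\in A'$, $w'\in B'$ neither is an ancestor of the other and $v'$ precedes $w'$, and hence (by the nested-or-disjoint structure of bi-order intervals in a forest) $\Right(v')\le\Left(w')$; and you pick $z'$ in the resulting gap. The boundary cases $A'=\emptyset$ and $B'=\emptyset$ are handled, and the containment $x'\le z'\le y'$ holds in all cases because every $v'\in A'\cup B'$ already lies in $\bF'\fragmentco{x'}{y'}$. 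You correctly flag the one genuinely nontrivial step — upgrading ``$v'$ precedes $w'$ in pre-order'' to the interval separation $\Right(v')\le\Left(w')$ using the non-ancestor conclusion from conditions (1)--(2) — which is exactly where a careless proof would go wrong.

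Two minor points worth keeping in mind if you polish this for inclusion. First, when you claim the restriction $M|_A$ is a valid mapping between $\bF\fragmentco{x}{z}$ and $\bF'\fragmentco{x'}{z'}$, you are implicitly using that ancestor and pre-order relations among nodes of $A'$ are the same whether read in $\bF'\fragmentco{x'}{y'}$ or in $\bF'\fragmentco{x'}{z'}$; this is true (passing to a subforest only contracts paths, never creates new ancestor relations or reorders survivors), but it is the kind of fact a referee might want stated. Second, the hypothesis already implies that $\bF\fragmentco{x}{z}+\bF\fragmentco{z}{y}$ achieves the same similarity as $\bF\fragmentco{x}{y}$ even though the former may be missing the ``cut'' nodes whose bi-order span straddles $z$; your argument silently uses the right-hand forest only through the mapping $M$, so this loss is harmless, but it is worth being aware that the two left-hand forests need not have the same node set.
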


Next, we prove \cref{rmk:anchor_transform} that allows us to transform anchor sets.

\begin{proposition} \label{rmk:anchor_transform}
    Let $(z,z')$ be an anchor of $\similarity(\bF\fragmentco{x}{y}, \bF'\fragmentco{x'}{y'})$.
    Suppose $\sB$ is an anchor set of $\similarity(\bF\fragmentco{z}{y}, \bF'\fragmentco{z'}{y'})$.
    Then, the two following hold:
    \begin{enumerate}[(i)]
        \item For any anchor set $\sA$ such that $(z,z') \in \sA$, we have that $\sA$ and $\sB$ are paired anchor set.
        \label{it:anchor_transform:i}
        \item $\sB$ is also anchor set of $\similarity(\bF\fragmentco{x}{y}, \bF'\fragmentco{x'}{y'})$.
        \label{it:anchor_transform:ii}
    \end{enumerate}
\end{proposition}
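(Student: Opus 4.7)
The plan is to unwind the definitions and use the single structural inequality
$\similarity(\bF\fragmentco{x}{w}, \bF'\fragmentco{x'}{w'}) \geq \similarity(\bF\fragmentco{x}{z}, \bF'\fragmentco{x'}{z'}) + \similarity(\bF\fragmentco{z}{w}, \bF'\fragmentco{z'}{w'})$ that holds for all valid $x \leq z \leq w$ and $x' \leq z' \leq w'$ (this is the inequality stated just before the definition of anchors). Both parts will fall out by pinning down a suitable $(w,w') \in \sB$ given by the hypothesis that $\sB$ is an anchor set of $\similarity(\bF\fragmentco{z}{y}, \bF'\fragmentco{z'}{y'})$.

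\textbf{Step 1 (pick witnesses).} Since $(z,z')$ is an anchor of $\similarity(\bF\fragmentco{x}{y}, \bF'\fragmentco{x'}{y'})$, we have the decomposition $\similarity(\bF\fragmentco{x}{y}, \bF'\fragmentco{x'}{y'}) = \similarity(\bF\fragmentco{x}{z}, \bF'\fragmentco{x'}{z'}) + \similarity(\bF\fragmentco{z}{y}, \bF'\fragmentco{z'}{y'})$ with $x \leq z \leq y$ and $x' \leq z' \leq y'$. By the hypothesis on $\sB$, pick $(w,w') \in \sB$ with $z \leq w \leq y$ and $z' \leq w' \leq y'$ satisfying $\similarity(\bF\fragmentco{z}{y}, \bF'\fragmentco{z'}{y'}) = \similarity(\bF\fragmentco{z}{w}, \bF'\fragmentco{z'}{w'}) + \similarity(\bF\fragmentco{w}{y}, \bF'\fragmentco{w'}{y'})$. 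Substituting yields the three-term identity
\[
    \similarity(\bF\fragmentco{x}{y}, \bF'\fragmentco{x'}{y'}) = \similarity(\bF\fragmentco{x}{z}, \bF'\fragmentco{x'}{z'}) + \similarity(\bF\fragmentco{z}{w}, \bF'\fragmentco{z'}{w'}) + \similarity(\bF\fragmentco{w}{y}, \bF'\fragmentco{w'}{y'}),
\]
with the chain of inequalities $x \leq z \leq w \leq y$ and $x' \leq z' \leq w' \leq y'$.

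\textbf{Step 2 (proof of (i)).} This three-term identity says exactly that $(z,z')$ and $(w,w')$ are paired anchors of $\similarity(\bF\fragmentco{x}{y}, \bF'\fragmentco{x'}{y'})$. Since any anchor set $\sA$ with $(z,z') \in \sA$ contains $(z,z')$ by assumption, and $(w,w') \in \sB$ by Step~1, the sets $\sA$ and $\sB$ satisfy the definition of paired anchor sets.

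\textbf{Step 3 (proof of (ii)).} We claim the same $(w,w') \in \sB$ already witnesses that $\sB$ is an anchor set of $\similarity(\bF\fragmentco{x}{y}, \bF'\fragmentco{x'}{y'})$. Applying the structural subadditivity twice, first to $(x,x') \leq (z,z') \leq (w,w')$ and then to $(x,x') \leq (w,w') \leq (y,y')$, we obtain
\[
    \similarity(\bF\fragmentco{x}{y}, \bF'\fragmentco{x'}{y'}) \geq \similarity(\bF\fragmentco{x}{w}, \bF'\fragmentco{x'}{w'}) + \similarity(\bF\fragmentco{w}{y}, \bF'\fragmentco{w'}{y'}) \geq \similarity(\bF\fragmentco{x}{z}, \bF'\fragmentco{x'}{z'}) + \similarity(\bF\fragmentco{z}{w}, \bF'\fragmentco{z'}{w'}) + \similarity(\bF\fragmentco{w}{y}, \bF'\fragmentco{w'}{y'}).
\]
By Step~1 the right-hand side equals $\similarity(\bF\fragmentco{x}{y}, \bF'\fragmentco{x'}{y'})$, forcing equality throughout. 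In particular the middle expression equals $\similarity(\bF\fragmentco{x}{y}, \bF'\fragmentco{x'}{y'})$, which is precisely the definition of $(w,w')$ being an anchor of $\similarity(\bF\fragmentco{x}{y}, \bF'\fragmentco{x'}{y'})$, so $\sB$ is an anchor set of this similarity. There is no real obstacle here: the entire content lies in noticing that composing the two decompositions gives a three-term identity and that the structural inequality forces intermediate sums to be tight.
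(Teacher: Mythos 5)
Your proof is correct and follows essentially the same approach as the paper: pick the witness $(w,w')\in\sB$ from the anchor-set hypothesis, compose the two decompositions into a three-term identity to get (i), and then show $(w,w')$ is itself an anchor to get (ii). The only cosmetic difference is in (ii), where the paper argues by contradiction (assuming strict inequality in $\similarity(\bF\fragmentco{x}{w}, \bF'\fragmentco{x'}{w'}) > \similarity(\bF\fragmentco{x}{z}, \bF'\fragmentco{x'}{z'}) + \similarity(\bF\fragmentco{z}{w}, \bF'\fragmentco{z'}{w'})$ and deriving $\similarity(\bF\fragmentco{x}{y}, \bF'\fragmentco{x'}{y'}) < \similarity(\bF\fragmentco{x}{y}, \bF'\fragmentco{x'}{y'})$) while you run the same inequalities as a direct squeeze forcing equality throughout; both hinge on exactly the same superadditivity and the same three-term identity.
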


\begin{proof}
    From the assumptions on $\sB$, there exists $(w,w') \in \sB$
    such that $\similarity(\bF\fragmentco{z}{y}, \bF'\fragmentco{z'}{y'}) = \similarity(\bF\fragmentco{z}{w}, \bF'\fragmentco{z'}{w'}) + \similarity(\bF\fragmentco{w}{y}, \bF'\fragmentco{w'}{y'})$.
    Combining this with the anchor assumption on $(z, z')$, we have 
    \begin{align*} 
    \MoveEqLeft \similarity(\bF\fragmentco{x}{y}, \bF'\fragmentco{x'}{y'}) \\
    &= \similarity(\bF\fragmentco{x}{z}, \bF'\fragmentco{x'}{z'}) + \similarity(\bF\fragmentco{z}{y}, \bF'\fragmentco{z'}{y'})\\ 
    &= \similarity(\bF\fragmentco{x}{z}, \bF'\fragmentco{x'}{z'}) + \similarity(\bF\fragmentco{z}{w}, \bF'\fragmentco{z'}{w'}) + \similarity(\bF\fragmentco{w}{y}, \bF'\fragmentco{w'}{y'}). 
    \end{align*}
    This shows \eqref{it:anchor_transform:i}.
    To prove \eqref{it:anchor_transform:ii}, it suffices to show that
    $\similarity(\bF\fragmentco{x}{w}, \bF'\fragmentco{x'}{w'}) = \similarity(\bF\fragmentco{x}{z}, \bF'\fragmentco{x'}{z'}) + \similarity(\bF\fragmentco{z}{w}, \bF'\fragmentco{z'}{w'})$,
    which would imply that $(w,w')$ is an anchor of $\similarity(\bF\fragmentco{x}{y}, \bF'\fragmentco{x'}{y'})$.
    To this end, observe that
    $\similarity(\bF\fragmentco{x}{w}, \bF'\fragmentco{x'}{w'}) > \similarity(\bF\fragmentco{x}{z}, \bF'\fragmentco{x'}{z'}) + \similarity(\bF\fragmentco{z}{w}, \bF'\fragmentco{z'}{w'})$
    would imply
    \begin{align*}
        \MoveEqLeft \similarity(\bF\fragmentco{x}{y}, \bF'\fragmentco{x'}{y'}) \\
        &= \similarity(\bF\fragmentco{x}{z}, \bF'\fragmentco{x'}{z'})
        + \similarity(\bF\fragmentco{z}{w}, \bF'\fragmentco{z'}{w'})
        + \similarity(\bF\fragmentco{w}{y}, \bF'\fragmentco{w'}{y'})\\
        &< \similarity(\bF\fragmentco{x}{w}, \bF'\fragmentco{x'}{w'}) + \similarity(\bF\fragmentco{w}{y}, \bF'\fragmentco{w'}{y'}) \\
        &\leq \similarity(\bF\fragmentco{x}{y}, \bF'\fragmentco{x'}{y'}),
    \end{align*}
    a contradiction.
\end{proof}

Using \cref{rmk:anchor_transform} we can transform the anchor set from \cref{rmk:mao} as follows.

\begin{proposition} \label{rmk:anchor_corner}
    Consider two subforests $\bF\fragmentco{x}{y}, \bF'\fragmentco{x'}{y'}$,
    and suppose there exist $v \in \bF\fragmentco{x}{y}$ and $v' \in \bF'\fragmentco{x'}{y'}$
    such that 
    \[
        \similarity(\bF\fragmentco{x}{y}, \bF'\fragmentco{x'}{y'}) = \similarity(\bF\fragmentco{x}{\Left(v)} + \bF\fragmentco{\Left(v)}{y}, \bF'\fragmentco{x'}{\Left(v')}) + \bF'\fragmentco{\Left(v')}{y'}). 
    \]
    Then, we have that 
    \[
        \sB^{\top}(1, 1, \Left(v), \Left(v')) \ = \ (\Left(v) \times \fragment{1}{\Left(v')}) \ \cup \  (\fragment{1}{\Left(v)} \times \Left(v'))
    \]
    is an anchor set of $\similarity(\bF\fragmentco{x}{y}, \bF'\fragmentco{x'}{y'})$.
\end{proposition}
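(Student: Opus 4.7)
The plan is to analyze an optimal mapping realizing the fully-split similarity from the hypothesis and use pre-order considerations to identify an anchor in $\sB^{\top}(1, 1, \Left(v), \Left(v'))$. First, I would establish a general monotonicity observation: splitting a forest at a bi-order position $z$ (replacing $\bG\fragmentco{a}{b}$ by $\bG\fragmentco{a}{z} + \bG\fragmentco{z}{b}$) can only weakly decrease similarity, because the split simply drops the nodes of $\bG\fragmentco{a}{b}$ whose subtree spans $z$ while preserving every ancestor-descendant relation among the remaining ones---any two surviving nodes in different halves of the split were already non-ancestor-related in $\bG\fragmentco{a}{b}$, since the earlier node's subtree ends before $z$ and the later node's begins after $z$. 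Consequently every mapping valid after the split is still valid before.

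Next, I would fix an optimal mapping $M$ realizing the (hypothesized) common value $S := \similarity(\bF\fragmentco{x}{y}, \bF'\fragmentco{x'}{y'}) = \similarity(\bF\fragmentco{x}{\Left(v)} + \bF\fragmentco{\Left(v)}{y}, \bF'\fragmentco{x'}{\Left(v')} + \bF'\fragmentco{\Left(v')}{y'})$, and classify each pair of $M$ into one of four quadrants according to whether its $\bF$-endpoint lies in $A := \bF\fragmentco{x}{\Left(v)}$ or $B := \bF\fragmentco{\Left(v)}{y}$, and whether its $\bF'$-endpoint lies in $C := \bF'\fragmentco{x'}{\Left(v')}$ or $D := \bF'\fragmentco{\Left(v')}{y'}$. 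The pre-order constraint on valid mappings forbids simultaneously having a pair in $A \times D$ and a pair in $B \times C$, since these would map an earlier $\bF$-node to a later $\bF'$-image and a later $\bF$-node to an earlier $\bF'$-image, respectively; hence at least one of these two crossing quadrants is empty.

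When $B \times C$ is empty, the $A$-nodes mapped to $C$ precede in $\bF$'s pre-order the $A$-nodes mapped to $D$, and the two groups are pairwise non-ancestor-related in $\bF$ (by validity of $M$ applied to the fact that their images, lying in $C$ and $D$ respectively, are separated at $\Left(v')$ and hence non-ancestor-related in $\bF'$). Therefore there exists a bi-order position $w \in \fragment{x}{\Left(v)}$ cleanly separating the two groups. Splitting $M$ at $(w, \Left(v'))$ and invoking the monotonicity observation to undo the $\Left(v')$-split on the $\bF'$-side gives
\[
    S = |M| \leq \similarity(\bF\fragmentco{x}{w}, \bF'\fragmentco{x'}{\Left(v')}) + \similarity(\bF\fragmentco{w}{y}, \bF'\fragmentco{\Left(v')}{y'}),
\]
while combining optimal mappings of the two summands yields a valid mapping of $(\bF\fragmentco{x}{y}, \bF'\fragmentco{x'}{y'})$ of total value equal to the right-hand side, providing the reverse inequality. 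Thus $(w, \Left(v'))$ is an anchor of $S$ with $w \leq \Left(v)$, lying in $\fragment{1}{\Left(v)} \times \{\Left(v')\} \subseteq \sB^{\top}(1, 1, \Left(v), \Left(v'))$. The symmetric case where $A \times D$ is empty analogously yields an anchor $(\Left(v), r')$ with $r' \leq \Left(v')$. The main obstacle is rigorously justifying the existence of the cut $w$ (or $r'$) at a valid bi-order position, which requires carefully tracking how $M$'s validity in the split forests constrains the ancestor structure of both the $\bF$-preimages and their $\bF'$-images.
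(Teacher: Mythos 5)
Your proof is correct, and it takes a genuinely different route from the paper's. The paper's argument is purely algebraic: it applies \cref{rmk:mao} to get $\Left(v)\times\fragment{1}{(2|\bF'|+1)}$ as an anchor set for the doubly-split similarity, then, for any anchor $(z,z')$ with $z'\geq\Left(v')$, applies \cref{rmk:mao} again on the $\bF'$-split of $\similarity(\bF\fragmentco{x}{z},\bF'\fragmentco{x'}{\Left(v')}+\bF'\fragmentco{\Left(v')}{z'})$ to get $\fragment{1}{\Left(v)}\times\Left(v')$ as an anchor set, and uses the anchor-transport rule \cref{rmk:anchor_transform}\eqref{it:anchor_transform:ii} to lift it back. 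Your proof instead descends to the combinatorial definition of a valid mapping: you fix an optimal mapping for the split-split similarity, partition the matched pairs into the four quadrants $A\times C,\ A\times D,\ B\times C,\ B\times D$, and argue via pre-order preservation that the two ``crossing'' quadrants $A\times D$ and $B\times C$ cannot both be populated, which after locating a clean bi-order cut yields an anchor on one of the two arms of $\sB^{\top}(1,1,\Left(v),\Left(v'))$. What each buys: the paper's approach is shorter and modular, reusing \cref{rmk:mao} and \cref{rmk:anchor_transform} as black boxes; yours is self-contained and arguably more transparent about \emph{why} the L-shaped set must contain an anchor (the empty crossing quadrant dictates on which arm it lies). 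The one place your write-up hedges --- ``rigorously justifying the existence of the cut $w$'' --- does go through: taking $w$ to be the minimum bi-order entry time among $A$-nodes mapped into $D$ (or $\Left(v)$ if none), every $A$-node mapped into $C$ precedes every $A$-node mapped into $D$ in pre-order while being non-ancestor-related to it, forcing its exit time to lie before $w$; so the surviving $\bF$-side mapped nodes split exactly into $\bF\fragmentco{x}{w}$ and $\bF\fragmentco{w}{y}$ as needed.
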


\begin{proof}
    By \cref{rmk:mao}, the set \(\sB \coloneqq \Left(v) \times \fragment{1}{(2\abs{\bF'}+1)}\) serves as an anchor set for \(\similarity(\bF\fragmentco{x}{\Left(v)} + \bF\fragmentco{\Left(v)}{y}, \bF'\fragmentco{x'}{\Left(v')} + \bF'\fragmentco{\Left(v')}{y'})\). 
    
    Suppose there exists an anchor \((z, z')\) contained in the subset \(\sB' \coloneqq \Left(v) \times \fragment{\Left(v')}{(2\abs{\bF'}+1)} \subseteq \sB\). For any such \((z, z')\), note that, by \cref{rmk:mao}, the set \(\sB'' \coloneqq \fragment{1}{\Left(v)} \times \Left(v')\) is an anchor set for \(\similarity(\bF\fragmentco{x}{z}, \bF'\fragmentco{x'}{\Left(v')} + \bF'\fragmentco{\Left(v')}{z'})\).
    By \cref{rmk:anchor_transform}\eqref{it:anchor_transform:ii}, we get that \(\sB''\) is also an anchor set for \(\similarity(\bF\fragmentco{x}{\Left(v)} + \bF\fragmentco{\Left(v)}{y}, \bF'\fragmentco{x'}{\Left(v')} + \bF'\fragmentco{\Left(v')}{y'})\). 
    
    Now, we perform a case distinction on whether there is an anchor $(z,z') \in \sB$ of $\similarity(\bF\fragmentco{x}{\Left(v)} + \bF\fragmentco{\Left(v)}{y}, \bF'\fragmentco{x'}{\Left(v')}) + \bF'\fragmentco{\Left(v')}{y'})$ contained in the subset $\sB'$ or not. In any case, $(\sB \setminus \sB') \cup \sB'' = (\Left(v) \times \fragment{1}{\Left(v')}) \cup (\fragment{1}{\Left(v)} \times \Left(v'))$ is an anchor set of $\similarity(\bF\fragmentco{x}{\Left(v)} + \bF\fragmentco{\Left(v)}{y}, \bF'\fragmentco{x'}{\Left(v')}) + \bF'\fragmentco{\Left(v')}{y'})$.
    By the assumption, $(\sB \setminus \sB') \cup \sB''$
    is also an anchor set of $\similarity(\bF\fragmentco{x}{y}, \bF'\fragmentco{x'}{y'})$.
\end{proof}

For certain of forms of similiarities 
we can also find paired anchor sets.

\begin{proposition}  \label{rmk:paired_anchors}
    Let $\bF,\bF'$ be two forests, and let $\bF\fragmentco{x}{y},\bF'\fragmentco{x'}{y'}$ be two subforests. Then, the two following hold:
    \begin{enumerate}[(i)]
        \item For any $v \in \bF\fragmentco{x}{y}$
    such that $\similarity(\bF\fragmentco{x}{y},\bF'\fragmentco{x'}{y'}) = \similarity(\bF\fragmentco{x}{\Left(v)} + \bF\fragmentco{\Left(v)}{y}, \bF'\fragmentco{x'}{y'})$, we have that $\Left(v) \times \fragment{1}{(2|\bF'|+1)}$ and $\Right(v) \times \fragment{1}{(2|\bF'|+1)}$ are paired anchors sets of $\similarity(\bF\fragmentco{x}{y}, \bF'\fragmentco{x'}{y'})$.
        \label{it:paired_anchors:a}
        \item For any $v \in \bF\fragmentco{x}{y}$ and $v' \in \bF'\fragmentco{x'}{y'}$ such that 
       $\similarity(\bF\fragmentco{x}{y}, \bF'\fragmentco{x'}{y'}) = \similarity(\bF\fragmentco{x}{\Left(v)} + \bF\fragmentco{\Left(v)}{y}, \bF'\fragmentco{x'}{\Left(v')}) + \bF'\fragmentco{\Left(v')}{y'})$, we have that
        $\sB^{\top}(1, 1, \Left(v), \Left(v'))$ and $\sB^{\bot}(\Right(v), \Right(v'), 2\abs{\bF}+1, 2\abs{\bF'}+1)$ are paired anchors sets of $\similarity(\bF\fragmentco{x}{y}, \bF'\fragmentco{x'}{y'})$.
        \label{it:paired_anchors:b}
    \end{enumerate}
\end{proposition}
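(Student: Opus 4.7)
The plan for both parts is to produce paired anchor sets by chaining two earlier results via \cref{rmk:anchor_transform}\eqref{it:anchor_transform:i}, which requires an anchor set $\sA$ of $\similarity(\bF\fragmentco{x}{y}, \bF'\fragmentco{x'}{y'})$ containing a specific anchor $(z,z')$ together with an anchor set $\sB$ of the ``right subproblem'' $\similarity(\bF\fragmentco{z}{y}, \bF'\fragmentco{z'}{y'})$; \cref{rmk:anchor_transform}\eqref{it:anchor_transform:i} then turns $\sA$ and $\sB$ into paired anchor sets.

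For part~\eqref{it:paired_anchors:a}, I will first apply \cref{rmk:mao} to the hypothesis with split position $z=\Left(v)$ to obtain $\sA\coloneqq\Left(v)\times\fragment{1}{(2\abs{\bF'}+1)}$ as an anchor set of $\similarity(\bF\fragmentco{x}{y}, \bF'\fragmentco{x'}{y'})$, and I will fix any anchor $(\Left(v),z')\in\sA$. The key observation is that in $\bF\fragmentco{\Left(v)}{y}$ the node $v$ is top-level (each of its ancestors $u$ has $\Left(u)<\Left(v)$ and so lies outside this subforest), whence the forest identity $\bF\fragmentco{\Left(v)}{y}=\bF\fragmentco{\Left(v)}{\Right(v)}+\bF\fragmentco{\Right(v)}{y}$ holds. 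A second invocation of \cref{rmk:mao}, applied to $\similarity(\bF\fragmentco{\Left(v)}{y}, \bF'\fragmentco{z'}{y'})$ with split $z=\Right(v)$, then yields $\sB\coloneqq\Right(v)\times\fragment{1}{(2\abs{\bF'}+1)}$ as an anchor set of the right subproblem, and \cref{rmk:anchor_transform}\eqref{it:anchor_transform:i} concludes the proof.

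For part~\eqref{it:paired_anchors:b}, I will first apply \cref{rmk:anchor_corner} to the hypothesis to obtain $\sA\coloneqq\sB^{\top}(1,1,\Left(v),\Left(v'))$ as an anchor set of $\similarity(\bF\fragmentco{x}{y}, \bF'\fragmentco{x'}{y'})$, and fix an anchor $(z,z')\in\sA$ (so either $z=\Left(v)$ or $z'=\Left(v')$). The plan is then to lift the hypothesis to the right subproblem, i.e., to establish
\[
   \similarity(\bF\fragmentco{z}{y}, \bF'\fragmentco{z'}{y'}) = \similarity(\bF\fragmentco{z}{\Right(v)}+\bF\fragmentco{\Right(v)}{y}, \bF'\fragmentco{z'}{\Right(v')}+\bF'\fragmentco{\Right(v')}{y'}),
\]
and then invoke the bottom-right mirror of \cref{rmk:anchor_corner} (whose proof is the one of \cref{rmk:anchor_corner} with $\Left$ and $\Right$ swapped) to obtain $\sB\coloneqq\sB^{\bot}(\Right(v),\Right(v'),2\abs{\bF}+1,2\abs{\bF'}+1)$ as an anchor set of the right subproblem; a final use of \cref{rmk:anchor_transform}\eqref{it:anchor_transform:i} then closes the argument.

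The main obstacle will be the lifted equality in part~\eqref{it:paired_anchors:b}. My plan is first to promote the given hypothesis to its $\Right$-analog using the bi-order identities $\bF\fragmentco{x}{\Left(v)}+\bF\fragmentco{\Left(v)}{y}=\bF\fragmentco{x}{\Right(v)}+\bF\fragmentco{\Right(v)}{y}$ and its $\bF'$-counterpart (both valid because $v$ is top-level both in $\bF\fragmentco{\Left(v)}{y}$ and in $\bF\fragmentco{x}{\Right(v)}$, and analogously for $v'$). Combining this $\Right$-hypothesis with the anchor equality for $(z,z')$ and the elementary inequality $\similarity(\bF_1,\bF_1')+\similarity(\bF_2,\bF_2')\leq\similarity(\bF_1+\bF_2,\bF_1'+\bF_2')$, I plan to sandwich $\similarity(\bF\fragmentco{z}{y}, \bF'\fragmentco{z'}{y'})$ between the two sides of the desired equation, dispatching the asymmetry between the $z=\Left(v)$ and $z'=\Left(v')$ cases by a short case analysis.
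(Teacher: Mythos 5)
Part~\eqref{it:paired_anchors:a} of your proposal coincides with the paper's proof.

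For part~\eqref{it:paired_anchors:b}, the scaffolding matches the paper's, and your instinct is sound: the paper's claimed forest identities $\bF\fragmentco{z}{y}=\bF\fragmentco{z}{\Right(v)}+\bF\fragmentco{\Right(v)}{y}$ and its $\bF'$-twin can fail whenever $\bF\fragmentco{z}{y}$ or $\bF'\fragmentco{z'}{y'}$ still contains a strict ancestor of $v$ or $v'$, so a similarity-level argument is indeed needed. However, the sandwich you propose does not close the nontrivial direction. Set $L=\similarity(\bF\fragmentco{z}{y},\bF'\fragmentco{z'}{y'})$ and $R=\similarity(\bF\fragmentco{z}{\Right(v)}+\bF\fragmentco{\Right(v)}{y},\,\bF'\fragmentco{z'}{\Right(v')}+\bF'\fragmentco{\Right(v')}{y'})$; the direction $L\ge R$ is free by monotonicity of $\similarity$ in subforests. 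For $L\le R$: since $z\le\Left(v)$ and $z'\le\Left(v')$, the ancestor-free forests $\bA=\bF\fragmentco{x}{\Right(v)}+\bF\fragmentco{\Right(v)}{y}$ and $\bA'=\bF'\fragmentco{x'}{\Right(v')}+\bF'\fragmentco{\Right(v')}{y'}$ split at $(z,z')$ exactly into $\bF\fragmentco{x}{z},\bF'\fragmentco{x'}{z'}$ on the left and into the two $R$-forests on the right, so your superadditivity estimate reads $\similarity(\bA,\bA')\ge\similarity(\bF\fragmentco{x}{z},\bF'\fragmentco{x'}{z'})+R$. Combined with the $\Right$-hypothesis $\similarity(\bF\fragmentco{x}{y},\bF'\fragmentco{x'}{y'})=\similarity(\bA,\bA')$ and the anchor equality $\similarity(\bF\fragmentco{x}{y},\bF'\fragmentco{x'}{y'})=\similarity(\bF\fragmentco{x}{z},\bF'\fragmentco{x'}{z'})+L$, this chain gives only $L\ge R$ again, which is vacuous. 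To obtain $L\le R$ you would need the superadditivity at $(z,z')$ to be an \emph{equality}, i.e.\ $(z,z')$ would have to be an anchor of $\similarity(\bA,\bA')$ as well. That fact is established inside the \emph{proof} of \cref{rmk:anchor_corner} for the anchor it constructs (and it is precisely what makes the lifted equality go through: the two monotone estimates comparing the pieces of $\bA,\bA'$ with $\bF\fragmentco{x}{z},\bF'\fragmentco{x'}{z'}$ and with $L$ then sum to an equality and are therefore individually tight), but it is not recorded in the \emph{statement} of \cref{rmk:anchor_corner}, so the three tools you cite cannot recover it, and simply ``fixing an anchor $(z,z')\in\sA$'' is too loose to guarantee it.
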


\begin{proof}
    We first prove \eqref{it:paired_anchors:a}.
    By \cref{rmk:mao},
    we have that $\Left(v) \times \fragment{1}{(2|\bF'|+1)}$ is an anchor set of $\similarity(\bF\fragmentco{x}{y},\bF'\fragmentco{x'}{y'})$.
    Observe that $\bF\fragmentco{\Left(v)}{y} = \sub(v) + \bF\fragmentco{\Right(v)}{y}$.
    Thus, by \cref{rmk:mao}, for every $(z,z') \in \Left(v) \times \fragment{1}{(2|\bF'|+1)}$, we have that $\Right(v) \times \fragment{1}{(2|\bF'|+1)}$ is an anchor set of 
    $\similarity(\bF\fragmentco{z}{y},\bF'\fragmentco{z'}{y'})$.
    To conclude the proof of \eqref{it:paired_anchors:a} we use \cref{rmk:anchor_transform}\eqref{it:anchor_transform:i}. 

    We prove \eqref{it:paired_anchors:b} in a similar manner. By \cref{rmk:anchor_corner}, the set \(\sB^{\top}(1, 1, \Left(v), \Left(v'))\) serves as an anchor set for \(\similarity(\bF\fragmentco{x}{y}, \bF'\fragmentco{x'}{y'})\). 
    For each \((z, z') \in \sB^{\top}(1, 1, \Left(v), \Left(v'))\), we have \(\bF\fragmentco{z}{y} = \bF\fragmentco{z}{\Right(v)} + \bF\fragmentco{\Right(v)}{y}\) and \(\bF\fragmentco{z'}{y'} = \bF\fragmentco{z'}{\Right(v')} + \bF\fragmentco{\Right(v')}{y'}\). 
    Thus, we can once again apply \cref{rmk:anchor_corner}, using a symmetric equivalent that uses \(\Right(v)\) instead of \(\Left(v)\), and subsequently apply \cref{rmk:anchor_transform}\eqref{it:anchor_transform:i}. 
\end{proof}

\subparagraph{Aligned Subtrees.} 
Next, we introduced what means for two subtrees to be \emph{aligned} by a similarity.

\begin{definition}
    Given two forests $\bF,\bF'$ and nodes $v \in \bF$, $v' \in \bF'$, we say $\similarity(\bF, \bF')$ aligns $\sub(v)$ to $\sub(v')$ if 
    \begin{align*}
        \similarity(\bF, \bF') &= \similarity(\bF\fragmentco{1}{\Left(v)}, \bF'\fragmentco{1}{\Left(v')}) + \similarity(\sub(v), \sub(v'))\\
        &+ \similarity(\bF\fragmentco{\Right(v)}{(2\abs{\bF}+1)}, \bF'\fragmentco{\Right(v')}{(2\abs{\bF'}+1)}).
        \qedhere
    \end{align*}
\end{definition}

In other words, $\similarity(\bF, \bF')$ aligns $\sub(v)$ to $\sub(v')$, if $\{(\Left(v),\Left(v))\}, \{(\Right(v),\Right(v'))\}$ are paired anchor sets.

\begin{proposition}\label{prp:align_or_remove}
    Given two forests $\bF,\bF'$ and a node $v \in \bF$, let $\bP$ be the set of nodes contained in the path going from the virtual root of $\bF$ to $v$ (virtual root and $v$ excluded). 
    Then, for each pair of subforests $\bF\fragmentco{x}{y}$ and $\bF'\fragmentco{x'}{y'}$ such that 
    $\sub(v) \subseteq \bF\fragmentco{x}{y}$ at least one of the two following cases holds:
    \begin{enumerate}[(a)]
        \item There exists a node $u \in \bP$ such that $\similarity(\bF\fragmentco{x}{y}, \bF'\fragmentco{x'}{y'})$ aligns $\sub(u)$ to a subtree of $\bF'$.
        \label{prp:align_or_remove:a}
        \item $\similarity(\bF\fragmentco{x}{y}, \bF'\fragmentco{x'}{y'}) = \similarity(\bF\fragmentco{x}{\Left(v)} + \bF\fragmentco{\Left(v)}{y}, \bF')$.
        \label{prp:align_or_remove:b}
    \end{enumerate}
\end{proposition}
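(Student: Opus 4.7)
The plan is to fix an optimal mapping $M$ realizing $\similarity(\bF\fragmentco{x}{y}, \bF'\fragmentco{x'}{y'})$ and to split based on whether $M$ uses any ancestor of $v$ that survives in $\bF\fragmentco{x}{y}$. Letting $\bP_{xy} \coloneqq \{u \in \bP : x \leq \Left(u) \text{ and } \Right(u) \leq y\}$, the case ``some $u \in \bP_{xy}$ is mapped by $M$ to some $u'$'' will yield case~\eqref{prp:align_or_remove:a}, and the complementary case will yield case~\eqref{prp:align_or_remove:b}.

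In the first case, I plan to use the ``iff'' form of the ancestor condition in the mapping definition to show that $M$ decomposes into three valid partial mappings on $(\bF\fragmentco{x}{\Left(u)}, \bF'\fragmentco{x'}{\Left(u')})$, $(\sub(u), \sub(u'))$, and $(\bF\fragmentco{\Right(u)}{y}, \bF'\fragmentco{\Right(u')}{y'})$. Mapped descendants of $u$ must go to descendants of $u'$, while any mapped non-descendant non-ancestor $a$ of $u$ satisfies either $\Right(a) \leq \Left(u)$ or $\Left(a) \geq \Right(u)$ and thus lies on one side of $u$ in pre-order; the pre-order condition then forces the image $a'$ to the corresponding side of $u'$. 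Summing the three partial-weight bounds and using the trivial reverse inequality (concatenate optimal mappings on the three pieces) yields the alignment identity from the definition of alignment, so $\similarity(\bF\fragmentco{x}{y}, \bF'\fragmentco{x'}{y'})$ aligns $\sub(u)$ to $\sub(u') \subseteq \bF'$.

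In the second case, the plan is to show $\similarity(\bF\fragmentco{x}{y}, \bF'\fragmentco{x'}{y'}) = \similarity(\bF\fragmentco{x}{\Left(v)} + \bF\fragmentco{\Left(v)}{y}, \bF'\fragmentco{x'}{y'})$ by observing that, for any mapping whose source lies in $\bF\fragmentco{x}{\Left(v)} \cup \bF\fragmentco{\Left(v)}{y}$, the validity conditions are identical whether the source is read as $\bF\fragmentco{x}{y}$ or as the concatenation. The key point is that for $a \in \bF\fragmentco{x}{\Left(v)}$ and $b \in \bF\fragmentco{\Left(v)}{y}$ one has $\Right(a) \leq \Left(v) \leq \Left(b)$, ruling out any ancestor-descendant relation between them in $\bF\fragmentco{x}{y}$ as well, and fixing their pre-order. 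Hence $M$ itself (which maps no node of $\bP_{xy}$) witnesses the ``$\leq$'' direction, while an optimum mapping on the concatenation extends to $\bF\fragmentco{x}{y}$ by leaving the nodes of $\bP_{xy}$ unmapped, witnessing ``$\geq$''.

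I expect the main technical subtlety to be the ``no-straddling'' step in the first case: ensuring that no mapped node unrelated to $u$ has its bi-order interval properly crossing $\Left(u)$ or $\Right(u)$. This reduces to the standard fact that two bi-order intervals in a forest are either nested or disjoint, so a straddling interval would have to contain $u$'s interval, forcing $a$ to be an ancestor of $u$ and contradicting the non-ancestor assumption. Once this is in place, the decomposition of $M$ in case~\eqref{prp:align_or_remove:a} is clean and the rest is bookkeeping.
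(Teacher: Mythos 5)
Your approach via an optimal mapping $M$ is sound and genuinely differs from the paper's, which instead tracks the Shasha--Zhang recurrence and observes that whenever \eqref{prp:align_or_remove:a} fails, the recursion may always delete $\bP$-nodes, reproducing the computation for $\similarity(\bF\fragmentco{x}{\Left(v)} + \bF\fragmentco{\Left(v)}{y}, \bF'\fragmentco{x'}{y'})$. That route is shorter; yours makes the mapping structure more explicit. However, as written, case~\eqref{prp:align_or_remove:a} has a gap: you fix \emph{some} mapped $u \in \bP_{xy}$, but the decomposition of $M$ into three partial mappings only works if you pick $u$ to be the \emph{topmost} (shallowest) mapped node of $\bP_{xy}$. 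Your case analysis handles mapped descendants of $u$ and mapped non-ancestor/non-descendants, but silently assumes no strict ancestor of $u$ in $\bF\fragmentco{x}{y}$ is mapped. If some $u_1 \in \bP_{xy}$ with $u_1$ a strict ancestor of $u$ is mapped by $M$, the pair $(u_1, M(u_1))$ lies outside all three pieces $\bF\fragmentco{x}{\Left(u)}$, $\sub(u)$, $\bF\fragmentco{\Right(u)}{y}$; restricting $M$ drops the weight $\eta(u_1, M(u_1))$, and the ``$\le$'' direction of the alignment identity can fail. Concretely, take $\bF = \bF'$ a single path $r\!-\!a\!-\!b\!-\!v$ and $M$ the identity: choosing $u = b$ yields $\bF\fragmentco{x}{\Left(b)} = \bF\fragmentco{\Right(b)}{y} = \emptyset$, so the right-hand side of the alignment identity omits $\eta(r,r') + \eta(a,a')$, while the left-hand side includes them.

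Choosing $u$ topmost repairs this: strict ancestors of $u$ in $\bP_{xy}$ are unmapped by maximality, and a mapped strict ancestor $a'$ of $u'$ in $\bF'\fragmentco{x'}{y'}$ would, by the ancestor ``iff,'' have a preimage that is a strict ancestor of $u$, hence of $v$, hence in $\bP_{xy}$ and mapped --- contradicting maximality. With that refinement, the decomposition is clean, and the remainder of your argument (the no-straddling step via nested-or-disjoint bi-order intervals in case~\eqref{prp:align_or_remove:a}, and the preservation of ancestor/pre-order relations across the two readings of the source forest in case~\eqref{prp:align_or_remove:b}) goes through correctly.
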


\begin{proof}
    We show that whenever \eqref{prp:align_or_remove:a} does not hold, then \eqref{prp:align_or_remove:b} does.
    To this end, note that $\bF\fragmentco{x}{\Left(v)} + \bF\fragmentco{\Left(v)}{y}$ correponds to $\bF\fragmentco{x}{y}$
    with all the nodes from $\bP$ taken out. Moreover, the pre-order of $\bF\fragmentco{x}{\Left(v)} + \bF\fragmentco{\Left(v)}{y}$ equals to the pre-order of $\bF\fragmentco{x}{y}$ with the nodes of $\bP$ taken out.

    Next, let us compare the computations done by Shasha and Zhang's recurrence scheme \cite{SZ89} for
    the two similarities $ \similarity(\bF\fragmentco{x}{y}, \bF'\fragmentco{x'}{y'}) $ and $\similarity(\bF\fragmentco{x}{\Left(v)} + \bF\fragmentco{\Left(v)}{y}, \bF')$.
    When computing $\similarity(\bF\fragmentco{x}{y}, \bF'\fragmentco{x'}{y'})$ note that each time the recurrence scheme arrives to consider as next node of $\bF\fragmentco{x}{y}$ some $u \in \bP$, we can ignore the case where $\sub(u)$ is aligned to the subtree of the current node in $\bF'\fragmentco{x'}{y'}$. As a consequence, we can assume that $u$ is always deleted. We conclude that we obtain the same computations as in the recurrence scheme for $\similarity(\bF\fragmentco{x}{\Left(v)} + \bF\fragmentco{\Left(v)}{y}, \bF')$.
\end{proof}

\subparagraph{Spines.}
Let $\bF$ be a forest.
We define a \emph{spine} as a path that starts at the root of one of the trees in $\bF$ and ends in a leaf within that tree.

Now, consider a spine $\bS \subseteq \bF$.
For $s,q \in \bS$, we write $s \prec q$ if $s$ appears before $q$ in the root-to-leaf path, and we say $s$ \emph{precedes} $q$.
Furthermore, we say that $s$ \emph{immediately precedes} $q$ if there is no $r \in \bS$ such that $s \prec r \prec q$.

Lastly, suppose a spine $\bS \subseteq \bF$ is a path of the form $s_1, \ldots, s_{m}$.
We observe that for all $i \in \fragmentco{1}{m}$, a bi-order traversal enters $s_i$ before it enters $s_{i+1}$,
and leaves $s_{i+1}$ before it leaves $s_{i}$. Consequently,
\[
    \Left(s_1) < \Left(s_2) < \cdots < \Left(s_m) < \Right(s_m) < \cdots < \Right(s_2) < \Right(s_1).
\]
Moreover, $\Left(s_m) = \Right(s_m) - 2$,
since $s_m$ is a leaf and the bi-order traversal leaves $s_m$ right after it enters it.

\subsection{Min-Plus Products and Structured Instances}

While Min-plus Matrix Multiplication (\MUL) of $n \times n$ matrices is commonly conjectured to require $n^{3 - o(1)}$ time, there have been several structured instances in which faster algorithms have been obtained, e.g. bounded entries \cite{AGM97}, (row/column) bounded-difference matrices \cite{BGSV19}, and (row/column) monotone matrices \cite{DBLP:conf/soda/WilliamsX20, DBLP:conf/icalp/Gu0WX21, CDXZ22} .
Specifically, we focus on the class of monotone matrices.

\begin{definition}
    An $n \times n$ matrix is row-monotone if all entries are non-negative integers bounded by $\Oh(n)$ and each row of this matrix is non-decreasing.
    Similarly, we may define a column-monotone matrix.
\end{definition}

When at least one of the matrices is row-monotone or column-monotone, there are sub-cubic algorithms for min-plus matrix multiplication \cite{DBLP:conf/soda/WilliamsX20, DBLP:conf/icalp/Gu0WX21, CDXZ22}.

\begin{theorem}[\cite{CDXZ22}]
    There is a (randomized) algorithm that computes the min-plus product $A * B$ in expected running time $\Ohtilde(n^{(3 + \omega)/2})$, where $A$ is an $n \times n$ integer matrix and $B$ is a $n \times n$ row-monotone matrix.
    The result holds also when $B$ is a $n \times n$ column-monotone matrix. \lipicsEnd
\end{theorem}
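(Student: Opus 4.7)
Since the statement is attributed to \cite{CDXZ22}, the paper will almost certainly invoke it as a black box; what follows is how I would reconstruct a proof of it from scratch. The plan is to interpolate between the Alon--Galil--Margalit algorithm for min-plus products of matrices with entries bounded by $W$ (which runs in $\tilde{\Oh}(n^{\omega} \cdot W)$ time via the polynomial embedding $a+b \mapsto x^{a+b}$ and fast matrix multiplication) and a trivial $\Oh(n^3)$ direct computation, using the monotone structure of $B$ to confine the relevant ``slack'' range to a window of width much smaller than $n$.

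Concretely, for a parameter $\Delta$ to be chosen later, I would bucket the value range $[0, \Oh(n)]$ of $B$ into $\Oh(n/\Delta)$ levels of width $\Delta$. Because each row of $B$ is non-decreasing, the columns assigned to any fixed level in a given row form a contiguous interval; in particular, every row of $B$ is covered by at most $\Oh(n/\Delta)$ level-intervals, and the within-interval entries of $B$ have range at most $\Delta$. Working level by level, I would form an auxiliary matrix $\tilde{B}$ from $B$ in which entries outside the current level are set to $+\infty$ and entries inside the level are shifted down by the level base; then $\tilde{B}$ is a bounded-entry matrix (range $\Delta$), and applying the AGM polynomial trick to $A \star \tilde{B}$ recovers all min-plus witnesses whose $B$-value lies in this level in time $\tilde{\Oh}(n^\omega \Delta)$ per level.

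The main obstacle is the $\Oh(n/\Delta)$ factor from iterating over levels, which alone gives $\tilde{\Oh}(n^{\omega+1})$ and is far too slow. To beat this, I would pack many levels into a single rectangular matrix multiplication: stack the $\Oh(n/\Delta)$ level-shifted copies of $\tilde{B}$ as a tall matrix of dimension $(n \cdot n/\Delta) \times n$, and observe that by the monotonicity of each row of $B$ the nonzero pattern in this stack is a contiguous band, so one can compress it into an effective $n \times n$ product and charge the extra cost only to the $\Oh(n^2 \cdot n/\Delta)$ pairs $(i,j)$ that need their witness re-identified. Balancing $n^\omega \Delta$ against $n^3/\Delta$ yields $\Delta = n^{(3-\omega)/2}$ and total time $\tilde{\Oh}(n^{(3+\omega)/2})$. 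Randomization enters when identifying, among the $\Oh(n/\Delta)$ candidate levels for each $(i,j)$, the one containing the actual minimum: a sampling/hashing step over witnesses (as in the polynomial method of \cite{BGSV19}) succeeds with high probability and does not dominate the runtime.

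Finally, the column-monotone case is symmetric: if $B$ is column-monotone, then $B^\top$ is row-monotone, and since $A \star B = (B^\top \star A^\top)^\top$ and the above argument treats $A$ as an arbitrary integer matrix, the same bound follows by transposing the inputs. The anticipated technical hurdle is entirely in the compression step of the previous paragraph, where one must ensure the ``banded'' rectangular multiplication really can be realized at the rectangular-matrix-multiplication cost consistent with $\omega$; this is where the arguments of \cite{CDXZ22} do the real work.
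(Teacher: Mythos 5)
The paper itself invokes this theorem as a black box from \cite{CDXZ22} and provides no proof, so there is no internal argument to compare against; you anticipated this correctly. Your reconstruction, however, has two concrete gaps.

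First, your per-level cost claim is wrong. You assert that for each level you can compute the min-plus product $A\star\tilde{B}$ (with $\tilde{B}$ having finite entries in $[0,\Delta]$ and some $+\infty$ entries) in $\tilde{\Oh}(n^\omega\Delta)$ time via ``the AGM polynomial trick.'' That trick encodes entries as exponents and multiplies polynomials, but it requires \emph{both} factor matrices to have bounded entries; here $A$ is an arbitrary integer matrix, exactly as in the theorem statement. After masking $B$ down to a level, you are left with the problem of computing, for each $(i,j)$ and each residue $v\in[0,\Delta]$, the quantity $\min_k\{a_{ik}:\tilde{b}_{kj}=v\}$ --- i.e.\ a minimum of an unbounded matrix $A$ over a boolean mask --- which is not known to be computable in $\tilde{\Oh}(n^\omega)$ time per residue. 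Some normalization, random shift, or sampling step that tames $A$ is essential, and your sketch supplies none; this is precisely the place where monotonicity of $B$ has to interact with $A$ in a nontrivial way, not merely with the bucketing of $B$'s own values.

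Second, the ``packing'' step is unjustified and, as you yourself concede, is where \cite{CDXZ22} does the real work. Stacking $\Oh(n/\Delta)$ level-shifted copies of $\tilde{B}$ and claiming that because the nonzero pattern is a (row-dependent) band you can ``compress it into an effective $n\times n$ product'' has no supporting mechanism: there is no known speedup for matrix multiplication of banded matrices whose band offset varies per row, and the per-entry witness re-identification charge of $\Oh(n^3/\Delta)$ that you then balance against $n^\omega\Delta$ is asserted rather than derived. The parameter balance $\Delta=n^{(3-\omega)/2}$ and the final exponent $(3+\omega)/2$ are numerically consistent with the target and resemble the tradeoffs that genuinely appear in \cite{CDXZ22} and its predecessors (Williams--Xu, Gu et al.), but the terms being balanced are not established. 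On the other hand, your reduction of the column-monotone case to the row-monotone case via $A\star B = (B^\top\star A^\top)^\top$ is correct.
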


The above result holds also when the rows or columns of the matrix are non-increasing.

We let $\sT_{\MonMUL}(n)$ denote the time required to multiply an $n \times n$ integer matrix with a $n \times n$ monotone matrix.
Furthermore, let $\sT_{\MonMUL}(a, b, c, d)$ be the time required to multiply an $a \times b$ integer matrix with a $b \times c$ monotone matrix with all entries  bounded by $d$. 

In our work, we will specifically work with rectangular matrices, and thus require the following simple lemma. 
We use $\sT_{\MUL}(a, b, c)$ to denote the time required to compute the $(\min, +)$ product between arbitrary $a \times b$ integer matrices and $b \times c$ integer matrices.

\begin{lemma}
    Suppose $n \geq m$.
    Then,
    \begin{align*}
        &\Oh(\sT_{\MUL}(n, m, m)) = \Oh(\sT_{\MUL}(m, n, m)) = \Oh(\sT_{\MUL}(m, m, n)) = \Oh(n/m \cdot \sT_{\MUL}(m)) \quad \text{and} \quad \\
        &\Oh(\sT_{\MonMUL}(n, m, m, m)) = \Oh(\sT_{\MonMUL}(m, n, m, m)) = \Oh(\sT_{\MonMUL}(m, m, n, m))= \Oh(n/m \cdot \sT_{\MonMUL}(m)).
    \end{align*}
    \lipicsEnd
\end{lemma}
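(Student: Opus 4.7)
The plan is to reduce each rectangular instance to $\Oh(\lceil n/m \rceil)$ square instances of dimension $m \times m$ by a straightforward block decomposition, and then to verify that, in the monotone variants, the block structure preserves both the monotonicity property and the entry bound. The three cases are handled symmetrically, so I would first write out the argument for the plain \MUL setting and then inherit it in the monotone setting with only minor bookkeeping.

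For $\sT_{\MUL}(n,m,m)$ I would partition the $n \times m$ matrix $A$ into $\lceil n/m\rceil$ horizontal strips $A_1,\ldots,A_{\lceil n/m \rceil}$ of shape $m \times m$, compute each square product $A_i \star B$ in $\sT_{\MUL}(m)$, and concatenate the results. For $\sT_{\MUL}(m,m,n)$ I would symmetrically partition $B$ into $\lceil n/m\rceil$ vertical strips. For $\sT_{\MUL}(m,n,m)$, where the long dimension is the inner one, I would instead partition $A$ into vertical strips $A_1,\ldots,A_{\lceil n/m\rceil}$ and $B$ into horizontal strips $B_1,\ldots,B_{\lceil n/m\rceil}$, compute the $\lceil n/m\rceil$ square min-plus products $A_i \star B_i$, and finally take the entrywise minimum; the combining step costs $\Oh((n/m)\cdot m^2) = \Oh(nm)$, which is absorbed into $\Oh((n/m)\sT_{\MUL}(m))$ because $\sT_{\MUL}(m) = \Omega(m^2)$. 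Any mismatch between $n$ and a multiple of $m$ is handled by padding with $+\infty$.

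For the monotone variants the identical block decompositions apply, and all that needs to be checked is that each $m\times m$ sub-instance is a valid \MonMUL instance with entry bound $m$. Row-monotonicity and column-monotonicity are both inherited by arbitrary contiguous submatrices, since restricting to a window of rows or a window of columns does not change the order of the surviving entries along any row or column. The entry bound $d=m$ is trivially inherited. In the inner-splitting case $\sT_{\MonMUL}(m,n,m,m)$, if $A$ is row-monotone then each vertical strip of $A$ is a contiguous window of non-decreasing rows and hence still row-monotone, and analogously for the column-monotone setting; if instead the monotonicity is on $B$, we invoke the symmetric version (transpose both sides and swap roles).

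I expect no substantive obstacle; the only point that deserves a line of care is confirming monotonicity preservation under the inner-dimension split, since there the monotone matrix is being cut along the dimension in which monotonicity is required to hold. Once that is observed, each of the three equalities in each of the two displayed lines follows from summing $\lceil n/m\rceil$ copies of the square bound and absorbing the entrywise-minimum post-processing into $\sT_{\MUL}(m)$ or $\sT_{\MonMUL}(m)$ as appropriate.
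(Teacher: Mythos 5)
The paper states this lemma without proof (it appears as a bare fact terminated by \lipicsEnd), so there is no paper argument to compare against; your block-decomposition proof is the natural one and is correct. The only point worth flagging is a slight mismatch with the paper's notation: in the paper's definition of $\sT_{\MonMUL}(a,b,c,d)$ the \emph{second} factor (the $b\times c$ matrix) is the monotone one, whereas you phrase the inner-splitting case as ``if $A$ is row-monotone,'' which follows the problem-box convention rather than the function-signature convention. Your argument is robust to this ambiguity anyway, since restricting a row-monotone or column-monotone matrix to any contiguous block of rows and columns preserves both monotonicity and the entry bound, so each of the $\lceil n/m\rceil$ sub-instances is a legitimate $\sT_{\MonMUL}(m)$ call regardless of which side carries the structure. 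Your observation that the $\Oh(nm)$ entrywise-minimum in the inner-split case is absorbed because $\sT_{\MUL}(m)=\Omega(m^2)$ is the one step that genuinely needs saying, and you said it.
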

We require the following bound on $(\min, +)$-product between monotone matrices where the entries are bounded by $\D$.

\begin{lemma}[\cite{Durr23}]
    $\sT_{\MonMUL}(n, n, n, \D) = \Ohtilde(\sqrt{\D} n^{(2 + \omega)/2})$.
     \lipicsEnd
\end{lemma}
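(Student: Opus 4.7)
The plan is to generalize the bounded monotone min-plus product algorithm of Chi--Duan--Xu--Zhou \cite{CDXZ22}, which attains this bound for $\D = \Oh(n)$, to an arbitrary entry bound $\D$, by tracking the $\D$ parameter throughout their analysis in place of $n$. Write the instance as $A \star B$ with $B$ (say) column-monotone and entries in $\{0,1,\dots,\D\}$. The first move is to partition the inner dimension into $n/s$ contiguous strips of width $s$, for a parameter $s$ to be tuned later; compute the $(\min,+)$ product of each strip separately, and then combine the $n/s$ partial products via an entrywise minimum. The combining step alone costs $\Oh(n^3/s)$.

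Inside each strip, I would exploit monotonicity: for any fixed column of $B$, the entries are non-decreasing down the inner axis with total variation at most $\D$, so the in-strip ranges telescope across strips to a total of only $\Oh(\D)$. Subtracting a per-column in-strip baseline reduces each strip to a residual $(\min,+)$ product with small-magnitude entries, while the baselines themselves are handled by a lightweight side computation (they carry only $n/s$ values per row). I would then apply Zwick's polynomial-encoding bounded-entry $(\min,+)$ algorithm on each residual strip (whose cost scales linearly with the in-strip range), coupled with the random-sampling trick of \cite{CDXZ22} so that one does not pay the worst-case in-strip range independently for every strip. A charging argument then amortizes the total Zwick cost against the global $\Oh(\D)$ variation budget, giving roughly $\Ohtilde(\D \cdot s \cdot n^{\omega-1})$ for this half of the algorithm.

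Finally, balance the two dominant cost terms $\Oh(n^3/s)$ and $\Ohtilde(\D \cdot s \cdot n^{\omega-1})$. The optimum is $s \approx \sqrt{n^{4-\omega}/\D}$, yielding total time $\Ohtilde(\sqrt{\D}\, n^{(2+\omega)/2})$; at $\D = n$ this recovers the $\Ohtilde(n^{(3+\omega)/2})$ bound of \cite{CDXZ22}, and at $\D = 1$ it degenerates to $\Ohtilde(n^{(2+\omega)/2})$, matching what one expects when the matrix is essentially constant. The main obstacle will be the charging argument: a naive per-strip bound pays $\Ohtilde(\D \cdot n^\omega)$ overall, losing the benefit of monotonicity, so one must argue carefully that the $\D$-variation budget is spread across strips in a way that the sampling in \cite{CDXZ22} can exploit---this is the delicate combinatorial core that \cite{CDXZ22} develops for $\D = \Oh(n)$ and that must be re-derived with $\D$ as a free parameter here.
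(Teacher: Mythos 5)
The paper does not prove this lemma --- it is cited verbatim from D\"urr~\cite{Durr23}, so there is no in-paper argument to compare against. Judged on its own, your sketch has the right outer shell but stops short of a proof at exactly the point you flag yourself.

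Concretely: the balancing calculation is correct --- setting $n^3/s = \D\,s\,n^{\omega-1}$ does give $s = \sqrt{n^{4-\omega}/\D}$ and hence $\Ohtilde(\sqrt{\D}\,n^{(2+\omega)/2})$, recovering $\Ohtilde(n^{(3+\omega)/2})$ at $\D = \Theta(n)$. The strip decomposition along the inner ($k$) axis, the per-column baseline subtraction, and the observation that the in-strip ranges of each column telescope to at most $\D$ across strips are all sound. The gap is the step you label the ``delicate combinatorial core'': for a fixed strip the ranges $r_{tj}$ vary wildly over $j$, and a per-strip Zwick call must pay for the \emph{maximum} range $\max_j r_{tj}$ in that strip, so $\sum_t \max_j r_{tj}$ can be as large as $\Theta((n/s)\D)$ rather than $\Theta(\D)$. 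Turning the per-column budget $\sum_t r_{tj} \le \D$ into an amortized bound on the matrix-multiplication cost is precisely the content of the Chi--Duan--Xu--Zhou hashing/sampling machinery, and you invoke it by name without supplying it. A reader cannot verify the claimed $\Ohtilde(\D\,s\,n^{\omega-1})$ bound from what you wrote; they would have to open~\cite{CDXZ22} and~\cite{Durr23} themselves, which is what the paper's citation already asks them to do.

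Two further remarks. First, the actual CDXZ22 algorithm is not a flat strip partition plus a post-hoc charging argument; it is a hierarchical (segment-tree-style) decomposition with level-dependent random sampling of ``anchor'' indices, and the amortization is baked into that recursion rather than bolted on afterwards. Your plan is therefore a paraphrase of the intended strategy, not a description of the algorithm one would actually run. Second, my understanding of D\"urr's contribution is that it is closer to \emph{re-parameterizing} the existing CDXZ22 analysis --- replacing the implicit entry bound $\Theta(n)$ by the free parameter $\D$ wherever it appears, and handling rectangular shapes --- than to re-deriving the combinatorial core from scratch, which is what your sketch would require. So even the scaffolding you propose (strip decomposition, residual Zwick calls, then a fresh amortization) is likely heavier than what the cited proof does. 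In short: the plan is reasonable and the exponent arithmetic checks out, but the proof is not there, and you have correctly identified where it is missing.
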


\section{Reduction from Forest Edit Distance to \APSP}
\label{sec:fed}

In this section, we focus on proving \cref{thm:fed} which we restate here and its unweighted version \Cref{thm:unweighted-fed-balance}.

\fed*

\begin{theorem}
    \label{thm:unweighted-fed-balance}
    There is an $\Oh(\sT_{\MonMUL}(n) + n^{2 + o(1)} g(n))$ time algorithm for unweighted \FED where $n = \max(|\bF|,|\bF'|)$ and $\sT_{\MonMUL}(n, n, n, \D) = \Oh(f(n) g(\D))$ for some functions $f, g$. \lipicsEnd
\end{theorem}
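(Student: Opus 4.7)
The plan is to follow the same divide-et-impera scheme used in the proof of \Cref{thm:fed}, driven by Mao's forest decomposition, and to replace every (min,+) matrix product appearing in that proof by a bounded monotone (min,+) product. Two ingredients must then be verified at each recursive step: that the matrices involved are row- or column-monotone, and that all their entries are bounded by $\Oh(n)$.

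Boundedness is the easier direction. In the unweighted setting every edit has unit cost, so the matching weight satisfies $\eta(v, v') = \delta(v,\epsilon) + \delta(\epsilon, v') - \delta(v, v') \le 2$, and hence for any two subforests $\bG \subseteq \bF$ and $\bG' \subseteq \bF'$ we have $\similarity(\bG, \bG') \le 2 \min(|\bG|, |\bG'|) \le \Oh(n)$. Therefore every intermediate entry and every output value lies in $[0, \Oh(n)]$, making every product an instance of $\MonMUL$ with bound $\D = \Oh(n)$. For monotonicity, observe that $\similarity(\bF\fragmentco{x}{y}, \bF'\fragmentco{x'}{y'})$ is non-decreasing in $y, y'$ and non-increasing in $x, x'$, since enlarging a subforest only offers additional matching opportunities. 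Each (min,+) product in the proof of \Cref{thm:fed} combines two such similarity matrices along a shared border coordinate; once the outer indices are fixed, the matrix indexed by the inner border is monotone along the appropriate axis (non-increasing and non-decreasing being equivalent for $\MonMUL$ via sign flip). A case-by-case inspection of the max-plus products in the recursive step of \Cref{thm:fed} suffices to confirm monotonicity.

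With these two properties established, plugging $\sT_{\MonMUL}(m, m, m, m) = \Oh(f(m) g(m))$ into every step of the proof of \Cref{thm:fed} in place of $\sT_{\MUL}(m)$ yields the desired bound. The dominant top-level recursion contributes $\sT_{\MonMUL}(n) = \Oh(f(n) g(n))$, while the $n^{2+o(1)}$ overhead from \Cref{thm:fed} (which accounts for base cases, output writing, and polylogarithmic blowups across the $n^{o(1)}$ levels of Mao's decomposition) is inflated to $n^{2+o(1)} g(n)$: even at deep levels of recursion on subforests of size much smaller than $n$, the entries may still be as large as $n$, so the per-entry contribution carries a $g(n)$ factor.

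The main obstacle will be the bookkeeping required to track which axis of each product is monotone, especially at steps where the scheme reverses a forest (so that its bi-order traversal flips) or where outputs of recursive subinstances are permuted or transposed before feeding into the next product. A secondary care-point is that the bound $\D$ fed into $\MonMUL$ must remain $\Oh(n)$ throughout; using the tighter bound $\D = \Oh(\min(|\bG|,|\bG'|))$ of a recursive subinstance would be incorrect because the max of a product of small-entry matrices can still reach $\Theta(n)$ after the outputs of many subrectangles are combined. Apart from this verification, the proof is a near-verbatim port of the argument for \Cref{thm:fed}.
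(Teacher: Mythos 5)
Your proof is correct and takes essentially the same approach as the paper: it reuses the divide-et-impera recursion for \BBD driven by Mao's synchronous-forest decomposition, and at each merge step swaps the generic max-plus product for a bounded monotone one, verifying boundedness via $\eta(v,v')\le 2$ and monotonicity via the fact that enlarging a subforest along the inner border index can only improve the similarity (the paper phrases the same observation as the existence of a zero-weight path $\bar{w}\to\bar{w}'$ in the alignment graph in the unweighted case of \cref{lem:fed_patch_two}). Your secondary caution about needing the global bound $\D=\Oh(n)$ at deep recursion levels is actually over-conservative — once both subforests have size $\le m_\ell$ the entries are bounded by $\Oh(m_\ell)$ — but since the paper itself carries the global $\D$ parameter through \cref{cor:bbd_algo} this does not change the argument or the stated bound.
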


To approach \FED, we adopt a conceptual shift, viewing it as a graph problem.
Our perspective extends the computation of border-to-border distances in the alignment graph for string edit distance~\cite{LMS98, T06, ACS08, K05}.
These distances precisely capture the edit distance between the prefix of one string and the suffix of the other,
as well as between the entire string and all infixes of the second.

In this section, we denote by $\dist_{\bG}(u, v)$ the longest distance (rather than shortest distance)
from $u \in \bG$ to $v \in \bG$ in a directed weighted acyclic graph $\bG$.
If $v$ is not reachable from $u$, then $\dist_{\bG}(u, v) = - \infty$.
This shift in definition is merely a notational change,
given that we are working with a directed acyclic graph (by simply inverting the sign of the weights, one can switch between shortest and longest distances).

Throughout this section, we will give algorithms for the general setting of weighted tree edit distance, adding where appropriate optimizations that can be made for the unweighted tree edit distance problem.
In \Cref{sec:fed-unbalance}, we describe how to compute \FED efficiently on unbalanced instances, i.e.\ when one forest is significantly larger than the other. 
While \Cref{sec:fed-unbalance} is necessary to obtain \Cref{thm:unweighted-ted} for unweighted \TED, readers interested only in \Cref{thm:fed} may skip it.

\subsection{Generalizing alignment graphs}

In this (sub)section, we extend the concept of the alignment graph for two strings
to a broader framework capable of representing \FED.
Similar to the string case,
the alignment graph we examine has a grid as its vertex set.
This grid includes directed edges of zero weight,
connnecting every grid point to its upper and right neighbour.
Moreover, from each grid point \((i,j)\), there is an additional edge connecting to a subgrid, whose lower-left corner is \((i+1,j+1)\) and whose top-right corner coincides with that of the entire grid.

\begin{definition}\label{def:map_pi}
    Given a forest $\bF$ with pre-order $v_1, \ldots, v_{|\bF|}$, we define $\pi_{\bF} : \fragment{1}{|\bF|} \rightarrow \fragment{1}{|\bF|+1}$
    as $\pi_{\bF}(i) = \min \{j : j \geq i \text{ and } v_j \notin \sub(v_i)\}$. If no such $j$ exists, then $\pi_{\bF}(i) = |\bF|+1$.
\end{definition}

In other words, $\pi_{\bF}(i)$ maps each node $v_i$ to the next node that pre-order visits after leaving the subtree $\sub(v_i)$ (see \cref{fig:align_graph} for an example).

\begin{definition}[See also \cite{T05, BCHMRWZ07, MTWZ09}]\label{def:alg_graph}
Given two forests $\bF$, $\bF'$ with pre-orders $v_1, \ldots, v_{\abs{\bF}}$ and $v_1', \ldots, v_{\abs{\bF'}}'$,
and given a weight function $\w : D \rightarrow \mathbb{R}$ such that $\bF \times \bF' \subseteq D$,
we define the \emph{alignment graph $\bar{\bG} = (\bar{V}, \bar{E})$ w.r.t.~$(\w, \bF, \bF')$} as the directed weighted acyclic graph with
\begin{itemize}
    \item Vertex set $\bar{V} = \fragment{1}{(\abs{\bF}+1)} \times \fragment{1}{(\abs{\bF'}+1)}$;
    \item Edge set $\bar{E}$ such that $\bar{e} = ((u, u'), (w, w')) \in \bar{E}$, if exactly one of the following holds:
    \begin{enumerate}[(a)]
        \item $(w, w') = (u, u'+1)$,
        \label{it:delFp}
        \item $(w, w') = (u+1, u')$, or
        \label{it:delF}
        \item $(w, w') = (\pi_{\bF}(u), \pi_{\bF'}(u'))$;
        \label{it:match}
    \end{enumerate}
    \item Weight function $\bar{\w}$ such that
    $\bar{\w}(\bar{e}) = 0$ if $\bar{e}$ has the form~\eqref{it:delFp} or~\eqref{it:delF}, and
    $\bar{\w}(\bar{e}) = \similarity(\sub(v_u), \sub(v_{u'}'))$ if it has the form~\eqref{it:match}.
    \qedhere
\end{itemize}
\end{definition}

\begin{figure}[htbp]
    \centering
    \usetikzlibrary{matrix}

\def\fx{{5, 2, 5, 4, 5}}
\def\gy{{3, 2, 3, 5, 5}}

\def\lx{{"2","3-4"}}
\def\ly{{"2","3","4-5"}}

\begin{tikzpicture}[>=stealth, scale=0.8]

     \node at (2.5, 6) {$\bar{\bG}$};
    
    \foreach \x in {0,...,5}
        \foreach \y in {0,...,5} {
            \node[circle, draw, fill, scale=0.5] (\x\y) at (\x,\y) {};
            \pgfmathtruncatemacro{\lbx}{\x + 1}
            \pgfmathtruncatemacro{\lby}{\y + 1}

            \ifnum\y=0
                \node[below=10pt] at (\x\y) {\lbx};
            \fi
            \ifnum\x=0
                \node[left=10pt] at (\x\y) {\lby};
            \fi
        }

    \foreach \x in {0,...,4}
        \foreach \y in {0,...,5} {
            \pgfmathtruncatemacro{\nextx}{\x + 1}
            \draw[->] (\x\y) -- (\nextx\y);
        }
    \foreach \x in {0,...,5}
        \foreach \y in {0,...,4} {
            \pgfmathtruncatemacro{\nexty}{\y + 1}
            \draw[->] (\x\y) -- (\x\nexty);
        }

    \foreach \x in {0,...,4}
        \foreach \y in {0,...,4} {
            \pgfmathparse{\fx[\x]}
            \let\fvalue\pgfmathresult
            \pgfmathparse{\gy[\y]}
            \let\gvalue\pgfmathresult
            \draw[->] (\x\y) -- (\fvalue\gvalue);
        }

    \draw[red, thick] (0.5,0.5) rectangle (3.5,4.5);

    \begin{scope}[shift={(7,1)}]
        \draw[red, thick] (-0.3,-0.3) rectangle (1.3,2.3);

        \foreach \x in {0,...,1}
            \foreach \y in {0,...,2} {
                \pgfmathparse{\lx[\x]}
                \let\xlabel\pgfmathresult
                \pgfmathparse{\ly[\y]}
                \let\ylabel\pgfmathresult

                \node[circle, draw, fill, scale=0.5] (s\x\y) at (\x,\y) {};
                \ifnum\y=0
                    \node[below=10pt] at (s\x\y) {\xlabel};
                \fi
                \ifnum\x=0
                    \node[left=10pt] at (s\x\y) {\ylabel};
                \fi
            }

        \node at (0.5, 3) {$\bar{\bG}\fragment{2}{4}\fragment{2}{5}$};

        \foreach \x in {0,...,0}
            \foreach \y in {0,...,2} {
                \pgfmathtruncatemacro{\nextx}{\x + 1}
                \draw[->] (s\x\y) -- (s\nextx\y);
            }
        \foreach \x in {0,...,1}
            \foreach \y in {0,...,1} {
                \pgfmathtruncatemacro{\nexty}{\y + 1}
                \draw[->] (s\x\y) -- (s\x\nexty);
            }

        \draw[->] (s00) -- (s11);
        \draw[->] (s01) -- (s12);

    \end{scope}

    \draw[red, thick] (3.5,0.5) -- (6.7,0.7);
    \draw[red, thick] (3.5,4.5) -- (6.7,3.3);

    \node at (-10.5, 5) {$\bF$};

    \node[circle, draw, scale=0.8] (v1) at (-9,5) {$v_1$};
    \node[circle, draw, scale=0.8] (v2) at (-9.5,4) {$v_2$};
    \node[circle, draw, scale=0.8] (v3) at (-8.5,4) {$v_3$};
    \node[circle, draw, scale=0.8] (v4) at (-9,3) {$v_4$};
    \node[circle, draw, scale=0.8] (v5) at (-8,3) {$v_5$};
    \draw (v1) -- (v2);
    \draw (v1) -- (v3);
    \draw (v3) -- (v4);
    \draw (v3) -- (v5);

    \node at (-4,4) {
            \begin{tabular}{c|c|c|c|c|c}
                $\pi_{\bF}$ & 1 & 2 & 3 & 4 & 5 \\
                \hline
                & 6 & 3 & 6 & 5 & 6 \\
            \end{tabular}
        };

    \node at (-10.5, 2.5) {$\bF'$};

    \node[circle, draw, scale=0.8] (v1p) at (-10,1.5) {$v_1'$};
    \node[circle, draw, scale=0.8] (v2p) at (-10.5,0.5) {$v_2'$};
    \node[circle, draw, scale=0.8] (v3p) at (-9.5,0.5) {$v_3'$};
    \node[circle, draw, scale=0.8] (v4p) at (-8,1.5) {$v_4'$};
    \node[circle, draw, scale=0.8] (v5p) at (-8.5,0.5) {$v_5'$};
    \draw (v1p) -- (v2p);
    \draw (v1p) -- (v3p);
    \draw (v4p) -- (v5p);

    \node at (-4,1) {
        \begin{tabular}{c|c|c|c|c|c}
            $\pi_{\bF'}$ & 1 & 2 & 3 & 4 & 5 \\
            \hline
            & 4 & 3 & 4 & 6 & 6 \\
        \end{tabular}
    };

\end{tikzpicture}
    \caption{The figure displays two forests $\bF, \bF'$, together with the two 
    mappings \(\pi_{\bF} : \fragment{1}{5} \rightarrow \fragment{1}{6}\) and \(\pi_{\bF'}: \fragment{1}{5} \rightarrow \fragment{1}{6}\) from \cref{def:map_pi}.
    On the right side the alignment graph $\bar{\bG}$ w.r.t.~\((\w, \bF, \bF')\) is depicted (weights are omitted) from \cref{def:alg_graph}.
    The figure also illustrates \(\bar{\bG}\fragment{2}{4}\fragment{2}{5}\) from \cref{def:alg_sub_graph}.}
    \label{fig:align_graph}
\end{figure}

\cref{fig:align_graph} illustrates an example of an alignment graph.
Notably, if $\bF$ and $\bF'$ are forests containing single node trees,
one obtains the alignment graph for string edit distance.

In the rest of \cref{sec:fed}, we slightly abuse notation,
and for a forest \(\bF\) with pre-order \(v_1, \ldots, v_{|\bF|}\), we write \(\Left(v_{|\bF|+1}) = 2|\bF|+1\).

\begin{definition}\label{def:alg_sub_graph}
    Let $\bar{\bG}$ be an alignment graph w.r.t ~$(\w, \bF, \bF')$ where
    $\bF$, $\bF'$ are two forests with pre-orders $v_1, \ldots, v_{\abs{\bF}}$ and $v_1', \ldots, v_{\abs{\bF'}}'$.
    Given intervals $\fragment{i}{j} \subseteq \fragment{1}{(\abs{\bF}+1)}, \fragment{i'}{j'} \subseteq \fragment{1}{(\abs{\bF'}+1)}$,
    we denote with $\bar{\bG}\fragment{i}{j}\fragment{i'}{j'}$ the alignment graph w.r.t.~$(\w, \bF\fragmentco{\Left(v_i)}{\Left(v_{j})}, \bF'\fragmentco{\Left(v_{i'}')}{\Left(v_{j'}')})$.
\end{definition}

Note that $\bar{\bG}\fragment{i}{j}\fragment{i'}{j'}$ does not necessarily correspond
to the subgraph of $\bar{\bG}$ induced by the vertex set $\fragment{i}{j} \times \fragment{i'}{j'}$,
as there might exist $z \in \fragmentco{i}{j}$ such that $v_z \notin \bF\fragmentco{\Left(v_i)}{\Left(v_{j})}$.
However, we can still derive the distances between any two nodes in the former graph from the distances in the latter graph, and vice versa.

To see this, observe that $\bar{\bG}\fragment{i}{j}\fragment{i'}{j'}$ can be obtained from
the subgraph of $\bar{\bG}$ induced by the vertex set $\fragment{i}{j} \times \fragment{i'}{j'}$
by performing the following steps iteratively for each $z \in \fragmentco{i}{j}$ where $v_z \notin \bF\fragmentco{\Left(v_i)}{\Left(v_{j})}$
(and a similar process for each $z'\in \fragmentco{i'}{j'}$ where $v_{z'}' \notin \bF\fragmentco{\Left(v_{i'}')}{\Left(v_{j'}')}$):
identify the vertex $(z, z')$ with vertex $(z+1,z')$ for all $z' \in \fragment{i'}{j'}$.
Throughout these iterations, $(i,j)$ denotes the vertex where the vertex $(i,j)$ of the subgraph of $\bar{\bG}$ formed by the vertex set $\fragment{i}{j} \times \fragment{i'}{j'}$ might have potentially merged.

Let $\bar{\bG}^{+z}$ be the graph before applying the step for some $z$,
and $\bar{\bG}^{-z}$ denote the graph obtained after this step.
Since $v_z \notin \bF\fragmentco{\Left(v_i)}{\Left(v_{j})}$, it follows that $\pi_{\bF}(z) > j$.
Therefore, no vertex in $z \times \fragment{i'}{j'}$ has an outgoing edge of the form \eqref{it:match} in $\bar{\bG}^{+z}$.
Moreover, $\pi_{\bF}^{-1}(z+1) = \emptyset$, as otherwise it would contradict $\pi_{\bF}(z) > j$.
Thus, no vertex in $(z+1) \times \fragment{i'}{j'}$ has an incoming edge of the form \eqref{it:match} in $\bar{\bG}^{+z}$.
It is not difficult to see that for any pair $(\bar{u}, \bar{w})$ where at most one of $\bar{u}, \bar{w}$ is contained in $\{z,z+1\} \times \fragment{i'}{j'}$,
we have $\dist_{\bar{\bG}^{+z}}(\bar{u}, \bar{w}) = \dist_{\bar{\bG}^{-z}}(\bar{u}, \bar{w})$.
Now, it is not difficult to see that if $\bar{u}, \bar{w} \in \{z,z+1\} \times \fragment{i'}{j'}$,
then the distance between $\bar{u}$ and $\bar{w}'$ in $\bar{\bG}^{+z}$ and $\bar{\bG}^{-z}$ is zero
if $\bar{w}$ is reachable from $\bar{w}$ in $\bar{\bG}^{+z}$ and $\bar{\bG}^{-z}$, respectively,
and $-\infty$ otherwise.

\subsection{\FED as border-to-border distances in an alignment graph}

Our focus lies in computing paths from the lower-left to the upper-right border of an alignment graph.

\defproblem{Border-to-border Distances (\BBD)}
{two forests $\bF$, $\bF'$ of size $\abs{\bF}=n, \abs{\bF'}=n'$, and a weight function $\w : D \rightarrow \mathbb{R}$ such that $D \subseteq \bF \times \bF'$.}
{$\dist_{\bar{\bG}}(\bar{u}, \bar{w})$ for all $\bar{u} \in \sB^{\bot}(1, 1, n+1, n'+1)$ and $\bar{w} \in \sB^{\top}(1, 1, n+1, n'+1)$,
where $\bar{\bG}$ is the alignment graph w.r.t.~$(\w, \bF, \bF')$.}

We write $(\w, \bF, \bF')$-\BBD for the \BBD instance with input $\bF, \bF'$ and $\w$.
Moreover, given a $(\w, \bF, \bF')$-\BBD instance we say $\max(n, n')$ is the \emph{size} of the instance.
Note, given a \BBD instance of size $m$, the output of such instance is of size $\Oh(m^2)$.
Finally, we say that a $(\w, \bF, \bF')$-\BBD instance has \emph{diameter} $\D$ if $\dist_{\bar{\bG}}(\bar{u}, \bar{w}) \leq \D$ for all $\bar{u} \in \sB^{\bot}(1, 1, n+1, n'+1)$ and $\bar{w} \in \sB^{\top}(1, 1, n+1, n'+1)$ i.e.\ if the alignment graph $\bar{\bG}$ has diameter $\D$. 

\begin{lemma} \label{lem:fed_bbsp}
    Let $\bF$, $\bF'$ be forests with pre-orders $v_1, \ldots, v_{\abs{\bF}}$
    and $v_1', \ldots, v_{\abs{\bF'}}'$. Then, \FED on $\bF$ and $\bF'$
    can be reduced to \BBD on $(\w_{\bF, \bF'}, \bF, \bF')$,
    where $\w_{\bF, \bF'}(i, i') = \similarity(\sub(v_i), \sub(v_{i'}'))$ for $(i, i') \in \fragment{1}{\abs{\bF}} \times \fragment{1}{\abs{\bF'}}$.
\end{lemma}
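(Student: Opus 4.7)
The plan is to prove the reduction in two steps: first, interpret every BBD distance as a similarity of a pre-order indexed subforest; second, show that every bi-order indexed subforest appearing in the FED output admits such a pre-order form, so each FED value becomes a constant-time BBD lookup.

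For the first step, using the convention $\Left(v_{|\bF|+1}) = 2|\bF|+1$ (and the analogous one for $\bF'$), I would prove by induction on $(b-a) + (b'-a')$ that
\[
    \dist_{\bar{\bG}}((a, a'), (b, b')) = \similarity\bigl(\bF\fragmentco{\Left(v_a)}{\Left(v_b)},\ \bF'\fragmentco{\Left(v_{a'}')}{\Left(v_{b'}')}\bigr)
\]
for all $a \leq b \in \fragment{1}{(|\bF|+1)}$ and $a' \leq b' \in \fragment{1}{(|\bF'|+1)}$. In the inductive step, the three outgoing edges of types \eqref{it:delFp}, \eqref{it:delF}, \eqref{it:match} from $(a, a')$ correspond exactly to the three branches of the Shasha--Zhang recurrence: deleting the first pre-order node of the $\bF'$-side, deleting the first pre-order node of the $\bF$-side, and aligning $\sub(v_a)$ with $\sub(v_{a'}')$. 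The delicate point is the type-\eqref{it:match} step, which requires the identity $\bF\fragmentco{\Left(v_{\pi_{\bF}(a)})}{\Left(v_b)} = \bF\fragmentco{\Left(v_a)}{\Left(v_b)} \setminus \sub(v_a)$. This holds because bi-order positions in $[\Right(v_a), \Left(v_{\pi_{\bF}(a)}))$ consist only of ``leave'' events of ancestors of $v_a$, so no node of $\bF$ has its $\Left$ value in that interval. When $\pi_{\bF}(a) > b$, the type-\eqref{it:match} edge overshoots and yields $-\infty$, consistent with $v_a \notin \bF\fragmentco{\Left(v_a)}{\Left(v_b)}$ being implicitly deleted on the right-hand side.

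For the second step, the set $\bF\fragmentco{x}{y} = \{v_k : \Left(v_k) \geq x,\ \Right(v_k) \leq y\}$ is closed under descendants, hence a disjoint union of full subtrees. Using the equivalent description $\bF\fragmentco{\Left(v_a)}{\Left(v_b)} = \{v_k : k \geq a,\ \pi_{\bF}(k) \leq b\}$ (which follows from the equivalence $\Right(v_k) \leq \Left(v_b) \iff \pi_{\bF}(k) \leq b$, using $\Left(v_{\pi_{\bF}(k)}) \geq \Right(v_k) > \Left(v_{\pi_{\bF}(k)-1})$), one verifies that $\bF\fragmentco{x}{y} = \bF\fragmentco{\Left(v_{a(x,y)})}{\Left(v_{b(x,y)})}$ where $a(x,y) = \min\{k : v_k \in \bF\fragmentco{x}{y}\}$ and $b(x,y) = \max\{\pi_{\bF}(k) : v_k \in \bF\fragmentco{x}{y}\}$ (with $a = b = 1$ when the subforest is empty). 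All $\Oh(n^2)$ pairs $(a(x,y), b(x,y))$ can be precomputed in $\Oh(n^2)$ total time, and a symmetric translation applies on the $\bF'$ side. Each of the four FED outputs then becomes a constant-time \BBD lookup; e.g., $\similarity(\bF\fragmentco{x}{y}, \bF') = \dist_{\bar{\bG}}((a(x,y), 1), (b(x,y), |\bF'|+1))$, where $(a(x,y), 1) \in \sB^{\bot}(1, 1, |\bF|+1, |\bF'|+1)$ and $(b(x,y), |\bF'|+1) \in \sB^{\top}(1, 1, |\bF|+1, |\bF'|+1)$. The other three FED output types are handled symmetrically using the remaining border-to-border combinations.

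The main obstacle is the inductive step of the first part, specifically the ``non-fitting subtree'' case: a pre-order range $[a, b)$ formally visits $v_a$, yet $v_a$ need not appear in the bi-order subforest if $\sub(v_a)$ extends past $b$. Both the alignment-graph dynamics and the bi-order indexing handle such nodes by forced deletion with weight zero, so the two viewpoints agree, but this consistency must be verified carefully to ensure the recurrence cleanly matches the alignment-graph edge structure across all cases.
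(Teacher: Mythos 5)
Your proposal follows essentially the same structure as the paper's proof: (1) interpret the longest distances in $\bar{\bG}$ as Shasha--Zhang similarity values, and (2) translate the FED outputs, which are indexed by bi-order, into pre-order grid coordinates so that each becomes a single \BBD lookup. Step 1 is sound, and your treatment of the overshoot case (when the type-\eqref{it:match} edge from $(a,a')$ jumps past the target rectangle, matching the forced-deletion behavior of the subforest indexing) is exactly the ``delicate'' point the paper glosses over.

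Step 2, however, contains a gap, and it is the same gap that the paper's own proof has. You assert that ``one verifies'' $\bF\fragmentco{x}{y} = \bF\fragmentco{\Left(v_{a(x,y)})}{\Left(v_{b(x,y)})}$ with $a(x,y) = \min\{k : v_k \in \bF\fragmentco{x}{y}\}$ and $b(x,y) = \max\{\pi_{\bF}(k) : v_k \in \bF\fragmentco{x}{y}\}$, but this identity does not hold for all $(x,y)$. Concretely, take $\bF$ to be a forest of two trees, a singleton $v_1$ followed by a two-node chain $v_2 \to v_3$. The bi-order is $v_1,v_1,v_2,v_3,v_3,v_2$, so $\Left(v_1)=1$, $\Right(v_1)=3$, $\Left(v_2)=3$, $\Right(v_2)=7$, $\Left(v_3)=4$, $\Right(v_3)=6$, and $\pi_{\bF}(1)=2$, $\pi_{\bF}(2)=\pi_{\bF}(3)=4$. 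Then $\bF\fragmentco{1}{6}$ removes only $v_2$, giving the forest $\{v_1\}+\{v_3\}$, while your formula yields $a=1$, $b=\max\{\pi_{\bF}(1),\pi_{\bF}(3)\}=4$ and hence $\bF\fragmentco{\Left(v_1)}{\Left(v_4)} = \bF\fragmentco{1}{7} = \{v_1,v_2,v_3\}$, strictly larger. The reason the $\subseteq$ direction fails here is that $v_2$ satisfies $k \geq a$ and $\pi_{\bF}(k) \leq b$ yet $\Right(v_2) > y$: $v_2$ is a proper ancestor of $v_3$ sitting between $v_1$ and $v_3$ in pre-order, so the set $\bF\fragmentco{x}{y}$ is downward-closed but not ``$\pi$-convex'' in pre-order. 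In fact $\{v_1\}+\{v_3\}$ is not of the form $\bF\fragmentco{\Left(v_a)}{\Left(v_b)}$ for \emph{any} pre-order $a,b$ (the only such subforests here are $\emptyset, \{v_1\}, \{v_3\}, \{v_2,v_3\}, \{v_1,v_2,v_3\}$), so $\similarity(\bF\fragmentco{1}{6},\bF')$ cannot be read off as any single \BBD distance. The paper's translation ($j = \min\{j : y \leq \Left(v_j)\}$) produces the same wrong subforest on this instance, so you have reproduced rather than introduced the problem; still, the ``one verifies'' step is not a verification, and as written the reduction does not cover FED outputs whose bi-order range splits an ancestor away from its descendant. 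To make the argument airtight you would either need to restrict FED's output to those $(x,y)$ whose subforest is representable in pre-order coordinates, or add an argument showing how the remaining FED outputs can be recovered (e.g.\ by decomposition) from the \BBD distances.
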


\begin{proof}
    Recall Shasha and Zhang's dynamic programming algorithm \cite{SZ89}
    for computing the similarity between two forests $\bF$ and $\bF'$
    with pre-order $v_1, \ldots, v_{|\bF|}$ and $v_1', \ldots, v_{|\bF'|}'$, described by the following recursive formula:
    \begin{align*}
        \similarity(\bF, \bF) =
        \begin{cases}
            0, & \text{if $\bF = \emptyset$ or $\bF' = \emptyset$,} \\
            \max
            \left\{\begin{aligned}
                &\similarity(\bF \setminus v_1, \bF'),\\
                &\similarity(\bF, \bF' \setminus v_1'), \\
                &\similarity(\bF \setminus \sub(v_1), \bF' \setminus \sub(v_1')) + \similarity(\sub(v_1), \sub(v_1'))
            \end{aligned}\right\},
            & \text{otherwise}.
      \end{cases}
    \end{align*}
    Consider arbitrary $\fragment{i}{j} \subseteq \fragment{1}{(\abs{\bF}+1)}$ and $\fragment{i'}{j'} \subseteq \fragment{1}{(\abs{\bF'}+1)}$.
    Then, the dynamic programming computation
    of the longest path in $\bar{\bG}\fragment{i}{j}\fragment{i'}{j'}$ between $(i,i')$ and $(j,j')$
    aligns with 
    the computation done by Shasha and Zhang's scheme
    for $\similarity(\bF\fragmentco{\Left(v_i)}{\Left(v_j)}, \bF'\fragmentco{\Left(v_{i'}')}{\Left(u_{j'}')})$.
    This allows us, given $x,y \in \fragment{1}{(2\abs{\bF}+1)}$ such that $x \leq y$, to find
    two nodes in the grid whose longest distance equals $\similarity(\bF\fragmentco{x}{y}, \bF'\fragmentco{x'}{y'})$.
    To this end, it suffices to define $i,j$ as the smallest $i$ such that $x \leq \Left(v_i)$ and the smallest $j$ such that $y \leq \Left(v_j)$,
    obtaining $\bF\fragmentco{x}{y} = \bF\fragmentco{\Left(v_i)}{\Left(v_j)}$.
    Symmetrically, the same holds for $\bF'$.

    To conclude, note that whole-versus-infix similarities correspond to paths from the lower border to the upper border
    and from the left border to the right border,
    while prefix-versus-suffix similarities correspond to paths from the lower border to the right border and from the left border to the upper border.
\end{proof}

To prove \cref{thm:fed} and \Cref{thm:unweighted-fed-balance}, we proceed to develop an algorithm for \BBD instances.
The weight function $\w$ in the algorithm we present can be an arbitrary function.

\subsection{A decomposition scheme for forests}

To solve such \BBD instances,
we employ a divide-and-conquer approach.
Given a $(\w, \bF, {\bF'})$-\BBD instance,
we break it into several $(\w, \bH, {\bH'})$-\BBD instances,
for smaller forests $\bH$ and $\bH'$.
We then merge these solutions to obtain the final solution for $\bF$ and $\bF'$.
To derive these smaller instances,
we utilize the decomposition scheme introduced by Mao \cite{M22} in his subcubic algorithm for unweighted tree edit distance (though with a different parameter choice).
This scheme, governed by a threshold parameter $\Delta$,
uses transitions of two types:
\begin{enumerate}[I.]
    \item transition from two synchronous forests $\bF_{1}$, $\bF_{2}$,
    both of size no less than $\Delta/3$,
    to the synchronous subforest $\bF = \bF_{1} + \bF_{2}$;
    \label{it:decomp:1}
    \item transition from a synchronous forest $\bF_{s} \subset \bF$ such that $|\bF \setminus \bF_s| \leq \Delta$
    to a synchronous forest $\bF$.
    \label{it:decomp:2}
\end{enumerate}

The subsequent two lemmas,
the proofs of which we defer to \cref{sec:puttogether},
show how transitions in Mao's decomposition scheme can be adapted to our scenario.

\begin{restatable}{lemma}{fedpatchtwo}
    \label{lem:fed_patch_two}
    Suppose we are given a $(\w, \bF, {\bF'})$-\BBD
    instance of size $m$ such that $\bF = \bF_1 + \bF_2$.

    Then,
    given the outputs of the $(\w, {\bF_1}, {\bF'})$-\BBD instance
    and the $(\w, {\bF_2}, {\bF'})$-\BBD instance,
    we can solve the $(\w, \bF, {\bF'})$-\BBD instance
    in time $\Oh(\sT_{\MUL}(m))$.

    \unbalanced{Furthermore, if the \BBD instance has diameter $\D$, we can solve the $(\w, \bF, {\bF'})$-\BBD instance in time $\Oh(\sT_{\MonMUL}(m, m, m, \D))$.}
\end{restatable}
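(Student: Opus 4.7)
The strategy is to exploit the concatenation $\bF = \bF_1 + \bF_2$ by cleaving the alignment graph $\bar{\bG}$ of the $(\w, \bF, \bF')$ instance along a single vertical cut, and then reconstruct border-to-border distances by combining the given outputs with one max-plus product across that column. Write $k = \abs{\bF_1}$; the alignment graph $\bar{\bG}_1$ of $(\w, \bF_1, \bF')$ corresponds to columns $1, \ldots, k+1$ of $\bar{\bG}$, and the alignment graph $\bar{\bG}_2$ of $(\w, \bF_2, \bF')$ corresponds to columns $k+1, \ldots, \abs{\bF}+1$ after re-indexing, with the two halves overlapping only on column $k+1$.

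The first step is to verify that no edge of $\bar{\bG}$ skips column $k+1$. For the diagonal edges of \cref{def:alg_graph} this follows from the fact that the pre-order of $\bF$ is the pre-order of $\bF_1$ followed by that of $\bF_2$: for any $u \leq k$, $\sub(v_u) \subseteq \bF_1$ forces $\pi_{\bF}(u) \leq k+1$, while for any $u \geq k+1$, $\sub(v_u) \subseteq \bF_2$ forces $\pi_{\bF}(u) \geq k+2$; horizontal edges move by exactly one column. Moreover, the edges leaving column $k+1$ in $\bar{\bG}$ coincide with the edges leaving column $1$ in $\bar{\bG}_2$ (since $\pi_{\bF}(k+1) = \pi_{\bF_2}(1) + k$), so restricting $\bar{\bG}$ to the columns on either side yields exactly $\bar{\bG}_1$ and $\bar{\bG}_2$.

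Because $\bar{\bG}$ is a DAG whose edges go only up or right, any path from $\bar u \in \sB^{\bot}(1,1,\abs{\bF}+1,\abs{\bF'}+1)$ to $\bar w \in \sB^{\top}(1,1,\abs{\bF}+1,\abs{\bF'}+1)$ falls into one of three nontrivial cases: (i) both endpoints in the left half (column $\leq k+1$), so the distance equals $\dist_{\bar{\bG}_1}(\bar u, \bar w)$, an input; (ii) both in the right half (column $\geq k+1$), handled symmetrically by $\bar{\bG}_2$; or (iii) $\bar u$ in the left half and $\bar w$ in the right half, in which case the path crosses column $k+1$ at some vertex $(k+1,j')$ and
\[
\dist_{\bar{\bG}}(\bar u, \bar w) \;=\; \max_{j' \in \fragment{1}{\abs{\bF'}+1}} \Bigl( \dist_{\bar{\bG}_1}(\bar u, (k+1, j')) + \dist_{\bar{\bG}_2}((k+1, j'), \bar w) \Bigr),
\]
where each summand comes from the given outputs, since column $k+1$ lies on the upper-right border of $\bar{\bG}_1$ and on the lower-left border of $\bar{\bG}_2$. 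The fourth possibility, $\bar u$ in the right half and $\bar w$ in the left half, is vacuous ($-\infty$) by the DAG structure. Case (iii) reduces to a max-plus product of an $\Oh(m) \times (\abs{\bF'}+1)$ matrix with an $(\abs{\bF'}+1) \times \Oh(m)$ matrix, which, after padding to square, fits in $\Oh(\sT_{\MUL}(m))$ time.

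For the bounded-diameter refinement, I would observe that the zero-weight vertical edges along column $k+1$ render $\dist_{\bar{\bG}_1}(\bar u, (k+1, j'))$ monotone in $j'$ for each fixed $\bar u$, and symmetrically $\dist_{\bar{\bG}_2}((k+1, j'), \bar w)$ is monotone in $j'$ for each fixed $\bar w$; combined with the hypothesis that all entries are bounded by $\D$, the product becomes an instance of $\MonMUL$ with parameters $(m,m,m,\D)$, yielding the claimed $\Oh(\sT_{\MonMUL}(m,m,m,\D))$ running time. The only subtle point is the column-$k+1$ cleavage argument, which is exactly where the hypothesis that $\bF$ is a \emph{concatenation} (rather than an arbitrary subforest decomposition) is used; everything else is routine case analysis and a single monotone min-plus product.
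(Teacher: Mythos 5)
Your proof is correct and follows essentially the same approach as the paper: cut the alignment graph at column $k+1 = |\bF_1|+1$, observe that no edge of $\bar{\bG}$ crosses this column (which is where the concatenation hypothesis is used), case-split on which side the endpoints lie, and compute the crossing distances with one max-plus product along the cut column, with the monotonicity for the bounded-diameter refinement coming from the zero-weight vertical edges. The paper's proof is identical in structure and content, including the observation about $\pi_{\bF}(i) \leq i_2$ and the monotonicity of the distance matrix restricted to the cut column.
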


\begin{restatable}{lemma}{fedpatchthree}
    \label{lem:fed_patch_three}
    Suppose we are given a $(\w, \bF, {\bF'})$-\BBD
    instance of size $m$ and a synchronous subforest $\bF_s \subseteq \bF$
    of size $\abs{\bF_s} = n_s$.
    Further, let $v_1, \ldots, v_{\abs{\bF}}$ be the pre-order of $\bF$,
    and let $i_s$ be such that $v_{i_s}, \ldots, v_{i_s+n_s}$
    is the pre-order of $\bF_s$.
    Define $\bF_{\ell} = \bF\fragmentco{1}{\Left(v_{i_s})}$, $\bF_s = \bF\fragmentco{\Left(v_{i_s})}{\Left(v_{i_s+n_s+1})}$,
    and $\bF_{r} = \bF\fragmentco{\Left(v_{i_s+n_s+1})}{2\abs{\bF}+1}$.

    Then,
    given the outputs of the four \BBD instances $(\w, \bF_{\ell}, {\bF'})$-\BBD,
    $(\w, {\bF_s}, {\bF'})$-\BBD, $(\w, {\bF_{r}}, {\bF'})$-\BBD,
    and $(\w, {\bF \setminus \bF_{s}}, {\bF'})$-\BBD,
    we can solve the $(\w, \bF, {\bF'})$-\BBD instance
    in time $\Oh(\sT_{\MUL}(m))$.

    \unbalanced{Furthermore, if the \BBD instance has diameter $\D$, we can solve the $(\w, \bF, {\bF'})$-\BBD instance in time $\Oh(\sT_{\MonMUL}(m, m, m, \D))$.}
\end{restatable}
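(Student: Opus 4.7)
The plan is to prove \Cref{lem:fed_patch_three} by structural decomposition of longest paths in the alignment graph $\bar{\bG}$ for $(\w, \bF, \bF')$ according to their interaction with the pre-order segment of $\bF$ occupied by $\bF_s$. Let $\alpha$ denote the first pre-order index of $\bF_s$ and $\beta$ the first pre-order index of $\bF_r$ in $\bF$. The starting observation is that because $\bF_s$ is a synchronous subforest, every subtree rooted at a node of $\bF_s$ is contained in $\bF_s$; thus $\pi_{\bF}(i) \le \beta$ whenever $i \in \fragment{\alpha}{\beta-1}$. Consequently, the induced subgraph of $\bar{\bG}$ on vertices with first coordinate in $\fragment{\alpha}{\beta}$ is, up to the shift $i \mapsto i - \alpha + 1$, the alignment graph of $(\w, \bF_s, \bF')$. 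Symmetrically, the induced subgraphs on $\fragment{1}{\alpha}$ and $\fragment{\beta}{\abs{\bF}+1}$ coincide (up to the obvious shift) with the alignment graphs of $(\w, \bF_\ell, \bF')$ and $(\w, \bF_r, \bF')$. Finally, a diagonal edge originating at some $(i, u')$ with $i < \alpha$ that ``jumps over'' $\bF_s$ (meaning $\pi_{\bF}(i) \ge \beta$) corresponds, under the identification shifting $\bF_r$-indices down by $\beta - \alpha$, exactly to a diagonal edge in the alignment graph of $(\w, \bF\setminus\bF_s, \bF')$ with identical weight---this uses crucially that $\w$ depends only on the pair of nodes and not on the ambient forest.

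Equipped with this, I would split any longest path $P$ from $\bar u \in \sB^\bot$ to $\bar w \in \sB^\top$ into two cases. In Case~(A), $P$ never visits a vertex with first coordinate strictly between $\alpha$ and $\beta$; then the correspondence above turns $P$ into a path in the $(\w, \bF\setminus\bF_s, \bF')$ alignment graph, so its length is directly an entry of the given $(\w, \bF\setminus\bF_s, \bF')$-\BBD output. In Case~(B), $P$ visits some interior strip vertex; it must then first reach column $\alpha$ at some $(\alpha, y')$ and subsequently first reach column $\beta$ at some $(\beta, z')$, so $P$ decomposes canonically into three subpaths: from $\bar u$ to $(\alpha, y')$ in the $(\w, \bF_\ell, \bF')$ alignment graph, from $(\alpha, y')$ to $(\beta, z')$ in the $(\w, \bF_s, \bF')$ alignment graph, and from $(\beta, z')$ to $\bar w$ in the $(\w, \bF_r, \bF')$ alignment graph---each a border-to-border distance available directly from one of the given \BBD outputs. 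Thus $\dist_{\bar{\bG}}(\bar u, \bar w)$ is the maximum of the Case~(A) value and the maximum over $(y', z')$ of the sum of the three Case~(B) subpath lengths.

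The outer maximisation in Case~(B) is a two-factor max-plus product of matrices of size $\Oh(m) \times \Oh(m)$, computable as two consecutive $m \times m$ max-plus products in $\Oh(\sT_{\MUL}(m))$ time; an $\Oh(m^2)$ comparison pass then merges in the Case~(A) values. For the unweighted diameter-$\D$ setting, every entry involved is a distance in $\bar{\bG}$, hence a non-negative integer bounded by $\D$; moreover, the border-to-border distance matrices are monotone along rows and columns because sliding an endpoint by one step along a border is equivalent to (un)prepending a zero-weight edge, which only weakly enlarges the set of admissible extensions. The max-plus products therefore fall in the bounded-monotone regime and can be carried out in $\Oh(\sT_{\MonMUL}(m, m, m, \D))$ time.

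The chief technical obstacle I anticipate is the boundary bookkeeping in columns $\alpha$ and $\beta$---partitioning the path space between the two cases without double-counting any maximiser, and correctly handling the degenerate situations when $\bar u$ or $\bar w$ themselves lie inside or past the strip, which force trivial length-zero subpaths in the Case~(B) decomposition. These issues are routine but must be encoded carefully in the index sets over which the max-plus products are taken.
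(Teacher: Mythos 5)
Your plan mirrors the paper's proof: partition the borders into left/strip/right pieces, handle the hardest pair (lower-left to upper-right corner of the strip) by observing that a path either skips over $\bF_s$ entirely---in which case it is captured by the $(\w,\bF\setminus\bF_s,\bF')$-\BBD output---or passes through both the entry column and exit column of the strip, allowing a three-factor max-plus decomposition via $\bF_\ell$, $\bF_s$, and $\bF_r$. The structural observations about $\pi_{\bF}$ and the monotonicity argument in the unweighted regime also match the paper's.

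There is one genuine flaw in how you draw the Case~(A)/Case~(B) boundary, and it sits exactly where you flag a potential obstacle. You define Case~(A) as ``never visits a vertex with first coordinate strictly between $\alpha=i_s$ and $\beta=i_r$,'' which admits paths touching column $i_s$ itself. Such a path, to escape column $i_s$ toward column $\ge\beta$ without entering the open interior, must eventually use the diagonal edge out of some $(i_s,u'')$ of weight $\similarity(\sub(v_{i_s}),\sub(v'_{u''}))$---a weight involving $v_{i_s}\in\bF_s$. That edge has no counterpart in the alignment graph of $(\w,\bF\setminus\bF_s,\bF')$, so your claim that ``the correspondence above turns $P$ into a path in the $(\w, \bF\setminus\bF_s, \bF')$ alignment graph'' is false for these paths. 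The correct dividing line (the one the paper uses implicitly via its $\bar{\bG}_{\setminus s}$ construction, which removes columns $\fragmentco{i_s}{i_r}$) is the \emph{closed} strip $\{i_s,\dots,i_r-1\}$: avoiding all of it is the same as staying in $\bar{\bG}_{\setminus s}$, while touching any of it forces the path through both columns $i_s$ and $i_r$. Your final formula happens to be correct anyway---the max-plus term ranges over all $(y',z')$, so it does cover paths touching column $i_s$ even though you misclassify them as Case~(A)---but the supporting argument as written contains a false step and should be repaired before the proof is considered complete.
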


The decomposition scheme, together with \cref{lem:fed_patch_two} and \cref{lem:fed_patch_three},
allows us to establish the following reduction from a
$(\w, \bF, {\bF'})$-\BBD instance of size $m$
to $\mathcal{O}(m^2/\Delta^2)$ \BBD instances, each of size at most $\Delta$.

\begin{lemma}
    \label{lem:fed_decomp}
    Suppose, we are given a $(\w, \bF, {\bF'})$-\BBD instance of size $m$,
    and a threshold $\Delta$.

    Then, in time $\Oh(m^2)$ we can find forests $\bF_{1}, \ldots, \bF_{k}$ and $\bF_{1}', \ldots, \bF_{k}'$,
    all of size at most $\Delta$, such that $k = \Oh(m^2/\Delta^2)$ and such that,
    given the output of the $(\w, {\bF_i}, {\bF_{i}'})$-\BBD instance
    for all $i \in \fragment{1}{k}$,
    we can solve the $(\w, \bF, {\bF'})$-\BBD instance
    in time $\Oh(m^2/\Delta^2 \cdot \sT_{\MUL}(m))$.
    
    \unbalanced{Furthermore, if the \BBD instance has diameter $\D$, we can solve the $(\w, \bF, {\bF'})$-\BBD instance in time $\Oh(m^2 / \Delta^2 \cdot \sT_{\MonMUL}(m, m, m, \D))$.}
\end{lemma}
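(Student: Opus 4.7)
The plan is to invoke Mao's decomposition scheme, outlined just above the lemma statement, independently on both $\bF$ and $\bF'$ with threshold $\Delta$. This yields two decomposition trees $\mathcal{T}_{\bF}$ and $\mathcal{T}_{\bF'}$ whose leaves are synchronous subforests of size at most $\Delta$ and whose internal nodes correspond to either Type~I or Type~II transitions. Using the amortized charging underlying Mao's scheme---each Type~I transition splits into two pieces both of size at least $\Delta/3$, and each Type~II transition loses at most $\Delta$ fresh nodes contained in $\bF \setminus \bF_s$---I expect to inherit the bound that each tree has $\Oh(m/\Delta)$ leaves and $\Oh(m/\Delta)$ internal nodes. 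The Cartesian product of the two leaf sets then supplies the forests $\bF_1, \ldots, \bF_k$ and $\bF_1', \ldots, \bF_k'$ claimed by the lemma, with $k = \Oh(m^2/\Delta^2)$ base \BBD instances each of size at most $\Delta$, and both decomposition trees can be built by a simple traversal within the $\Oh(m^2)$ preprocessing budget.

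To reconstruct the target \BBD output from the base outputs, I would proceed bottom-up in two phases. First, for each leaf $\bF_i$ of $\mathcal{T}_{\bF}$, I ascend $\mathcal{T}_{\bF'}$ from its leaves to its root, invoking at each internal node the \emph{symmetric} variant of \cref{lem:fed_patch_two} for Type~I transitions and of \cref{lem:fed_patch_three} for Type~II transitions---obtained by swapping the roles of the two forests---producing the solution to $(\w, \bF_i, \bF')$-\BBD. Then, with $(\w, \bF_i, \bF')$-\BBD in hand for every leaf $\bF_i$, I ascend $\mathcal{T}_{\bF}$ using \cref{lem:fed_patch_two,lem:fed_patch_three} as stated, to finally obtain the solution to $(\w, \bF, \bF')$-\BBD.

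For the cost analysis, every combining step invokes \cref{lem:fed_patch_two} or \cref{lem:fed_patch_three} on a sub-instance of size at most $m$, costing $\Oh(\sT_{\MUL}(m))$ time, or $\Oh(\sT_{\MonMUL}(m, m, m, \D))$ in the diameter-bounded setting, since every sub-instance inherits the diameter bound $\D$ of the ambient alignment graph. The first phase performs $\Oh(m/\Delta)$ combinations per $\bF$-leaf over $\Oh(m/\Delta)$ such leaves, for a total of $\Oh(m^2/\Delta^2)$ combinations; the second phase adds only $\Oh(m/\Delta)$ further combinations and is absorbed into the total. This matches the time bound claimed in the lemma.

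The hard part will be verifying that Mao's scheme indeed yields $\Oh(m/\Delta)$ leaves: Type~II transitions are the trickiest, because the residual $\bF \setminus \bF_s$ must itself be split into two synchronous leaf forests $\bF_\ell$ and $\bF_r$, and one must carefully charge their mass against the at most $\Delta$ fresh nodes introduced, amortized across possibly long chains of Type~II steps. A secondary subtlety is that each Type~II invocation of \cref{lem:fed_patch_three} requires as input the output for $\bF \setminus \bF_s$, which is not itself a leaf; since $\bF \setminus \bF_s = \bF_\ell + \bF_r$, this is easily resolved by one extra invocation of \cref{lem:fed_patch_two} (a Type~I combination of $\bF_\ell, \bF_r$) at the same node, contributing only a constant factor to the overall combination count.
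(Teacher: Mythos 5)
Your proposal is correct and follows essentially the same route as the paper: decompose both forests via Mao's scheme into $\Oh(m/\Delta)$ pieces each, pair up the leaves as the $\Oh(m^2/\Delta^2)$ base instances, and reassemble bottom-up using \cref{lem:fed_patch_two,lem:fed_patch_three} (and their symmetric forms), with the identical transition-counting argument giving the $\Oh(m/\Delta)$ leaf bound per tree. The paper packages this a bit differently---it first proves a one-sided lemma (decompose only $\bF$ in $\Oh(m)$ time, yielding $\Oh(m/\Delta)$ instances of the form $(\w, \bF_i, \bF')$-\BBD and $\Oh(m/\Delta \cdot \sT_{\MUL}(m))$ reassembly) and then invokes it a second time with the roles of $\bF$ and $\bF'$ swapped on each piece---and it directly treats $\bF \setminus \bF_s$ as its own base instance rather than deriving it from $\bF_\ell, \bF_r$ as you do, but these are cosmetic differences. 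The counting concern you flag for Type~II transitions is indeed the crux; the paper resolves it exactly as you expect, by splitting $t_{\text{II}}$ into the steps where the remaining forest has a small end tree (so the removal already cost $>\Delta/3$ nodes, and these removals are disjoint) and the steps where both end trees are large (so the next step is Type~I and can be charged against $t_{\text{I}}$), both yielding $\Oh(m/\Delta)$.
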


\begin{proof}
    First, we want to argue that it is sufficient to demonstrate the following simplified version of the lemma:
    In time $\Oh(m)$ we can find $\bF_{1}, \ldots, \bF_{d}$, all of size at most $\Delta$,
    such that $d = \Oh(m/\Delta)$ and such that,
    given the output of the $(\w, {\bF_i}, {\bF'})$-\BBD instance
    for all $i \in \fragment{1}{d}$,
    we can solve the $(\w, \bF, {\bF'})$-\BBD instance
    in time $\Oh(m/\Delta \cdot \sT_{\MUL}(m))$.
    This simplification is sufficient to prove the lemma, as we can apply this lemma to each \((\w, {\bF_i}, {\bF'})\text{-}\BBD\) instance
    for $i \in \fragment{1}{d}$, this time decomposing \(\bF'\)
    (\BBD is symmetrical w.r.t.~swapping the role of $\bF$ and $\bF'$).

    \begin{algorithm}[!t]
        \KwInput{a forest $\bH$.}
        \KwOutput{answer to the $(\w, {\bH}, {\bF'})$-\BBD instance.}
        \CommentSty{ // Given a forest $\bH$ containing more than one tree,
    we write $\bL_{\bH}$ and $\bR_{\bH}$ for the leftmost and rightmost tree, respectively.}
    
        \CommentSty{ // Given a forest $\bH$ containing one tree, we write $\Root(\bH)$ for the root of $\bH$.}
    
        \If {$\abs{\bH} \leq \Delta$} {
            As $\abs{\bH} \leq \Delta$, retrieve directly the answer to the $(\w, {\bH}, {\bF'})$-\BBD instance, and return it\;
        }
        \ElseIf {$\bH$ contains more than one tree and $\abs{\bL_{\bH}} \ge \Delta/3$ and $\abs{\bR_{\bH}} \ge \Delta/3$} {
            Via \cref{alg:mao_decomp}, get the answer to the $(\w, {\bL_{\bH}}, {\bF'})$-\BBD and $(\w, {\bH \setminus \bL_{\bH}}, {\bF'})$-\BBD instances\; \label{line:two_decomp}
            Via \cref{lem:fed_patch_two} on $\bL_{\bH}$ and $\bH \setminus \bL_{\bH}$, get the answer $(\w, {\bH}, {\bF'})$-\BBD instance, and return it\;

        } \Else {
            $\bH_s \leftarrow \bH$\;
            \While {\textbf{true}} {
                $\bH_{\mathrm{next}} \leftarrow \emptyset$\;
                \If{$\bH_s$ contains only one tree} {
                    $\bH_{\mathrm{next}} \leftarrow \bH_s \setminus \Root(\bH_s)$\;
                } \Else {
                    \If{$\abs{\bL_{\bH_s}} < \abs{\bR_{\bH_s}}$} {
                        $\bH_{\mathrm{next}} \leftarrow \bH_s \setminus \bL_{\bH_s}$\;
                    } \Else {
                        $\bH_{\mathrm{next}} \leftarrow \bH_s \setminus \bR_{\bH_s}$\;
                    }
                }
                \If {$\abs{\bH} - \abs{\bH_{\mathrm{next}}} > \twothirds \Delta$} {
                    \textbf{break}\;
                }
                $\bH_s \leftarrow \bH_{\mathrm{next}}$\;
            }
            Via \cref{alg:mao_decomp}, get the answer to the $(\w, {\bH_s}, {\bF'})$-\BBD instance\;

            As $\abs{\bH \setminus \bH_s} \leq \Delta$, retrieve directly the answer to
            the $(\w, {\bH \setminus \bH_s}, {\bF'})$-\BBD,
            $(\w, {\bH_{\ell}}, {\bF'})$-\BBD,
            $(\w, {\bH_{r}}, {\bF'})$-\BBD instances,
            where $\bH_{\ell}, \bH_{r}$ are defined as in \cref{lem:fed_patch_three}.

            Via \cref{lem:fed_patch_three} on $\bH$ and $\bH_s$, get the answer of the $(\w, {\bH}, {\bF'})$-\BBD instance, and return it\;
        }
        \caption{Adaptation of Algorithm 2 from \cite{M22}.} \label{alg:mao_decomp}
    \end{algorithm}

    To prove the simplified version of the lemma,
    we utilize \cref{alg:mao_decomp} (an adaptation of Algorithm 2 from~\cite{M22})
    on the $(\w, {\bF}, {\bF'})$-\BBD instance.
    \cref{alg:mao_decomp} applies recursively Mao's transitions with the same parameter $\Delta$ on forests $\bH \subseteq \bF$,
    until it is left with forests of size at most $\Delta$.
    \cref{alg:mao_decomp} assumes that for $\bH$ of size at most $\Delta$ the output of the $(\w, {\bH}, {\bF'})$-\BBD instance can be retrieved directly
    (this assumption is equivalent to putting such instances among $\bF_{1}, \ldots, \bF_{d}$).
  
    The correctness of \cref{alg:mao_decomp} follows directly from \cref{lem:fed_patch_two} and \cref{lem:fed_patch_three}.

    What remains to be demonstrated is a bound on $d$ and on the running time.
    For that purpose, notice that $d = \Oh(t_{\mathrm{I}} + t_{\mathrm{II}})$ and
    the running time is bounded by $\Oh((t_{\mathrm{I}} + t_{\mathrm{II}}) \cdot \sT_{\MUL}(m))$,
    where $t_{\mathrm{I}}, t_{\mathrm{II}}$ is the number of times we apply transition of type~\ref{it:decomp:1} and~\ref{it:decomp:2}, respectively.
    \unbalanced{If the diameter of $\bar{\bG}$ is at most $\D$, then we note that the running time is in fact bounded by $\Oh((t_{\mathrm{I}} + t_{\mathrm{II}}) \cdot \sT_{\MonMUL}(m, m, m, \D))$.}
    To bound $t_{\mathrm{I}} + t_{\mathrm{II}}$, we report here the argumentation from Section 4.2.2. of~\cite{M22}.

    For every transition of type \ref{it:decomp:1},
    we merge two forests each of size no less than \( \Delta/3 \),
    yielding \( t_{\text{I}} = \mathcal{O}(|\bF|/\Delta) \).

    Now, for \( t_{\text{II}} \), let us decompose it into \( t_{\text{II}} = t_{\text{II}}^{(1)} + t_{\text{II}}^{(2)} \),
    where \( t_{\text{II}}^{(1)} \) denotes the number of transitions of type \ref{it:decomp:2} where \( \bH_s \) contains fewer than two trees or either \( |\bL_{\bH_s}| \)
    or \( |\bR_{\bH_s}| \) is less than \( \Delta/3 \),
    and \( t_{\text{II}}^{(2)} \) represents the number of transitions of type \ref{it:decomp:2} where \( \bH_s \) contains at least two trees and both \( |\bL_{\bH_s}| \) and \( |\bR_{\bH_s}| \) are no less than \( \Delta/3 \).
    For \( t_{\text{II}}^{(1)} \), we have \( |\bH \setminus \bH_s| > \Delta/3 \), as otherwise, we would have not halted the removal process.
    Since \( \bH \setminus \bH_s \) are disjoint across different type \ref{it:decomp:2} transitions,
    we have \( t_{\text{II}}^{(1)} = \mathcal{O}(|\bF|/\Delta) \).
    Regarding \( t_{\text{II}}^{(2)} \), observe that the subsequent transition will be of type \ref{it:decomp:1},
    so \( t_{\text{II}}^{(2)} \leq t_{\text{I}} = \mathcal{O}(|\bF|/\Delta) \).
\end{proof}

\begin{corollary} 
    \label{cor:bbd_algo}
    There is an algorithm that solves in time $\Oh(\sT_{\MUL}(m)+m^{2+o(1)})$
    a $(\w, {\bF}, {\bF'})$-\BBD instance of size $m$.

    \unbalanced{Furthermore, if the \BBD instance has diameter $\D$, the algorithm runs in time $\Oh(\sT_{\MonMUL}(m, m, m, \D) + m^{2 + o(1)} g(D))$, where $\sT_{\MonMUL}(m, m, m, \D) = \Oh(f(m) g(\D))$ for some functions $f, g$.}
\end{corollary}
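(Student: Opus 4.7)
The plan is to apply \cref{lem:fed_decomp} recursively with the balanced threshold $\Delta = m/2$. At this choice, the lemma produces only $k = \Oh(m^2/\Delta^2) = \Oh(1)$ subinstances, each of size at most $m/2$, whose outputs can be combined in $\Oh(\sT_{\MUL}(m))$ time on top of $\Oh(m^2)$ spent finding the decomposition. Solving the subinstances recursively with the trivial base case $\sT(\Oh(1)) = \Oh(1)$, the total running time $\sT(m)$ satisfies
\[
\sT(m) \le 4 \cdot \sT(m/2) + \Oh(\sT_{\MUL}(m)) + \Oh(m^2).
\]

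Unrolling the recurrence, the decomposition overhead accumulates to $\sum_{i=0}^{\Oh(\log m)} 4^i \cdot \Oh((m/2^i)^2) = \Oh(m^2 \log m) = m^{2+o(1)}$. The combining contribution $\sum_i 4^i \cdot \Oh(\sT_{\MUL}(m/2^i))$ telescopes to $\Oh(\sT_{\MUL}(m))$ under the mild regularity condition that $\sT_{\MUL}$ grows at least as $m^{2+\epsilon}$ for some $\epsilon > 0$: any polynomial bound $m^\alpha$ with $\alpha > 2$ makes the series geometric, and for Williams' bound $\sT_{\MUL}(m) = m^3/2^{\Omega(\sqrt{\log m})}$ used in \cref{cor:ted}, a direct calculation (e.g., substituting $j = \log m - i$ and noting that past a constant critical index the consecutive ratios exceed $\sqrt{2}$) still yields $\Oh(\sT_{\MUL}(m))$. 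Combining the two contributions gives $\sT(m) = \Oh(\sT_{\MUL}(m) + m^{2+o(1)})$, as claimed.

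For the bounded-diameter refinement, the same recursion invokes the monotone form of \cref{lem:fed_decomp} and gives
\[
\sT(m) \le 4 \cdot \sT(m/2) + \Oh(\sT_{\MonMUL}(m, m, m, \D)) + \Oh(m^2).
\]
Writing $\sT_{\MonMUL}(m, m, m, \D) = \Oh(f(m) g(\D))$, the combining contribution unrolls to $g(\D) \cdot \sum_i 4^i \cdot \Oh(f(m/2^i)) = \Oh(g(\D) f(m))$, with the regularity satisfied by the known bound $f(m) = \Ohtilde(m^{(2+\omega)/2})$ since $(2+\omega)/2 > 2$. The decomposition overhead is again $\Oh(m^2 \log m)$, which is bounded by $m^{2+o(1)} g(\D)$ as $g(\D) \ge 1$, yielding $\sT(m) = \Oh(\sT_{\MonMUL}(m, m, m, \D) + m^{2+o(1)} g(\D))$. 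The only delicate point is verifying the geometric convergence of these two telescoping sums, but this reduces to the regularity of $\sT_{\MUL}$ and $f$, which holds for every currently known upper bound of interest.
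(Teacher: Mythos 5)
Your plan to apply \cref{lem:fed_decomp} with the fixed threshold $\Delta = m/2$ has a gap in the recurrence analysis. The lemma guarantees $k = \Oh(m^2/\Delta^2)$ subinstances, so with $\Delta = m/2$ the number of subproblems is $4c_1$ for some hidden constant $c_1 \geq 1$, not exactly $4$. Your recurrence should therefore read $\sT(m) \leq 4c_1 \cdot \sT(m/2) + \Oh(\sT_{\MUL}(m)) + \Oh(m^2)$. If $c_1 > 1$ (and the construction behind \cref{lem:fed_decomp} via Mao's transitions certainly does not give $c_1 = 1$; compare the explicit constant $9$ in \cref{lem:dised_divide} for the analogous \DISED decomposition), the homogeneous part solves to $\Theta(m^{\log_2(4c_1)}) = \Theta(m^{2 + \log_2 c_1})$ by the Master theorem, a polynomial strictly larger than $m^{2+o(1)}$. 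Concretely, your unrolled sum becomes $\sum_i (4c_1)^i \cdot \Oh((m/2^i)^2) = \Oh(m^2) \sum_i c_1^i$, which is a divergent geometric sum when $c_1 > 1$, not $\Oh(m^2\log m)$ as you claim. The same issue infects the combining term $\sum_i (4c_1)^i \cdot \Oh(\sT_{\MUL}(m/2^i))$: whether it telescopes depends on $c_1$, not just on a regularity condition for $\sT_{\MUL}$.

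The missing idea is to not fix $\alpha = 2$. The paper keeps $\Delta = m/\alpha$ with $\alpha$ a free constant and observes that the effective recursion exponent is $\log_\alpha(c_1 \alpha^2) = 2 + \log_\alpha c_1$, which can be made smaller than $2 + \epsilon$ for any $\epsilon > 0$ by taking $\alpha$ large enough, while the combining coefficient $c_2 \alpha^2$ remains $\Oh(1)$. That tunable constant is exactly what defeats the hidden $c_1$ and yields the $m^{2+o(1)}$ bound. The remainder of your argument — the telescoping of the $\sT_{\MUL}$ contribution under $\sT_{\MUL}(m) = \Omega(m^{2+\epsilon})$, and the parallel treatment of the bounded-diameter case — would go through once the branching factor is controlled, but as written the base recurrence is incorrect.
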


\begin{proof}
    We apply \cref{lem:fed_decomp} recursively with threshold $\Delta = m/\alpha$
    for some constant $\alpha \geq 1$ to be determined later.
    For small enough instances any algorithm computing shortest paths in a directed
    acyclic graph will do.
    Thereby, we obtain an algorithm for the \FED Problem where the running time is described by the
    recurrence
    \begin{align*}
        \label{eq:cor_bbd_algo}
        \sT(m) &\leq c_1 \cdot (m/\Delta)^2 \cdot \sT(\Delta) + c_2 \cdot (m/\Delta)^2 \cdot \sT_{\MUL}(m) + c_3 \cdot m^2 \\
        &=  c_1 \alpha^{2} \cdot \sT(m/\alpha) + c_2 \alpha^{2} \cdot \sT_{\MUL}(m) + c_3 \cdot m^2
    \end{align*}
    where $c_1, c_2, c_3$, are the constants hidden in the number of problems we recurse on (parameter $k$ in \cref{lem:fed_decomp}), in the time needed to solve the $(\w, {\bH}, {\bF'})$-\BBD instance, and the time needed to find the $k$ pairs of subforests, respectively. Now, for any $\epsilon > 0$,
    we can choose a sufficiently large constant $\alpha$ such that
    \[
        \log_{\alpha}(c_1 \alpha^{2}) = \log_{\alpha} c_1 + 2 < 2 + \epsilon
    \]
    and still $c_2 \alpha^{2} = \Oh(1)$.
    By applying the Master theorem, we conclude that $\sT(m) = \Oh(\sT_{\MUL}(m)+m^{2+o(1)})$.

    \unbalanced{If the diameter is at most $\D$, we instead have the recurrence,
    \begin{align*}
        \sT(m) &= c_1 \alpha^{2} \cdot \sT(m/\alpha) + c_2 \alpha^{2} \cdot \sT_{\MonMUL}(m, m, m, \D) + c_3 \cdot m^2\\
        & = c_1 \alpha^{2} \cdot \sT(m/\alpha) + O(f(m) g(\D)). 
    \end{align*}
    By the master theorem, we can choose a sufficiently large constant $\alpha$ such that $\log_{\alpha}(c_1 \alpha^{2}) < 2 + \epsilon$ for any $\epsilon$. By applying the Master theorem, we conclude that $\sT(m) = \Oh((f(m)+m^{2+o(1)}) g(D)) = \Oh(\sT_{\MonMUL}(m, m, m, \D) + m^{2 + o(1)} g(D))$.
    }
\end{proof}

\fed

\begin{proof}
    We use \cref{cor:bbd_algo} on the weight function defined in \cref{lem:fed_bbsp}.
    Note that when the $\FED$ instance is unweighted, the \BBD instance has diameter at most $\Oh(n)$.
    In particular, we have $\sT_{\MonMUL}(n, n, n, \Oh(n)) = \Oh(\sT_{\MonMUL}(n))$.
\end{proof}

\subsection{Patching together subproblems}
\label{sec:puttogether}

\fedpatchtwo

\begin{proof}
    Let $n_1 = \abs{\bF_1}$, $n_2 = \abs{\bF_2}$, $n = \abs{\bF} = n_1 + n_2$, and set $i_2 = n_1+1$.
    Consider the pre-order traveral $v_1, \ldots, v_n$ of $\bF$,
    and observe that $v_1, \ldots, v_{n_1}$ corresponds to the pre-order of $\bF_1$
    and that $v_{i_2}, \ldots, v_{n}$ corresponds to the pre-order traversal of $\bF_2$.
    Further, consider the alignment graph $\bar{\bG}$ w.r.t.~$(\w, {\bF}, {\bF'})$.
    The alignment graph \(\bar{\bG}_1\) w.r.t.~\((\w, {\bF_1}, {\bF'})\)
    corresponds to $\bar{\bG}\fragment{1}{i_2}\fragment{1}{(n'+1)}$,
    and the alignment graph \(\bar{\bG}_2\) w.r.t.~\((\w, {\bF_2}, {\bF'})\)
    corresponds to $\bar{\bG}\fragment{i_2}{(n+1)}\fragment{1}{(n'+1)}$.

    \begin{figure}[htbp]
        \centering
        \usetikzlibrary{arrows.meta, bending}

\begin{tikzpicture}
    \draw[thick] (0,0) rectangle (6,2);
    \draw[red, line width=7pt, opacity=0.5] (0,2) --node[left=20pt, thick, opacity=1] {$\sB^{\bot}(1, 1, n+1, n'+1)$} (0,0) -- (6,0);
    \draw[blue, line width=7pt, opacity=0.5] (6, 0) --node[right=14pt, thick, opacity=1] {$\sB^{\top}(1, 1, n+1, n'+1)$} (6,2) -- (0,2);
    \draw[dashed, thick] (3,0) --node[right] {$i_2$} (3,2);

    \draw[Bracket-Parenthesis, thick] (-0.3,2) -- (-0.3,-0.3) --node[below] {$I_1^{\bot}$} (3,-0.3);
    \draw[Bracket-Bracket, thick] (3,-0.3) --node[below right] {$I_2^{\bot}$} (6,-0.3);

    \draw[Bracket-Parenthesis, thick] (0,2.3) --node[above] {$I_1^{\top}$} (3,2.3);
    \draw[Bracket-Bracket, thick] (3,2.3) --node[above] {$I_2^{\top}$} (6.3,2.3) -- (6.3,0);

\end{tikzpicture}
        \caption{A \BBD instance can be thought as a rectangle where we need to compute distances
        from the lower left border to the upper right border.
        In \cref{lem:fed_patch_two} we `cut' the rectangle between $\bF_1$ and $\bF_2$,
        and given the answer for the instances corresponding to the two resulting rectangle halves,
        we show how to patch them together for the full rectangle.
        In order to do so, we split the lower left border $\sB^{\bot}(1, 1, n+1, n'+1)$
        into $I^\bot_1$ and $I^\bot_2$, and the upper right border $\sB^{\top}(1, 1, n+1, n'+1)$ into $I^\top_1$ and $I^\top_2$.}
        \label{fig:patch_two}
    \end{figure}
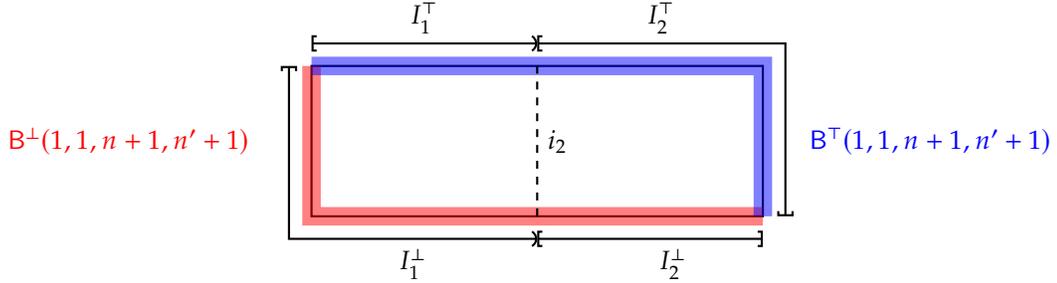

    We rewrite $\sB^{\bot}(1, 1, n+1, n'+1) = I^\bot_1 \cup I^\bot_2$,
    where $I^\bot_1 = \sB^{\bot}(1, 1, i_2-1, n'+1)$ and $I^\bot_2 = \fragment{i_2}{(n+1)} \times 1$.
    Similarly, we rewrite $\sB^{\top}(1, 1, n+1, n'+1) = I^\top_1 \cup I^\top_2$,
    where $I^\top_1 = \fragmentco{1}{i_2} \times (n'+1)$ and $I^\top_2 =  \sB^{\top}(i_2, 1, n+1, n'+1)$.
    See \cref{fig:patch_two} for a visualization of the newly defined index sets.

    We split up the indices for the output of the $(\w, \bF, {\bF'})$-\BBD instance as follows.
    \begin{align}
        \MoveEqLeft \{\ (\bar{u}, \bar{v}) \mid \bar{u} \in \sB^{\bot}(1, 1, n+1, n'+1), \ \bar{v} \in \sB^{\top}(1, 1, n+1, n'+1)\ \} = \nonumber \\[5pt]
        \vspace{10pt}
        & \{\ (\bar{u}, \bar{v})  \mid \bar{u} \in I^\bot_1, \
        \bar{v} \in I^\top_1  \ \} \label{ptwo:eq:output:1}\\
        \cup \quad
        & \{\ (\bar{u}, \bar{v})  \mid \bar{u} \in I^\bot_1, \
        \bar{v} \in I^\top_2  \ \} \label{ptwo:eq:output:2}\\
        \cup \quad
        & \{\ (\bar{u}, \bar{v})  \mid \bar{u} \in I^\bot_2, \
        \bar{v} \in I^\top_1 \ \} \label{ptwo:eq:output:3}\\
        \cup \quad
        & \{\ (\bar{u}, \bar{v})  \mid \bar{u} \in I^\bot_2, \
        \bar{v} \in I^\top_2  \ \}. \label{ptwo:eq:output:4}
    \end{align}
    We show separately how we compute the distances for the index sets~\eqref{ptwo:eq:output:1},~\eqref{ptwo:eq:output:2},~\eqref{ptwo:eq:output:3}, and~\eqref{ptwo:eq:output:4}.
    \begin{itemize}
        \item For~\eqref{ptwo:eq:output:1}, we can get $\dist_{\bar{\bG}}(\bar{u}, \bar{v})$
        using the output of the $(\w, {\bF_1}, {\bF'})$-\BBD instance.
        \item For~\eqref{ptwo:eq:output:2}, we use that for every $i \in \fragmentco{1}{i_2}$ we have $\pi_{\bF}(i) \leq i_2$ as $\bF = \bF_1 + \bF_2$.
            As a consequence, a path goes from $\bar{u}$ to $\bar{v}$ must pass through a node contained in the set $i_2 \times \fragment{1}{(n'+1)}$, and we can write
            \begin{align*}
                \dist_{\bar{\bG}}(\bar{u}, \bar{v})
                &= \max\nolimits_{\bar{w} \in i_2 \times\fragment{1}{(n'+1)}}
                \Big\{ \ \dist_{\bar{\bG}}(\bar{u}, \bar{w}) + \dist_{\bar{\bG}}(\bar{w}, \bar{v}) \ \Big\}.
            \end{align*}
            The two summands that appear in the latter maximization are a subset of the output of the $(\w, {\bF_1}, {\bF'})$-\BBD instance
            and of the $(\w, {\bF_2}, {\bF'})$-\BBD instance.
            Moreover, note that we may rewrite the computation of $\dist_{\bar{\bG}}(\bar{u}, \bar{v})$ for all such $\bar{u}$ and $\bar{v}$
            as a max-plus product $A = B \star C$, where $A$, $B$, and $C$ are matrices of size
            $(n_1+n'+1) \times (n_2+n'+1)$, $(n_1+n'+1) \times n'$, and $n' \times (n_2+n'+1)$, respectively.
            As all dimensions of the matrices $A$, $B$, ans $C$ are upper bounded by $\Oh(m)$, we spend at most $\Oh(\sT_{\MUL}(m))$ time in the computation of these distances.

            \unbalanced{If $\w$ is unweighted, we claim that $C$ is a column-monotone matrix. 
            In particular, whenever $\bar{w} \leq \bar{w'}$, there is a path (of weight $0$) from $\bar{w}$ to $\bar{w'}$ so that the maximum weight path from $\bar{w}$ to $\bar{u}$ is at least the maximum weight path from $\bar{w'}$ to $\bar{u}$.
            Furthermore, since the original alignment graph $\bar{\bG}$ has diameter $\D$, the entries are bounded by $\D$.
            Thus, the matrix product requires time $\Oh(\sT_{\MonMUL}(m, m, m, \D))$.}
        \item For~\eqref{ptwo:eq:output:3}, we have $\dist_{\bar{\bG}}(\bar{u}, \bar{v}) = -\infty$, as any such $\bar{v}$ is not reachable from any such $\bar{u}$ in $\bar{\bG}$.
        \item For~\eqref{ptwo:eq:output:4}, we can get $\dist_{\bar{\bG}}(\bar{u}, \bar{v})$
        using the output of the $(\w, {\bF_2}, {\bF'})$-\BBD instance.
        \qedhere
    \end{itemize}
\end{proof}

\fedpatchthree

\begin{proof}
    Let $\abs{\bF} = n$, and define $i_r = i_s + n_s + 1$.
    Consider the alignment graph $\bar{\bG}$ w.r.t.~$(\w, \bF, {\bF'})$,
    and observe that $\bar{\bG}\fragment{1}{i_s}\fragment{1}{n'}$ corresponds to the alignment graph
    $\bar{\bG}_{\ell}$ w.r.t.~$(\w, {\bF_{\ell}}, {\bF'})$,
    $\bar{\bG}\fragment{i_s}{i_r}\fragment{1}{n'}$ corresponds to the alignment graph
    $\bar{\bG}_{s}$ w.r.t.~$(\w, {\bF_{s}}, {\bF'})$, and
    $\bar{\bG}\fragment{i_r}{n+1}\fragment{1}{n'}$ corresponds to the alignment graph
    $\bar{\bG}_{r}$ w.r.t.~$(\w, {\bF_{r}}, {\bF'})$.
    The alignment graph $\bar{\bG}_{\setminus s}$ w.r.t.~$(\w, {\bF \setminus \bF_s}, {\bF'})$,
    corresponds to the graph obtained by removing from $\bar{\bG}$ all nodes (and their incident edges)
    of the set $\fragmentco{i_s}{i_r} \times \fragment{1}{(n'+1)}$, and by adding an edge from $(x, i_s-1)$ to $(x, i_r)$
    of cost $0$ for all $x \in \fragment{1}{(n'+1)}$.
    We abuse notation, and we index nodes in $\bar{\bG}_{\setminus s}$ w.r.t.~the corresponding positions in $\bar{\bG}$, taking care of never
    using any index from the set $\fragmentco{i_s}{i_r} \times \fragment{1}{(n'+1)}$.

    \begin{figure}[htbp]
        \centering
        \usetikzlibrary{arrows.meta, bending}

\begin{tikzpicture}
    \draw[thick] (0,0) rectangle (8,2);
    \draw[red, line width=7pt, opacity=0.5] (0,2) --node[left=14pt, thick, opacity=1] {$\sB^{\bot}(1, 1, n+1, n'+1)$} (0,0) -- (8,0);
    \draw[blue, line width=7pt, opacity=0.5] (8, 0) --node[right=14pt, thick, opacity=1] {$\sB^{\top}(1, 1, n+1, n'+1)$} (8,2) -- (0,2);
    \draw[dashed, thick] (3,0) --node[right] {$i_{s}$} (3,2);
    \draw[dashed, thick] (5,0) --node[right] {$i_{r}$} (5,2);

    \draw[Bracket-Parenthesis, thick] (-0.3,2) -- (-0.3,-0.3) --node[below] {$I_{\ell}^{\bot}$} (3,-0.3);
    \draw[Bracket-Parenthesis, thick] (3,-0.3) --node[below right] {$I_{s}^{\bot}$} (5,-0.3);
    \draw[Bracket-Bracket, thick] (5,-0.3) --node[below right] {$I_{r}^{\bot}$} (8,-0.3);

    \draw[Bracket-Parenthesis, thick] (0,2.3) --node[above] {$I_{\ell}^{\top}$} (3,2.3);
    \draw[Bracket-Parenthesis, thick] (3,2.3) --node[above] {$I_{s}^{\top}$} (5,2.3);
    \draw[Bracket-Bracket, thick] (5,2.3) --node[above] {$I_{r}^{\top}$} (8.3,2.3) -- (8.3,0);

\end{tikzpicture}
        \caption{In \cref{lem:fed_patch_three}, we `cut' the rectangle in three subrectangles.
            This time, unlike \cref{lem:fed_patch_two}, edges of the form~\eqref{it:match} do not always stay
            in the subrectangle they originate from, as edges of the form~\eqref{it:match} might go from the leftmost
            to the rightmost subrectangle. To cover paths that include such edges, we use outputs from the
            $(\w, {\bF \setminus \bF_{s}}, {\bF'})$-\BBD instance,
            for which we `cut out' the subrectangle in the middle.
        }
        \label{fig:patch_three}
    \end{figure}
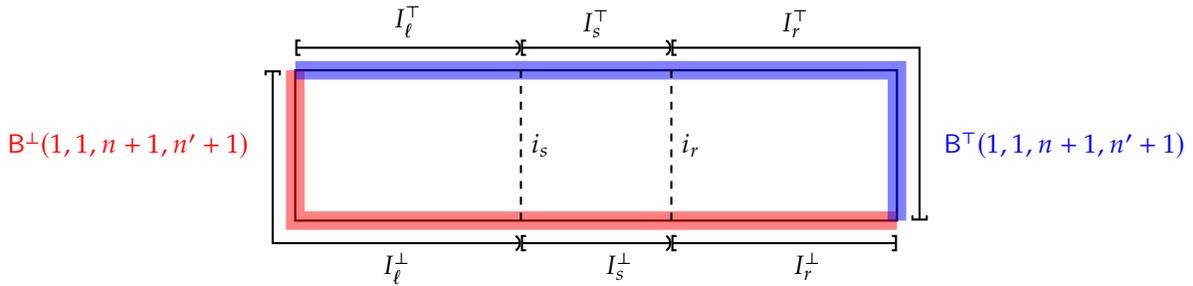

    We rewrite $\sB^{\bot}(1, 1, n+1, n'+1) = I^{\bot}_{\ell} \cup I^{\bot}_{s} \cup I^{\bot}_{r}$,
    where $I^{\bot}_{\ell} = \sB^{\bot}(1, 1, i_s-1, n'+1), I^{\bot}_{s} = \fragmentco{i_s}{i_r} \times 1, I^{\bot}_{r} = \fragment{i_r}{(n+1)} \times 1$,
    and $\sB^{\top}(1, 1, n+1, n'+1) = I^{\top}_{\ell} \cup I^{\top}_{s} \cup I^{\top}_{r}$,
    where $I^{\top}_{\ell} = \fragmentco{1}{i_s} \times (n'+1), I^{\top}_{s} = \fragmentco{i_s}{i_r} \times (n'+1), I^{\top}_{r} = \sB^{\top}(i_r, 1, n+1, n'+1)$.
    Refer to \cref{fig:patch_three} for a visualization of the newly defined index sets.
    Consider $\bar{u} \in I^{\bot}$ and $\bar{v} \in I^{\top}$ for some $(I^{\bot}, I^{\top}) \in \{I^{\bot}_{\ell}, I^{\bot}_{s}, I^{\bot}_{r}\} \times \{I^{\top}_{\ell}, I^{\top}_{s}, I^{\top}_{r}\}$.
    In order to calculate $\dist_{\bar{\bG}}(\bar{u}, \bar{v})$, we perform a case distinction on $I^{\bot}$ and $I^{\top}$.
    \begin{itemize}
        \item If $I^{\bot} = I^{\bot}_{s}$ and $I^{\top} = I^{\top}_{\ell}$,
            or $I^{\bot} = I^{\bot}_{r}$ and $I^{\top} \in \{I^{\top}_{\ell}, I^{\top}_{s}\}$,
            $\dist_{\bar{\bG}}(\bar{u}, \bar{v}) = -\infty$,
            as $\bar{v}$ is not reachable from $\bar{u}$.

        \item If $I^{\bot} = I^{\bot}_{\ell}, I^{\top} = I^{\top}_{\ell}$, then we can get $\dist_{\bar{\bG}}(\bar{u}, \bar{v})$
            from the output of the $(\w, {\bF_{\ell}}, {\bF'})$-\BBD instance.
        \item If $I^{\bot} = I^{\bot}_{s}, I^{\top} = I^{\top}_{s}$, then we can get $\dist_{\bar{\bG}}(\bar{u}, \bar{v})$
        from the output of the $(\w, {\bF_{s}}, {\bF'})$-\BBD instance.
        \item If $I^{\bot} = I^{\bot}_{r}, I^{\top} = I^{\top}_{r}$, then we can get $\dist_{\bar{\bG}}(\bar{u}, \bar{v})$
        from the output of the $(\w, {\bF_{r}}, {\bF'})$-\BBD instance.
        \item If $I^{\bot} = I^{\bot}_{\ell}, I^{\top} = I^{\top}_{s}$, observe that for any $i \in \fragmentco{1}{i_s}$ either $\pi_{\bF}(i) \leq i_s$ or $\pi_{\bF}(i) \geq i_r > i_s + n_s$ holds
        as $\bF_s$ is a synchronous subforest of $\bF$.
        Hence, a path in $\bar{\bG}$ that goes from $\bar{u}$ to $\bar{v}$ must pass through a node in $i_s \times \fragment{1}{(n'+1)}$,
        and we can write
        \begin{align*}
            \dist_{\bar{\bG}}(\bar{v}, \bar{u})
            &= \max\nolimits_{\bar{w} \in i_s \times\fragment{1}{(n'+1)}}
            \Big\{ \ \dist_{\bar{\bG}}(\bar{v}, \bar{w}) + \dist_{\bar{\bG}}(\bar{w}, \bar{u}) \ \Big\}.
        \end{align*}
        Similarly to \cref{lem:fed_patch_two}, we can rewrite the computation of $\dist_{\bar{\bG}}(\bar{v}, \bar{u})$ for all such $\bar{u}$ and $\bar{v}$
        using a max-plus product. This time, we use
        the output of the $(\w, {\bF_{\ell}}, {\bF'})$-\BBD instance and the
        output of the $(\w, {\bF_{s}}, {\bF'})$-\BBD instance to compute the entries of the matrices
        since $\bar{\bG}\fragment{1}{i_s}\fragment{1}{(n'+1)} = \bar{\bG}_{\ell}$ and
        $\bar{\bG}\fragment{i_s}{i_r}\fragment{1}{(n'+1)} = \bar{\bG}_{s}$.
        
        \unbalanced{If the diameter is at most $\D$, we note that following similar arguments as \Cref{lem:fed_patch_two}, the matrix multiplication requires time $\Oh(\sT_{\MonMUL}(m, m, m, \D))$.}

        \item $I^{\bot} = I^{\bot}_{s}, I^{\top} = I^{\top}_{r}$, observe that
        for $i \in \fragmentco{i_s}{i_r}$, we have $\pi_{\bF}(i) \leq i_r$ because $\bF_s$ is a synchronous subforest of $s$.
        Hence, a path in $\bar{\bG}$ that goes from $\bar{u}$ to $\bar{v}$ must pass through a node in $i_r \times \fragment{1}{n'+1}$,
        and we can write
        \begin{align*}
            \dist_{\bar{\bG}}(\bar{v}, \bar{u})
            &= \max\nolimits_{\bar{w} \in i_r \times\fragment{1}{n'+1}}
            \Big\{ \ \dist_{\bar{\bG}}(\bar{v}, \bar{w}) + \dist_{\bar{\bG}}(\bar{w}, \bar{u}) \ \Big\}.
        \end{align*}
        Similarly to the previous case, these computations can be performed using a max-plus product,
        where entries are take from the output of the $(\w, {\bF_{s}}, {\bF'})$-\BBD instance and the
        output of the $(\w, {\bF_{r}}, {\bF'})$-\BBD instance.

        \unbalanced{As above, if the diameter is at most $\D$, the matrix multiplication requires time $\Oh(\sT_{\MonMUL}(m, m, m, \D))$.}

        \item If $I^{\bot} = I^{\bot}_{\ell}, I^{\top} = I^{\top}_{r}$, we
        use again that either $\pi_{\bF}(i) \leq i_s$ or $\pi_{\bF}(i) \geq i_r > i_s + n_s$
        for $i \in \fragmentco{1}{i_s}$, and that ${\bF}(i) \leq i_r$ for $i \in \fragmentco{i_s}{i_r}$.
        Consequently, any path from $\bar{u}$ to $\bar{v}$ either stays entirely in $\bar{\bG}_{\setminus s}$, or
        passes through a node in $i_s \times \fragment{1}{(n'+1)}$ and a node in $i_r \times \fragment{1}{(n'+1)}$.
        Therefore, we obtain
        \begin{align*}
            \dist_{\bar{\bG}}(\bar{v}, \bar{u}) =
            \max \Biggl\{
                \ \dist_{\bar{\bG}_{\setminus s}}(\bar{v}, \bar{u}) \ ,
                \max_{\substack{\bar{w} \in i_s \times\fragment{1}{(n'+1)}\\\bar{z} \in i_r \times\fragment{1}{(n'+1)}}}
                \Big\{ \
                    \dist_{\bar{\bG}}(\bar{v}, \bar{w})
                    + \dist_{\bar{\bG}}(\bar{w}, \bar{z})
                    + \dist_{\bar{\bG}}(\bar{z}, \bar{u})
                \ \Big\} \
            \Biggl\}.
        \end{align*}
        These computation can be done by taking the maximum between outputs of the $(\w, {\bF_{\setminus s}}, {\bF'})$-\BBD instance,
        and a max-plus product between three matrices. The entries of these matrices can be
        retrieved from outputs of the other three \FED instance at our disposal.
        
        \unbalanced{As above, if the diameter is at most $\D$, the matrix multiplication requires time $\Oh(\sT_{\MonMUL}(m, m, m, \D))$.}
    \end{itemize}

    This concludes the proof of \cref{lem:fed_patch_three}.
\end{proof}

\subsection{Computing spine mapping similarities}

In this (sub)section we prove a lemma that will turn out to be useful in a special case
appearing in \cref{sec:sed}. We prove it here, since in order to solve it
we need to solve a \BBD instance, similar to the $(\w, {\bF_{\setminus s}}, {\bF'})$-\BBD instance
appearing in \cref{lem:fed_patch_three}.

\begin{lemma} \label{lem:sim_w_cut}
    Let $\bF, \bF'$ be forests, let $\bS \subseteq \bF$ be a spine, and let $q \in \bS$.

    Then, we can calculate the function $\similarity_q(\bF\fragmentco{x}{y}, \bF'\fragmentco{x'}{y'})$ for all
    $(x,x') \in \sB^{\bot}(1,1,\Left(q), 2\abs{\bF'}+1)$ and $(y,y') \in \sB^{\top}(\Right(q), 1, 2\abs{\bF}+1, 2\abs{\bF'}+1)$,
    where:
    \begin{itemize}
        \item $\similarity_q(\bF\fragmentco{x}{y}, \bF'\fragmentco{x'}{y'}) = \similarity(\bF\fragmentco{x}{y}, \bF'\fragmentco{x'}{y'})$
        if there exists $t \in \bS$ is such that $t \prec q$ and $\similarity(\bF\fragmentco{x}{y}, \bF'\fragmentco{x'}{y'})$ aligns $\sub(t)$ to any subtree of $\bF'$, and
        \item $\similarity_q(\bF\fragmentco{x}{y}, \bF'\fragmentco{x'}{y'}) \leq \similarity(\bF\fragmentco{x}{y}, \bF'\fragmentco{x'}{y'})$ otherwise.
    \end{itemize}
    This computation takes time $\Oh(\sT_{\MUL}(m_q) + m_q^{2 + o(1)})$, where $m_q = \max(\abs{\bF \setminus \sub(q)}, \abs{\bF'})$,
    and requires as input the values $\similarity(\sub(v), \sub(v'))$ for all $v \in \bF \setminus \sub(q), v' \in \bF'$.

    \unbalanced{In the unweighted setting, the computation takes time $\Oh(\sT_{\MonMUL}(m_q, m_q, m_q, \D) + m_{q}^{2 + o(1)} g(\D))$ where $\D = \max(\abs{\bF}, \abs{\bF'})$ and $\sT_{\MonMUL}(m, m, m, \D) = f(m) g(\D)$.}
\end{lemma}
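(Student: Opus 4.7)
The plan is to reduce the task to a single $\BBD$ instance and then apply \cref{cor:bbd_algo}. Concretely, I would build the alignment graph $\bar{\bG}^*$ with respect to $(\w, \bF \setminus \sub(q), \bF')$, where $\w(v, v') \coloneqq \similarity(\sub(v), \sub(v'))$ for $v \in \bF \setminus \sub(q)$ and $v' \in \bF'$---this is exactly the input the lemma provides. The crucial point is that for a spine ancestor $v$ of $q$ the weight uses the \emph{original} subtree $\sub(v) \subseteq \bF$, which still contains $\sub(q)$; this is what allows us to never touch the internal structure of $\sub(q)$. Invoking \cref{cor:bbd_algo} on this instance of size $m_q$ produces all border-to-border distances of $\bar{\bG}^*$ in time $\Oh(\sT_{\MUL}(m_q) + m_q^{2+o(1)})$. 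To translate, observe that bi-order indices $x, y$ with $x \le \Left(q)$ and $y \ge \Right(q)$ shift naturally to indices $\tilde{x} = x$ and $\tilde{y} = y - 2\abs{\sub(q)}$ in $\bF \setminus \sub(q)$ satisfying $(\bF \setminus \sub(q))\fragmentco{\tilde{x}}{\tilde{y}} = \bF\fragmentco{x}{y} \setminus \sub(q)$, so I declare $\similarity_q(\bF\fragmentco{x}{y}, \bF'\fragmentco{x'}{y'})$ to equal the corresponding $\BBD$ distance.

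The main technical step, which I expect to be the delicate part of the argument, is verifying that this declared value satisfies the two constraints on $\similarity_q$. By the Shasha--Zhang correspondence used in \cref{lem:fed_bbsp}, the $\BBD$ distance equals $\max_{M} \sum_{(v, v') \in M} \similarity(\sub(v), \sub(v'))$, where $M$ ranges over valid mappings between $(\bF \setminus \sub(q))\fragmentco{\tilde{x}}{\tilde{y}}$ and $\bF'\fragmentco{x'}{y'}$. Each such $M$ corresponds to a legitimate alignment of $\bF\fragmentco{x}{y}$ with $\bF'\fragmentco{x'}{y'}$: each picked pair $(v, v') \in M$ aligns $\sub(v)$ to $\sub(v')$ optimally (with the weight paying exactly for the best internal sub-alignment), and every node not covered by any chosen $\sub(v)$---including all of $\sub(q)$ when no ancestor of $q$ appears in $M$---is simply deleted. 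Hence the distance is always at most $\similarity(\bF\fragmentco{x}{y}, \bF'\fragmentco{x'}{y'})$. Conversely, when the optimal real alignment aligns some $\sub(t)$ with $t \in \bS$, $t \prec q$, to a subtree of $\bF'$, the Shasha--Zhang decomposition of this alignment produces a mapping $M^{*}$ in $\bar{\bG}^*$: it uses the diagonal edge at $t$ (which exists since $t \in \bF \setminus \sub(q)$) carrying weight $\similarity(\sub(t), \sub(v'))$, while the left and right remainders $\bF\fragmentco{x}{\Left(t)}$ and $\bF\fragmentco{\Right(t)}{y}$ both lie entirely outside $\sub(q)$, so their optimal sub-alignments are faithfully realized inside $\bar{\bG}^*$. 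This $M^{*}$ achieves value exactly $\similarity(\bF\fragmentco{x}{y}, \bF'\fragmentco{x'}{y'})$, matching the first bullet of the lemma; the second bullet follows from the always-at-most argument above.

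For the unweighted variant, every border-to-border distance in $\bar{\bG}^*$ is at most $\similarity(\bF, \bF') \le 2\min(\abs{\bF}, \abs{\bF'}) = \Oh(\D)$, since the subtrees $\sub(v)$ picked in any $M$ have pairwise disjoint node sets in $\bF$ and each contributes at most twice its own size. Consequently $\bar{\bG}^*$ has diameter $\Oh(\D)$, and applying the unweighted version of \cref{cor:bbd_algo} yields the claimed runtime $\Oh(\sT_{\MonMUL}(m_q, m_q, m_q, \D) + m_q^{2+o(1)} g(\D))$.
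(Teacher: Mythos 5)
Your proof is correct and takes essentially the same approach as the paper's: both construct (up to a uniform reindexing) the same alignment graph over $\bF \setminus \sub(q)$ and $\bF'$---you by instantiating \cref{def:alg_graph} directly on $\bF\setminus\sub(q)$ with the weight $\w(v,v') = \similarity(\sub_{\bF}(v), \sub(v'))$, the paper by deleting the columns of $\bar{\bG}$ corresponding to $\sub(q)$ and patching with zero-weight edges---and both then invoke \cref{cor:bbd_algo} on the resulting $\BBD$ instance. The correctness argument is likewise shared: every path in this graph realizes a genuine alignment of $\bF\fragmentco{x}{y}$ with $\bF'\fragmentco{x'}{y'}$ (giving the $\leq$ bound), while an alignment that matches some $\sub(t)$ with $t \prec q$ survives because its diagonal edge at $t$ jumps over the deleted region (giving equality in that case).
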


\begin{proof}
    Let $n = \abs{\bF}, n' = \abs{\bF'}$, and $n_q = \abs{\sub(q)}$.
    Further, let $v_{1}, \ldots, v_{n}$ and $v_{1}', \ldots, v_{n'}'$ be the pre-orders of $\bF$ and $\bF'$, respectively, and let $i_q$ be such that
    $v_{i_q}, \ldots, v_{i_q+n_q}$ is the pre-order of $\sub(q)$.

    Consider the alignment graph $\bar{\bG}$ w.r.t.~$(\w_{\bF, \bF'}, {\bF}, {\bF'})$.
    Define the alignment graph $\bar{\bG}_{\setminus q}$, as the alignment graph obtained from $\bar{\bG}$
    by removing from all nodes (and their incident edges) of the set $\fragment{i_q}{(i_q + n_q)} \times \fragment{1}{(n'+1)}$,
    and by adding an edge from $(x, i_q-1)$ to $(x, i_q+n_q+1)$ of cost $0$ for all $x \in \fragment{1}{(n'+1)}$.
    Note, the construction of $\bar{\bG}_{\setminus q}$ is possible as it requires to know
    the values $\similarity(\sub(v), \sub(v'))$ for all $v \in \bF \setminus \sub(q), v' \in \bF$.

    We claim that solving \BBD on $\bar{\bG}_{\setminus q}$ in time $\Oh(\sT_{\MUL}(m_q) + m_{q}^{2 + o(1)})$, leads to the desired function.
    \unbalanced{In the unweighted setting, \Cref{cor:bbd_algo} shows that \BBD can instead by computed in time $\Oh(\sT_{\MonMUL}(m_{q}, m_{q}, m_{q}, \D) + m_{q}^{2 + o(1)} g(\D))$.}
    For $(x,x') \in \sB^{\bot}(1,1,\Left(q), 2\abs{\bF'}+1)$ and $(y,y') \in \sB^{\top}(\Right(q), 1, 2\abs{\bF}+1, 2\abs{\bF'}+1)$, we define $\similarity_q(\bF\fragmentco{x}{y}, \bF'\fragmentco{x'}{y'})$
    to be the longest path from $(x,y)$ to $(x',y')$ in $\bar{\bG}_{\setminus q}$ (in this proof we index nodes in $\bar{\bG}_{\setminus q}$ w.r.t. their original indices in $\bar{\bG}$).

    To see that the definition serves our purpose, consider arbitrary $(x,x') \in \sB^{\bot}(1,1,\Left(q), 2\abs{\bF'}+1)$ and $(y,y') \in \sB^{\top}(\Right(q), 1, 2\abs{\bF}+1, 2\abs{\bF'}+1)$.
    Whenever the longest path from $(x,x')$ to $(y, y')$
    takes an outgoing edge from a node $(i,i')$ of the form~\eqref{it:match},
    where $v_i = s$ for some $t \in S$ such that $t \prec q$,
    then $\similarity(\bF\fragmentco{x}{y}, \bF'\fragmentco{x'}{y'})$ aligns $\sub(t)$ to $\sub(v_{i'}')$.    
    Note, all such outgoing edges
    start from a node in the set $\fragment{1}{(i_q-1)} \times \fragment{1}{(n'+1)}$ and arrive to a node in
    the set $\fragment{(i_q+n_q+1)}{(n+1)} \times \fragment{1}{(n'+1)}$.
    Consequently, such longest path survives in $\bar{\bG}_{\setminus q}$,
    and we conclude $\similarity_q(\bF\fragmentco{x}{y}, \bF'\fragmentco{x'}{y'}) \geq \similarity(\bF\fragmentco{x}{y}, \bF'\fragmentco{x'}{y'})$.

    On the other hand, for any $(x,x') \in \sB^{\bot}(1,1,\Left(q), 2\abs{\bF'}+1)$ and $(y,y') \in \sB^{\top}(\Right(q), 1, 2\abs{\bF}+1, 2\abs{\bF'}+1)$, observe that the longest path between $(x,y)$ to $(x',y')$ in $\bar{\bG}_{\setminus q}$ is no longer than the one between $(x,y)$ to $(x',y')$ in $\bar{\bG}$. We conclude, $\similarity_q(\bF\fragmentco{x}{y}, \bF'\fragmentco{x'}{y'}) \leq \similarity(\bF\fragmentco{x}{y}, \bF'\fragmentco{x'}{y'})$.
\end{proof}

\subsection{Forest Edit Distance on Unbalanced Instances}
\label{sec:fed-unbalance}

In this section, we consider instances where one forest $\bF$ is significantly larger than the other $\bF'$. 
Without loss of generality, assume $n \geq n'$.
We show that we can solve a restricted versions of the \BBD problem more efficiently.

\defproblem{Left-Right Border-to-Border Distances (\LRBBD)}
{two forests $\bF$, $\bF'$ of size $\abs{\bF}=n, \abs{\bF'}=n'$, and a weight function $\w : D \rightarrow \mathbb{R}$ such that $D \subseteq \bF \times \bF'$.}
{$\dist_{\bar{\bG}}(\bar{u}, \bar{w})$ for all $\bar{u} \in 1 \times \fragment{1}{n' + 1}$ and $\bar{w} \in \sB^{\top}(1, 1, n+1, n'+1)$, 
where $\bar{\bG}$ is the alignment graph w.r.t.~$(\w, \bF, \bF')$.}

We say $(n, n')$ is the size of the \LRBBD instance.
We will solve the unbalanced \LRBBD instance by decomposing the larger forest into forests of size at most $n'$ using Mao's decomposition algorithm (\cref{alg:mao_decomp}), apply our \BBD algorithm on each instance, and combining the results using the following two lemmas.

\begin{restatable}{lemma}{fedpatchtwoLRBBD}
    \label{lem:fed_patch_two_lr}
    Suppose we are given a $(\w, \bF, {\bF'})$-\LRBBD
    instance of size $(n, n')$ such that $\bF = \bF_1 + \bF_2$.

    Then,
    given the outputs of the $(\w, {\bF_1}, {\bF'})$-\LRBBD instance
    and the $(\w, {\bF_2}, {\bF'})$-\LRBBD instance,
    we can solve the $(\w, \bF, {\bF'})$-\LRBBD instance
    in time $\Oh(\sT_{\MUL}(n', n', n))$.

    \unbalanced{Furthermore, if the \LRBBD instance has diameter $\Oh(n')$, we can solve the $(\w, \bF, {\bF'})$-\LRBBD instance
    in time $\Oh(\sT_{\MonMUL}(n', n', n, n'))$.}
\end{restatable}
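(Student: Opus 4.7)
The plan is to adapt the proof of Lemma~\ref{lem:fed_patch_two} to the \LRBBD setting, exploiting the fact that the source border now lies entirely on the $\bF_1$ side of the cut. I would set $i_2 = \abs{\bF_1} + 1$, so that the alignment graph $\bar{\bG}_1$ w.r.t. $(\w, \bF_1, \bF')$ corresponds to $\bar{\bG}\fragment{1}{i_2}\fragment{1}{(n'+1)}$ and $\bar{\bG}_2$ w.r.t. $(\w, \bF_2, \bF')$ corresponds to $\bar{\bG}\fragment{i_2}{(n+1)}\fragment{1}{(n'+1)}$. I would decompose the target border as $\sB^{\top}(1, 1, n+1, n'+1) = I_1^{\top} \cup I_2^{\top}$, where $I_1^{\top} = \fragmentco{1}{i_2} \times (n'+1)$ and $I_2^{\top} = \sB^{\top}(i_2, 1, n+1, n'+1)$, and split the output into two cases by the location of $\bar{v}$.

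For $\bar{v} \in I_1^{\top}$, the distance $\dist_{\bar{\bG}}(\bar{u}, \bar{v})$ is obtained directly from the $(\w, \bF_1, \bF')$-\LRBBD output, as both endpoints lie in $\bar{\bG}_1$. For $\bar{v} \in I_2^{\top}$, I would reuse the key structural observation from Lemma~\ref{lem:fed_patch_two}: since $\bF = \bF_1 + \bF_2$, we have $\pi_{\bF}(i) \leq i_2$ for every $i \in \fragmentco{1}{i_2}$, forcing every path from $\bar{u}$ through the column $i_2 \times \fragment{1}{(n'+1)}$. This yields
\[
\dist_{\bar{\bG}}(\bar{u}, \bar{v}) = \max_{\bar{w} \in i_2 \times \fragment{1}{(n'+1)}} \Big\{ \ \dist_{\bar{\bG}}(\bar{u}, \bar{w}) + \dist_{\bar{\bG}}(\bar{w}, \bar{v}) \ \Big\},
\]
where the first summand is a left-to-right distance in $\bar{\bG}_1$ (already part of the $(\w, \bF_1, \bF')$-\LRBBD output) and the second is a left-to-$I_2^{\top}$ distance in $\bar{\bG}_2$ (already part of the $(\w, \bF_2, \bF')$-\LRBBD output). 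I would batch this computation as a single max-plus product of an $(n'+1) \times (n'+1)$ matrix with an $(n'+1) \times \abs{I_2^{\top}}$ matrix, using $\abs{I_2^{\top}} \leq n + n' + 1$ and $n \geq n'$ to bound the running time by $\Oh(\sT_{\MUL}(n', n', n))$.

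For the unweighted setting with diameter $\Oh(n')$, I would verify that the right-hand matrix is column-monotone by the same zero-weight-deletion argument as in Lemma~\ref{lem:fed_patch_two}: advancing the row index $k$ along the cut column $i_2 \times \fragment{1}{(n'+1)}$ corresponds to moving along a zero-weight deletion path, which non-increases the distance to any fixed target. The entries are non-negative and bounded by the diameter $\Oh(n')$, so the product evaluates in $\Oh(\sT_{\MonMUL}(n', n', n, n'))$. The main conceptual check — and the only deviation from Lemma~\ref{lem:fed_patch_two} — is confirming that no third case (source on the $\bF_2$ side) arises, which is immediate since $1 \times \fragment{1}{(n'+1)}$ is contained in $\bar{\bG}_1$; this removal of one case is exactly what sharpens the rectangular dimensions and delivers the stated bound.
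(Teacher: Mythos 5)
Your proposal is correct and mirrors the paper's proof almost exactly: the same choice of $i_2 = |\bF_1|+1$, the same decomposition of the target border into $I_1^{\top}$ and $I_2^{\top}$, the same passage-through-the-cut-column observation ($\pi_{\bF}(i) \leq i_2$ for $i < i_2$), the same $(n'+1)\times(n'+1)$ by $(n'+1)\times\Oh(n)$ rectangular max-plus product, and the same zero-weight-deletion-path argument for column-monotonicity in the unweighted case. The one subtlety you call out explicitly — that the source border $1 \times \fragment{1}{(n'+1)}$ lies entirely on the $\bF_1$ side, eliminating the third case from Lemma~\ref{lem:fed_patch_two} and thereby shrinking one matrix dimension to $n'$ — is precisely what the paper's proof also relies on, just less explicitly; your framing of it is a helpful clarification rather than a deviation.
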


\begin{restatable}{lemma}{fedpatchthreeLRBBD}
    \label{lem:fed_patch_three_lr}
    Suppose we are given a $(\w, \bF, {\bF'})$-\LRBBD
    instance of size $(n, n')$ and a synchronous subforest $\bF_s \subseteq \bF$
    of size $\abs{\bF_s} = n_s$.
    Further, let $v_1, \ldots, v_{\abs{\bF}}$ be the pre-order of $\bF$,
    and let $i_s$ be such that $v_{i_s}, \ldots, v_{i_s+n_s}$
    is the pre-order of $\bF_s$.
    Define $\bF_{\ell} = \bF\fragmentco{1}{\Left(v_{i_s})}$, $\bF_s = \bF\fragmentco{\Left(v_{i_s})}{\Left(v_{i_s+n_s+1})}$,
    and $\bF_{r} = \bF\fragmentco{\Left(v_{i_s+n_s+1})}{2\abs{\bF}+1}$.

    Then,
    given the outputs of the four \LRBBD instances $(\w, \bF_{\ell}, {\bF'})$-\LRBBD,
    $(\w, {\bF_s}, {\bF'})$-\LRBBD, $(\w, {\bF_{r}}, {\bF'})$-\LRBBD,
    and $(\w, {\bF \setminus \bF_{s}}, {\bF'})$-\LRBBD,
    we can solve the $(\w, \bF, {\bF'})$-\LRBBD instance
    in time $\Oh(\sT_{\MUL}(n', n', n))$.

    \unbalanced{Furthermore, if the \LRBBD instance has diameter $\Oh(n')$, we can solve the $(\w, \bF, {\bF'})$-\LRBBD instance
    in time $\Oh(\sT_{\MonMUL}(n', n', n, n'))$.}
\end{restatable}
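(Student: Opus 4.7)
}
The plan is to mirror the case analysis of \cref{lem:fed_patch_three} but restricted to paths that start on the left border $I^{\bot} = 1 \times \fragment{1}{n'+1}$. As before, let $i_r = i_s + n_s + 1$, let $\bar{\bG}$ be the alignment graph w.r.t.\ $(\w, \bF, \bF')$, and split the upper-right border $\sB^{\top}(1, 1, n+1, n'+1)$ into $I^{\top}_{\ell} = \fragmentco{1}{i_s} \times (n'+1)$, $I^{\top}_{s} = \fragmentco{i_s}{i_r} \times (n'+1)$, and $I^{\top}_{r} = \sB^{\top}(i_r, 1, n+1, n'+1)$. Observe that $\bar{\bG}\fragment{1}{i_s}\fragment{1}{n'+1}$, $\bar{\bG}\fragment{i_s}{i_r}\fragment{1}{n'+1}$, and $\bar{\bG}\fragment{i_r}{n+1}\fragment{1}{n'+1}$ are the alignment graphs for $\bF_{\ell}$, $\bF_s$, and $\bF_r$, respectively, and that $\bar{\bG}_{\setminus s}$ (as in the proof of \cref{lem:fed_patch_three}) is the alignment graph for $\bF \setminus \bF_s$ after identifying the two endpoints of each extra column. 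Because $\bF_s$ is synchronous, $\pi_{\bF}(i) \leq i_s$ or $\pi_{\bF}(i) \geq i_r$ for $i \in \fragmentco{1}{i_s}$, and $\pi_{\bF}(i) \leq i_r$ for $i \in \fragmentco{i_s}{i_r}$; this is the only structural fact we need.

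I then handle the three cases for $\bar{v}$ separately, with $\bar{u} \in I^{\bot}$ fixed. If $\bar{v} \in I^{\top}_{\ell}$, then every path from $\bar{u}$ to $\bar{v}$ stays inside $\bar{\bG}_{\ell}$, so $\dist_{\bar{\bG}}(\bar{u}, \bar{v})$ is directly readable from the $(\w, \bF_\ell, \bF')$-\LRBBD output. If $\bar{v} \in I^{\top}_{s}$, then any such path must cross column $i_s$, and
\[
\dist_{\bar{\bG}}(\bar{u}, \bar{v}) = \max_{\bar{w} \in i_s \times \fragment{1}{n'+1}} \Big\{ \dist_{\bar{\bG}}(\bar{u}, \bar{w}) + \dist_{\bar{\bG}}(\bar{w}, \bar{v}) \Big\},
\]
whose summands are entries of the $(\w, \bF_\ell, \bF')$-\LRBBD and $(\w, \bF_s, \bF')$-\LRBBD outputs. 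This amounts to a max-plus product of an $(n'+1) \times (n'+1)$ matrix with an $(n'+1) \times \Oh(n_s)$ matrix. If $\bar{v} \in I^{\top}_{r}$, then either the path stays in $\bar{\bG}_{\setminus s}$, in which case we read $\dist_{\bar{\bG}_{\setminus s}}(\bar{u}, \bar{v})$ from the $(\w, \bF \setminus \bF_s, \bF')$-\LRBBD output, or it crosses both columns $i_s$ and $i_r$, so
\[
\dist_{\bar{\bG}}(\bar{u}, \bar{v}) = \max_{\substack{\bar{w} \in i_s \times \fragment{1}{n'+1} \\ \bar{z} \in i_r \times \fragment{1}{n'+1}}} \Big\{ \dist_{\bar{\bG}}(\bar{u}, \bar{w}) + \dist_{\bar{\bG}}(\bar{w}, \bar{z}) + \dist_{\bar{\bG}}(\bar{z}, \bar{v}) \Big\},
\]
which is a three-matrix max-plus product whose summands are entries of the \LRBBD outputs for $\bF_\ell$, $\bF_s$, and $\bF_r$, respectively. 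I then take the maximum of the two options.

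Finally I bound the cost. Multiplying the first two $(n'+1) \times (n'+1)$ matrices takes $\sT_{\MUL}(n')$ time, producing an $(n'+1) \times (n'+1)$ matrix that we then multiply by an $(n'+1) \times \Oh(n)$ matrix in $\sT_{\MUL}(n', n', n)$ time; the case $\bar{v} \in I^{\top}_s$ contributes at most this much. Summing everything gives $\Oh(\sT_{\MUL}(n', n', n))$ as claimed. For the unweighted statement, the diameter bound $\Oh(n')$ forces every entry appearing in the products to be $\Oh(n')$, and monotonicity of the matrices follows the same reasoning as in \cref{lem:fed_patch_two}: the free horizontal/vertical edges of weight zero imply that $\dist_{\bar{\bG}}(\cdot, \cdot)$ is monotone in its endpoints along a border column or row. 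Therefore each max-plus product falls into $\MonMUL$, giving running time $\Oh(\sT_{\MonMUL}(n', n', n, n'))$.

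The only non-routine point is the case $\bar{v} \in I^{\top}_{r}$: one must verify that using $\bar{\bG}_{\setminus s}$ for the ``bypass'' contribution and the column-$i_s$/$i_r$ three-matrix product for the ``through $\bF_s$'' contribution genuinely covers every path, which is exactly where the synchronicity of $\bF_s$ is used (so that paths leaving column $i_s$ into $\fragmentco{i_s}{i_r}$ can only re-enter the rest of $\bar{\bG}$ through column $i_r$). This is completely parallel to the corresponding case in \cref{lem:fed_patch_three} and should present no essentially new difficulty; all remaining work is bookkeeping of matrix dimensions to confirm the rectangular $\sT_{\MUL}(n', n', n)$ and $\sT_{\MonMUL}(n', n', n, n')$ bounds.
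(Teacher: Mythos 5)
Your proposal is correct and follows essentially the same approach as the paper's proof: the same three-way split of the top border into $I^{\top}_{\ell}, I^{\top}_{s}, I^{\top}_{r}$, the same use of synchronicity to force paths through column $i_s$ (and $i_r$), the same $\bar{\bG}_{\setminus s}$ bypass for the $I^{\top}_r$ case, and the same max-plus products with the same shapes and the same monotonicity/diameter observation for the unweighted bound. The only cosmetic difference is the running-time paragraph, where your bookkeeping statement about multiplying the two square matrices first applies to the $I^{\top}_r$ case while the $I^{\top}_s$ case is dominated separately, but the conclusion matches the paper's $\Oh(\sT_{\MUL}(n', n', n))$ (resp.\ $\Oh(\sT_{\MonMUL}(n', n', n, n'))$) bound.
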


The decomposition scheme of \Cref{alg:mao_decomp}, together with \cref{lem:fed_patch_two_lr} and \cref{lem:fed_patch_three_lr},
allows us to establish the following reduction from a
$(\w, \bF, {\bF'})$-\LRBBD instance of size $(n, n')$
to $\Oh(n/\Delta)$ \LRBBD instances, each of size at most $(\Delta, n')$ for any threshold $\Delta$.

\begin{lemma}\label{lem:fed_decomp_lr}
    Suppose, we are given a $(\w, \bF, {\bF'})$-\LRBBD instance of size $(n, n')$ and threshold $\Delta$.
    Assume without loss of generality $n \geq n'$.

    Then, there exist forests $\bF_{1}, \ldots, \bF_{k}$ all of size at most $\Delta$, such that $k = \Oh(n/\Delta)$ and such that,
    given the output of the $(\w, {\bF_i}, {\bF'})$-\LRBBD instance
    for all $i \in \fragment{1}{k}$,
    we can solve the $(\w, \bF, {\bF'})$-\LRBBD instance
    in time $\Oh(\frac{n^2}{\Delta n'} \cdot \sT_{\MUL}(n')$.

    \unbalanced{Furthermore, if the \LRBBD instance has diameter $\Oh(n')$, we can solve the $(\w, \bF, {\bF'})$-\LRBBD instance
    in time $\Oh(\frac{n^2}{\Delta n'} \cdot \sT_{\MonMUL}(n'))$.}
\end{lemma}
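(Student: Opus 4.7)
The plan is to mirror the proof of \Cref{lem:fed_decomp} but only decompose the larger forest $\bF$ (the smaller forest $\bF'$ is already at the target size $n'$, so no further splitting is needed on that side) and to use the \LRBBD analogues of the patching lemmas. Concretely, I would invoke \Cref{alg:mao_decomp} on the $(\w, \bF, \bF')$-\LRBBD instance with threshold $\Delta$, reinterpreting every recursive call as operating on an \LRBBD instance rather than a \BBD one. Each base case $|\bH| \leq \Delta$ gets added to the output list $\bF_1, \ldots, \bF_k$. At each type~I transition I would patch the two sub-instance outputs via \Cref{lem:fed_patch_two_lr}, and at each type~II transition I would patch via \Cref{lem:fed_patch_three_lr}, where the four sub-forests $\bH_\ell$, $\bH_s$, $\bH_r$, $\bH \setminus \bH_s$ are each of size at most $\Delta$, so their \LRBBD outputs are retrieved directly as base cases.

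For the running time, the accounting from \cite{M22} reproduced in the proof of \Cref{lem:fed_decomp} already shows that the total number of type~I plus type~II transitions, and hence also of base cases, is $\Oh(n/\Delta)$, so $k = \Oh(n/\Delta)$. Each patching step costs $\Oh(\sT_{\MUL}(n', n', n)) = \Oh(\tfrac{n}{n'} \sT_{\MUL}(n'))$ by the rectangular reduction stated in the preliminaries, yielding a total patching cost of $\Oh(\tfrac{n^2}{\Delta n'} \sT_{\MUL}(n'))$, which matches the bound claimed in the lemma. In the unweighted setting with diameter $\Oh(n')$, I would instead invoke the bounded-monotone variants of the patching lemmas at cost $\Oh(\sT_{\MonMUL}(n', n', n, n')) = \Oh(\tfrac{n}{n'} \sT_{\MonMUL}(n'))$ per step, giving the stated $\Oh(\tfrac{n^2}{\Delta n'} \sT_{\MonMUL}(n'))$ bound.

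The main thing to check carefully will be that the \LRBBD versions \Cref{lem:fed_patch_two_lr} and \Cref{lem:fed_patch_three_lr} are in fact the right replacements inside Mao's recursive scheme. Because \LRBBD only asks for distances from the single-column left border $1 \times \fragment{1}{n'+1}$ to the top and right borders, the case distinctions in the \BBD patching proofs that involve the bottom border disappear, and the cross-cut max-plus products become rectangular with one dimension of length $\Oh(n)$ and the others of length $\Oh(n')$; this is exactly what produces the improved $\tfrac{n}{n'}$ factor relative to the analogous step in \Cref{lem:fed_decomp}. No further subtlety should arise, since \Cref{alg:mao_decomp} is purely a decomposition of $\bF$ and never interacts with the structure of $\bF'$.
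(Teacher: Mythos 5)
Your approach mirrors the paper's proof exactly: both run \cref{alg:mao_decomp} on the larger forest $\bF$ only, patch type~\ref{it:decomp:1} and type~\ref{it:decomp:2} transitions with \cref{lem:fed_patch_two_lr} and \cref{lem:fed_patch_three_lr} respectively, and reuse the $\Oh(n/\Delta)$ bound on the transition count established in \cref{lem:fed_decomp} to obtain the stated running time, with $\sT_{\MonMUL}(n',n',n,n')$ replacing $\sT_{\MUL}(n',n',n)$ in the bounded-diameter case. One minor slip: $\bH_s$ is $\bH$ with only $\Oh(\Delta)$ nodes removed, so it need not have size at most $\Delta$; the algorithm recurses on the $(\w, \bH_s, \bF')$-\LRBBD instance rather than treating it as a base case, but this does not affect the counting or the final bound.
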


\begin{proof}
    As in \cref{lem:fed_decomp}, we utilize \cref{alg:mao_decomp} on the $(\w, {\bF}, {\bF'})$-\LRBBD instance.
    \cref{alg:mao_decomp} applies recursively Mao's transitions with the same parameter $\Delta$ on forests $\bH \subseteq \bF$,
    until it is left with forests of size at most $(\Delta, n')$.
    We use \Cref{lem:fed_patch_two_lr} and \Cref{lem:fed_patch_three_lr} to combine results as appropriate.
    The correctness of the algorithm follows directly from \cref{lem:fed_patch_two_lr} and \cref{lem:fed_patch_three_lr}.

    What remains to be demonstrated is a bound on $k$ and on the running time.
    For that purpose, notice that $k = \Oh(t_{\mathrm{I}} + t_{\mathrm{II}})$ where $t_{\mathrm{I}}, t_{\mathrm{II}}$ is the number of times we apply transition of type~\ref{it:decomp:1} and~\ref{it:decomp:2}, respectively.
    As argued in \cref{lem:fed_decomp}, we have $k = \Oh(t_{\mathrm{I}} + t_{\mathrm{II}}) = \Oh(n/\Delta)$.
    Then, the running time can be bounded by
    \begin{equation*}
        \Oh\left(\frac{n}{\Delta} \cdot \sT_{\MUL}(n',n', n)\right) = \Oh\left(\frac{n^2}{\Delta n'} \cdot \sT_{\MUL}(n')\right).
    \end{equation*}

    \unbalanced{If the diameter is at most $\Oh(n')$, we instead obtain
    \begin{equation*}
        \Oh\left(\frac{n}{\Delta} \cdot \sT_{\MonMUL}(n', n', n, n')\right) = \Oh\left(\frac{n^2}{\Delta n'} \cdot \sT_{\MonMUL}(n') \right)
    \end{equation*}}
\end{proof}

\begin{corollary} 
    \label{cor:lrbbd_algo}
    There is an algorithm that solves in time $(n/n')^{1+o(1)} \cdot \left( \sT_{\MUL}(n') + n'^{2 + o(1)} \right)$
    a $(\w, {\bF}, {\bF'})$-\LRBBD instance of size $(n, n')$ where $n \geq n'$.

    \unbalanced{Furthermore, if the \LRBBD instance has diameter $\Oh(n')$, then the algorithm runs in time $(n/n')^{1+o(1)} \cdot \left( \sT_{\MonMUL}(n') + n'^{2 + o(1)} g(n') \right)$ where $\sT_{\MonMUL}(n', n', n', \D) = \Oh(f(n') g(\D))$.}
\end{corollary}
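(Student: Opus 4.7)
The plan is to mirror the proof of \cref{cor:bbd_algo}, invoking \cref{lem:fed_decomp_lr} recursively with threshold $\Delta = n/\alpha$ for a sufficiently large constant $\alpha \geq 1$, and falling back to \cref{cor:bbd_algo} (which applies since an \LRBBD instance is a restriction of a \BBD instance) once the size of the larger forest drops to $\Oh(n')$. Each application of \cref{lem:fed_decomp_lr} reduces an \LRBBD instance of size $(n, n')$ to $\Oh(\alpha)$ subinstances of size at most $(n/\alpha, n')$, incurring merging cost $\Oh\bigl(\tfrac{n^2}{\Delta n'} \sT_{\MUL}(n')\bigr) = \Oh\bigl(\alpha \cdot \tfrac{n}{n'} \sT_{\MUL}(n')\bigr)$.

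This yields the recurrence
\[
    \sT(n, n') \leq c_1 \alpha \cdot \sT(n/\alpha, n') + c_2 \alpha \cdot \tfrac{n}{n'} \cdot \sT_{\MUL}(n'),
\]
with base case $\sT(n', n') = \Oh(\sT_{\MUL}(n') + n'^{2+o(1)})$, where the constants $c_1, c_2$ come from \cref{lem:fed_decomp_lr}. Unrolling over $L = \log_\alpha(n/n')$ levels produces $(c_1 \alpha)^L = (n/n')^{1 + \log_\alpha c_1}$ base subproblems, and summing the merging costs level by level gives a geometric series bounded by $\Oh(\alpha \cdot (n/n')^{1 + \log_\alpha c_1} \cdot \sT_{\MUL}(n'))$. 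For any fixed $\epsilon > 0$, I choose $\alpha$ large enough so that $\log_\alpha c_1 < \epsilon$ while $c_2 \alpha = \Oh(1)$; then the total cost becomes $(n/n')^{1+o(1)} \cdot (\sT_{\MUL}(n') + n'^{2+o(1)})$, as claimed.

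For the unweighted case, every path in the alignment graph has weight $\Oh(n')$, because the similarity between any two subforests of $\bF$ and $\bF'$ is bounded by $2 \min(|\bF|, |\bF'|) = \Oh(n')$; hence the alignment graph has diameter $\Oh(n')$. I then invoke the diameter-aware variants of \cref{lem:fed_decomp_lr} and \cref{cor:bbd_algo}, substituting $\sT_{\MonMUL}$ for $\sT_{\MUL}$ throughout, which yields the claimed $(n/n')^{1+o(1)} \cdot (\sT_{\MonMUL}(n') + n'^{2 + o(1)} g(n'))$ bound. The main bookkeeping obstacle is verifying that $\alpha$ can be chosen to drive $\log_\alpha c_1$ below any target $\epsilon$ without blowing up the prefactor $c_2 \alpha$, so that both the base-case contribution and the geometric tail fit inside the $(n/n')^{1+o(1)}$ envelope; this is exactly the argument already used in \cref{cor:bbd_algo}, so no new techniques are needed beyond careful accounting of the recurrence.
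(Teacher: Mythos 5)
Your proposal is correct and follows essentially the same route as the paper: apply \cref{lem:fed_decomp_lr} recursively with threshold $\Delta = n/\alpha$, fall back to \cref{cor:bbd_algo} once the larger forest shrinks to $\Oh(n')$, and choose $\alpha$ large enough so that the exponent overhead $\log_\alpha c_1$ (equivalently $\log_\alpha(C\alpha)-1$ in the paper's notation) is below any target $\epsilon$. Your explicit level-by-level unrolling of the recurrence and the geometric-series bound just make explicit what the paper leaves to the Master-theorem-style calculation; the unweighted case is handled identically by swapping in $\sT_{\MonMUL}$ and the $g(n')$ factor.
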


\begin{proof}
    We apply \cref{lem:fed_decomp} recursively with threshold $\Delta = n/\alpha$ for some constant $\alpha \geq 1$ on forest $\bF$.
    As before, for small enough instances when $n = O(n')$ we can simply apply our algorithm for \BBD and compute the output in $\Oh(\sT_{\MUL}(n'))$ time.
    Thereby, we obtain an algorithm for the \LRBBD Problem where the running time is described by the formula
    \begin{align*}
        \sT(n, n') = \Oh\left( \frac{n}{\Delta} \right) \sT(\Delta, n') + \Oh \left( \frac{n^2}{\Delta n'} \sT_{\MUL}(n') \right).
    \end{align*}
    Note that the recurrence terminates at $\sT(n', n') = \Oh\left(\sT_{\MonMUL}(n') + n'^{2 + o(1)}\right)$.
    By our choice of $\Delta$, there exists a constant $C$ such that
    \begin{align*}
        \sT(n, n') &\le \left( C \alpha \right) \sT(n/\alpha, n') + \left( C \alpha \frac{n}{n'} \right) \sT_{\MUL}(n').
    \end{align*}
    We choose a constant alpha such that $\log_{\alpha} (C \alpha) < 1 + \epsilon$ for arbitrarily small $\epsilon > 0$ so that $\sT(n, n') = (n/n')^{1+o(1)} \cdot \left( \sT_{\MUL}(n') + n'^{2 + o(1)}\right)$. 

    \unbalanced{When the diameter is at most $\Oh(n')$, we instead have the recurrence for some constant $C$,
    \begin{align*}
        \sT(n, n') &\le \left( C \alpha \right) \sT(n/\alpha, n') + \left( C \alpha \frac{n}{n'} \right) \sT_{\MonMUL}(n').
    \end{align*}
    again with the recurrence terminating at $\sT(n', n') = \Oh(\sT_{\MonMUL}(n') + n'^{2 + o(1)} g(n'))$. 
    As before, we choose a constant $\alpha$ such that the time can be bounded by $\sT(n, n') = (n/n')^{1+o(1)} \cdot \left( \sT_{\MonMUL}(n') + n'^{2 + o(1)} g(n') \right)$.}

\end{proof}

Finally, to conclude, we give an algorithm computing certain outputs of the \FED problem efficiently, defined as the \UFED problem.

\defproblem
{Unbalanced Forest Edit Distance (\UFED)}
{Two forests $\bF$ and $\bF'$ and $\similarity(\sub(v), \sub(v'))$ for all $(v, v') \in \bF \times \bF'$.}
{The following values:
\begin{enumerate}
\item $\similarity(\bF, \bF'\fragmentco{x'}{y'})$ for all $x',y' \in \fragment{1}{(2|\bF'|+1)}$, and
\label{it:ufed:output1}
\item $\similarity(\bF\fragmentco{1}{y}, \bF'\fragmentco{x'}{(2|\bF'|+1)})$ for all $y \in \fragment{1}{(2|\bF|+1)}, x' \in \fragment{1}{(2|\bF'|+1)}$.
\label{it:ufed:output2}
\item $\similarity(\bF\fragmentco{x}{2|\bF| + 1}, \bF'\fragmentco{1}{y'})$ for all $x \in \fragment{1}{(2|\bF|+1)}, y' \in \fragment{1}{(2|\bF'|+1)}$.
\label{it:ufed:output3}
\end{enumerate}
}

Using an identical proof to \Cref{thm:fed}, we obtain the following result on \UFED, noting that the paths in the \LRBBD instance correspond to the desired outputs. 
In particular, in an \LRBBD instance we compute all distances in the \BBD instance except those originating from the bottom border. 
Then, 

\begin{restatable}{theorem}{FEDUnbalance}
    \label{thm:fed-unbalance}
    There is an algorithm for \UFED running in time $(n/n')^{1+o(1)} \cdot \left(\sT_{\MUL}(n') + n'^{2 + o(1)}\right)$ where $\abs{\bF} = n$, $\abs{\bF'} = n'$ and $n \geq n'$.
\end{restatable}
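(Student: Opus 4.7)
\noindent\textbf{Proof proposal for \cref{thm:fed-unbalance}.} The plan is to reduce \UFED to two instances of \LRBBD and invoke \cref{cor:lrbbd_algo}, mirroring the reduction of \FED to \BBD in \cref{lem:fed_bbsp}. First, from the \UFED input I would build the weight function $\w = \w_{\bF,\bF'}$ defined by $\w(v,v') = \similarity(\sub(v),\sub(v'))$, using the subtree similarities supplied in the \UFED input. I would then call the \LRBBD algorithm of \cref{cor:lrbbd_algo} on $(\w,\bF,\bF')$, which returns $\dist_{\bar{\bG}}(\bar u,\bar w)$ for every $\bar u$ on the left border $\{1\}\times\fragment{1}{|\bF|+1}$ and every $\bar w$ on the top-right border $\sB^{\top}(1,1,|\bF|+1,|\bF'|+1)$, in time $(n/n')^{1+o(1)}\cdot(\sT_{\MUL}(n')+n'^{2+o(1)})$.

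Next, I would read the \UFED outputs off from these distances exactly as in the proof of \cref{lem:fed_bbsp}. Distances of the form $\dist_{\bar{\bG}}((1,i'),(|\bF|+1,j'))$ equal $\similarity(\bF,\bF'\fragmentco{\Left(v'_{i'})}{\Left(v'_{j'})})$, and the bi-order-to-pre-order translation used in \cref{lem:fed_bbsp} (namely, pick the smallest pre-order index whose $\Left$ value reaches a prescribed bi-order position) shows that this family of distances covers output~\eqref{it:ufed:output1}. Similarly, distances $\dist_{\bar{\bG}}((1,i'),(j,|\bF'|+1))$ yield $\similarity(\bF\fragmentco{1}{\Left(v_j)},\bF'\fragmentco{\Left(v'_{i'})}{2|\bF'|+1})$, covering output~\eqref{it:ufed:output2}.

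Output~\eqref{it:ufed:output3} corresponds instead to bottom-to-right paths in $\bar{\bG}$, which \LRBBD does not produce directly. To reach them I would use the reversal symmetry of the alignment graph: the graph $\bar{\bG}^R$ associated with $(\w,\rev(\bF),\rev(\bF'))$ (the weights transfer unchanged since $\similarity(\rev(\sub(v)),\rev(\sub(v')))=\similarity(\sub(v),\sub(v'))$) is isomorphic to $\bar{\bG}$ with all edges reversed under the relabelling $(i,j)\mapsto(|\bF|+2-i,|\bF'|+2-j)$. Under this isomorphism, a bottom-to-right path from $(x,1)$ to $(|\bF|+1,y')$ in $\bar{\bG}$ becomes a left-to-top path from $(1,|\bF'|+2-y')$ to $(|\bF|+2-x,|\bF'|+1)$ in $\bar{\bG}^R$ of the same weight. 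Thus a second invocation of \LRBBD on $(\w,\rev(\bF),\rev(\bF'))$, again in time $(n/n')^{1+o(1)}\cdot(\sT_{\MUL}(n')+n'^{2+o(1)})$, followed by the same bi-order translation applied on the reversed bi-order traversal, delivers output~\eqref{it:ufed:output3}.

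The main obstacle is purely bookkeeping: verifying that the three output tables can be read off exactly from the two path families \LRBBD returns, including the corner points and the indices whose corresponding pre-order node has been collapsed away as in the identification discussion after \cref{def:alg_sub_graph}. In particular, I would check that the reversal-induced index map $x\mapsto 2|\bF|+2-x$ sends the bi-order of $\bF$ to the bi-order of $\rev(\bF)$, so that subforests $\bF\fragmentco{x}{2|\bF|+1}$ translate cleanly into prefixes of $\rev(\bF)$. Once these identifications are in place, summing the two \LRBBD running times and the $\Oh(nn')$ bookkeeping gives the claimed bound $(n/n')^{1+o(1)}\cdot(\sT_{\MUL}(n')+n'^{2+o(1)})$, completing the proof.
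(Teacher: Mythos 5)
Your proposal follows the same plan as the paper's: invoke \cref{cor:lrbbd_algo} on the alignment graph to read off outputs~\eqref{it:ufed:output1} and~\eqref{it:ufed:output2} as distances from the left border to the right and top borders respectively, then obtain output~\eqref{it:ufed:output3} via a second \LRBBD invocation on $(\w,\rev(\bF),\rev(\bF'))$ by appealing to reverse symmetry. The only caveat is a slip in your intermediate justification: the claimed vertex isomorphism $(i,j)\mapsto(|\bF|+2-i,|\bF'|+2-j)$ between $\bar{\bG}$ (edges reversed) and $\bar{\bG}^R$ does not hold, because the pre-order of $\rev(\bF)$ is not the reversed pre-order of $\bF$ (it is, rather, the order of nodes by decreasing $\Right$-value in $\bF$). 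For example, for a root with two subtrees of unequal depth, the pre-order positions of nodes shift irregularly under reversal, so the diagonal edges of $\bar{\bG}$ under that relabelling do not coincide with those of $\bar{\bG}^R$. The correct mechanism is the one you mention at the end: the reversal is clean at the \emph{bi-order} level, $x\mapsto 2|\bF|+2-x$, which sends $\bF\fragmentco{x}{2|\bF|+1}$ to $\rev(\bF)\fragmentco{1}{2|\bF|+2-x}$; the alignment-graph extraction of those similarities then goes through the pre-order-index rounding of \cref{lem:fed_bbsp} independently on each side, rather than through a direct vertex bijection. Since you already flag the bi-order map and intend to verify the bookkeeping, this is a local fix rather than a structural gap, and the rest of the argument and the running-time accounting are as in the paper.
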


\UnweightFEDUnbalance

Since the proofs are essentially identical, we give both here.

\begin{proof}
    We apply \Cref{cor:lrbbd_algo} to the generalized alignment graph.
    We consider the three outputs separately.
    \begin{itemize}
        \item For Output~\eqref{it:ufed:output1}, we obtain the desired similarity value by the distance from $(1, x')$ to $(n + 1, y')$.
        \item For Output~\eqref{it:ufed:output2}, we obtain the desired similarity value by the distance from $(1, x')$ to $(y, n' + 1)$.
        \item For Output~\eqref{it:ufed:output3}, we obtain the desired similarity value via reverse symmetry. 
        In particular, we can begin by reversing $\bF$ and $\bF'$ and consider Output~\eqref{it:ufed:output2} in the reversed instance. \qedhere
    \end{itemize}
\end{proof}

\subsubsection{Handling the Recursive Case}
\label{sec:puttogether-unbalance}

\fedpatchtwoLRBBD*

\begin{proof}
    We proceed with a similar argument as in \Cref{lem:fed_patch_two}, using the same notation as before.
    Let $n_1 = \abs{\bF_1}$, $n_2 = \abs{\bF_2}$, $n = \abs{\bF} = n_1 + n_2$, and set $i_2 = n_1+1$.
    Consider the pre-order traveral $v_1, \ldots, v_n$ of $\bF$,
    and observe that $v_1, \ldots, v_{n_1}$ corresponds to the pre-order of $\bF_1$
    and that $v_{i_2}, \ldots, v_{n}$ corresponds to the pre-order traversal of $\bF_2$.
    Further, consider the alignment graph $\bar{\bG}$ w.r.t.~$(\w, {\bF}, {\bF'})$.
    The alignment graph \(\bar{\bG}_1\) w.r.t.~\((\w, {\bF_1}, {\bF'})\)
    corresponds to $\bar{\bG}\fragment{1}{i_2}\fragment{1}{(n'+1)}$,
    and the alignment graph \(\bar{\bG}_2\) w.r.t.~\((\w, {\bF_2}, {\bF'})\)
    corresponds to $\bar{\bG}\fragment{i_2}{(n+1)}\fragment{1}{(n'+1)}$.

    We now show how to compute the outputs of the $(\w, \bF, \bF')$-\LRBBD instance.
    As before, we decompose $\sB^{\top}(1, 1, n+1, n'+1)$ into $I_{1}^{\top}$ and $I_{2}^{\top}$.
    See \cref{fig:patch_two_lr} for a visualization of the alignment graph and the computed distances.

    \begin{figure}[htbp]
        \centering
        \usetikzlibrary{arrows.meta, bending}

\begin{tikzpicture}
    \draw[thick] (0,0) rectangle (6,2);
    \draw[red, line width=7pt, opacity=0.5] (0,2) --node[left=20pt, thick, opacity=1] {$1 \times \fragment{1}{n' + 1}$} (0,0);
    \draw[blue, line width=7pt, opacity=0.5] (6, 0) --node[right=14pt, thick, opacity=1] {$\sB^{\top}(1, 1, n+1, n'+1)$} (6,2) -- (0,2);
    \draw[dashed, thick] (3,0) --node[right] {$i_2$} (3,2);

    \draw[Bracket-Parenthesis, thick] (0,2.3) --node[above] {$I_1^{\top}$} (3,2.3);
    \draw[Bracket-Bracket, thick] (3,2.3) --node[above] {$I_2^{\top}$} (6.3,2.3) -- (6.3,0);
\end{tikzpicture}
        \caption{Computation of left to top and right borders in the \LRBBD instance.
        In \cref{lem:fed_patch_two_lr} we `cut' the rectangle between $\bF_1$ and $\bF_2$,
        and given the answer for the instances corresponding to the two resulting rectangle halves,
        we show how to patch them together for the full rectangle.}
        \label{fig:patch_two_lr}
    \end{figure}
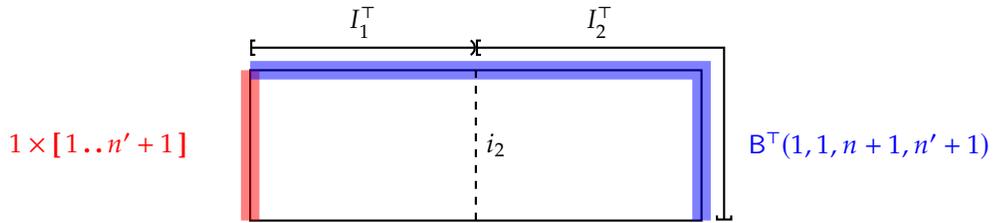
    
    We proceed by case analysis on $\bar{v}$.
    \begin{itemize}
        \item For $\bar{v} \in I_{1}^{\top}$, we obtain the outputs $\dist_{\bar{\bG}}(\bar{u}, \bar{v})$ using the output of the $(\w, {\bF_1}, {\bF'})$-\LRBBD instance.
        \item For $\bar{v} \in I_{2}^{\top}$, notice that for every $i \in \fragmentco{1}{i_2}$ we have $\pi_{\bF}(i) \leq i_2$ as $\bF = \bF_1 + \bF_2$.
        As a consequence, a path from $\bar{u}$ to $\bar{v}$ must pass through a node contained in the set $i_2 \times \fragment{1}{(n'+1)}$, and we can write
        \begin{align*}
            \dist_{\bar{\bG}}(\bar{u}, \bar{v})
            &= \max\nolimits_{\bar{w} \in i_2 \times\fragment{1}{(n'+1)}}
            \Big\{ \ \dist_{\bar{\bG}}(\bar{u}, \bar{w}) + \dist_{\bar{\bG}}(\bar{w}, \bar{v}) \ \Big\}.
        \end{align*}
        As before, the two summands that appear in the latter maximization are a subset of the output of the $(\w, {\bF_1}, {\bF'})$-\LRBBD instance
        and of the $(\w, {\bF_2}, {\bF'})$-\LRBBD instance.
        Moreover, note that we may rewrite the computation of $\dist_{\bar{\bG}}(\bar{u}, \bar{v})$ for all such $\bar{u}$ and $\bar{v}$
        as a max-plus product $A = B \star C$, where $B$ is a matrix of size
        $(n'+1) \times (n'+1)$
        and $C$ is a matrix of size
        $(n'+1) \times (n_2 + n' + 1)$
        Thus, we spend at most $\Oh(\sT_{\MUL}(n', n', n))$ time in the computation of these distances.

        \unbalanced{As argued in \Cref{lem:fed_patch_two}, the matrix $C$ is a column-monotone matrix with entries in both matrices bounded by $\Oh(n')$ if the diameter is bounded by $O(n')$, so that we can compute the required distances in time $\Oh(\sT_{\MonMUL}(n', n', n, n'))$.}
        \qedhere
    \end{itemize}
\end{proof}

\fedpatchthreeLRBBD*

\begin{proof}
    Let $\abs{\bF} = n$, and define $i_r = i_s + n_s + 1$.
    Consider the alignment graph $\bar{\bG}$ w.r.t.~$(\w, \bF, {\bF'})$,
    and observe that $\bar{\bG}\fragment{1}{i_s}\fragment{1}{n'}$ corresponds to the alignment graph
    $\bar{\bG}_{\ell}$ w.r.t.~$(\w, {\bF_{\ell}}, {\bF'})$,
    $\bar{\bG}\fragment{i_s}{i_r}\fragment{1}{n'}$ corresponds to the alignment graph
    $\bar{\bG}_{s}$ w.r.t.~$(\w, {\bF_{s}}, {\bF'})$, and
    $\bar{\bG}\fragment{i_r}{n+1}\fragment{1}{n'}$ corresponds to the alignment graph
    $\bar{\bG}_{r}$ w.r.t.~$(\w, {\bF_{r}}, {\bF'})$.
    The alignment graph $\bar{\bG}_{\setminus s}$ w.r.t.~$(\w, {\bF \setminus \bF_s}, {\bF'})$,
    corresponds to the graph obtained by removing from $\bar{\bG}$ all nodes (and their incident edges)
    of the set $\fragmentco{i_s}{i_r} \times \fragment{1}{(n'+1)}$, and by adding an edge from $(x, i_s-1)$ to $(x, i_r)$
    of cost $0$ for all $x \in \fragment{1}{(n'+1)}$.
    We abuse notation, and we index nodes in $\bar{\bG}_{\setminus s}$ w.r.t.~the corresponding positions in $\bar{\bG}$, taking care of never
    using any index from the set $\fragmentco{i_s}{i_r} \times \fragment{1}{(n'+1)}$.

    As before, we decompose $\sB^{\top}(1, 1, n + 1, n' + 1)$ into $I_{\ell}^{\top}, I_{s}^{\top}, I_{r}^{\top}$.
    See \Cref{fig:patch_three_lr} for a visualization of the alignment graph and computed instances.
    \begin{figure}[htbp]
        \centering
        \usetikzlibrary{arrows.meta, bending}

\begin{tikzpicture}
    \draw[thick] (0,0) rectangle (8,2);
    \draw[red, line width=7pt, opacity=0.5] (0,2) --node[left=14pt, thick, opacity=1] {$1 \times \fragment{1}{n' + 1}$} (0,0);
    \draw[blue, line width=7pt, opacity=0.5] (8, 0) --node[right=14pt, thick, opacity=1] {$\sB^{\top}(1, 1, n+1, n'+1)$} (8,2) -- (0,2);
    \draw[dashed, thick] (3,0) --node[right] {$i_{s}$} (3,2);
    \draw[dashed, thick] (5,0) --node[right] {$i_{r}$} (5,2);

    \draw[Bracket-Parenthesis, thick] (0,2.3) --node[above] {$I_{\ell}^{\top}$} (3,2.3);
    \draw[Bracket-Parenthesis, thick] (3,2.3) --node[above] {$I_{s}^{\top}$} (5,2.3);
    \draw[Bracket-Bracket, thick] (5,2.3) --node[above] {$I_{r}^{\top}$} (8.3,2.3) -- (8.3,0);

\end{tikzpicture}
        \caption{Computation of distances from the left border to top and right borders.
        }
        \label{fig:patch_three_lr}
    \end{figure}
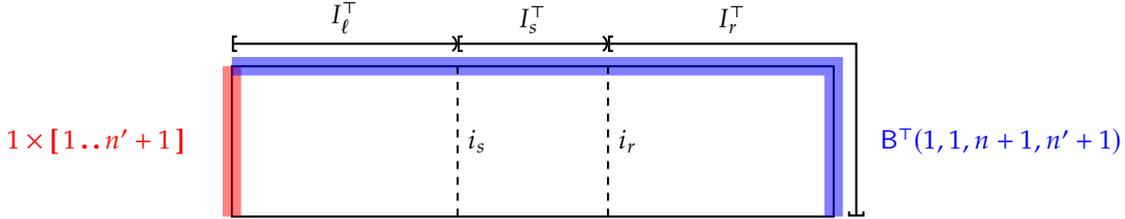

    Our goal is to compute $\dist_{\bar{\bG}}(\bar{u}, \bar{v})$ for all $\bar{u}, \bar{v}$ where $\bar{u} \in \{1\} \times \fragment{1}{n' + 1}$ and $\bar{v} \in \sB^{\top}(1, 1, n + 1, n' + 1)$.
    We proceed by case analysis on $\bar{v}$. 
    \begin{itemize}
        \item For $\bar{v} \in I_{\ell}^{\top}$, we obtain the outputs $\dist_{\bar{\bG}}(\bar{u}, \bar{v})$ using the output of the $(w, \bF_{\ell}, \bF')$-\LRBBD instance.
        \item For $\bar{v} \in I^{\top}_{s}$, observe that for any $i \in \fragmentco{1}{i_s}$ either $\pi_{\bF}(i) \leq i_s$ or $\pi_{\bF}(i) \geq i_r > i_s + n_s$ holds as $\bF_s$ is a synchronous subforest of $\bF$.
        Hence, a path in $\bar{\bG}$ that goes from $\bar{u}$ to $\bar{v}$ must pass through a node in $i_s \times \fragment{1}{(n'+1)}$,
        and we can write
        \begin{align*}
            \dist_{\bar{\bG}}(\bar{u}, \bar{v})
            &= \max\nolimits_{\bar{w} \in i_s \times\fragment{1}{(n'+1)}}
            \Big\{ \ \dist_{\bar{\bG}}(\bar{u}, \bar{w}) + \dist_{\bar{\bG}}(\bar{w}, \bar{v}) \ \Big\}.
        \end{align*}
        Similarly to \cref{lem:fed_patch_two}, we can rewrite the computation of $\dist_{\bar{\bG}}(\bar{u}, \bar{v})$ for all such $\bar{u}$ and $\bar{v}$
        using a max-plus product. This time, we use
        the output of the $(\w, {\bF_{\ell}}, {\bF'})$-\LRBBD instance and the
        output of the $(\w, {\bF_{s}}, {\bF'})$-\LRBBD instance to compute the entries of the matrices
        since $\bar{\bG}\fragment{1}{i_s}\fragment{1}{(n'+1)} = \bar{\bG}_{\ell}$ and
        $\bar{\bG}\fragment{i_s}{i_r}\fragment{1}{(n'+1)} = \bar{\bG}_{s}$.
        Note that the matrices have shape $(n' + 1) \times (n' + 1)$ and $(n' + 1) \times (n_s + 1)$ so that the multiplication requires $\Oh(\sT_{\MUL}(n', n', n))$ time.

        \item For $\bar{v} \in I^{\top}_{r}$, we
        use again that either $\pi_{\bF}(i) \leq i_s$ or $\pi_{\bF}(i) \geq i_r > i_s + n_s$
        for $i \in \fragmentco{1}{i_s}$, and that $\pi_{\bF}(i) \leq i_r$ for $i \in \fragmentco{i_s}{i_r}$.
        Consequently, any path from $\bar{u}$ to $\bar{v}$ either stays entirely in $\bar{\bG}_{\setminus s}$, or
        passes through a node in $i_s \times \fragment{1}{(n'+1)}$ and a node in $i_r \times \fragment{1}{(n'+1)}$.
        Therefore, we obtain
        \begin{align*}
            \dist_{\bar{\bG}}(\bar{u}, \bar{v}) =
            \max \Biggl\{
                \ \dist_{\bar{\bG}_{\setminus s}}(\bar{u}, \bar{v}) \ ,
                \max_{\substack{\bar{w} \in i_s \times\fragment{1}{(n'+1)}\\\bar{z} \in i_r \times\fragment{1}{(n'+1)}}}
                \Big\{ \
                    \dist_{\bar{\bG}}(\bar{u}, \bar{w})
                    + \dist_{\bar{\bG}}(\bar{w}, \bar{z})
                    + \dist_{\bar{\bG}}(\bar{z}, \bar{v})
                \ \Big\} \
            \Biggl\}.
        \end{align*}
        These computation can be done by taking the maximum between outputs of the $(\w, {\bF_{\setminus s}}, {\bF'})$-\LRBBD instance,
        and a max-plus product between three matrices. The entries of these matrices can be
        retrieve from outputs of the other three \LRBBD instances at our disposal.
        Note that the matrices have shape $(n' + 1) \times (n' + 1), (n' + 1) \times (n' + 1)$ and $(n' + 1) \times (n_r + 1)$ so that the multiplications require $\Oh(\sT_{\MUL}(n', n', n))$ time.
    \end{itemize}

    \unbalanced{As before, if the diameter is at most $\Oh(n')$, we have that the matrices are monotone and have entries bounded by $\Oh(n')$.
    Thus, we may compute the necessary matrix products in time $\Oh(\sT_{\MonMUL}(n', n', n, n'))$.}
\end{proof}

\section{Reduction from Spine Edit Distance to \APSP}
\label{sec:sed}

In this section, we focus on proving \cref{thm:sed}.

\sed*

Similar to the previous one,
we assume that there exists an algorithm computing
the min/max-plus product of two $m \times m$ matrices in time $\sT_{\MUL}(m)$.
\unbalanced{We use $\sT_{\MUL}(a, b, c)$ to the denote the time required to multiple an $a \times b$ matrix and a $b \times c$ matrix. 
As in \Cref{sec:fed}, we describe our algorithm for general weighted tree edit distance instances, describing modifications for the unweighted case where necessary.
To this end, recall that we use $\sT_{\MonMUL}(a, b, c, d)$ to denote the time required to multiply an $a \times b$ matrix with a $b \times c$ matrix where both matrices have entries bounded by $d$ and the latter matrix is row or column-monotone.}

To formulate an algorithm for \SED, we assume that $\bF$ and $\bF'$ are trees. If they are not, we add virtual roots $t$ and $t'$, each labeled with a unique symbol, and include these in $\bS$ and $\bS'$ respectively. This allows us to assume that the first nodes in $\bS$ and $\bS'$ correspond to the first nodes in the pre-order traversals of $\bF$ and $\bF'$, respectively. For such new roots, we set $\similarity(\sub(t),\sub(v')) = -\infty$ for all $v' \in \bF' \setminus \{t'\}$, even in the unweighted case, and similarly for $\similarity(\sub(v),\sub(t'))$.
This adjustment will not affect the algorithm we discuss, as these similarities only appear in the first column and row of some matrices we will work with. It is not difficult to see that we can compute the min/max-plus product of such matrices, in the same time as there would not be such column / row. Moreover, note that the solutions for \SED on the original forests form a subset of the solutions on the new trees.

\begin{definition}
    Given two forests $\bF, \bF'$ and four nodes $s,q \in \bS, s',q' \in \bS'$ such that $s \prec q$ and $s' \prec q'$, set
    \begin{align*}
        &\sB^{\bot}_{\Left}(s,s',q,q') \coloneqq \sB^{\bot}(\Left(s), \Left(s'), \Left(q), \Left(q')),
        &&\sB^{\top}_{\Left}(s,s',q,q') \coloneqq \sB^{\top}(\Left(s), \Left(s'), \Left(q), \Left(q')), \\
        &\sB^{\bot}_{\Right}(s,s',q,q') \coloneqq \sB^{\top}(\Right(q), \Right(q'), \Right(s), \Right(s')),
        &&\sB^{\top}_{\Right}(s,s',q,q') \coloneqq \sB^{\bot}(\Right(q), \Right(q'), \Right(s), \Right(s')). \qedhere
    \end{align*}
\end{definition}

We can interpret these sets as follows. Consider the set  
$\fragment{1}{(2\abs{\bF}+1)} \times \fragment{1}{(2\abs{\bF'}+1)}$. The sets $\sB^{\bot}_{\Left}(s,s',q,q')$ and $\sB^{\top}_{\Left}(s,s',q,q')$ are the lower left and upper right border of the subrectangle $\fragment{\Left(s)}{\Left(q)} \times \fragment{\Left(s')}{\Left(q')}$.
On the other hand, $\sB^{\bot}_{\Right}(s,s',q,q')$ and $\sB^{\top}_{\Right}(s,s',q,q')$
are the upper right and lower left border of the subrectangle $\fragment{\Right(q)}{\Right(s)} \times \fragment{\Right(q')}{\Right(s')}$.

This allows us to define the following divide-et-impera scheme that we will use to solve \SED.

\defproblemwlist
{Divide-et-Impera Spine Edit Distance (\DISED)}
{$s \prec q \in \bS$, $s' \prec q' \in \bS'$ and
\begin{enumerate}[(i)]
    \item $\similarity(\bF\fragmentco{x}{y}, \bF'\fragmentco{x'}{y'})$ for all $(x, x') \in \sB^{\top}_{\Left}(s,s',q,q')$ and $(y, y') \in \sB^{\top}_{\Right}(s,s',q,q')$.
    \label{it:sed:input1}
    \item $\similarity(\bF\fragmentco{x}{y}, \sub(q'))$ for all $x,y \in \fragment{\Left(s)}{\Left(q)}$.
    \label{it:sed:input2}
    \item $\similarity(\bF\fragmentco{x}{y}, \sub(q'))$ for all $x,y \in \fragment{\Right(q)}{\Right(s)}$.
    \label{it:sed:input3}
    \item $\similarity(\sub(q), \bF'\fragmentco{x'}{y'})$ for all $x',y' \in \fragment{\Left(s')}{\Left(q')}$.
    \label{it:sed:input4}
    \item $\similarity(\sub(q), \bF'\fragmentco{x'}{y'})$ for all $x',y' \in \fragment{\Right(q')}{\Right(s')}$.
    \label{it:sed:input5}
\end{enumerate}}{
The values
\begin{enumerate}[(i)]
    \item $\similarity(\bF\fragmentco{x}{y}, \bF'\fragmentco{x'}{y'})$ for all $(x, x') \in \sB^{\bot}_{\Left}(s,s',q,q')$ and $(y, y') \in \sB^{\bot}_{\Right}(s,s',q,q')$.
    \label{it:sed:output}
\end{enumerate}
}

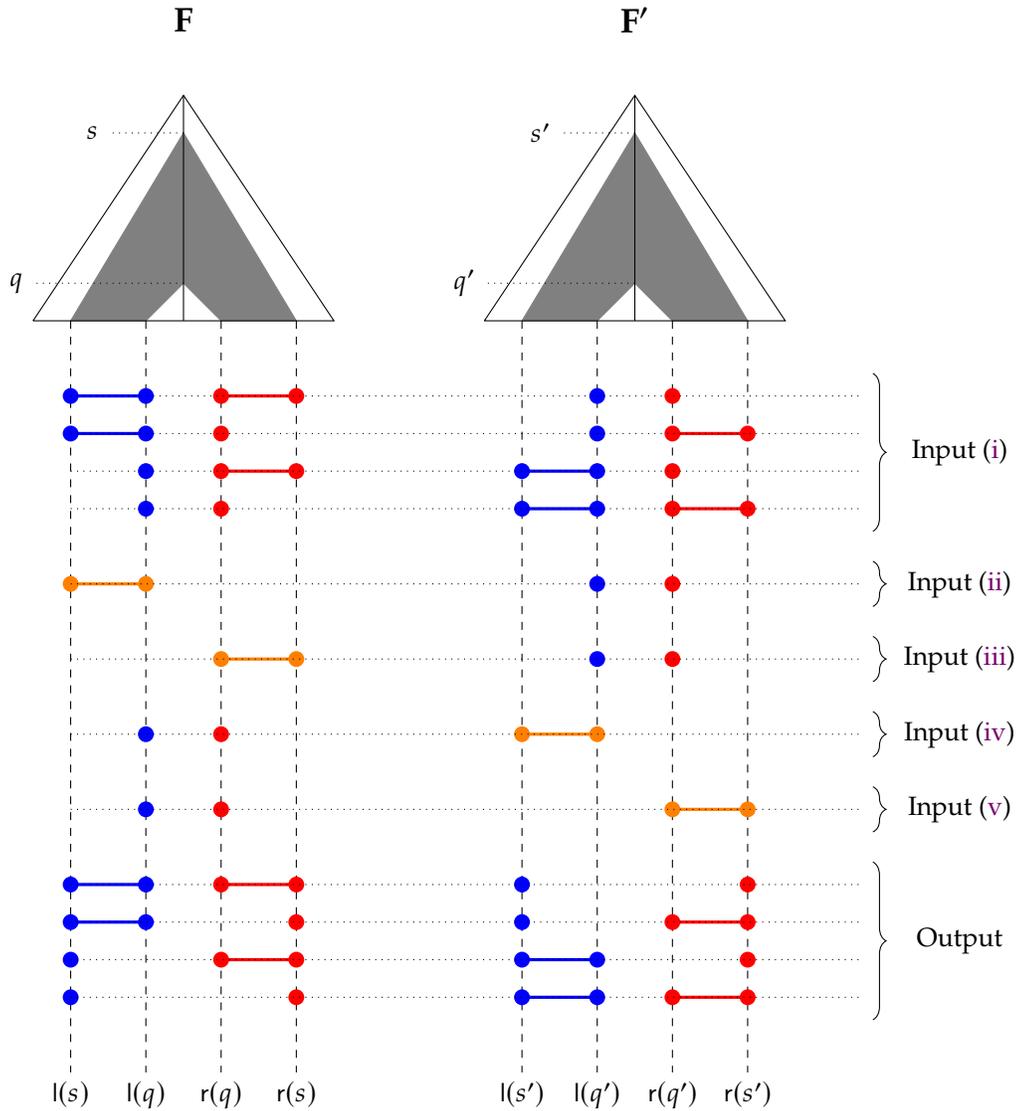
\begin{figure}[htbp]
    \centering
    \usetikzlibrary{arrows.meta, bending}

\begin{tikzpicture}
    \draw (0,0) -- (4,0) -- (2, 3) -- (0, 0);
    \draw (2, 3) -- (2, 0);
    \draw[fill, opacity=0.5] (0.5, 0) -- (2,2.5) -- (3.5,0) -- (2.5, 0) -- (2,0.5) -- (1.5,0) -- (0.5, 0);
    \draw[dotted] (2,2.5) -- (1, 2.5) node[left] {$s$};
    \draw[dotted] (2,0.5) -- (0, 0.5) node[left] {$q$};

    \draw[dashed] (0.5, 0) -- (0.5, -10) node[below] {$\Left(s)$};
    \draw[dashed] (1.5, 0) -- (1.5, -10) node[below] {$\Left(q)$};
    \draw[dashed] (2.5, 0) -- (2.5, -10) node[below] {$\Right(q)$};
    \draw[dashed] (3.5, 0) -- (3.5, -10) node[below] {$\Right(s)$};

    \node[scale=1.3] at (2, 4) {$\bF$};

    \fill[color=blue] (0.5,-1) circle (3pt);
    \draw[color=blue, very thick] (0.5,-1) -- (1.5,-1);
    \fill[color=blue] (1.5,-1) circle (3pt);
    \fill[color=red] (2.5,-1) circle (3pt);
    \draw[color=red, very thick] (2.5,-1) -- (3.5,-1);
    \fill[color=red] (3.5,-1) circle (3pt);

    \fill[color=blue] (0.5,-1.5) circle (3pt);
    \draw[color=blue, very thick] (0.5,-1.5) -- (1.5,-1.5);
    \fill[color=blue] (1.5,-1.5) circle (3pt);
    \fill[color=red] (2.5,-1.5) circle (3pt);

    \fill[color=blue] (1.5,-2) circle (3pt);
    \fill[color=red] (2.5,-2) circle (3pt);
    \draw[color=red, very thick] (2.5,-2) -- (3.5,-2);
    \fill[color=red] (3.5,-2) circle (3pt);

    \fill[color=blue] (1.5,-2.5) circle (3pt);
    \fill[color=red] (2.5,-2.5) circle (3pt);

    \fill[color=orange] (0.5,-3.5) circle (3pt);
    \draw[color=orange, very thick] (0.5,-3.5) -- (1.5,-3.5);
    \fill[color=orange] (1.5,-3.5) circle (3pt);

    \fill[color=orange] (2.5,-4.5) circle (3pt);
    \draw[color=orange, very thick] (2.5,-4.5) -- (3.5,-4.5);
    \fill[color=orange] (3.5,-4.5) circle (3pt);

    \fill[color=blue] (1.5,-5.5) circle (3pt);
    \fill[color=red] (2.5,-5.5) circle (3pt);

     \fill[color=blue] (1.5,-6.5) circle (3pt);
     \fill[color=red] (2.5,-6.5) circle (3pt);

    \fill[color=blue] (0.5,-7.5) circle (3pt);
    \draw[color=blue, very thick] (0.5,-7.5) -- (1.5,-7.5);
    \fill[color=blue] (1.5,-7.5) circle (3pt);
    \fill[color=red] (2.5,-7.5) circle (3pt);
    \draw[color=red, very thick] (2.5,-7.5) -- (3.5,-7.5);
    \fill[color=red] (3.5,-7.5) circle (3pt);

    \fill[color=blue] (0.5,-8) circle (3pt);
    \draw[color=blue, very thick] (0.5,-8) -- (1.5,-8);
    \fill[color=blue] (1.5,-8) circle (3pt);
    \fill[color=red] (3.5,-8) circle (3pt);

    \fill[color=blue] (0.5,-8.5) circle (3pt);
    \fill[color=red] (2.5,-8.5) circle (3pt);
    \draw[color=red, very thick] (2.5,-8.5) -- (3.5,-8.5);
    \fill[color=red] (3.5,-8.5) circle (3pt);

    \fill[color=blue] (0.5,-9) circle (3pt);
    \fill[color=red] (3.5,-9) circle (3pt);

    \begin{scope}[shift={(6, 0)}]
        \draw (0,0) -- (4,0) -- (2, 3) -- (0, 0);
        \draw (2, 3) -- (2, 0);
        \draw[fill, opacity=0.5] (0.5, 0) -- (2,2.5) -- (3.5,0) -- (2.5, 0) -- (2,0.5) -- (1.5,0) -- (0.5, 0);
        \draw[dotted] (2,2.5) -- (1, 2.5) node[left] {$s'$};
        \draw[dotted] (2,0.5) -- (0, 0.5) node[left] {$q'$};

        \draw[dashed] (0.5, 0) -- (0.5, -10) node[below] {$\Left(s')$};
        \draw[dashed] (1.5, 0) -- (1.5, -10) node[below] {$\Left(q')$};
        \draw[dashed] (2.5, 0) -- (2.5, -10) node[below] {$\Right(q')$};
        \draw[dashed] (3.5, 0) -- (3.5, -10) node[below] {$\Right(s')$};

        \node[scale=1.3] at (2, 4) {$\bF'$};

        \fill[color=blue] (1.5,-1) circle (3pt);
        \fill[color=red] (2.5,-1) circle (3pt);

        \fill[color=blue] (1.5,-1.5) circle (3pt);
        \fill[color=red] (2.5,-1.5) circle (3pt);
        \draw[color=red, very thick] (2.5,-1.5) -- (3.5,-1.5);
        \fill[color=red] (3.5,-1.5) circle (3pt);

        \fill[color=blue] (0.5,-2) circle (3pt);
        \draw[color=blue, very thick] (0.5,-2) -- (1.5,-2);
        \fill[color=blue] (1.5,-2) circle (3pt);
        \fill[color=red] (2.5,-2) circle (3pt);

        \fill[color=blue] (0.5,-2.5) circle (3pt);
        \draw[color=blue, very thick] (0.5,-2.5) -- (1.5,-2.5);
        \fill[color=blue] (1.5,-2.5) circle (3pt);
        \fill[color=red] (2.5,-2.5) circle (3pt);
        \draw[color=red, very thick] (2.5,-2.5) -- (3.5,-2.5);
        \fill[color=red] (3.5,-2.5) circle (3pt);

        \fill[color=blue] (1.5,-3.5) circle (3pt);
        \fill[color=red] (2.5,-3.5) circle (3pt);

        \fill[color=blue] (1.5,-4.5) circle (3pt);
        \fill[color=red] (2.5,-4.5) circle (3pt);

        \fill[color=orange] (0.5,-5.5) circle (3pt);
        \draw[color=orange, very thick] (0.5,-5.5) -- (1.5,-5.5);
        \fill[color=orange] (1.5,-5.5) circle (3pt);

        \fill[color=orange] (2.5,-6.5) circle (3pt);
        \draw[color=orange, very thick] (2.5,-6.5) -- (3.5,-6.5);
        \fill[color=orange] (3.5,-6.5) circle (3pt);

        \fill[color=blue] (0.5,-9) circle (3pt);
        \draw[color=blue, very thick] (0.5,-9) -- (1.5,-9);
        \fill[color=blue] (1.5,-9) circle (3pt);
        \fill[color=red] (2.5,-9) circle (3pt);
        \draw[color=red, very thick] (2.5,-9) -- (3.5,-9);
        \fill[color=red] (3.5,-9) circle (3pt);

        \fill[color=blue] (0.5,-8.5) circle (3pt);
        \draw[color=blue, very thick] (0.5,-8.5) -- (1.5,-8.5);
        \fill[color=blue] (1.5,-8.5) circle (3pt);
        \fill[color=red] (3.5,-8.5) circle (3pt);

        \fill[color=blue] (0.5,-8) circle (3pt);
        \fill[color=red] (2.5,-8) circle (3pt);
        \draw[color=red, very thick] (2.5,-8) -- (3.5,-8);
        \fill[color=red] (3.5,-8) circle (3pt);

        \fill[color=blue] (0.5,-7.5) circle (3pt);
        \fill[color=red] (3.5,-7.5) circle (3pt);
    \end{scope}

    \draw[dotted] (0.5, -1) -- (11, -1);
    \draw[dotted] (0.5, -1.5) -- (11, -1.5);
    \draw[dotted] (0.5, -2) -- (11, -2);
    \draw[dotted] (0.5, -2.5) -- (11, -2.5);
    \draw[decorate, decoration={brace, amplitude=5pt, raise=1ex}]
    (11, -0.7) -- (11,-2.8) node[midway, xshift=8ex] {Input~\eqref{it:sed:input1}};

    \draw[dotted] (0.5, -3.5) -- (11, -3.5);
    \draw[decorate, decoration={brace, amplitude=5pt, raise=1ex}]
    (11, -3.2) -- (11,-3.8) node[midway, xshift=8ex] {Input~\eqref{it:sed:input2}};

    \draw[dotted] (0.5, -4.5) -- (11, -4.5);
    \draw[decorate, decoration={brace, amplitude=5pt, raise=1ex}]
    (11, -4.2) -- (11,-4.8) node[midway, xshift=8ex] {Input~\eqref{it:sed:input3}};

    \draw[dotted] (0.5, -5.5) -- (11, -5.5);
    \draw[decorate, decoration={brace, amplitude=5pt, raise=1ex}]
    (11, -5.2) -- (11,-5.8) node[midway, xshift=8ex] {Input~\eqref{it:sed:input4}};

    \draw[dotted] (0.5, -6.5) -- (11, -6.5);
    \draw[decorate, decoration={brace, amplitude=5pt, raise=1ex}]
    (11, -6.2) -- (11,-6.8) node[midway, xshift=8ex] {Input~\eqref{it:sed:input5}};

    \draw[dotted] (0.5, -7.5) -- (11, -7.5);
    \draw[dotted] (0.5, -8) -- (11, -8);
    \draw[dotted] (0.5, -8.5) -- (11, -8.5);
    \draw[dotted] (0.5, -9) -- (11, -9);
    \draw[decorate, decoration={brace, amplitude=5pt, raise=1ex}]
    (11, -7.2) -- (11,-9.3) node[midway, xshift=8ex] {Output};
\end{tikzpicture}
    \caption{The figure visualizes inputs and outputs of a \DISED instance.
    Such inputs and outputs consist of the values $\similarity(\bF\fragmentco{x}{y}, \bF'\fragmentco{x'}{y'})$
    for certain ranges of $x,y,x',y'$. Every dotted horizontal line shows a combination of such ranges.
    Blue color indicates that the range concerns $x,x'$, red color that it concerns $y,y'$,
    and orange color that on it lie both $x,y$ and $x',y'$.
    The shaded gray areas in $\bF$ and $\bF'$ span the nodes contained in $\sub(s) \setminus \sub(q)$
    and $\sub(s') \setminus \sub(q')$, respectively.
    The maximum number of nodes contained in any such set determines the size of the \DISED instance.
    }
    \label{fig:dised}
\end{figure}

We write $(s,s',q,q')$\text{-}\DISED to denote an instance of the \DISED Problem with input $s \prec q \in \bS$ and $s' \prec q' \in \bS'$.

Given an $(s,s',q,q')$\text{-}\DISED instance, we define its \emph{size} as
\[
        \max \Big\{ \
        \abs{\ \sub(s) \setminus \sub(q) \ } \ ,\
        \abs{\ \sub(s') \setminus \sub(q')\ }
        \ \Big\},
\]
and \emph{global tree size} as
\[
        \min \Big\{ \
        \abs{\ \sub(s) \ } \ ,\
        \abs{\ \sub(s') \ }
        \ \Big\}.
\]
We make two further remarks about an  $(s,s',q,q')$\text{-}\DISED instance.
First, input and output of an $(s,s',q,q')$\text{-}\DISED of size $m$ are of order $\Oh(m^2)$.
Second, there exist two types of symmetry in the \DISED Problem:
\begin{enumerate}[(a)]
    \item swapping the role of $\bF$ and $\bF'$ does not change the input/output of the problem, we refer to this symmetry as \emph{swap symmetry}; and
    \label{symmetry:a}
    \item reversing $\bF$ and $\bF'$ does not change the input/output of the problem, we refer to this symmetry as \emph{reverse symmetry}.
    \label{symmetry:b}
\end{enumerate}
These symmetries help simplify computations,
reducing the need for different computations in symmetric cases.

\subsection{Algorithm for \DISED}

In the following (sub)section,
we present a divide-et-impera algorithm $\mA_{\DISED}$ solving \DISED.
The following two lemmas, the proof of which we defer to  \cref{sec:sed_patch} and \cref{sec:sed_basecase},
 motivate such an approach.

\begin{restatable}{lemma}{sedbasecase}\label{lem:sed_basecase}
    Consider an $(s,s',q,q')$\text{-}\DISED instance of size $m$
    such that $s$ immediately precedes $q$ or $s'$ immediately precedes $q'$.
    Then, we can solve the $(s,s',q,q')$\text{-}\DISED instance in time $\Oh(\sT_{\MUL}(m) + m^{2+o(1)})$.

    \unbalanced{Furthermore, if the \DISED instance is unweighted and has global tree size $\D$, then the algorithm takes time $\Oh(\sT_{\MonMUL}(m, m, m, \D) +  m^{2+o(1)} g(\D))$ where $\sT_{\MonMUL}(m, m, m, \D) = \Oh(f(m) g(\D))$ for some functions $f, g$.}
\end{restatable}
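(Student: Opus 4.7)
By the swap symmetry of \DISED, we may assume without loss of generality that $s'$ immediately precedes $q'$. The plan is to reduce the \DISED instance to a \FED (resp.\ unweighted \FED) computation on reduced forests of size $\Oh(m)$, to which we then apply \Cref{thm:fed} (resp.\ \Cref{thm:unweighted-fed-balance}).

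Construct reduced forests $\tilde{\bF}$ and $\tilde{\bF}'$ by taking $\sub(s)$ and $\sub(s')$ and contracting $\sub(q)$ and $\sub(q')$ into single virtual leaves $\bar{q}$ and $\bar{q}'$ respectively; both have at most $m+1$ nodes by the definition of the \DISED size. Equip the alignment graph on $(\tilde{\bF}, \tilde{\bF}')$ with a weight function $\w$ where $\w(v, v')$ encodes $\similarity(\sub_{\bF}(v), \sub_{\bF'}(v'))$ in the original forests (with $\sub_\bF(\bar{q}) = \sub(q)$ and $\sub_{\bF'}(\bar{q}') = \sub(q')$). For pairs with at least one node off the spine, this value is given directly by the \SED inputs; for the remaining spine-spine pairs involving $\bar{q}$ or $\bar{q}'$, namely $(s_i, \bar{q}')$, $(\bar{q}, s')$, and $(\bar{q}, \bar{q}')$, we recover $\w(v, v')$ from \DISED input~(i), whose entries $\similarity(\bF\fragmentco{\Left(v)}{\Right(v)}, \bF'\fragmentco{\Left(v')}{\Right(v')})$ cover exactly these pairs as the corresponding index pairs lie on $\sB^{\top}_{\Left} \times \sB^{\top}_{\Right}$.

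Invoking \FED on $(\tilde{\bF}, \tilde{\bF}')$ with this weight function delivers in time $\Oh(\sT_{\MUL}(m) + m^{2+o(1)})$ all border-to-border similarities of the corresponding alignment graph, which by \Cref{lem:fed_bbsp} amount to similarities of reduced sub-forests of $\tilde{\bF}$ against reduced sub-forests of $\tilde{\bF}'$. These directly yield the \DISED outputs for alignments in which $\sub(q)$ and $\sub(q')$ are each either aligned as a whole to a single subtree of the other side or entirely deleted. The remaining ``scattered'' case---where, say, $q$ is deleted while $\sub(q)$'s children are aligned independently into $\bF'\fragmentco{x'}{y'}$---is handled by a post-processing step based on an anchor argument: by the pre-order-preservation constraint of \TED alignments, the aligned images of $\sub(q)$ lie inside an anchored sub-range of $\bF'\fragmentco{x'}{y'}$, and \DISED inputs~(iv) and~(v) supply precisely the similarities $\similarity(\sub(q), \bF'\fragmentco{a'}{b'})$ against such anchored sub-forests sitting inside $\bL'$ or $\bR'$. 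Combining these with the \FED outputs via a constant number of max-plus products on $\Oh(m) \times \Oh(m)$ matrices---each computable in $\Oh(\sT_{\MUL}(m))$ time---yields the full \DISED output; a symmetric treatment handles $q'$.

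The main obstacle is the special sub-case $(x', y') = (\Left(s'), \Right(s'))$, in which $\bF'\fragmentco{x'}{y'} = \sub(s')$ contains $s'$ so that the alignment may pair $s'$ with some node of $\bF$. For an output of the form $\similarity(\sub(s_i), \sub(s'))$ with $1 < i < k$, the match edge $(s_i, s')$ in the virtualized alignment graph has weight $\similarity(\sub(s_i), \sub(s'))$, which is itself the \DISED output we are computing. We break this self-reference by observing that whenever $s_i$ is aligned to $s'$ we have the identity $\similarity(\sub(s_i), \sub(s')) = \eta(s_i, s') + \similarity(\bF\fragmentco{\Left(s_i)+1}{\Right(s_i)-1}, \bF'\fragmentco{\Left(s')+1}{\Right(s')-1})$, and the right-hand inner similarity is already captured by the \FED output on $(\tilde{\bF}, \tilde{\bF}')$ (in those forests the roots $s_i$ and $s'$ have been peeled off). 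The final answer for the special case is then the maximum over this ``$s_i$ aligned to $s'$'' option and the ``$s_i$ deleted or aligned elsewhere'' option (the latter determined via \Cref{prp:align_or_remove} from the \FED output with the $(s_i, s')$ match edges forbidden), both expressible via max-plus products on $\Oh(m) \times \Oh(m)$ matrices. In the unweighted setting every such matrix is monotone with entries bounded by the global tree size $\D$, so replacing $\MUL$ by $\MonMUL$ throughout yields the claimed improved running time.
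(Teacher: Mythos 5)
Your proposal takes a genuinely different route from the paper, and I believe it has gaps you cannot close with the post-processing you describe.

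\textbf{Gap 1: the weight function is not computable.} Under your WLOG ($s'$ immediately precedes $q'$), the spine $\bS$ may contain arbitrarily many nodes $s = r_0 \prec r_1 \prec \cdots \prec r_k = q$; your $\tilde\bF$ keeps $r_0,\ldots,r_{k-1}$ and contracts only $\sub(q)$. The alignment graph on $(\tilde\bF,\tilde\bF')$ therefore has match edges from each $r_i$ to $s'$, of weight $\similarity(\sub(r_i),\sub(s'))$. For $0<i<k$ these are spine--spine pairs that appear nowhere in the \DISED inputs: they are not in Input~(i), since $(\Left(r_i),\Left(s'))\notin\sB^{\top}_{\Left}$ when $r_i\neq q$ and $s'\neq q'$; they are not in Inputs~(ii)--(v); and they are excluded from the \SED inputs. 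In fact $(\Left(r_i),\Left(s'))\in\sB^{\bot}_{\Left}$ and $(\Right(r_i),\Right(s'))\in\sB^{\bot}_{\Right}$, so $\similarity(\sub(r_i),\sub(s'))$ is literally one of the \DISED \emph{outputs} you are being asked to produce. When you enumerate the spine--spine pairs that need weights you list only $(s_i,\bar q')$, $(\bar q, s')$, $(\bar q,\bar q')$, omitting $(r_i,s')$; your later ``special sub-case'' discussion acknowledges the circularity, but the fix again invokes the FED output on $(\tilde\bF,\tilde\bF')$, which is itself undefined until those weights are known.

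\textbf{Gap 2: FED on the contracted forests underestimates the similarity.} The border-to-border distance in the alignment graph of $(\tilde\bF,\tilde\bF')$ is, by the Shasha--Zhang recurrence, the maximum over alignments in which $\sub(q)$ (resp.\ $\sub(q')$) is either deleted in its entirety or aligned to a single subtree of the other reduced forest. It therefore misses alignments where $q$ is deleted but its children's subtrees are aligned to several disjoint pieces, and alignments where $q$ is matched to a proper descendant of $q'$ (no such node exists in $\tilde\bF'$). So the FED output is only a lower bound, and your ``special sub-case'' identity $\similarity(\sub(s_i),\sub(s')) = \eta(s_i,s') + \similarity(\bF\fragmentco{\Left(s_i)+1}{\Right(s_i)-1}, \bF'\fragmentco{\Left(s')+1}{\Right(s')-1})$ cannot be discharged by ``the FED output on $(\tilde\bF,\tilde\bF')$'' because that output does not equal the right-hand side.

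\textbf{Gap 3: the anchor post-processing does not have the inputs it needs.} You decompose via paired anchors $(\Left(q),z')$, $(\Right(q),w')$ and claim that the middle term $\similarity(\sub(q),\bF'\fragmentco{z'}{w'})$ is supplied by Inputs~(iv), (v) ``sitting inside $\bL'$ or $\bR'$''. This is only justified when the $\bF'$-side path to $q'$ can also be removed, i.e.\ when \cref{rmk:anchor_corner} applies with both $v=q$ and $v'=q'$ and you can restrict the anchors to $\sB^{\top}_{\Left},\sB^{\top}_{\Right}$. If some $\bS'$-ancestor of $q'$ \emph{is} aligned (to a subtree of $\bF$), \cref{prp:align_or_remove} does not let you do this, the one-sided anchor set $\Left(q)\times\fragment{1}{2|\bF'|+1}$ permits $z',w'$ to fall strictly inside $(\Left(q'),\Right(q'))$, and then $\similarity(\sub(q),\bF'\fragmentco{z'}{w'})$ is not among the \DISED inputs (and the flanking terms touch deeper $\bS'$-spine nodes whose subtree similarities against $\bF$ you also lack).

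The paper avoids all three problems by never attempting a single FED on contracted forests. It does a direct case analysis and, in precisely the cases your anchor argument fails, invokes \cref{lem:sim_w_cut}: this \emph{removes} $\sub(q')$ from the alignment graph and adds zero-weight bridges, which is sound because a match edge for an aligned ancestor $t'\prec q'$ already jumps over the removed block and carries the full $\similarity(\sub(v),\sub(t'))$ weight, and all the remaining weights involve at most one spine node, hence are available from \SED. The remaining cases (some spine node aligned) fix $\bF\fragmentco{x}{y}=\sub(s)$ and reduce to quantities that are either inputs or already computed. Contraction to an atom $\bar q'$ with weight $\similarity(\sub(q),\sub(q'))$ encodes the wrong thing for these cases, and the ``constant number of max-plus products'' you invoke does not recover it.
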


\begin{restatable}{lemma}{sedpatch}\label{lem:sed_patch}
    Suppose we are given an $(s,s',q,q')$\text{-}\DISED instance of size $m$, and let
    $r \in \bS$ be such that $s \prec r \prec q$.
    Then, we can reduce the $(s,s',q,q')$\text{-}\DISED instance to the
    $(s,s',r,q')$\text{-}\DISED and $(r,s',q,q')$\text{-}\DISED instances
    in time $\Oh(\sT_{\MUL}(m) + m^{2+o(1)})$.

    \unbalanced{Furthermore, if the \DISED instance is unweighted and has global tree size $\D$, then the reduction takes time $\Oh(\sT_{\MonMUL}(m, m, m, D) +  m^{2+o(1)} g(\D))$ where $\sT_{\MonMUL}(m, m, m, \D) = \Oh(f(m) g(\D))$  for some functions $f, g$.}
\end{restatable}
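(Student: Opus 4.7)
The plan is a three-stage reduction: first prepare (and then recursively solve) the inner sub-instance $(r, s', q, q')$-\DISED, then prepare (and recursively solve) the outer sub-instance $(s, s', r, q')$-\DISED, and finally combine their outputs to produce the main instance's outputs. In Stage~1, the inputs of $(r, s', q, q')$-\DISED are direct restrictions of the main instance's inputs, since the border sets and index ranges appearing in its inputs are contained in those of the main instance (the nesting $\sB^{\top}_{\Left}(r, s', q, q') \subseteq \sB^{\top}_{\Left}(s, s', q, q')$ and $\sB^{\top}_{\Right}(r, s', q, q') \subseteq \sB^{\top}_{\Right}(s, s', q, q')$ follows from $\Left(s) \le \Left(r)$ and $\Right(r) \le \Right(s)$, and inputs (ii)--(v) are indexed on subranges of the main ranges). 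In Stage~2, inputs (ii)--(iii) of $(s, s', r, q')$-\DISED are likewise direct restrictions of the main inputs; inputs (iv)--(v) -- the values $\similarity(\sub(r), \bF'\fragmentco{x'}{y'})$ -- appear among the outputs of $(r, s', q, q')$-\DISED as the entries with $(x, y) = (\Left(r), \Right(r))$; and input (i) is constructed from the main inputs and the Stage~1 output via a max-plus step analogous to Stage~3 below.

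Stage~3 produces each desired main output $\similarity(\bF\fragmentco{x}{y}, \bF'\fragmentco{x'}{y'})$ for $(x, x') \in \sB^{\bot}_{\Left}(s, s', q, q')$ and $(y, y') \in \sB^{\bot}_{\Right}(s, s', q, q')$ via a paired-anchor decomposition. If both endpoints lie in the border sets of a single sub-instance, the value is read directly. Otherwise, for the ``cross cases'' spanning both sub-rectangles with $\sub(r) \subseteq \bF\fragmentco{x}{y}$, I apply \cref{prp:align_or_remove} with $v = r$: either some spine ancestor of $r$ has its subtree aligned to a subtree of $\bF'$ (and the similarity decomposes at the corresponding paired anchors from \cref{rmk:paired_anchors}\eqref{it:paired_anchors:b}), or $r$'s ancestors are irrelevant and by \cref{rmk:paired_anchors}\eqref{it:paired_anchors:a} we obtain paired anchor sets $\Left(r) \times \fragment{1}{2\abs{\bF'}+1}$ and $\Right(r) \times \fragment{1}{2\abs{\bF'}+1}$. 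Each such cross case is then expressed as a sum of three similarities: a middle factor $\similarity(\sub(r), \bF'\fragmentco{z'}{w'})$ provided by the $(r, s', q, q')$-\DISED output (restricted to $z' \in \fragment{\Left(s')}{\Left(q')}$ and $w' \in \fragment{\Right(q')}{\Right(s')}$, which by the nesting constraints of the paired anchors are the only relevant ranges), together with left and right factors derived from the $(s, s', r, q')$-\DISED output after further single-anchor reductions via \cref{rmk:mao}. Cross cases with $\sub(r) \not\subseteq \bF\fragmentco{x}{y}$ (for example $y < \Right(r)$) are handled by a simpler single-anchor decomposition at $\Left(r)$ or symmetrically $\Right(r)$, with the two resulting factors again supplied by the sub-instance outputs on partial annuli.

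Each combination in Stages~1 and~3 is expressible as $\Oh(1)$ max-plus products of matrices of dimensions $\Oh(m) \times \Oh(m)$, ranging over the free anchor coordinates $z', w'$; these take $\Oh(\sT_{\MUL}(m))$ time overall together with $\Oh(m^{2+o(1)})$ bookkeeping for setting up the matrices and writing out the output. In the unweighted setting, all intermediate matrices are row/column monotone with entries bounded by the global tree size $\D$ (since each entry is a similarity involving a subforest of $\sub(s)$ or $\sub(s')$), so the products take $\Oh(\sT_{\MonMUL}(m, m, m, \D))$ time, giving the stated unweighted bound. The main obstacle is the combinatorial case analysis of Stage~3 and the analogous interface step in Stage~2: although anchoring at $\Left(r), \Right(r)$ is the natural choice, cleanly expressing the left/right factors in terms of the sub-instance outputs requires further anchor manipulation, and the ancestor-alignment sub-case \cref{prp:align_or_remove}\eqref{prp:align_or_remove:a} calls for careful invocation of the anchor transformations of \cref{rmk:anchor_transform}, \cref{rmk:anchor_corner}, and \cref{rmk:paired_anchors} to ensure that all intermediate factors remain available from the main inputs or sub-instance outputs without any ``leakage'' outside the given data.
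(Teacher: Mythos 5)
Your three-stage skeleton matches the paper's structure (recurse on $(r,s',q,q')$, then build the inputs for $(s,s',r,q')$, then assemble the main outputs), and you correctly observe that the inputs of the inner sub-instance $(r,s',q,q')$-\DISED are a literal subset of the given inputs. However there are two concrete gaps in your plan that are precisely where the paper's real work lies.

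\emph{Gap 1: Inputs (iv)--(v) of the outer sub-instance are not in the inner sub-instance's output.} For $(s,s',r,q')$-\DISED, Input~\eqref{it:sed:input4} requires $\similarity(\sub(r), \bF'\fragmentco{x'}{y'})$ for $x',y'\in\fragment{\Left(s')}{\Left(q')}$ and Input~\eqref{it:sed:input5} requires the same for $x',y'\in\fragment{\Right(q')}{\Right(s')}$; both endpoints are on the \emph{same} side of the spine of $\bF'$. The output of $(r,s',q,q')$-\DISED with $(x,y)=(\Left(r),\Right(r))$ only supplies $\similarity(\sub(r),\bF'\fragmentco{x'}{y'})$ for the \emph{cross-side} pairs $x'\in\fragment{\Left(s')}{\Left(q')}$, $y'\in\fragment{\Right(q')}{\Right(s')}$, since those are the only index combinations with $x=\Left(r)$ in $\sB^{\bot}_{\Left}(r,s',q,q')$ and $y=\Right(r)$ in $\sB^{\bot}_{\Right}(r,s',q,q')$. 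The same-side values are new quantities that must be computed from scratch; the paper does this via the dedicated helper \cref{lem:sed_specialcase} (and reverse symmetry), which your proposal has no analogue of.

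\emph{Gap 2: the ``ancestor-aligned'' case has no computational recipe.} When you apply \cref{prp:align_or_remove} with $v=r$ and land in case \eqref{prp:align_or_remove:a} (some spine node $t$ with $s\preceq t\prec r$ is aligned to a subtree of $\bF'$), citing \cref{rmk:paired_anchors}\eqref{it:paired_anchors:b} only tells you that paired anchor sets \emph{exist} once the aligned pair is known; it does not let you compute the similarity, because the anchor locations depend on $t$ and on which $v'\in\bF'$ it aligns to, both unknown, and the resulting left/right factors $\similarity(\bF\fragmentco{x}{\Left(t)},\bF'\fragmentco{x'}{\Left(v')})$ still involve intermediate spine nodes. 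The paper's essential tool here is \cref{lem:sim_w_cut}, which reduces exactly this case to a \BBD computation on an alignment graph with $\sub(q)$ (respectively $\sub(r)$) cut out, using only the similarities the \SED instance provides. Your proposal does not invoke this or any substitute for it, and you even flag the issue yourself at the end (``leakage outside the given data'') without resolving it. Relatedly, your claim that the middle factor in the non-ancestor case may be restricted to $z'\in\fragment{\Left(s')}{\Left(q')}$ and $w'\in\fragment{\Right(q')}{\Right(s')}$ is unjustified: \cref{rmk:paired_anchors}\eqref{it:paired_anchors:a} gives anchor sets of the form $\Left(r)\times\fragment{1}{2|\bF'|+1}$ and $\Right(r)\times\fragment{1}{2|\bF'|+1}$, so $z',w'$ a priori range over the full interval $\fragment{x'}{y'}$, which includes indices strictly inside $\sub(q')$; the paper's case analysis in \cref{cl:retrieveoutputs} explicitly handles these subranges using \cref{lem:sed_specialcase}, \cref{lem:sed_specialcase2}, \cref{thm:fed}, and \cref{lem:sim_w_cut}.

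In short, the high-level framing is right, but the proposal omits the two technical engines — the ``same-side'' helper computations of \cref{lem:sed_specialcase}/\cref{lem:sed_specialcase2} and the spine-cut \BBD tool \cref{lem:sim_w_cut} — without which neither Stage~2 nor Stage~3 can be carried out.
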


We can think of \cref{lem:sed_basecase} and \cref{lem:sed_patch}
as tools that allow us to cover the base case and
to divide a larger problem into two smaller ones, respectively.
Note, when we say 'reducing',
we are not merely referring to utilizing the output of the smaller instance to compute the larger instance.
We are also accounting for the time and computation required to compute the input of the smaller instances.
This input may not necessarily be an input of the larger instance but is essential for invoking any subroutine on the smaller instances.
In these reductions, we assume that the output of a smaller instance is available to us as soon as we compute its inputs.

Intuitively, in \cref{lem:sed_patch} we would like to break down an instance $(s,s',q,q')$\text{-}\DISED
into two instances $(s,s',r,q')$\text{-}\DISED and $(r,s',q,q')$\text{-}\DISED
such that $\abs{\sub(s) \setminus \sub(r)} \approx \abs{\sub(r) \setminus \sub(q)} \approx \abs{\sub(s) \setminus \sub(q)} / 2$.
However, since this is not always possible, we need another division scheme that uses \cref{lem:sed_patch} as a subroutine.

\begin{lemma}
    \label{lem:dised_divide}
    Let be given an $(s,s',q,q')$\text{-}\DISED instance of size $m$ and a positive threshold $\Delta$.
    Then, we can find $9 m^2/\Delta^2$ \DISED instances in time $\Oh(m^2)$,
    each of size at most $\Delta$,
    such that given the input of all such instances, we can solve the $(s,s',q,q')$\text{-}\DISED instance in time $\Oh(m^2/\Delta^2 \cdot (\sT_{\MUL}(m) + m^{2 + o(1)}))$.

    \unbalanced{Furthermore, if the \SED instance is unweighted and has global tree size $\D$, the algorithm requires time $\Oh(m^2/\Delta^2 \cdot (\sT_{\MonMUL}(m, m, m, \D) + m^{2 + o(1)} g(\D)))$ where $\sT_{\MonMUL}(m, m, m, \D) = \Oh(f(m) g(\D))$.}
\end{lemma}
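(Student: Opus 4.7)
The plan is to iteratively apply \cref{lem:sed_patch} to cut the instance first along the segment of $\bS$ running from $s$ to $q$, and then along the segment of $\bS'$ running from $s'$ to $q'$, using the swap symmetry of \DISED between the two phases. Let $s = s^{(0)}, s^{(1)}, \ldots, s^{(k)} = q$ denote the nodes of $\bS$ from $s$ to $q$, and set $g_j := |\sub(s^{(j)})| - |\sub(s^{(j+1)})|$, so that $\sum_j g_j = |\sub(s)| - |\sub(q)| \le m$. I would first choose a subsequence $s = s^{(j_0)}, \ldots, s^{(j_\ell)} = q$ with $\ell \le 3m/\Delta$ in which each consecutive pair $(s^{(j_r)}, s^{(j_{r+1})})$ either satisfies $|\sub(s^{(j_r)}) \setminus \sub(s^{(j_{r+1})})| \le \Delta$, or has $j_{r+1} = j_r + 1$ (so that $s^{(j_r)}$ immediately precedes $s^{(j_{r+1})}$ in $\bS$). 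Iterating \cref{lem:sed_patch} at the splitters $s^{(j_1)}, \ldots, s^{(j_{\ell-1})}$ then reduces the original instance to $\ell$ subinstances of the form $(s^{(j_r)}, s', s^{(j_{r+1})}, q')$-\DISED.

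The subsequence is produced by a standard greedy scan through the $g_j$'s: maintain a running sum, emit a singleton base-case block whenever some $g_j$ alone exceeds $\Delta$, and otherwise close the current block as soon as adding the next $g_j$ would push its sum past $\Delta$. A pairing argument then yields the bound: any two consecutive non-singleton blocks together contribute more than $\Delta$ to $M := |\sub(s)| - |\sub(q)|$ (the first was closed precisely because adding the next $g_j$ would have overflowed), while each singleton base-case block already contributes more than $\Delta$ on its own. Summing, the number of blocks is at most $3M/\Delta + \Oh(1) \le 3m/\Delta$. Establishing this $3m/\Delta$ bound cleanly is the main technical obstacle; the rest of the argument is bookkeeping.

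In the second phase, I would dispatch the base-case subinstances (those with $j_{r+1} = j_r + 1$) directly to \cref{lem:sed_basecase}. Each remaining subinstance has $\bS$-side size at most $\Delta$ but possibly large $\bS'$-side; applying the swap symmetry of \DISED and repeating the same greedy decomposition along the spine from $s'$ to $q'$ produces at most $3m/\Delta$ sub-subinstances per surviving subinstance, again with the ``immediately consecutive along $\bS'$'' sub-subinstances handed off to \cref{lem:sed_basecase}. The sub-subinstances that remain have both spine-sides of length at most $\Delta$, hence are \DISED instances of size at most $\Delta$, and they are the $\le (3m/\Delta)^2 = 9m^2/\Delta^2$ subinstances promised by the statement.

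For the running time, in total there are $\Oh(m^2/\Delta^2)$ invocations of \cref{lem:sed_patch} and $\Oh(m^2/\Delta^2)$ invocations of \cref{lem:sed_basecase}, each costing $\Oh(\sT_{\MUL}(m) + m^{2 + o(1)})$, which sums to the claimed bound. Identifying the greedy decomposition itself only needs the subtree sizes $|\sub(v)|$ along the two spines and runs in $\Oh(m)$ time, comfortably within the $\Oh(m^2)$ preprocessing budget (which is anyway needed to read/write the input/output arrays). The unweighted version follows verbatim after substituting the $\sT_{\MonMUL}$-based bounds of \cref{lem:sed_patch,lem:sed_basecase}, using that the global tree size of every subinstance is bounded by the original $\D$.
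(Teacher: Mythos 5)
Your proposal is correct and follows essentially the same route as the paper: a greedy partition of the spine $\bS$ into $\Oh(m/\Delta)$ blocks (either of gap $\leq\Delta$ or singletons), iterated applications of \cref{lem:sed_patch} to reduce to these subinstances, base cases via \cref{lem:sed_basecase}, then repeating the whole procedure on $\bS'$ via swap symmetry to get $\Oh(m^2/\Delta^2)$ instances of size $\leq\Delta$. The paper's Algorithm~\ref{alg:sed_count} implements your greedy scan, and its counting argument ($d-1\le 2m/\Delta+1\le 3m/\Delta$) is the same pairing argument you sketch, just phrased over consecutive chosen spine nodes instead of consecutive blocks.
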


\begin{proof}
    We can simplify the proof to demonstrate the following:
    Given an $(s,s',q,q')$\text{-}\DISED instance of size $n$ and a threshold $\Delta$,
    we can find in time $\Oh(m)$ a set $I \subseteq \bS \times \bS$ satisfying all of the following:
    \begin{itemize}
        \item $|I| \leq 3m/\Delta$;
        \item $s \preceq a \prec b \preceq q$ for $(a,b) \in I$;
        \item $\abs{\sub(a) \setminus \sub(b)} \leq \Delta$ for $(a,b) \in I$; and
        \item the $(s,s',q,q')$\text{-}\DISED instance
        can be reduced to the $(a,s',b,q')$\text{-}\DISED instances for $(a,b) \in I$
        in time $\Oh(m/\Delta \cdot (\sT_{\MUL}(m) + m^{2 + o(1)}) + \Oh(m))$.
        \unbalanced{In the unweighted setting, we require time $\Oh( m/\Delta \cdot (\sT_{\MonMUL}(m, m, m, \D) + m^{2 + o(1)} g(\D)) + \Oh(m))$.}
    \end{itemize}
    Using swap symmetry, we can use the simplified version of the lemma first on the $(s,s',q,q')$\text{-}\DISED instance,
    and then apply it on the $(a,s',b,q')$\text{-}\DISED instances for all $(a,b) \in I$,
    thereby proving the original statement of the lemma.

    \begin{algorithm}[t]
        \KwInput{two nodes $s,q \in \bS$ such that $s \prec q$, and a threshold $\Delta$.}
        \KwOutput{a subset $s=r_1 \prec r_2 \prec \cdots \prec r_d = q \in \bS$.}
        $r_1 \leftarrow s$\;
        $i \leftarrow 1$\;
        \While{$r_i \neq q$}{ \label{line:while}
            Define \( r_{i+1} \in \bS \)
            to be the farthest node from \( r_{i} \) on $\bS$
            such that $r_{i} \prec r_{i+1} \preceq q$ and \(  \abs{\sub(r_{i}) \setminus \sub(r_{i+1})} \leq \Delta \).
            If no such $r_{i+1}$ exists, pick $r_{i+1}$ such that $r_{i}$ immediately precedes $r_{i+1}$\;
            $i \leftarrow i + 1$\; \label{line:take}
        }
        \Return $\{r_i\}_i$\;
        \caption{Partition Algorithm} \label{alg:sed_count}
    \end{algorithm}

    To construct set $I$, we employ \cref{alg:sed_count} on $s$ and $q$ with threshold $\Delta$,
    yielding a sequence of spine nodes $s=r_1 \prec r_2 \prec \cdots \prec r_d = q$.
    The algorithm guarantees that for each $i \in \fragmentco{1}{d}$,
    either the $(r_i,s',r_{i+1},q')$\text{-}\DISED instance has size at most $\Delta$, or $r_{i}$ immediately precedes $r_{i+1}$.
    To complete the reduction, we recursively solve $(r_i,s',r_j,q')$\text{-}\DISED instances for $i,j \in \fragment{i}{d}$ with $i < j$.
    This is achieved by selecting an arbitrary $i < k < j$, and applying \cref{lem:sed_patch} to the $(r_i,s',r_k,q')$\text{-}\DISED and $(r_k,s',r_j,q')$\text{-}\DISED instances.
    Recursion stops when $j = i+1$.
    At this point, if $r_i$ immediately precedes $r_{i+1}$, we use \cref{lem:sed_basecase}; otherwise, we add $(r_i, r_{i+1})$ to $I$.

    Therefore, starting with $i=1$ and $j=d$, we achieve the desired reduction of the $(s,s',q,q')$\text{-}\DISED instance
    to $|I| \leq d-1$ \DISED instances in time $\Oh((d-1) \cdot (\sT_{\MUL}(m) + m^{2 + o(1)}))$.
    \unbalanced{In the unweighted setting, the time is $\Oh((d - 1) \cdot (\sT_{\MonMUL}(m, m, m, \D) + m^{2 + o(1)} g(\D))$.}

    To conclude the proof, it suffices to prove $d-1 \leq 2m/\Delta+1 \leq 3m/\Delta$.
    To this end, we analyze \cref{line:take} in \cref{alg:sed_count}.
    Here is an important observation: if $\abs{\sub(r_{i}) \setminus \sub(r_{i+1})} < \Delta$,
    then either $d = i + 1$ or $\abs{\sub(r_{i}) \setminus \sub(r_{i+1})} + \abs{\sub(r_{i+1}) \setminus \sub(r_{i+2})}> \Delta$.
    Consequently, we iterate at most
    \[
        2\abs{\sub(s) \setminus \sub(q)} / \Delta + 1 \leq 2m / \Delta + 1
    \]
    times in \cref{line:while}, obtaining $d - 1 \leq 2m/\Delta+1$.
\end{proof}

\begin{corollary}
    \label{cor:dised}
    There exists an algorithm $\mA_{\DISED}$ solving \DISED
    running in time $\Oh(\sT_{\MUL}(m)+m^{2+o(1)})$ on instances of size $m$.

    \unbalanced{Furthermore, if the \DISED instance is unweighted and has global tree size $\D$, then the algorithm requires time $\Oh(\sT_{\MonMUL}(m, m, m, \D) + m^{2 + o(1)} g(\D))$ where $\sT_{\MonMUL}(m, m, m, \D) = \Oh(f(m) g(\D))$.}
\end{corollary}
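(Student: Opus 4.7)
The plan is to mirror the structure of \cref{cor:bbd_algo}: apply \cref{lem:dised_divide} recursively with threshold $\Delta = m/\alpha$ for a sufficiently large constant $\alpha$ to be chosen at the end, terminating at a trivial base case (e.g.\ when $m$ is a constant, where direct dynamic programming suffices). One invocation of \cref{lem:dised_divide} on an instance of size $m$ produces at most $9\alpha^2$ subinstances of size at most $m/\alpha$, and combines their outputs in time $\Oh(\alpha^2 \cdot (\sT_{\MUL}(m) + m^{2+o(1)}))$, together with $\Oh(m^2)$ for the partitioning itself (which is absorbed into the $m^{2+o(1)}$ term).

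The running time of the resulting algorithm $\mA_{\DISED}$ therefore obeys a recurrence of the form
\begin{equation*}
    \sT(m) \leq c_1 \alpha^2 \cdot \sT(m/\alpha) + c_2 \alpha^2 \cdot \left(\sT_{\MUL}(m) + m^{2+o(1)}\right),
\end{equation*}
where $c_1, c_2$ are absolute constants coming from the hidden factors in \cref{lem:dised_divide}. For any $\epsilon > 0$, choose $\alpha$ large enough that $\log_\alpha(c_1 \alpha^2) = 2 + \log_\alpha c_1 < 2 + \epsilon$. The Master theorem then gives $\sT(m) = \Oh(\sT_{\MUL}(m) + m^{2+o(1)})$, since the non-recursive work at each level dominates by a polynomial factor.

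For the unweighted case, I would invoke the \MonMUL-based version of \cref{lem:dised_divide} and note that the global tree size of every subinstance is bounded by the global tree size $\D$ of the original instance (it is determined by $\min(|\sub(s)|, |\sub(s')|)$, which is preserved when we shrink the $s \prec q$ window on the spines). Writing $\sT_{\MonMUL}(m,m,m,\D) = f(m) g(\D)$ and treating $g(\D)$ as a quantity independent of the recursion variable $m$, the same Master-theorem argument yields $\sT(m) = \Oh((f(m)+m^{2+o(1)}) g(\D)) = \Oh(\sT_{\MonMUL}(m,m,m,\D) + m^{2+o(1)} g(\D))$.

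No serious obstacle is expected here, since both lemmas \cref{lem:sed_basecase} and \cref{lem:sed_patch}, and the partitioning lemma \cref{lem:dised_divide}, have already done all the combinatorial heavy lifting; the only care needed is (i) verifying that the constant $\alpha$ can be chosen uniformly so that the $c_1 \alpha^2$ branching factor is dominated by $\alpha^{2+\epsilon}$, and (ii) checking that $\D$ behaves as an ``external parameter'' in the unweighted recurrence so that $g(\D)$ factors out cleanly and does not accumulate across recursion levels.
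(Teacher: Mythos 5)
Your proof follows the paper's argument almost verbatim: recursive application of \cref{lem:dised_divide} with threshold $\Delta = m/\alpha$, the recurrence $\sT(m) \leq c_1\alpha^2\,\sT(m/\alpha) + c_2\alpha^2\,(\sT_{\MUL}(m)+m^{2+o(1)})$ with $c_1=9$, and the Master-theorem resolution exactly as in \cref{cor:bbd_algo}; the only cosmetic difference is that the paper phrases the base case via \cref{lem:sed_basecase} (which is already baked into \cref{lem:dised_divide}) rather than a constant-size cutoff, but this changes nothing. Your observation that the global tree size $\D$ only shrinks along the recursion (since $|\sub(r)| \leq |\sub(s)|$ when $s \prec r$), so $g(\D)$ factors out as an external parameter, is also the right justification for the unweighted bound.
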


\begin{proof}
    To devise such $\mA_{\DISED}$, it suffices to apply recursively \cref{lem:dised_divide}
    with threshold $\Delta(m) = m/\alpha$
    for a constant $\alpha \geq 1$ to be determined later.
    The base case is handled using \cref{lem:sed_basecase}.
    Thus, we obtain the following recurrence for the running time
    \begin{align*}%
        \sT(m) 
        &\leq c_1 \cdot m^2/\Delta^2 \cdot \sT(\Delta) + c_2 \cdot m^2/\Delta^2 \cdot (\sT_{\MUL}(m) + m^{2 + o(1)}) + c_3 \cdot m^{2+o(1)} \\
        &= c_1 \alpha^{2} \cdot \sT(m/\alpha) + c_2 \alpha^{2} \cdot (\sT_{\MUL}(m) + m^{2 + o(1)}) + c_3 \cdot m^{2+o(1)}
    \end{align*}
    where $c_1 = 9$ and we write out the constants $c_2, c_3$ hidden
    in the time needed to solve the original $(s,s',q,q')$\text{-}\DISED instance,
    and to find the smaller \DISED instances, respectively.
    Performing similar analysis to \cref{cor:bbd_algo},
    we conclude that $\sT(m) = \Oh(\sT_{\MUL}(m)+m^{2+o(1)})$.
    \unbalanced{In the unweighted case, we note that the recurrence is instead
    \begin{equation*}
        \sT(m) = c_1 \alpha^{2} \cdot \sT(\alpha \cdot m) + c_2 \alpha^{2} \cdot (\sT_{\MonMUL}(m, m, m, \D) + m^{2 + o(1)} g(\D)) + c_3 \cdot m^{2+o(1)}.
    \end{equation*}
    and we bound the running time as in \Cref{cor:bbd_algo}.}
\end{proof}

\subsection{Proof of \texorpdfstring{\cref{thm:sed}}{Main Theorem~\ref{thm:sed}}}

Now, we show how to prove \cref{thm:sed} using the algorithm for \DISED from \cref{cor:dised}.

\sed*

We also prove the following theorem for unweighted \SED.

\begin{theorem}
    \label{thm:unweighted-sed}
    
    \unbalanced{There is an algorithm for unweighted \SED running in time $\Oh(\sT_{\MonMUL}(n) + n^{2 + o(1)} g(n))$, where $n = \max(|\bF|,|\bF'|)$ and $\sT_{\MonMUL}(m, m, m, \D) = \Oh(f(m) g(\D))$.} \lipicsEnd
\end{theorem}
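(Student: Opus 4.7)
The plan is to mirror the proof of \cref{thm:sed} step-by-step, substituting the unweighted variants of the underlying lemmas that have been prepared throughout \cref{sec:sed} (inside the \unbalanced{\cdot} clauses). Since \cref{lem:sed_basecase,lem:sed_patch} and \cref{lem:dised_divide,cor:dised} all already come equipped with unweighted running-time bounds expressed in terms of $\sT_{\MonMUL}$, the work reduces to assembling the reduction from \SED to \DISED and tracking how the global tree size parameter $\D$ propagates through the recursion.

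First, as at the start of \cref{sec:sed}, I would augment $\bF$ and $\bF'$ with virtual roots labeled by fresh symbols so that we may assume the first nodes of $\bS$ and $\bS'$ are roots of $\bF$ and $\bF'$, respectively, without changing the \SED outputs. Let $s_0, q_0$ denote the first and last nodes of $\bS$ (so $q_0$ is a leaf and $\Left(q_0) = \Right(q_0)-2$), and define $s_0', q_0'$ analogously. Then the original \SED instance reduces to solving the single $(s_0, s_0', q_0, q_0')$\text{-}\DISED instance, exactly as in the weighted reduction proving \cref{thm:sed}. The required inputs (i)--(v) of the \DISED instance are either contained in the \SED input (those involving subtrees with at least one root off the spine) or concern $\sub(q_0), \sub(q_0')$, which are single nodes and hence trivially computable in $\Oh(n^2)$ time. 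The \SED outputs $\similarity(\sub(v), \sub(v'))$ for $(v, v') \in \bS \times \bS'$ are produced as a by-product of running the divide-and-conquer scheme of \cref{cor:dised}, which at its leaves invokes \cref{lem:sed_basecase} on pairs of immediately preceding spine nodes.

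Next, I would invoke the unweighted version of \cref{cor:dised} on the $(s_0, s_0', q_0, q_0')$\text{-}\DISED instance, whose size is $m \le n$ and whose global tree size is $\D = \min(|\sub(s_0)|, |\sub(s_0')|) \le n$. The key point is that every sub-instance produced by the recursion in \cref{lem:dised_divide} corresponds to $(a, s', b, q')$ with $a \prec b$ on the spines, so its global tree size is at most the global tree size of the parent and hence at most $n$ throughout. Consequently, each application of \cref{lem:sed_patch} or \cref{lem:sed_basecase} can use the unweighted clauses with $\D \le n$, which contribute at most $\Oh(\sT_{\MonMUL}(m, m, m, n) + m^{2 + o(1)} g(n))$ per sub-instance. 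Plugging $\sT_{\MonMUL}(m, m, m, \D) = \Oh(f(m) g(\D))$ into the recurrence solved in \cref{cor:dised} with the global tree size frozen at $n$ yields overall running time $\Oh(f(n) g(n) + n^{2 + o(1)} g(n)) = \Oh(\sT_{\MonMUL}(n) + n^{2 + o(1)} g(n))$.

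The main thing to double-check is that the monotonicity and boundedness premises of \MonMUL are indeed satisfied by every max-plus product that the scheme performs. Row/column monotonicity of the relevant similarity matrices holds even in the weighted setting (enlarging an interval can only increase the optimal similarity, since one can always extend an alignment by deletions of weight zero), and in the unweighted setting every similarity value between two forests is bounded by twice the size of the smaller of the two forests, hence by $\Oh(\D) = \Oh(n)$. These are exactly the properties that \cref{lem:sed_patch,lem:sed_basecase,lem:dised_divide,cor:dised} already use in their unweighted clauses, so no additional argument beyond propagating the bound $\D \le n$ is required, and the proof concludes as above.
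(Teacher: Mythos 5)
Your proposal follows essentially the same approach as the paper: reduce the unweighted \SED instance to a single root-level \DISED call and propagate the global-tree-size bound $\D \le n$ through the divide-et-impera recursion of \cref{cor:dised}, swapping $\sT_{\MUL}$ for $\sT_{\MonMUL}$ everywhere. Your observation that the global tree size is monotone non-increasing along every recursive branch — since each sub-instance $(a, s', b, q')$ has $a$ no shallower than its parent's first spine node — is exactly the invariant that makes the bound $\Oh(f(n)g(n) + n^{2+o(1)}g(n))$ go through, and the final bound matches the paper's.

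One inaccuracy worth flagging: your claim that Inputs (i)--(v) of the top-level \DISED instance are \emph{either} contained in the \SED input \emph{or} concern only the single-node subtrees $\sub(q_0), \sub(q_0')$ is too optimistic. Input (i) requires $\similarity(\bF\fragmentco{x}{y}, \bF'\fragmentco{x'}{y'})$ over the border sets $\sB^{\top}_{\Left}$ and $\sB^{\top}_{\Right}$, and the subcase where $x' = \Left(q_0')$ is fixed, $y = \Right(q_0)$ is fixed, and $x, y'$ vary (the paper's subset~\eqref{eq:sedinput:2}, together with its two symmetric counterparts) is neither an \SED input nor a single-node computation: it requires applying \cref{thm:unweighted-fed-balance} to $\bF\fragmentco{\Left(s_0)}{\Right(q_0)}$ and $\bF'\fragmentco{\Left(q_0')}{\Right(s_0')}$. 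This call is one of the two bottlenecks the paper lists, so it does need to appear in the accounting — though it contributes only another $\Oh(\sT_{\MonMUL}(n) + n^{2+o(1)}g(n))$ term, so the final bound is unaffected. With that correction the proof is sound.
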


\begin{proof}[\lipicsStart Proof of \cref{thm:sed} and \cref{thm:unweighted-sed}]
    Suppose $s$ is the first node (i.e., the root) and $q$ is the last node appearing in $\bS$.
    Similarly, suppose $s'$ is the first node and $q'$ is the last node appearing in $\bS'$.
    The algorithm for \SED is simple, we use $\mA_{\DISED}$ to solve the $(s,s',q,q')$\text{-}\DISED instance.

    This approach works as in the recursive calls performed by $\mA_{\DISED}$ we compute $\similarity(\sub(v), \sub(v'))$ for all $(v,v') \in \bS \times \bS'$.
    Indeed, fix $(v,v') \in \bS \times \bS'$. We have the invariant that we always recurse
    on an instance where both $v$ and $v'$ are contained, until we reach an $(u,u',w,w')$\text{-}\DISED instance
    such that either $u \preceq v \prec w$ and $u'$ immediately precedes $w'$, or $u$ immediately precedes $w$ and $u' \preceq v' \prec w'$.
    In either case $\similarity(\sub(v), \sub(v'))$ is among the output of the $(u,u',w,w')$\text{-}\DISED instance.

    To conclude the proof, we must also explain where we get the inputs from.
    We rewrite the indices for Input~\eqref{it:sed:input1} as
    \begin{align}
        \MoveEqLeft \{\ (x,x',y,y') \mid x \in \sB^{\top}_{\Left}(s,s',q,q'), \ (y,y') \in \sB^{\top}_{\Right}(s,s',q,q')\ \} = \nonumber \\[5pt]
        & \{\ (x,x',y,y') \mid (x,x') \in \fragment{\Left(s)}{\Left(q)} \times \Left(q'), \
            (y,y') \in \fragment{\Right(q)}{\Right(s)} \times \Right(q')\ \} \label{eq:sedinput:1}\\
        \cup \quad
        & \{\ (x,x',y,y') \mid (x,x') \in \fragment{\Left(s)}{\Left(q)} \times \Left(q'),
        (y,y') \in \Right(q) \times \fragment{\Right(q')}{\Right(s')}\ \} \label{eq:sedinput:2}\\
        \cup \quad
        & \{\ (x,x',y,y') \mid (x,x') \in \Left(q) \times \fragment{\Left(s')}{\Left(q')}, \
        (y,y') \in \fragment{\Right(q)}{\Right(s)} \times \Right(q')\ \} \label{eq:sedinput:3}\\
        \cup \quad
        & \{\ (x,x',y,y') \mid (x,x') \in \Left(q) \times \fragment{\Left(s')}{\Left(q')}, \
        (y,y') \in \Right(q) \times \fragment{\Right(q')}{\Right(s')}\ \}. \label{eq:sedinput:4}
    \end{align}
    We can compute the similarity for the various subsets of indices~\eqref{eq:sedinput:1}, \eqref{eq:sedinput:2}, \eqref{eq:sedinput:3} and \eqref{eq:sedinput:4} as follows.
    \begin{itemize}
        \item For~\eqref{eq:sedinput:1}, note that $\bF'\fragmentco{x'}{y'} = \bF'\fragmentco{\Left(q')}{\Right(q')} = \sub(q')$ contains a single node, namely $q'$, because $q'$ is the last node appearing in $\bS'$.
        Consequently, the similarity to be computed corresponds to
        \[
            \max\Big\{ \ 0,\  \max\nolimits_{v \in \bF\fragmentco{x}{y}} \{\ \eta(v, q')\ \} \ \Big\}.
        \]
        By using clever bottom-up dynamic programming we can compute these similarities for all $x \in \fragment{\Left(s)}{\Left(q)}$ and $y \in \fragment{\Right(q)}{\Right(s)}$
        in time $\Oh(m^2)$.

        \item For~\eqref{eq:sedinput:2}, note that $\bS \cap \bF\fragmentco{\Left(s)}{\Right(q)} = \{q\}$ and $\bS' \cap \bF'\fragmentco{\Left(q')}{\Right(s')} = \{q'\}$. Furthermore, since $\sub(q) = q$ and $\sub(q') = q'$, it follows that $\similarity(\sub(q), \sub(q')) = \max\{0, \eta(q, q')\}$. 
        This last similarity, combined with the input provided to \SED, gives us $\similarity(\sub(v), \sub(v'))$ for all pairs $(v, v') \in \bF\fragmentco{\Left(s)}{\Right(q)} \times \bF'\fragmentco{\Left(q')}{\Right(s')}$. Thus, we can apply \cref{thm:fed} on $\bF\fragmentco{\Left(s)}{\Right(q)}$ and $\bF'\fragmentco{\Left(q')}{\Right(s')}$ to obtain the suffix-prefix similarities between these intervals, which is exactly what we need to compute. 

         \item For~\eqref{eq:sedinput:3}, as~\eqref{eq:sedinput:3} equals to~\eqref{eq:sedinput:2} under reverse symmetry, it suffices to apply symmetric computations.

        \item For~\eqref{eq:sedinput:4}, as~\eqref{eq:sedinput:1} equals to~\eqref{eq:sedinput:2} under swap symmetry, it suffices to apply symmetric computations.
    \end{itemize}
    Finally, Inputs~\eqref{it:sed:input2},~\eqref{it:sed:input3},~\eqref{it:sed:input4}, and~\eqref{it:sed:input5}
    can all be computed using similar computations as in~\eqref{eq:sedinput:1}.

    The overall complexity can be bounded by $\Oh(\sT_{\MUL}(n) + n^{2 + o(1)})$ since the bottlenecks are applying the \DISED algorithm and \Cref{thm:fed}.
    \unbalanced{In the unweighted setting, we note that global tree size is at most $n$, thus obtaining a running time of $\Oh(\sT_{\MonMUL}(n) + n^{2 + o(1)} g(n))$ where $\sT_{\MonMUL}(n, n, n, \D) = \Oh(f(n) g(\D))$. Furthermore, we can use \Cref{thm:unweighted-fed-balance} instead of \Cref{thm:fed}.}
\end{proof}

\subsection{Patching together subproblems}
\label{sec:sed_patch}

This section is dedicated to proving \cref{lem:sed_patch}. Given the numerous cases involved, we first prove \cref{claim:sed_specialcase}, \cref{lem:sed_specialcase}, and \cref{lem:sed_specialcase2}, which allow us to break down the proof of \cref{lem:sed_patch} into three more ``digestible'' steps.

\begin{figure}[htbp]
    \centering
    \input{tikz/dised_patch.tex}
    \caption{The figure visualizes ranges of $x,y,x',y'$ for \cref{lem:sed_specialcase}, \cref{claim:sed_specialcase}, \cref{cl:retrieveinputs1}, \cref{lem:sed_specialcase2}
    and \cref{lem:sed_patch}. Colors have the same meaning as in \cref{fig:dised}.}
    \label{fig:dised_patch}
\end{figure}

\begin{lemma} \label{claim:sed_specialcase}
    Suppose we are provided with the input of an $(s,s',q,q')$\text{-}\DISED instance of size $m$.
    Further, let $r \in \bS$ such that $s \prec r \prec q$.
    Then, we can compute $\similarity(\bF\fragmentco{x}{y}, \bF'\fragmentco{x'}{y'})$
    for all
    $(x,x') \in (\Left(r) \ \times \ \fragment{\Right(q')}{\Right(s')})
    \ \cup \ (\fragment{\Left(r)}{\Left(q)} \ \times \ \Right(q'))$
    and
    $(y,y') \in  (\Right(r) \times \fragment{\Right(q')}{\Right(s')})
    \ \cup \ (\fragment{\Right(q)}{\Right(r)} \ \times \ \Right(s'))$
    in time $\Oh(\sT_{\MUL}(m) + m^{2+o(1)})$.
    
    \unbalanced{Furthermore, if the \DISED instance is unweighted and has global tree size $\D$, then the algorithm requires time $\Oh(\sT_{\MonMUL}(m, m, m, D) + m^{2+o(1)} g(\D))$ where $\sT_{\MonMUL}(m, m, m, \D) = \Oh(f(m) g(\D))$.}
\end{lemma}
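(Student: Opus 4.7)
The plan is to handle the four L-shape configurations of $(x,x',y,y')$ uniformly by exploiting two structural facts present in every configuration: $\sub(q) \subseteq \bF\fragmentco{x}{y}$ (since $x \leq \Left(q)$ and $y \geq \Right(q)$) and $\bF'\fragmentco{x'}{y'} \cap \sub(q') = \emptyset$ (since $x', y' \geq \Right(q')$). Invoking \cref{prp:align_or_remove} with $v = q$ yields, for every fixed $(x,x',y,y')$, that at least one of (a) ``some spine node $u$ with $r \preceq u \prec q$ and $\sub(u) \subseteq \bF\fragmentco{x}{y}$ has $\sub(u)$ aligned to a subtree of $\bF'\fragmentco{x'}{y'}$'' or (b) ``$\similarity(\bF\fragmentco{x}{y}, \bF'\fragmentco{x'}{y'}) = \similarity(\bF\fragmentco{x}{\Left(q)} + \bF\fragmentco{\Left(q)}{y}, \bF'\fragmentco{x'}{y'})$'' holds. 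I will compute two values $A, B \leq \similarity$ that respectively match $\similarity$ in cases (a) and (b), and return $\max(A, B)$.

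To compute $A$, I apply \cref{lem:sim_w_cut} with $\sub(r)$ as the first forest (equipped with the spine $\bS \cap \sub(r)$ from $r$ to the leaf, which contains $q$), with $\bF'\fragmentco{\Right(q')}{\Right(s')}$ as the second forest, and with cut node $q$. All subtree similarities required by the lemma are supplied by the top-level \SED input, since $\bF'\fragmentco{\Right(q')}{\Right(s')}$ contains no node of $\bS'$. Translating the output boundaries $\sB^{\bot}$ and $\sB^{\top}$ back to global bi-order coordinates, one verifies that $x = \Left(r)$ corresponds to the local left border, $x' = \Right(q')$ to the local bottom border, $y = \Right(r)$ to the local right border, and $y' = \Right(s')$ to the local top border, so the four L-shape combinations of the claim land exactly on the covered positions. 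Since $|\sub(r) \setminus \sub(q)|, |\bF'\fragmentco{\Right(q')}{\Right(s')}| \leq m$, this call runs in $\Oh(\sT_{\MUL}(m) + m^{2+o(1)})$ time (respectively $\Oh(\sT_{\MonMUL}(m,m,m,\D) + m^{2+o(1)} g(\D))$ in the unweighted case).

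To compute $B$, I apply \cref{rmk:mao} twice---anchoring at $\Left(q)$ and then at $\Right(q)$---to rewrite case~(b) as the maximum over $z', w' \in \fragment{\Right(q')}{\Right(s')}$ of $\similarity(\bF\fragmentco{x}{\Left(q)}, \bF'\fragmentco{x'}{z'}) + \similarity(\sub(q), \bF'\fragmentco{z'}{w'}) + \similarity(\bF\fragmentco{\Right(q)}{y}, \bF'\fragmentco{w'}{y'})$. The middle summand is Input~\eqref{it:sed:input5}. Across the four configurations, the left summand is either ``whole versus infix'' (Cases 1--2) or ``suffix versus prefix'' (Cases 3--4) on the pair $(\bF\fragmentco{\Left(r)}{\Left(q)}, \bF'\fragmentco{\Right(q')}{\Right(s')})$, and the right summand is either ``whole versus infix'' (Cases 1, 3) or ``prefix versus suffix'' (Cases 2, 4) on $(\bF\fragmentco{\Right(q)}{\Right(r)}, \bF'\fragmentco{\Right(q')}{\Right(s')})$; both are supplied by applying \cref{thm:fed} (or \cref{thm:unweighted-fed-balance}) once to each pair. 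These calls are valid because none of the four forests contains a node of $\bS \cup \bS'$, so every required subtree similarity comes from the non-spine portion of the \SED input. The remaining triple optimization decomposes into two successive min-plus products of dimension $\Oh(m)$, contributing $\Oh(\sT_{\MUL}(m))$ (respectively $\Oh(\sT_{\MonMUL}(m,m,m,\D))$ exploiting the monotonicity of the similarity matrices and the $\Oh(\D)$ bound on entries).

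The principal obstacle is the coordinate bookkeeping that confirms each of the four L-shape combinations maps onto the prescribed border of the \cref{lem:sim_w_cut} output and corresponds to an available \FED output type for its left and right summands; once this is in hand, \cref{prp:align_or_remove} gives $\similarity = \max(A, B)$ and summing the three contributions yields the claimed running time.
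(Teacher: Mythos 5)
Your proposal is correct and follows essentially the same two-branch strategy as the paper's own proof: the dichotomy between "some $t$ with $r \preceq t \prec q$ is aligned" versus "$q$ can be delinked from the spine above it," the use of $\cref{lem:sim_w_cut}$ on $(\sub(r), \bF'\fragmentco{\Right(q')}{\Right(s')}, \bS, q)$ to cover the first branch, and the paired-anchor decomposition at $\Left(q), \Right(q)$ feeding a triple max-plus product of \FED-derived matrices to cover the second. The only cosmetic difference is that the paper derives the three-term formula via $\cref{rmk:paired_anchors}\eqref{it:paired_anchors:a}$ while you invoke $\cref{rmk:mao}$ twice; these are the same argument, and your coordinate translation for the $\sB^{\bot}/\sB^{\top}$ boundaries of the $\cref{lem:sim_w_cut}$ instance matches the paper's four index subsets~\eqref{eq:sed_specialcase:1}--\eqref{eq:sed_specialcase:4}.
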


    \begin{proof}
    Refer to \cref{fig:dised_patch} for a visualization of the ranges of $x,y,x',y'$.
    
    Consider arbitrary $(x,x') \in (\Left(r) \ \times \ \fragment{\Right(q')}{\Right(s')})
    \ \cup \ (\fragment{\Left(r)}{\Left(q)} \ \times \ \Right(q'))$ and
    $(y,y') \in  (\Right(r) \times \fragment{\Right(q')}{\Right(s')})
    \ \cup \ (\fragment{\Right(q)}{\Right(r)} \ \times \ \Right(s'))$.
    We perform a case distinction.
    \begin{enumerate}[(1)]
        \item If there exists a node $t \in \bS$ such that $r \preceq t \prec q$
        and $\similarity(\bF\fragmentco{x}{y}, \bF'\fragmentco{x'}{y'})$ aligns $\sub(t)$ to a subtree $\bF'\fragmentco{x'}{y'}$,
        then we note $\similarity(\bF\fragmentco{x}{y}, \bF'\fragmentco{x'}{y'}) = \similarity_q(\bF\fragmentco{x}{y}, \bF'\fragmentco{x'}{y'})$,
        where $\similarity_q$ comes from \cref{lem:sim_w_cut} applied on $\sub(r)$, $\bF'\fragmentco{\Right(q')}{\Right(s')}$, $\bS$ and $q$.
     
        \item Otherwise, by \cref{prp:align_or_remove},
        $\similarity(\bF\fragmentco{x}{y}, \bF'\fragmentco{x'}{y'})$ = $\similarity(\bF\fragmentco{x}{\Left(q)} + \bF\fragmentco{\Left(q)}{y}, \bF'\fragmentco{x'}{y'})$.
        From \cref{rmk:paired_anchors}\eqref{it:paired_anchors:a}, follows that
        $\Left(q) \times \fragment{\Right(q')}{\Right(s')}$ and $\Right(q) \times \fragment{\Right(q')}{\Right(s')}$ are paired anchor sets of $\similarity(\bF\fragmentco{x}{y}, \bF'\fragmentco{x'}{y'})$. Thus, we can write
        \begin{align*}
            \similarity(\bF\fragmentco{x}{y}, \bF'\fragmentco{x'}{y'}) =
            \max\nolimits_{z',w' \in \fragment{\Right(q')}{\Right(s')} \ : x'\leq z' \leq w' \leq y'} \Big\{ \
                &\similarity(\bF\fragmentco{x}{\Left(q)}, \bF'\fragmentco{x'}{z'}) \\
                &+ \similarity(\sub(q), \bF'\fragmentco{z'}{w'})\\
                &+ \similarity(\bF\fragmentco{\Right(q)}{y}, \bF'\fragmentco{w'}{y'})
            \ \Big\}.
        \end{align*}
    \end{enumerate}
    
    This concludes our case distinction.
    To compute $\similarity(\bF\fragmentco{x}{y}, \bF'\fragmentco{x'}{y'})$ for the desired ranges of $x,y,x',y'$, we break down the index set for $x,y,x',y'$ as follows (see \cref{fig:dised_patch}):
    \begin{align}
        \MoveEqLeft \{\ (x,x',y,y') \mid (x,x') \in (\Left(r) \times \fragment{\Right(q')}{\Right(s')})
    \cup (\fragment{\Left(r)}{\Left(q)} \times \Right(q')), \ \nonumber \\
         \MoveEqLeft (y,y') \in  (\Right(r) \times \fragment{\Right(q')}{\Right(s')})
     \cup (\fragment{\Right(q)}{\Right(r)} \times \Right(s')) \ \} = \nonumber \\[5pt]
        \vspace{10pt}
        & \{\ (x,x',y,y') \mid (x,x') \in \Left(r) \times \fragment{\Right(q')}{\Right(s')}, \
            (y,y') \in \Right(r) \times \fragment{\Right(q')}{\Right(s')} \ \} \label{eq:sed_specialcase:1}\\
        \cup \quad
        & \{\ (x,x',y,y') \mid (x,x') \in \Left(r) \times \fragment{\Right(q')}{\Right(s')}, \
        (y,y') \in \fragment{\Right(q)}{\Right(r)} \times \Right(s')\ \} \label{eq:sed_specialcase:2}\\
        \cup \quad
        & \{\ (x,x',y,y') \mid (x,x') \in \fragment{\Left(r)}{\Left(q)} \times \Right(q'), \
        (y,y') \in \Right(r) \times \fragment{\Right(q')}{\Right(s')} \ \} \label{eq:sed_specialcase:3}\\
        \cup \quad
        & \{\ (x,x',y,y') \mid (x,x') \in \fragment{\Left(r)}{\Left(q)} \times \Right(q'), \
        (y,y') \in \fragment{\Right(q)}{\Right(r)} \times \Right(s') \ \}. \label{eq:sed_specialcase:4}
    \end{align}
     
     We construct the matrices $A^{(1)} = (a_{i,j}^{(1)})$, $A^{(2)} = (a_{i,j}^{2})$, $B = (b_{i,j})$, $C^{(1)} = (c_{i,j}^{(1)})$, $C^{(2)} = (c_{i,j}^{(2)})$ of sizes $\mathsf{rsz'} \times \mathsf{rsz'}$, $\mathsf{lsz} \times \mathsf{rsz'}$, 
      $\mathsf{rsz'} \times \mathsf{rsz'}$,  $\mathsf{rsz'} \times \mathsf{rsz'}$, $\mathsf{rsz} \times \mathsf{rsz'}$,
      where $\mathsf{rsz'} = \Right(s')-\Right(q')+1$, $\mathsf{lsz} = \Left(q)-\Left(r)+1$, $\mathsf{rsz} = \Right(r)-\Right(q)+1$, and defined as (in the following definitions, we index matrix entries from zero, instead of one):
      \begin{align*}
        a_{i,j}^{(1)} &= \begin{cases}
            \similarity(\bF\fragmentco{\Left(r)}{\Left(q)}, \bF'\fragmentco{(\Right(q')+i)}{(\Right(q')+j)}) & i \leq j\\
            -\infty & \text{otherwise}
        \end{cases}, \\
        a_{i,j}^{(2)} &= \similarity(\bF\fragmentco{(\Left(r)+i)}{\Left(q)}, \bF'\fragmentco{\Right(q')}{(\Right(q')+j)}) \\
        b_{i,j} &=
        \begin{cases}
            \similarity(\sub(q), \bF'\fragmentco{\Right(q')+i}{\Right(q')+j}) & i \leq j\\
            -\infty & \text{otherwise}
        \end{cases},\\
        c_{i,j}^{(1)} &= 
         \begin{cases} \similarity(\bF\fragmentco{\Right(q)}{\Right(r)}, \bF'\fragmentco{(\Right(q')+i)}{(\Right(q')+j)} & i\leq j\\
         -\infty & \text{otherwise}
         \end{cases} \\
         c_{i,j}^{(2)} &= \similarity(\bF\fragmentco{(\Right(q)+j)}{\Right(r)}, \bF'\fragmentco{(\Right(q')+i)}{\Right(q')}
    \end{align*}

    Observe that all matrices \( A^{(1)}, A^{(2)}, B, C^{(1)}, \) and \( C^{(2)} \) have dimensions bounded by \( m \). We now outline the process for retrieving the required information to construct these matrices. The entries of \( A^{(1)} \) and \( A^{(2)} \) are derived from whole-infix and suffix-prefix similarities as specified in \cref{thm:fed} for \(\bF\fragmentco{\Left(r)}{\Left(q)}\) and \(\bF'\fragmentco{\Right(q')}{\Right(s')}\), respectively. The entries for matrix \( B \) are directly provided by Input~\eqref{it:sed:input5} of the $(s,s',q,q')$\text{-}\DISED instance. Similarly, \( C^{(1)} \) and \( C^{(2)} \) are constructed from whole-infix and suffix-prefix similarities via \cref{thm:fed} on \(\bF\fragmentco{\Right(q)}{\Right(s)}\) and \(\bF'\fragmentco{\Right(q')}{\Right(s')}\). Given the dimensions, this process requires no more than \(\Oh(\sT_{\MUL}(m) + m^{2+o(1)})\) time. 
    \unbalanced{In the unweighted setting, we note that all matrices are monotone and have entries bounded by global tree size $\Oh(\D)$. Furthermore, we apply \Cref{thm:unweighted-fed-balance} to obtain the necessary inputs.
    The total time requires no more than $\Oh(\sT_{\MonMUL}(m, m, m, \D) + m^{2 + o(1)} g(\D))$ time.}
    It is important to note that in both cases where \cref{thm:fed} is applied, the required similarities \(\similarity(\sub(v), \sub(v'))\) are available from the \SED instance, because at most one of \( v \) and \( v' \) belongs to a spine.
    
    Note that in \(\Oh(\sT_{\MUL}(m))\) time we can also compute via max-plus products the four matrices 
    \[
        D^{(11)} = (d^{(11)}_{i,j}), \ D^{(21)} = (d^{(21)}_{i,j}), \ D^{(12)} = (d^{(12)}_{i,j}) \quad \text{and} \quad D^{(22)} = (d^{(22)}_{i,j}),
    \]
    defined as $D^{(\mathtt{xy})} = A^{(\mathtt{x})} \star B \star C^{(\mathtt{y})}$ for $\mathtt{x},\mathtt{y} \in \{1,2\}$.
    
    By the discussion in the case distinction, we obtain that we can compute $\similarity(\bF\fragmentco{x}{y}, \bF'\fragmentco{x'}{y'})$ for the sets \eqref{eq:sed_specialcase:1}, \eqref{eq:sed_specialcase:2}, \eqref{eq:sed_specialcase:3}, and \eqref{eq:sed_specialcase:4} as
    \[
        \similarity(\bF\fragmentco{x}{y}, \bF'\fragmentco{x'}{y'}) = 
        \max \left\{
        \ \similarity_q(\bF\fragmentco{x}{y}, \bF'\fragmentco{x'}{y'}) \ , \
        \left\{
        \begin{aligned}
            & d^{(11)}_{x' - \Right(q'), y' - \Right(q')} &&(x,y,x',y') \in \eqref{eq:sed_specialcase:1} \\
            & d^{(12)}_{x' - \Right(q'), y - \Right(q)} &&(x,y,x',y') \in \eqref{eq:sed_specialcase:2} \\
            & d^{(21)}_{y - \Left(r), y' - \Right(q')} &&(x,y,x',y') \in \eqref{eq:sed_specialcase:3} \\
            & d^{(22)}_{y - \Left(r), y - \Right(q)} &&(x,y,x',y') \in \eqref{eq:sed_specialcase:4}
        \end{aligned}
         \right\}
        \right\},
    \]        
    where the values coming from $\similarity_q$ can be obtained in time $\Oh(\sT_{\MUL}(m) + m^{2+o(1)})$ \unbalanced{($\Oh(\sT_{\MonMUL}(m, m, m, \D) + m^{2 + o(1)} g(\D))$ in the unweighted setting)}
    using \cref{lem:sim_w_cut} on $\sub(r)$, $\bF'\fragmentco{\Right(q')}{\Right(s')}$, $\bS$ and $q$.
    As for the two applications of \cref{thm:sed}, the necessary input comes from the \SED instance.
\end{proof}

The proofs in the remaining part of this (sub)sections closely resemble the approach taken in \cref{claim:sed_specialcase}.
They tackle computing similarities of the form $\similarity(\bF\fragmentco{x}{y}, \bF'\fragmentco{x'}{y'})$ by distinguishing various possible cases for the ranges of $x,y,x',y'$.
Each case will be addressed with expressions that may be rewritten as max-plus products, where the necessary matrices can be 
constructed using the inputs of the $(s,s',q,q')$\text{-}\DISED instance, or by applying one of \cref{lem:sim_w_cut} or \cref{thm:fed}. Ultimately, we select the maximum among these cases.

For brevity, we omit details beyond the initial case distinction in these proofs—specifically,
how all distinct cases are put together into a single formula and how the distinct cases
can be written as max-plus product between matrices (we still specify where the entries of these matrices can be retrieved from).
It is important to note that in each of these cases,
we employ max-plus products on matrices no larger than the instance size and apply \cref{thm:fed} and \cref{lem:sim_w_cut} exclusively to forests of instance size,
ensuring that either no nodes from $\bS$ or no nodes from $\bS'$ are included.
In the unweighted tree edit distance problem, we further note that all matrices are monotone with entries bounded by the size of the trees.

\begin{lemma} \label{lem:sed_specialcase}
    Suppose we are provided with the input of an $(s,s',q,q')$\text{-}\DISED instance of size $m$.
    Further, let $r \in \bS$ such that $s \prec r \prec q$.
    Then, we can compute $\similarity(\bF\fragmentco{x}{y}, \bF'\fragmentco{x'}{y'})$
    for all
    $(x,x') \in (\Left(r) \ \times \ \fragment{\Right(q')}{\Right(s')})
    \ \cup \ (\fragment{\Left(s)}{\Left(q)} \ \times \ \Right(q'))$
    and
    $(y,y') \in  (\Right(r) \times \fragment{\Right(q')}{\Right(s')})
    \ \cup \ (\fragment{\Right(q)}{\Right(r)} \ \times \ \Right(s'))$
    in time $\Oh(\sT_{\MUL}(m) + m^{2+o(1)})$.

    \unbalanced{Furthermore, if the \DISED instance is unweighted and has global tree size $\D$, then the algorithm requires time $\Oh(\sT_{\MonMUL}(m, m, m, \D) + m^{2+o(1)} g(\D))$ where $\sT_{\MonMUL}(m, m, m, \D) = \Oh(f(m) g(\D))$.}
\end{lemma}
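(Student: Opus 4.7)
The plan is to mirror the proof of \cref{claim:sed_specialcase}, but with the alignment branch enlarged to consider spine subtrees $\sub(t)$ for all $t \in \bS$ with $s \preceq t \prec q$ (rather than only $r \preceq t \prec q$). For the portion of the output range already handled by \cref{claim:sed_specialcase}, namely $(x,x') \in (\{\Left(r)\} \times \fragment{\Right(q')}{\Right(s')}) \cup (\fragment{\Left(r)}{\Left(q)} \times \{\Right(q')\})$, I simply invoke that claim; the remaining work is to cover the new values $(x,x') \in \fragmentco{\Left(s)}{\Left(r)} \times \{\Right(q')\}$ using the same two-case analysis.

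In Case~1, some $t \in \bS$ with $s \preceq t \prec q$ has $\sub(t)$ aligned to a subtree inside $\bF'\fragmentco{x'}{y'}$. I apply \cref{lem:sim_w_cut} to the inner tree $\sub(s)$, the inner forest $\bF'\fragmentco{\Right(q')}{\Right(s')}$, the inner spine $\bS \cap \sub(s)$, and the cut node $q$. All required $\similarity(\sub(u), \sub(v'))$ inputs are provided by the \SED instance, since $\bF'\fragmentco{\Right(q')}{\Right(s')}$ contains no node of $\bS'$: any spine ancestor $t'$ of $q'$ satisfies $\Left(t') \leq \Left(q') < \Right(q')$ and any spine descendant satisfies $\Left(t') < \Right(q')$ as well, so neither lies in $\bF'\fragmentco{\Right(q')}{\Right(s')}$. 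By \cref{lem:sim_w_cut}, the resulting value $\similarity_q(\bF\fragmentco{x}{y}, \bF'\fragmentco{x'}{y'})$ equals the true similarity whenever such an alignment exists, at cost $\Oh(\sT_{\MUL}(m) + m^{2+o(1)})$.

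In Case~2, no such spine subtree is aligned. Applying \cref{prp:align_or_remove} with $v = q$ yields $\similarity(\bF\fragmentco{x}{y}, \bF'\fragmentco{x'}{y'}) = \similarity(\bF\fragmentco{x}{\Left(q)} + \bF\fragmentco{\Left(q)}{y}, \bF'\fragmentco{x'}{y'})$, and \cref{rmk:paired_anchors}\eqref{it:paired_anchors:a} then gives that $\{\Left(q)\} \times \fragment{1}{2\abs{\bF'}+1}$ and $\{\Right(q)\} \times \fragment{1}{2\abs{\bF'}+1}$ are paired anchor sets. The resulting decomposition
\begin{align*}
   \similarity(\bF\fragmentco{x}{y}, \bF'\fragmentco{x'}{y'}) = \max_{z', w'} \Big\{ \
   &\similarity(\bF\fragmentco{x}{\Left(q)}, \bF'\fragmentco{x'}{z'}) + \similarity(\sub(q), \bF'\fragmentco{z'}{w'}) \\
   &+ \similarity(\bF\fragmentco{\Right(q)}{y}, \bF'\fragmentco{w'}{y'}) \ \Big\}
\end{align*}
I compute by a constant number of max-plus products of matrices of dimension $\Oh(m)$. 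The middle summand is \DISED Input~\eqref{it:sed:input5}, and the third summand is a suffix-prefix similarity of $(\bF\fragmentco{\Right(q)}{\Right(s)}, \bF'\fragmentco{\Right(q')}{\Right(s')})$ obtainable via \cref{thm:fed} exactly as in \cref{claim:sed_specialcase}.

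The main obstacle, and the reason this lemma strictly extends \cref{claim:sed_specialcase}, is retrieving the first summand $\similarity(\bF\fragmentco{x}{\Left(q)}, \bF'\fragmentco{\Right(q')}{z'})$ for $x$ as small as $\Left(s)$. The key observation is that $\bF\fragmentco{\Left(s)}{\Left(q)}$ contains no node of $\bS$: every spine ancestor of $q$ has $\Right$-index $> \Left(q)$; $q$ itself satisfies $\Right(q) > \Left(q)$; and spine descendants of $q$ have $\Right$-index $> \Left(q)$ too, so in all cases the condition $\Right(\cdot) \leq \Left(q)$ fails. Combined with the analogous fact for $\bF'\fragmentco{\Right(q')}{\Right(s')}$, all pairwise subtree similarities between these two forests are supplied by the \SED input, enabling a single invocation of \cref{thm:fed} on $(\bF\fragmentco{\Left(s)}{\Left(q)}, \bF'\fragmentco{\Right(q')}{\Right(s')})$ in time $\Oh(\sT_{\MUL}(m) + m^{2+o(1)})$ to retrieve all required values. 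Taking the maximum over Cases~1 and~2 produces the result within the claimed budget. For the unweighted case, the involved matrices are monotone with entries bounded by the global tree size $\D$; replacing \cref{thm:fed} with \cref{thm:unweighted-fed-balance} and using the unweighted variant of \cref{lem:sim_w_cut} yields the $\Oh(\sT_{\MonMUL}(m, m, m, \D) + m^{2+o(1)} g(\D))$ bound.
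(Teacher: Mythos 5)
The paper's proof does not redo the case analysis of \cref{claim:sed_specialcase} at all. It observes that the ranges differ only in $x$ (extended from $\geq \Left(r)$ to $\geq \Left(s)$) and that, for the new indices $x \in \fragmentco{\Left(s)}{\Left(r)}$ (with $y \leq \Right(r)$), the subforest decomposes cleanly: $\bF\fragmentco{x}{y} = \bF\fragmentco{x}{\Left(r)} + \bF\fragmentco{\Left(r)}{y}$, because no strict ancestor of $r$ can lie in $\bF\fragmentco{x}{y}$ when $y \leq \Right(r)$. By \cref{rmk:mao} this gives the anchor set $\Left(r) \times \fragment{\Right(q')}{\Right(s')}$, so the similarity splits into a first summand supplied by \cref{thm:fed} on $(\bF\fragmentco{\Left(s)}{\Left(r)}, \bF'\fragmentco{\Right(q')}{\Right(s')})$ and a second summand that is \emph{exactly} one of the outputs already computed by \cref{claim:sed_specialcase}. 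Your plan — rerun the two-case argument with $\sub(s)$ in place of $\sub(r)$ and $\Left(s)$ in place of $\Left(r)$ — is heavier, and it breaks in Case~1.

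The gap is the choice of the first argument to \cref{lem:sim_w_cut}. That lemma returns $\similarity_q(\bF\fragmentco{x}{y}, \bF'\fragmentco{x'}{y'})$ only for $(y,y')$ on the top-right border of the grid formed by the two forests you pass. With first argument $\sub(s)$ that border is $y = \Right(s)$ or $y' = \Right(s')$, but the lemma you are trying to prove requires $(y,y') \in (\Right(r) \times \fragment{\Right(q')}{\Right(s')}) \cup (\fragment{\Right(q)}{\Right(r)} \times \Right(s'))$; the pairs with $y = \Right(r)$ and $y' < \Right(s')$ are simply not in the output of your \cref{lem:sim_w_cut} call whenever $r \neq s$, so Case~1 fails to produce those values. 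The fix would be to pass the subforest $\bF\fragmentco{\Left(s)}{\Right(r)}$ (whose right boundary corresponds to global index $\Right(r)$, and in which $\{t \in \bS : r \preceq t\}$ is still a valid spine through $q$); better still, adopt the paper's split at $\Left(r)$, which eliminates the need for a fresh \cref{lem:sim_w_cut} call and for the case analysis altogether. The rest of your argument — that $\bF\fragmentco{\Left(s)}{\Left(q)}$ and $\bF'\fragmentco{\Right(q')}{\Right(s')}$ contain no spine nodes, so the \SED input supplies all subtree similarities needed for the \FED calls — is correct and matches what the paper relies on implicitly.
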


\begin{proof}
    Refer to \cref{fig:dised_patch} for a visualization of the ranges of $x,y,x',y'$
    for which $\similarity(\bF\fragmentco{x}{y}, \bF'\fragmentco{x'}{y'})$ needs to be computed.
    Observe in \cref{fig:dised_patch} that the ranges of $x,y,x',y'$ for \cref{lem:sed_specialcase},
    are almost identical to the ones for \cref{claim:sed_specialcase}.
    Thus, to prove \cref{lem:sed_specialcase},
    it suffices to compute the similarities for the ranges of $x$ that are left out in \cref{claim:sed_specialcase}.
    That is, we need to compute $\similarity(\bF\fragmentco{x}{x'}, \bF'\fragmentco{y}{y'})$
    for 
    \[
        (x,x') \in (\fragment{\Left(s)}{\Left(r)} \times \Right(q')) 
        \quad
        \text{and}
        \quad
     (y,y') \in  (\Right(r) \times \fragment{\Right(q')}{\Right(s')})
        \ \cup \ (\fragment{\Right(q)}{\Right(r)} \ \times \ \Right(s')).
    \]

    To this end, observe that for such values of $x$ and $y$,
    we have $\bF\fragmentco{x}{y} = \bF\fragmentco{x}{\Left(r)} + \bF\fragmentco{\Left(r)}{y}$.
    By \cref{rmk:mao}, we have that
    $\Left(r) \times \fragment{\Right(q')}{\Right(s')}$ serves as an anchor set for
    $\similarity(\bF\fragmentco{x}{y}, \bF'\fragmentco{x'}{y'})$.
    Consequently, for such ranges of $x,y,x',y'$ we can write:

    \begin{align*}
        \similarity(\bF\fragmentco{x}{y}, \bF'\fragmentco{x'}{y'}) =
        \max_{z' \in \fragment{\Right(q')}{\Right(s')} \mid z' \leq y'} \Big\{ \
        &\similarity(\bF\fragmentco{x}{\Left(r)}, \bF'\fragmentco{x'}{z'})\\
        &+ \similarity(\bF\fragmentco{\Left(r)}{y}, \bF'\fragmentco{z'}{y'})
        \ \Big\}.
    \end{align*}
    
    Recall that we are considering the case \( x' = \Right(q') \). Additionally, note that one of \( y \) and \( y' \) is always fixed, specifically either \( y = \Right(r) \) or \( y = \Right(s') \). This means that, in the final expression, we are always working with at most two of \( x, y, x', y' \) which are not fixed. As before, this expression can be computed as a max-plus product, with \( z' \) traversing the hidden dimension of the product. In this computation, the second summand is obtained from \cref{claim:sed_specialcase}, while the first summand is calculated by applying \cref{thm:fed} on \(\bF\fragmentco{\Left(s)}{\Left(r)}\) and \(\bF'\fragmentco{\Right(q')}{\Right(s')}\).
    \unbalanced{In the unweighted setting, we again observe that the matrices are monotone and have entries bounded by the global tree size $\Oh(\D)$, and apply \Cref{thm:unweighted-fed-balance} to compute \FED.}
\end{proof}

\begin{lemma} \label{lem:sed_specialcase2}
    Suppose we are provided with the input of an $(s,s',q,q')$\text{-}\DISED instance of size $m$.
    Further, let $r \in \bS$ such that $s \prec r \prec q$.
    Then, we can compute $\similarity(\bF\fragmentco{x}{y}, \bF'\fragmentco{x'}{y'})$
    for all
    $(x,x') \in (\fragment{\Left(s)}{\Left(r)} \ \times \ \Left(q'))$
    and
    $(y,y') \in  (\Right(r) \times \fragment{\Right(q')}{\Right(s')})
    \ \cup \ (\fragment{\Right(q)}{\Right(r)} \ \times \ \Right(s'))$
    in time $\Oh(\sT_{\MUL}(m) + m^{2+o(1)})$.

    \unbalanced{Furthermore, if the \DISED instance is unweighted and has global tree size $\D$, then the algorithm requires time $\Oh(\sT_{\MonMUL}(m, m, m, \D) +  m^{2+o(1)} g(\D))$ where $\sT_{\MonMUL}(m, m, m, \D) = \Oh(f(m) g(\D))$.}
\end{lemma}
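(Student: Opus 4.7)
The plan is to follow \cref{lem:sed_specialcase}'s blueprint and extend it to the new case $x' = \Left(q')$ by adding one application of \cref{thm:fed} and carefully handling a boundary subcase.

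Since for $x \in \fragment{\Left(s)}{\Left(r)}$ and $y \leq \Right(r)$ every strict ancestor of $r$ has $\Right > y$ and so does not survive in $\bF\fragmentco{x}{y}$, we have $\bF\fragmentco{x}{y} = \bF\fragmentco{x}{\Left(r)} + \bF\fragmentco{\Left(r)}{y}$. \cref{rmk:mao} then certifies $\Left(r) \times \fragment{\Left(q')}{y'}$ as an anchor set of $\similarity(\bF\fragmentco{x}{y}, \bF'\fragmentco{\Left(q')}{y'})$, so I would write
\begin{align*}
  \similarity(\bF\fragmentco{x}{y}, \bF'\fragmentco{\Left(q')}{y'})
  = \max_{z' \in \fragment{\Left(q')}{y'}} \Big\{
    &\similarity(\bF\fragmentco{x}{\Left(r)}, \bF'\fragmentco{\Left(q')}{z'}) \\
    &+ \similarity(\bF\fragmentco{\Left(r)}{y}, \bF'\fragmentco{z'}{y'})
  \Big\}.
\end{align*}
For the first summand I would apply \cref{thm:fed} to $\bF\fragmentco{\Left(s)}{\Left(r)}$ and $\bF'\fragmentco{\Left(q')}{\Right(s')}$: the former contains no spine nodes at all while the latter contains only $q'$, so every required value $\similarity(\sub(v), \sub(v'))$ is available from the \SED input. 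For the second summand with $z' \in \fragment{\Right(q')}{y'}$ I simply read off \cref{lem:sed_specialcase} with $x \gets \Left(r)$ and $x' \gets z'$; because $q'$ is a leaf, $\bF'\fragmentco{\Left(q')+1}{y'} = \bF'\fragmentco{\Right(q')}{y'}$, so the intermediate index $z' = \Left(q')+1$ reuses the same value.

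Only $z' = \Left(q')$ remains, and there the second summand becomes $\similarity(\bF\fragmentco{\Left(r)}{y}, \bF'\fragmentco{\Left(q')}{y'})$---the very quantity being computed, specialised to $x = \Left(r)$. Resolving this self-reference is the main obstacle. My plan is to compute the $x = \Left(r)$ slice up front via a separate decomposition. Using $\bF'\fragmentco{\Left(q')}{y'} = \sub(q') + \bF'\fragmentco{\Right(q')}{y'}$ with the symmetric form of \cref{rmk:mao} gives the anchor set $\fragment{1}{2\abs{\bF}+1} \times \Right(q')$, whence
\begin{align*}
  \similarity(\bF\fragmentco{\Left(r)}{y}, \bF'\fragmentco{\Left(q')}{y'})
  = \max_{z \in \fragment{\Left(r)}{y}} \Big\{
    &\similarity(\bF\fragmentco{\Left(r)}{z}, \sub(q')) \\
    &+ \similarity(\bF\fragmentco{z}{y}, \bF'\fragmentco{\Right(q')}{y'})
  \Big\}.
\end{align*}
Since $\sub(q') = \{q'\}$ is a single node, the first summand equals $\max(0, \max_{u \in \bF\fragmentco{\Left(r)}{z}} \eta(u, q'))$ and can be tabulated over all relevant $z$ in $\Oh(m)$ time. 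The second summand is supplied by \cref{lem:sed_specialcase} when $z \in \{\Left(r)\} \cup \fragment{\Left(r)}{\Left(q)}$ and is trivially $0$ when $z = \Right(r)$; for the intermediate $z \in \fragment{\Right(q)}{\Right(r)}$ I would combine \cref{lem:sim_w_cut}---which handles alignments in which some $\sub(t)$ for a spine node $r \preceq t \prec q$ is matched to a subtree of the spine-free $\bF'\fragmentco{\Right(q')}{\Right(s')}$---with an auxiliary \cref{thm:fed} call on the spine-free $\bF\fragmentco{\Right(q)}{\Right(r)}$ and $\bF'\fragmentco{\Right(q')}{\Right(s')}$, mirroring the structure already used inside \cref{claim:sed_specialcase}.

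Rewriting each maximum as a max-plus product on matrices of dimension $\Oh(m)$ and assembling the pieces in the same way as \cref{lem:sed_specialcase} gives a total cost of $\Oh(\sT_{\MUL}(m) + m^{2+o(1)})$. In the unweighted setting every matrix involved is row- or column-monotone and its entries are bounded by the global tree size $\D$, so the min-plus products can be replaced by their bounded-monotone counterpart, yielding $\Oh(\sT_{\MonMUL}(m, m, m, \D) + m^{2+o(1)} g(\D))$.
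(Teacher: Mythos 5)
Your approach diverges from the paper's and contains a genuine gap that undermines it. The fundamental problem is that you implicitly treat $q'$ (and, for the coverage of $z$, also $q$) as a leaf. The lemma, however, is a statement about \emph{arbitrary} $(s,s',q,q')$-\DISED instances, and in the recursive applications of \cref{lem:sed_patch} inside \cref{lem:dised_divide} (combined with swap symmetry), $q$ and $q'$ become intermediate spine nodes with $\sub(q')$ and $\sub(q)$ of unbounded size. This assumption breaks your argument in several concrete places: the claim ``$\sub(q') = \{q'\}$, so the first summand is $\max(0, \max_u \eta(u,q'))$'' is false for non-leaf $q'$; the claim that $z' = \Left(q')+1$ ``reuses the same value'' as $z' = \Right(q')$ fails (the range $\fragmentoo{\Left(q')}{\Right(q')}$ is then nonempty and those $z'$ are not dominated); and the remaining range $z \in \fragmentoo{\Left(q)}{\Right(q)}$ in your second decomposition is simply not covered. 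Moreover, your \cref{thm:fed} call on $\bF\fragmentco{\Left(s)}{\Left(r)}$ and $\bF'\fragmentco{\Left(q')}{\Right(s')}$ involves a forest that contains all of $\sub(q')$, whose size is not bounded by the \DISED instance size $m$, so this call alone can exceed the $\Oh(\sT_{\MUL}(m) + m^{2+o(1)})$ budget.

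The paper resolves exactly these difficulties by never ``looking inside'' $\sub(q)$ or $\sub(q')$. Instead of splitting $\bF$ at $\Left(r)$ as you do, it splits $\bF'$ at $\Right(q')$ using $\bF'\fragmentco{\Left(q')}{y'} = \sub(q') + \bF'\fragmentco{\Right(q')}{y'}$, so the anchor set is $\fragment{\Left(s)}{\Right(r)} \times \{\Right(q')\}$. It then case-distinguishes on whether the anchor's first coordinate $z$ lies in $\fragmentco{\Left(s)}{\Left(q)}$ or in $\fragment{\Left(q)}{\Right(r)}$: in the first case both pieces come directly from Input~\eqref{it:sed:input2} and \cref{lem:sed_specialcase}; in the second case the anchor is \emph{transformed} via \cref{rmk:anchor_transform}\eqref{it:anchor_transform:ii} to the fixed position $\Right(q)$, after which the first piece is part of Input~\eqref{it:sed:input1} and the second is obtained from a \cref{thm:fed} call on the spine-free, size-$\Oh(m)$ forests $\bF\fragmentco{\Right(q)}{\Right(r)}$ and $\bF'\fragmentco{\Right(q')}{\Right(s')}$. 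The anchor-transformation step is the key idea your proposal is missing; you would need something equivalent to it to handle general (non-leaf) $q$ and $q'$ within the required time budget.
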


\begin{proof} 
    Observe that for such ranges of $x',y'$ (refer to \cref{fig:dised_patch} for a visualization), we may write
    \[
        \bF'\fragmentco{x'}{y'} = \bF'\fragmentco{\Left(q')}{y'} = \sub(q') + \bF'\fragmentco{\Right(q')}{y'}.
    \]
    By \cref{rmk:mao}, we have that $\sB \coloneqq \fragment{\Left(s)}{\Right(r)} \times \Right(q')$
    is an anchor set of $\similarity(\bF\fragmentco{x}{y}, \bF'\fragmentco{x'}{y'})$.
    Now, consider the subset $\sB' \subseteq \sB$ defined as $\sB' \coloneqq \fragment{\Left(q)}{\Right(r)} \times \Right(q')$.
    We distinguish two cases.
    \begin{enumerate}[(1)]
        \item There is an anchor $(z,z') \in (\sB \setminus \sB')$,
        then we compute the similarity $\similarity(\bF\fragmentco{x}{y}, \bF'\fragmentco{x'}{y'})$ as
            \begin{align*}
            \similarity(\bF\fragmentco{x}{y}, \bF'\fragmentco{x'}{y'})
            &= \max_{\substack{z \in \fragment{\Left(s)}{\Left(q)} \mid x \leq z \leq y}} \Big\{ \
            \similarity(\bF\fragmentco{x}{z}, \sub(q'))
            + \similarity(\bF\fragmentco{z}{y}, \bF'\fragmentco{\Right(q')}{y'}) \ \Big\},
        \end{align*}

        using as first summand  the input of the $(s,s',q,q')$\text{-}\DISED instance,
        and as second summands the values obtained via \cref{lem:sed_specialcase}.
        \unbalanced{In the unweighted setting, observe that entries are bounded by global tree size $\Oh(\D)$.}

        \item There is an anchor $(z,z') \in \sB'$.
        Notice that for every $(z,z') \in \sB'$, we have
        \[
            \bF\fragmentco{z}{y} = \bF\fragmentco{z}{\Right(q)} + \bF\fragmentco{\Right(q)}{y}.
        \]
        By \cref{rmk:mao}, we have that $\sB'' \coloneqq \Right(q) \times \fragment{\Right(q')}{y'}$
        is an anchor set of $\similarity(\bF\fragmentco{x}{z}, \bF\fragmentco{x'}{z'})$.
        But since $(z,z')$ is an anchor of $\similarity(\bF\fragmentco{x}{y}, \bF'\fragmentco{x'}{y'})$, by \cref{rmk:anchor_transform}\eqref{it:anchor_transform:ii}, we obtain that $\sB''$ is anchor set of $\similarity(\bF\fragmentco{x}{y}, \bF'\fragmentco{x'}{y'})$.
        Therefore, we can calculate
        \begin{align*}
            \similarity(\bF\fragmentco{x}{y}, \bF'\fragmentco{x'}{y'})
            = \max\nolimits_{(z,z') \in \sB''} \Big\{ \
            \similarity(\bF\fragmentco{x}{z}, \bF'\fragmentco{x'}{z'})
            + \similarity(\bF\fragmentco{z}{y}, \bF'\fragmentco{z'}{y'}) \ \Big\}.
        \end{align*}
        In this last expression,
        we can get the first summand from the input of the $(s,s',q,q')$\text{-}\DISED instance,
        and the second term from \cref{thm:fed} applied on $\bF\fragmentco{\Right(q)}{\Right(r)}$
        and $\bF'\fragmentco{\Right(s')}{\Right(s')}$.
        \unbalanced{In the unweighted setting, observe that entries are bounded by global tree size $\Oh(\D)$, and we use \Cref{thm:unweighted-fed-balance} to compute \FED.}
        \qedhere
    \end{enumerate}
\end{proof}

\sedpatch

\begin{proof}
    Note, as the inputs of the $(r,s',q,q')$\text{-}\DISED instance are a subset of the inputs of the $(s,s',q,q')$\text{-}\DISED instance,
    we can already assume the outputs of the $(r,s',q,q')$\text{-}\DISED at our disposal.
    \unbalanced{In the unweighted setting, we observe that all matrices are monotone and have entries bounded by $\Oh(\D)$.}

    \begin{claim} \label{cl:retrieveinputs1}
        With all the values at our disposal so far,
        we can compute the inputs for the $(s,s',r,q')$\text{-}\DISED instance in time $\Oh(\sT_{\MUL}(m) + m^{2+o(1)})$.

        \unbalanced{Furthermore, if the \DISED instance is unweighted and has global tree size $\D$, then we can do so in time $\Oh(\sT_{\MonMUL}(m, m, m, \D) +  m^{2+o(1)} g(\D))$ where $\sT_{\MonMUL}(m, m, m, \D) = \Oh(f(m) g(\D))$.}
    \end{claim}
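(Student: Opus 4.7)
The inputs to the $(s,s',r,q')$\text{-}\DISED instance comprise five groups~(i)--(v). The plan is to observe that groups~(ii) and~(iii) are directly inherited from Inputs~\eqref{it:sed:input2} and~\eqref{it:sed:input3} of the original $(s,s',q,q')$\text{-}\DISED instance (since $\fragment{\Left(s)}{\Left(r)} \subseteq \fragment{\Left(s)}{\Left(q)}$ and $\fragment{\Right(r)}{\Right(s)} \subseteq \fragment{\Right(q)}{\Right(s)}$), while the remaining groups~(i),~(iv), and~(v) will be computed through a constant number of max-plus products over $\Oh(m) \times \Oh(m)$ matrices whose entries can be retrieved from values already available.

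Group~(i) partitions into the four subranges of $(x, x', y, y')$ depicted in the lower-left quadrant of \cref{fig:dised_patch}. The subrange with $x = \Left(r)$ and $y = \Right(r)$ (so that $\bF\fragmentco{x}{y} = \sub(r)$, and $(x', y')$ lies on the left-then-right border of the spine region of $\bF'$) is a subset of the outputs of the $(r,s',q,q')$\text{-}\DISED instance, already at our disposal. For the remaining three subranges we have $\sub(q) \subseteq \bF\fragmentco{x}{y}$, so we apply \cref{prp:align_or_remove} along the path from the virtual root of $\bF$ to $q$: either the alignment aligns $\sub(t)$ for some spine node $r \preceq t \prec q$, which is handled via $\similarity_q$ from \cref{lem:sim_w_cut} applied on $\sub(r)$; or every such spine node is deleted, in which case \cref{rmk:paired_anchors}\eqref{it:paired_anchors:b} yields paired anchor sets at $\Left(q), \Left(q')$ and $\Right(q), \Right(q')$, allowing us to express the similarity as a triple max-plus product whose middle matrix uses Inputs~\eqref{it:sed:input4} and~\eqref{it:sed:input5}, and whose outer matrices are whole-infix or suffix-prefix similarities supplied by \cref{thm:fed} applied on subforests of $\bF$ that contain no spine nodes (so the required pairwise subtree similarities are available from the \SED instance).

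Group~(v) is exactly the $x = \Left(r), y = \Right(r)$ slice of \cref{lem:sed_specialcase} and is therefore already computed; group~(iv), covering the symmetric range $x', y' \in \fragment{\Left(s')}{\Left(q')}$, is obtained by repeating the chain of arguments of \cref{claim:sed_specialcase}, \cref{lem:sed_specialcase}, and \cref{lem:sed_specialcase2} under reverse symmetry (i.e.\ reversing both $\bF$ and $\bF'$), again within the $\Oh(\sT_{\MUL}(m) + m^{2+o(1)})$ budget. The main obstacle is the careful bookkeeping: verifying that for each subrange the chosen anchor-set decomposition is valid, each max-plus matrix has dimensions bounded by $\Oh(m)$, and every entry is retrievable either from a previous claim, from the outputs of the $(r,s',q,q')$\text{-}\DISED instance, or from one of the three preceding sub-lemmas. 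In the unweighted setting, all relevant matrices remain monotone with entries bounded by the global tree size $\D$, so each min-plus product can be replaced by its monotone counterpart and \cref{thm:fed} by \cref{thm:unweighted-fed-balance}, yielding the unweighted time bound $\Oh(\sT_{\MonMUL}(m, m, m, \D) + m^{2+o(1)} g(\D))$.
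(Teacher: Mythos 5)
Your handling of groups~(ii) and~(iii) (inherited from the original input), group~(v) (the $x=\Left(r), y=\Right(r)$ slice of \cref{lem:sed_specialcase}), and group~(iv) (reverse symmetry) matches the paper. The gap is in group~(i), where the paper's route is materially different and yours does not go through.

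The paper decomposes group~(i) into the four subranges \eqref{eq:subinput:1}--\eqref{eq:subinput:4} and observes that \eqref{eq:subinput:1} is already a \emph{subset of Input~(i) of the original $(s,s',q,q')$-\DISED instance} (since $\fragment{\Left(s)}{\Left(r)}\subseteq\fragment{\Left(s)}{\Left(q)}$ and $\fragment{\Right(r)}{\Right(s)}\subseteq\fragment{\Right(q)}{\Right(s)}$), that \eqref{eq:subinput:2} is exactly an output of \cref{lem:sed_specialcase2}, that \eqref{eq:subinput:3} follows by reverse symmetry, and that \eqref{eq:subinput:4} sits in the output of the $(r,s',q,q')$-\DISED instance. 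You correctly identify \eqref{eq:subinput:4}, but for the remaining three you try to rebuild the argument from scratch via \cref{prp:align_or_remove}, and that rebuild fails in two places. First, in the ``some $t$ with $r\preceq t\prec q$ is aligned'' branch you invoke \cref{lem:sim_w_cut} on $\sub(r)$, but that lemma only produces $\similarity_q$ for subforests \emph{inside} $\sub(r)$; for \eqref{eq:subinput:1} and \eqref{eq:subinput:2} you have $x\in\fragment{\Left(s)}{\Left(r)}$ and for \eqref{eq:subinput:3} you have $y\in\fragment{\Right(r)}{\Right(s)}$, so $\bF\fragmentco{x}{y}\not\subseteq\sub(r)$ and the lemma simply does not apply. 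Second, in the ``all deleted'' branch you want paired anchors at $(\Left(q),\Left(q'))$ and $(\Right(q),\Right(q'))$ via \cref{rmk:paired_anchors}\eqref{it:paired_anchors:b}, but that remark requires the alignment to detach ancestors of a chosen node $v'$ in $\bF'$ as well; \cref{prp:align_or_remove} only constrains $\bF$, and for \eqref{eq:subinput:3} the range $x'\in\fragment{\Left(s')}{\Left(q')}$ means $\bF'\fragmentco{x'}{y'}$ genuinely contains spine ancestors of $q'$ that may well be aligned. The paper sidesteps both problems: \eqref{eq:subinput:1} needs no computation at all, and \cref{lem:sed_specialcase2} handles \eqref{eq:subinput:2} by first peeling $\sub(q')$ off $\bF'\fragmentco{x'}{y'}$ with an anchor at $\Right(q')$ (via \cref{rmk:mao}), and only then performing a case distinction inside what remains. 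Without that initial peel, the case analysis you describe is not exhaustive, so your proof of the claim has a genuine gap.
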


    \begin{claimproof}
        We rewrite the indices for Input~\eqref{it:sed:input1} as (see \cref{fig:dised_patch})
        \begin{align}
            \MoveEqLeft \{\ (x,x',y,y') \mid (x,x') \in \sB^{\top}_{\Left}(s,s',r,q'), \ (y,y') \in \sB^{\top}_{\Right}(s,s',r,q')\ \} = \nonumber \\[5pt]
            \vspace{10pt}
            & \{\ (x,x',y,y') \mid (x,x') \in \fragment{\Left(s)}{\Left(r)} \  \times \ \Left(q'), \
                (y,y') \in \fragment{\Right(r)}{\Right(s)} \ \times \ \Right(q') \ \} \label{eq:subinput:1}\\
            \cup \quad
            & \{\ (x,x',y,y') \mid (x,x') \in \fragment{\Left(s)}{\Left(r)} \  \times \ \Left(q'), \
            (y,y') \in \Right(r) \ \times \ \fragment{\Right(q')}{\Right(s')}\ \} \label{eq:subinput:2}\\
            \cup \quad
            & \{\ (x,x',y,y') \mid (x,x') \in \Left(r) \ \times \ \fragment{\Left(s')}{\Left(q')}, \
            (y,y') \in \fragment{\Right(r)}{\Right(s)} \ \times \ \Right(q')\ \} \label{eq:subinput:3}\\
            \cup \quad
            & \{\ (x,x',y,y') \mid (x,x') \in \Left(r) \ \times \ \fragment{\Left(s')}{\Left(q')}, \
            (y,y') \in \Right(r) \ \times \ \fragment{\Right(q')}{\Right(s')}\ \}. \label{eq:subinput:4}
        \end{align}

        Now, we describe how to compute $\similarity(\bF\fragmentco{x}{y}, \bF\fragmentco{x'}{y'})$
        for the index sets~\eqref{eq:subinput:1}, \eqref{eq:subinput:2}, \eqref{eq:subinput:3}, and \eqref{eq:subinput:4}.
        \begin{itemize}
            \item For~\eqref{eq:subinput:1}, note that~\eqref{eq:subinput:1} is a subset of the input of the $(s,s',q,q')$\text{-}\DISED instance.

            \item For~\eqref{eq:subinput:2}, note that~\eqref{eq:subinput:2} is a subset of the outputs of \cref{lem:sed_specialcase2}.

            \item For~\eqref{eq:subinput:3}, observe that~\eqref{eq:subinput:3} is equal to~\eqref{eq:subinput:2}
                under reverse symmetry.%

            \item For~\eqref{eq:subinput:4}, note that~\eqref{eq:subinput:4} is a subset of the output of the $(r,s',q,q')$\text{-}\DISED instance.
        \end{itemize}
        It remains to explain how to retrieve Inputs~\eqref{it:sed:input2}, \eqref{it:sed:input3}, \eqref{it:sed:input4}, \eqref{it:sed:input5}
        of the $(s,s',r,q')$\text{-}\DISED instance.
        Note, Inputs~\eqref{it:sed:input2} and \eqref{it:sed:input3} are a subset of the input of the $(s,s',q,q')$\text{-}\DISED instance.
        Conversely, Input~\eqref{it:sed:input4} can be obtain using \cref{lem:sed_specialcase} on the $(s,s',r,q')$\text{-}\DISED instance, and Input~\eqref{it:sed:input5} equals to Input~\eqref{it:sed:input4} under reverse symmetry.
    \end{claimproof}

    With all the inputs at our disposal, we can call also the subroutine on the  $(s,s',r,q')$\text{-}\DISED instance, and retrieve the ouputs. We can therefore assume that also the outputs of the $(s,s',r,q')$\text{-}\DISED instance are at our disposal.
    We now show how to compute the output of the $(s,s',q,q')$\text{-}\DISED instance.

    \begin{claim} \label{cl:retrieveoutputs}
        With all the values at our disposal so far,
        we can compute the outputs of the $(s,s',q,q')$\text{-}\DISED instance in time $\Oh(\sT_{\MUL}(m) + m^{2+o(1)})$.

        \unbalanced{Furthermore, if the \DISED instance is unweighted and has global tree size $\D$, then we can do so in time $\Oh(\sT_{\MonMUL}(m, m, m, \D) +  m^{2+o(1)} g(\D))$ where $\sT_{\MonMUL}(m, m, m, \D) = \Oh(f(m) g(\D))$.}
    \end{claim}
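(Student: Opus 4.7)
\begin{claimproof}
The plan is to compute each value $\similarity(\bF\fragmentco{x}{y}, \bF'\fragmentco{x'}{y'})$ by applying \cref{prp:align_or_remove} with $v = r$ on the indices where $\sub(r) \subseteq \bF\fragmentco{x}{y}$ (equivalently $x \leq \Left(r)$ and $y \geq \Right(r)$), and with $v = q$ on the remaining indices ($x > \Left(r)$ or $y < \Right(r)$). In both cases this produces a ``spine-aligned'' subcase handled by \cref{lem:sim_w_cut}, and a ``decomposition'' subcase handled by \cref{rmk:paired_anchors}\eqref{it:paired_anchors:a}. I will split the full output set $\sB^{\bot}_{\Left}(s,s',q,q') \times \sB^{\bot}_{\Right}(s,s',q,q')$ according to these two regimes and handle each regime by a constant number of max-plus products.

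In the first regime, \cref{prp:align_or_remove} yields (a) the optimum aligns some spine ancestor of $r$ to a subtree of $\bF'$, whose value equals $\similarity_r(\bF\fragmentco{x}{y}, \bF'\fragmentco{x'}{y'})$ computed via \cref{lem:sim_w_cut} applied to $\sub(s)$, $\bF'$, $\bS$ and $r$; or (b) the similarity equals $\similarity(\bF\fragmentco{x}{\Left(r)} + \bF\fragmentco{\Left(r)}{y}, \bF'\fragmentco{x'}{y'})$. In subcase (b), \cref{rmk:paired_anchors}\eqref{it:paired_anchors:a} makes $\Left(r) \times \fragment{1}{(2\abs{\bF'}+1)}$ and $\Right(r) \times \fragment{1}{(2\abs{\bF'}+1)}$ paired anchor sets, so the similarity rewrites as
\begin{equation*}
\max_{z' \leq w'} \Big\{\similarity(\bF\fragmentco{x}{\Left(r)}, \bF'\fragmentco{x'}{z'}) + \similarity(\sub(r), \bF'\fragmentco{z'}{w'}) + \similarity(\bF\fragmentco{\Right(r)}{y}, \bF'\fragmentco{w'}{y'})\Big\}.
\end{equation*}
The middle summand is read off from the $(r,s',q,q')$-\DISED output at its corner $x = \Left(r)$, $y = \Right(r)$. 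The two outer summands will be obtained by calling \cref{thm:fed} on the spine-free forests $\bF\fragmentco{\Left(s)}{\Left(r)}$ and $\bF\fragmentco{\Right(r)}{\Right(s)}$ against a suitable $\Oh(m)$-size subforest of $\bF'$, so that every needed base similarity $\similarity(\sub(v), \sub(v'))$ is directly supplied by the original \SED input. The whole expression compiles to two chained max-plus products on matrices of dimension $\Oh(m)$.

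In the residual regime $x > \Left(r)$ or $y < \Right(r)$, the node $r$ is no longer contained in $\bF\fragmentco{x}{y}$, but $q$ still is. I apply the same scheme with $v = q$: \cref{prp:align_or_remove} yields either a $\similarity_q$-type value (already covered by the $(s,s',q,q')$-\DISED inputs or a further application of \cref{lem:sim_w_cut}) or a decomposition at $q$ via \cref{rmk:paired_anchors}\eqref{it:paired_anchors:a}, whose middle summand $\similarity(\sub(q), \bF'\fragmentco{z'}{w'})$ is directly part of the \DISED inputs (Inputs~\eqref{it:sed:input4} and~\eqref{it:sed:input5}). The outer summands are retrieved from the $(r,s',q,q')$-\DISED output (which handles the ``inner'' $\bF$-range $x \in \fragment{\Left(r)}{\Left(q)}$ and $y \in \fragment{\Right(q)}{\Right(r)}$) together with the auxiliary quantities already produced by \cref{lem:sed_specialcase} and \cref{lem:sed_specialcase2} (using reverse symmetry as needed).

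The main obstacle I expect will be the case bookkeeping: verifying, across the four subsets of indices and for each of the $v=r$ and $v=q$ branches, that every matrix entry feeding the triple max-plus product is supplied by one of \{the two sub-\DISED outputs, \cref{lem:sim_w_cut}, \cref{thm:fed} on spine-free subforests of size $\Oh(m)$, \cref{lem:sed_specialcase}, \cref{lem:sed_specialcase2}, or the \DISED inputs\}, and that the anchor ranges $(z',w')$ can be aligned so that a constant number of max-plus products suffice. Since every matrix involved has dimension $\Oh(m)$, the total running time is $\Oh(\sT_{\MUL}(m) + m^{2+o(1)})$. In the unweighted setting, all these matrices are row- or column-monotone with entries bounded by the global tree size $\D$, yielding $\Oh(\sT_{\MonMUL}(m,m,m,\D) + m^{2+o(1)} g(\D))$ as claimed.
\end{claimproof}
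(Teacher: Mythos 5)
Your high-level plan — apply \cref{prp:align_or_remove} at $r$ (or at $q$), then decompose via paired anchors — tracks the paper's toolkit, but it glosses over the single hardest issue in this claim, and as written it does not go through.

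The core problem is in your regime 1, subcase (b). You invoke \cref{rmk:paired_anchors}\eqref{it:paired_anchors:a} to get paired anchor sets $\Left(r) \times \fragment{1}{(2|\bF'|+1)}$ and $\Right(r) \times \fragment{1}{(2|\bF'|+1)}$, which only guarantees that \emph{some} anchor pair $(z',w')$ with $x' \le z' \le w' \le y'$ exists. When $x' \in \fragment{\Left(s')}{\Left(q')}$ and $y' \in \fragment{\Right(q')}{\Right(s')}$, the optimal $z'$ or $w'$ can land strictly inside $\sub(q')$ (i.e., in $\fragmentoo{\Left(q')}{\Right(q')}$). For such $z',w'$, the middle term $\similarity(\sub(r), \bF'\fragmentco{z'}{w'})$ is \emph{not} available from the $(r,s',q,q')$-\DISED output (which only covers $z' \in \fragment{\Left(s')}{\Left(q')}$, $w' \in \fragment{\Right(q')}{\Right(s')}$), it is not part of Inputs~\eqref{it:sed:input2}--\eqref{it:sed:input5}, and it cannot be recomputed via \cref{thm:fed} either, because $\bF'\fragmentco{z'}{w'}$ may contain nodes of $\bS'$ below $q'$, for which the base similarities $\similarity(\sub(r),\sub(v'))$ with $r \in \bS$, $v' \in \bS'$ are precisely what \SED is being asked to compute. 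Likewise, the outer terms $\similarity(\bF\fragmentco{x}{\Left(r)}, \bF'\fragmentco{x'}{z'})$ for $z'$ inside $\sub(q')$ require \FED instances on a subforest of $\bF'$ whose size is $|\sub(q')|$, which is bounded by the global tree size but not by $m$, blowing the $\Oh(\sT_{\MUL}(m))$ budget. Your phrase ``a suitable $\Oh(m)$-size subforest of $\bF'$'' hides exactly the part that fails. The same gap recurs in your residual regime with $v=q$.

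The paper avoids this not by bookkeeping but by a further, essential reduction step: after fixing the anchor's first coordinate at $\Left(r)$, it splits on where $z'$ lands. If $z' \in \fragment{\Left(s')}{\Left(q')}$ everything is clean. If $z' \in \fragment{\Left(q')}{\Right(s')}$, it applies \cref{prp:align_or_remove} \emph{to $\bF'$ and $q'$}: either some $t' \prec q'$ on $\bS'$ is aligned (covered by a fresh $\similarity_{q'}$ computed via \cref{lem:sim_w_cut}), or the optimum satisfies $\similarity(\bF\fragmentco{x}{z}, \bF'\fragmentco{x'}{z'}) = \similarity(\bF\fragmentco{x}{z}, \bF'\fragmentco{x'}{\Left(q')} + \bF'\fragmentco{\Left(q')}{z'})$, and then \cref{rmk:mao} plus \cref{rmk:anchor_transform}\eqref{it:anchor_transform:ii} push the anchor's second coordinate onto $\Left(q')$, i.e.\ to the boundary of $\sub(q')$ where the data is available (via \cref{thm:fed} on $\bF\fragmentco{\Left(s)}{\Left(r)}$ against $\bF'\fragmentco{\Left(s')}{\Left(q')}$, and via \cref{lem:sed_specialcase2}). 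Your proposal never moves the anchor in the $\bF'$ coordinate, so it has no handle on the interior of $\sub(q')$. This is a real missing idea, not a detail to fill in.

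A secondary issue: the paper organizes the output indices as \eqref{eq:output:1}--\eqref{eq:output:4} rather than by whether $\sub(r) \subseteq \bF\fragmentco{x}{y}$, precisely because \eqref{eq:output:1} and \eqref{eq:output:4} are read directly from the two sub-\DISED outputs and only \eqref{eq:output:2}/\eqref{eq:output:3} need work; moreover in \eqref{eq:output:2} the crucial identity $\bF\fragmentco{x}{y} = \bF\fragmentco{x}{\Left(r)} + \bF\fragmentco{\Left(r)}{y}$ holds because $y' = \Right(s')$ is fixed and $y \le \Right(r)$, so no ancestor of $r$ can sit in $\bF\fragmentco{x}{y}$. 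Your $\sub(r)$-based split does not expose this cleanly.
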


    \begin{claimproof}
    We rewrite the indices for the output of the $(s,s',q,q')$\text{-}\DISED instance as (see \cref{fig:dised_patch})
    \begin{align}
        \MoveEqLeft \{\ (x,x',y,y') \mid (x,x') \in \sB^{\bot}_{\Left}(s,s',q,q'), \ (y,y') \in \sB^{\bot}_{\Right}(s,s',q,q')\ \} = \nonumber \\[5pt]
        \vspace{10pt}
        & \{\ (x,x',y,y') \mid (x,x') \in \sB^{\bot}_{\Left}(s,s',r,q'), \
            (y,y') \in \sB^{\bot}_{\Right}(s,s',r,q') \ \} \label{eq:output:1}\\
        \cup \quad
        & \{\ (x,x',y,y') \mid (x,x') \in \sB^{\bot}_{\Left}(s,s',r,q'), \
        (y,y') \in  \fragment{\Right(q)}{\Right(r)} \times \Right(s') \ \} \label{eq:output:2}\\
        \cup \quad
        & \{\ (x,x',y,y') \mid (x,x') \in \fragment{\Left(r)}{\Left(q)} \times \Left(s'), \
        (y,y') \in \sB^{\bot}_{\Right}(s,s',r,q') \ \} \label{eq:output:3}\\
        \cup \quad
        & \{\ (x,x',y,y') \mid (x,x') \in \fragment{\Left(r)}{\Left(q)} \times \Left(s'), \
        (y,y') \in \fragment{\Right(q)}{\Right(r)} \times \Right(s') \ \}. \label{eq:output:4}
    \end{align}
    Now, we describe how to compute $\similarity(\bF\fragmentco{x}{y}, \bF\fragmentco{x'}{y'})$
    for the index sets~\eqref{eq:output:1}, \eqref{eq:output:2}, \eqref{eq:output:3}, and \eqref{eq:output:4}.
    \begin{itemize}
        \item For~\eqref{eq:output:1}, note that~\eqref{eq:output:1} is the output of the $(s,s',r,q')$\text{-}\DISED instance.
        \item For~\eqref{eq:output:2}, observe that for such ranges of $x$ and $y$, we can write 
        \[
            \bF\fragmentco{x}{y} = \bF\fragmentco{x}{\Left(r)} + \bF\fragmentco{\Left(r)}{y}.
        \]
        From \cref{rmk:mao} follows that $\Left(r) \times \fragment{\Left(s')}{\Right(s')}$ is an anchor
        set for $\similarity(\bF\fragmentco{x}{y}, \bF\fragmentco{x'}{y'})$.
        Next, we show how to compute $\similarity(\bF\fragmentco{x}{y}, \bF\fragmentco{x'}{y'})$
        in two cases that cover all possible scenarios.
        \begin{enumerate}[(1)]
            \item In the first case there is an anchor $(z,z') \in \Left(r) \times \fragment{\Left(s')}{\Left(q')}$.
            Then we can compute directly
            \begin{align*}
                \similarity(\bF\fragmentco{x}{y}, \bF'\fragmentco{x'}{y'}) =
                \max_{z' \in \fragment{\Left(s')}{\Left(q')} \mid x' \leq z'} \Big\{ \
                    &\similarity(\bF\fragmentco{x}{\Left(r)}, \bF'\fragmentco{x'}{z'}) \\
                    &+ \similarity(\bF\fragmentco{\Left(r)}{y}, \bF'\fragmentco{z'}{y'})
                \ \Big\},
            \end{align*}
            where the first summand can be obtained from \cref{thm:fed} and \Cref{thm:unweighted-fed-balance} on
            $\bF\fragmentco{\Left(s)}{\Left(r)}$ and $\bF'\fragmentco{\Left(s')}{\Left(q')}$
            and the second from the output from the $(r,s',q,q')$\text{-}\DISED instance.

            \item In the second case there is an anchor $(z,z') \in \Left(r) \times \fragment{\Left(q')}{\Right(s')}$. We further distinguish two subcases:
            \begin{enumerate}[(a)]
                \item In the first subcase suppose that $z' \geq \Right(q')$ and that the first case of \cref{prp:align_or_remove} applied on $\sub(s')$, $\bF\fragmentco{\Left(s)}{\Left(r)}$, $\sub(s')$ and $q'$ holds.
                That is, $z' \geq \Right(q')$ and there is $t' \in \bS$ such that $s' \preceq t' \prec q'$ and $\similarity(\bF\fragmentco{x}{z}, \bF\fragmentco{x'}{z'})$ aligns $\sub(t')$ to a subtree of $\bF\fragmentco{\Left(s)}{\Left(r)}$. Note, whenever $z' \geq \Right(q')$, then $\sub(q') \subseteq \bF'\fragmentco{x'}{z'} \subseteq \sub(s')$.
                We derive that 
                \[
                    \similarity(\bF\fragmentco{x}{z}, \bF'\fragmentco{x'}{z'}) = \similarity_{q'}(\bF\fragmentco{x}{z}, \bF\fragmentco{x'}{z'}),
                \]
                where $\similarity_{q'}$ can be obtained from \cref{lem:sim_w_cut}
                on $\sub(s'), \bF\fragmentco{\Left(s)}{\Left(r)}, \bS'$ and $q' \in \bS'$.
                We may therefore write
                \begin{align*}
                        \similarity(\bF\fragmentco{x}{y}, \bF'\fragmentco{x'}{y'}) =
                        \max_{(z,z') \in \Left(r) \times \fragment{\Right(q')}{\Right(s')}} \Big\{ \
                            &\similarity_{q'}(\bF\fragmentco{x}{z}, \bF'\fragmentco{x}{z'}) \\
                            &+ \similarity(\bF\fragmentco{z}{y}, \bF'\fragmentco{z'}{y'})
                        \ \Big\}.
                \end{align*}
                In this last expression, we can obtain the first summand,
                as mentioned before, from \cref{lem:sim_w_cut}
                on $\sub(s'), \bF\fragmentco{\Left(s)}{\Left(r)}, \bS'$ and $q' \in \bS'$,
                and the second summand from \cref{claim:sed_specialcase}\eqref{eq:sed_specialcase:2}.
                
                \item In the second subcase, we have that either $z' < \Right(q')$, or $z' \geq \Right(q')$ and $\similarity(\bF\fragmentco{x}{z}, \bF\fragmentco{x'}{z'}) = \similarity(\bF\fragmentco{x}{z}, \bF'\fragmentco{x'}{\Left(q')} + \bF'\fragmentco{\Left(q')}{z'})$.
                Therefore, for either of the two possible subcases, we can write
                $\similarity(\bF\fragmentco{x}{z}, \bF\fragmentco{x'}{z'}) = \similarity(\bF\fragmentco{x}{z}, \bF'\fragmentco{x'}{\Left(q')} + \bF'\fragmentco{\Left(q')}{z'})$.
                By \cref{rmk:mao}, we have that $\fragment{\Left(s)}{\Left(r)} \times \Left(q')$
                is a anchor set for $\similarity(\bF\fragmentco{x}{z}, \bF'\fragmentco{x'}{z'})$.
                By \cref{rmk:anchor_transform}\eqref{it:anchor_transform:ii}, we get that $\fragment{\Left(s)}{\Left(r)} \times \Left(q')$ is also an anchor set for
                $\similarity(\bF\fragmentco{x}{y}, \bF'\fragmentco{x'}{y'})$.
                Consequently, we may write
                \begin{align*}
                    \similarity(\bF\fragmentco{x}{y}, \bF'\fragmentco{x'}{y'}) =
                    \max_{z \in \fragment{\Left(s)}{\Left(r)} \mid x \leq z} \Big\{ \
                        &\similarity(\bF\fragmentco{x}{z}, \bF'\fragmentco{x'}{\Left(q')}) \\
                        &+ \similarity(\bF\fragmentco{z}{y}, \bF'\fragmentco{\Left(q')}{\Right(s')})
                    \ \Big\}.
                \end{align*}
                In this last expression the first summand can be obtained from
                from \cref{thm:fed} and \Cref{thm:unweighted-fed-balance} on $\bF\fragmentco{\Left(s)}{\Left(r)},\bF'\fragmentco{\Left(s')}{\Left(q')}$,
                and the second summand from \cref{lem:sed_specialcase2}.
            \end{enumerate}
        \end{enumerate}

        \item For~\eqref{eq:output:3}, it suffices to use the reverse symmetry on~\eqref{eq:output:2}.
        \item For~\eqref{eq:output:4}, note that it is a subset of the output of the $(r,s',q,q')$\text{-}\DISED instance.
        \claimqedhere
    \end{itemize}
    \end{claimproof}
    This concludes the proof of \cref{lem:sed_patch}.
\end{proof}

\subsection{Handling the base case}
\label{sec:sed_basecase}

\sedbasecase

\begin{proof}
    It suffices to prove the case where $s$ immediately precedes $q$
    since the case where $s'$ immediately precedes $q'$ equals to the former case under swap symmetry.

    We perform a case distinction on where/whether the spine nodes are mapped by $\similarity(\bF\fragmentco{x}{y}, \bF'\fragmentco{x'}{y'})$.
    \begin{itemize}
        \item If neither $s$ nor any node $r' \in \bS'$ such that $s' \preceq r' \prec q'$ is mapped,
        then $\similarity(\bF\fragmentco{x}{y}, \bF\fragmentco{x'}{y'})
        = \similarity(\bF\fragmentco{x}{\Left(q)} + \bF\fragmentco{\Left(q)}{y}, \bF'\fragmentco{x'}{\Left(q')} + \bF'\fragmentco{\Left(q')}{y'}$,
        as the spine node do not contribute to the similarity of the two forests.
        Via \cref{rmk:paired_anchors}\eqref{it:paired_anchors:b} 
        we compute
        \begin{align*}
            \similarity(\bF\fragmentco{x}{y}, \bF'\fragmentco{x'}{y'})
            = \max\nolimits_{\substack{
                (z,z') \in \sB^{\top}_{\Left}(s,s',q,q') \mid z \geq x, z' \geq x'\\
                (w,w') \in \sB^{\top}_{\Right}(s,s',q,q') \mid w\leq y, w'\leq y'}} \Big\{ \
            &\similarity(\bF\fragmentco{x}{z}, \bF'\fragmentco{x'}{z'}) \\
            &+ \similarity(\bF\fragmentco{z}{w}, \bF'\fragmentco{z'}{w'}) \\
            &+ \similarity(\bF\fragmentco{w}{y}, \bF'\fragmentco{w'}{y'}) \ \Big\},
        \end{align*}
        where the first and third summand can be obtained from \cref{thm:fed} on
        the forests
        $\bF\fragmentco{\Left(s)}{\Left(q)}$, $\bF'\fragmentco{\Left(s')}{\Left(q')}$
        and $\bF\fragmentco{\Right(q)}{\Right(s)}$, $\bF'\fragmentco{\Right(q')}{\Right(s')}$,
        respectively, and
        the second summand from Input~\eqref{it:sed:input1} of the $(s,s',q,q')$\text{-}\DISED instance.
        \unbalanced{Note that in the unweighted setting, all values are at most global tree size $\Oh(\D)$ and we can apply \Cref{thm:unweighted-fed-balance} in place of \Cref{thm:fed}.}

        \item If $s$ is not mapped and a node $r' \in \bS'$ such that $s' \preceq r' \prec q'$ is mapped to a node in $\bF\fragmentco{x}{\Left(q)}$, then we can use \cref{lem:sim_w_cut}.
        
        \item If $s$ is mapped to a node $r' \in \bS'$ such that $s' \preceq r' \prec q'$, then we must have $\bF\fragmentco{x}{y} = \sub(s)$ and we can compute
        \begin{align*}
            \MoveEqLeft \similarity(\sub(s), \bF\fragmentco{x'}{y'}) = \\
            &\max_{r' \in \bS' \mid s' \preceq r' \prec q'} \Big\{ \
            \eta(s,r')
            +\similarity(\bF\fragmentco{x}{\Left(q)} + \bF\fragmentco{\Left(q)}{y}, \bF'\fragmentco{\Left(r')+1}{\Right(r')-1})
            \ \Big\}.
        \end{align*}
        Note that the values \(\similarity(\bF\fragmentco{x}{\Left(q)} + \bF\fragmentco{\Left(q)}{y}, \bF'\fragmentco{\Left(r')+1}{\Right(r')-1})\) were already computed in the first two cases (distinguishing whether another node in \(\bS'\) is mapped or not). Therefore, there is nothing further to compute in this case.

        \item If $\bF\fragmentco{x}{y} = \sub(s)$ and $s$ is mapped to any node contained in $\sub(q')$,
        then $\similarity(\sub(s), \bF\fragmentco{x'}{y'}) = \similarity(\sub(s), \sub(q'))$.
        Observe that $\similarity(\sub(s), \sub(q'))$ is already at our disposal as input of the
        $(s,s',q,q')$\text{-}\DISED instance.

        \item If $\bF\fragmentco{x}{y} = \sub(s)$ and $s$ is mapped to any node contained in $\bF'\fragmentco{x'}{\Left(q')}$,
            then we have that
            $\similarity(\sub(s), \bF'\fragmentco{x'}{y'}) = \similarity(\sub(s), \sub(v'))$
            for some $v' \in \bF'\fragmentco{x'}{\Left(q')}$ and $v' \notin \bS'$.
            Note, all these values are already at our disposal in the \SED Problem.

        \item Lastly, if $\bF\fragmentco{x}{y} = \sub(s)$ and $s$ is mapped to any node contained in $\bF'\fragmentco{\Right(q')}{y'}$,
        then we can apply reverse symmetry obtaining the previous case. \qedhere
    \end{itemize}
\end{proof}

\subsection{Spine Edit Distance on Unbalanced Instances}
\label{sec:sed-unbalance}

In this section, we give an algorithm for Spine Edit Distance on unbalanced instances, where one tree, say $\bF$ is significantly larger than $\bF'$. 
In the following assume $\abs{\bF} = n$, $\abs{\bF'} = n'$, and $n \geq n'$.
For the smaller tree $\bF'$, we let $r'$ denote its root (i.e.\ top-most node of the spine) and $b'$ denote the bottom-most node of the spine.
In this setting, we compute a slightly restricted version of the divide et impera scheme for Spine Edit Distance.

\defproblemwlist
{Unbalanced Divide-et-Impera Spine Edit Distance (\UDISED)}
{$s \prec q \in \bS$, $s' \prec q' \in \bS'$ and
\begin{enumerate}[(i)]
    \item $\similarity(\bF\fragmentco{x}{y}, \bF'\fragmentco{x'}{y'})$ for all of the following indices:
    \begin{align*}
        & \{\ (x,x',y,y') \mid (x,x') \in \fragment{\Left(s)}{\Left(r)} \  \times \ \Left(q'), \
        (y,y') \in \Right(r) \ \times \ \fragment{\Right(q')}{\Right(s')}\ \}\\
        \cup \quad
        & \{\ (x,x',y,y') \mid (x,x') \in \Left(r) \ \times \ \fragment{\Left(s')}{\Left(q')}, \
        (y,y') \in \fragment{\Right(r)}{\Right(s)} \ \times \ \Right(q')\ \}\\
        \cup \quad
        & \{\ (x,x',y,y') \mid (x,x') \in \Left(r) \ \times \ \fragment{\Left(s')}{\Left(q')}, \
        (y,y') \in \Right(r) \ \times \ \fragment{\Right(q')}{\Right(s')}\ \}.
    \end{align*}
    \label{it:u_sed:input1}
    \item $\similarity(\sub(q), \bF'\fragmentco{x'}{y'})$ for all $x',y' \in \fragment{\Left(s')}{\Left(q')}$.
    \label{it:u_sed:input2}
    \item $\similarity(\sub(q), \bF'\fragmentco{x'}{y'})$ for all $x',y' \in \fragment{\Right(q')}{\Right(s')}$.
    \label{it:u_sed:input3}
\end{enumerate}}{
The values
\begin{enumerate}[(i)]
    \item $\similarity(\sub(s), \bF'\fragmentco{x'}{y'})$ for all $x' \in \fragment{\Left(s')}{\Left(q')}, y' \in \fragment{\Right(q')}{\Right(s')}$.
    \label{it:u_sed:output1}
    \item $\similarity(\sub(s), \bF'\fragmentco{x'}{y'})$ for all $x', y' \in \fragment{\Left(s')}{\Left(q')}$.
    \label{it:u_sed:output2}
    \item $\similarity(\sub(s), \bF'\fragmentco{x'}{y'})$ for all $x', y' \in \fragment{\Right(q')}{\Right(s')}$.
    \label{it:u_sed:output3}
\end{enumerate}
}

On unbalanced instances, we will only decompose the larger graph $\bF$, so that we only handle $(s,s',q,q')$\text{-}\UDISED instances with $s' = r'$ and $q' = b'$.
Given such an instance, we define its \emph{size} as $(m, n')$ where
\[
        m = \abs{\ \sub(s) \setminus \sub(q) \ }.
\]

In the following (sub)section,
we present a divide-et-impera algorithm $\mA_{\UDISED }$ solving \UDISED.
The three lemmas, the proof of which we defer for now, that motivate such an approach are the following.

\begin{restatable}{lemma}{usedbalanced}\label{lem:used_balance}
    Consider an $(s,s',q,q')$\text{-}\UDISED instance of size $(m, n')$ such that $m = \Oh(n')$.
    Then, we can solve the $(s,s',q,q')$\text{-}\UDISED instance in time $\Oh(\sT_{\MUL}(n') + n'^{2 + o(1)})$.

    \unbalanced{Furthermore, if the \UDISED instance is unweighted, the algorithm requires time $\Oh(\sT_{\MonMUL}(n') + n'^{2 + o(1)} g(n'))$ where $\sT_{\MonMUL}(n', n', n', \D) = \Oh(f(n') g(\D))$.}
\end{restatable}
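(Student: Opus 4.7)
The plan is to reduce the $(s,s',q,q')$-\UDISED instance of size $(m, n')$ with $m = \Oh(n')$ to a suitable $(s,s',q,q')$-\DISED instance on forests of size $\Oh(n')$, then invoke \cref{cor:dised} as a black box. The reduction is natural because the \UDISED inputs already supply most of the \DISED inputs: \UDISED inputs (ii) and (iii) literally coincide with \DISED inputs (iv) and (v), and \UDISED input (i) covers three of the four sub-regions of \DISED input (i). The crucial point is that, since $|\sub(s) \setminus \sub(q)| = m = \Oh(n')$, every forest that will appear in the reduction—either on the $\bF$ side restricted to the spine region around $q$, or on the $\bF'$ side—has size $\Oh(n')$, so all the auxiliary subroutines will run in the target time budget.

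First, I would compute the missing \DISED inputs, which are all of the form $\similarity(\bF\fragmentco{x}{y}, \sub(q'))$ for subforests $\bF\fragmentco{x}{y}$ sitting in the spine region above or below $q$. For the sub-regions where $\bF\fragmentco{x}{y}$ does not contain $\sub(q)$ (precisely \DISED inputs (ii) and (iii)), the subforest consists entirely of non-spine nodes of $\bF$, so every required subtree similarity $\similarity(\sub(v), \sub(v'))$ with $v \in \bF\fragmentco{\Left(s)}{\Left(q)} \cup \bF\fragmentco{\Right(q)}{\Right(s)}$ is available from the \SED data. Hence I can invoke \cref{thm:fed} on the pair $\bigl(\bF\fragmentco{\Left(s)}{\Left(q)}, \sub(q')\bigr)$ and on $\bigl(\bF\fragmentco{\Right(q)}{\Right(s)}, \sub(q')\bigr)$, each of size $\Oh(n')$, in total time $\Oh(\sT_{\MUL}(n') + n'^{2+o(1)})$. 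For the last remaining sub-region where $\bF\fragmentco{x}{y}$ straddles $\sub(q)$, I would express the value via an anchor-set decomposition at $\Left(q)$ and $\Right(q)$ in the spirit of \cref{claim:sed_specialcase}, combining the \FED outputs just computed with \UDISED inputs (ii)–(iii) through two max-plus products of $\Oh(n') \times \Oh(n')$ matrices.

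With all \DISED inputs assembled, I would run \cref{cor:dised} on the resulting $(s,s',q,q')$-\DISED instance of size $\Oh(n')$. The \UDISED output (i) is literally a subset of the \DISED output (specifically the subset where $(x,x') = (\Left(s), \cdot)$ and $(y,y') = (\Right(s), \cdot)$). For \UDISED outputs (ii) and (iii), whose index ranges do not straddle $\sub(q')$, I would obtain the values by a final round of max-plus products analogous to \cref{lem:sed_specialcase} and \cref{lem:sed_specialcase2}: an anchor decomposition of $\similarity(\sub(s), \bF'\fragmentco{x'}{y'})$ across $\Left(q')$ or $\Right(q')$ combines the corresponding \DISED output with a \FED call on the small half of $\bF'$, again a size-$\Oh(n')$ computation.

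The main obstacle is the bookkeeping of the case analysis—systematically checking that, for each missing input region and each auxiliary output range, the desired similarity admits an anchor decomposition whose summands are either supplied directly by the \UDISED input, appear among the \FED outputs on the size-$\Oh(n')$ spine-free subforests, or are produced by \cref{cor:dised} itself. For the unweighted variant, every matrix appearing in these max-plus products is row- or column-monotone with entries bounded by the global tree size $\Oh(n')$, so replacing \cref{thm:fed} by \cref{thm:unweighted-fed-balance} and \cref{cor:dised} by its unweighted bound yields the claimed running time $\Oh(\sT_{\MonMUL}(n') + n'^{2+o(1)} g(n'))$.
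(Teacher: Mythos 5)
Your overall framework is the same as the paper's: treat the instance as a $(s,s',q,q')$\text{-}\DISED instance, assemble the missing inputs, invoke \cref{cor:dised}, and then supplement the \DISED outputs to obtain \UDISED Outputs~\eqref{it:u_sed:output2} and~\eqref{it:u_sed:output3}. Your observations about which \DISED inputs are already supplied by the \UDISED instance are correct, and your use of \cref{thm:fed} on the spine-free subforests to fill in the rest is a valid (if heavier) alternative to the paper's $\Oh(n'^2)$ dynamic program, which exploits that $\sub(q')$ is a single node.

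However, there is a genuine gap in your handling of Outputs~\eqref{it:u_sed:output2} and~\eqref{it:u_sed:output3}. You propose ``an anchor decomposition of $\similarity(\sub(s), \bF'\fragmentco{x'}{y'})$ across $\Left(q')$ or $\Right(q')$,'' but for Output~\eqref{it:u_sed:output2} both $x'$ and $y'$ lie in $\fragment{\Left(s')}{\Left(q')}$, so $\Left(q')$ is at or past the right endpoint of the range and $\Right(q')$ lies entirely outside it---neither is a usable anchor. More importantly, no anchoring on the $\bF'$ side alone can work here, because $\sub(s)$ contains spine nodes $t\prec q$ whose subtree similarities against subtrees of $\bF'$ are not part of the \SED input, so a plain decomposition would require values you cannot assume. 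The paper resolves this with a case split on the $\bF$ side, using \cref{prp:align_or_remove}: either some spine node $t \prec q$ is aligned, in which case one invokes \cref{lem:sim_w_cut} on $\sub(s)$, $\bF'\fragmentco{\Left(s')}{\Left(q')}$, $\bS$, $q$ (this is precisely the step your proposal is missing); or no such node is aligned, in which case $\similarity(\sub(s), \bF'\fragmentco{x'}{y'}) = \similarity(\bF\fragmentco{\Left(s)}{\Left(q)} + \bF\fragmentco{\Left(q)}{\Right(s)}, \bF'\fragmentco{x'}{y'})$, and one applies \cref{rmk:paired_anchors}\eqref{it:paired_anchors:a} to get paired anchors at $\Left(q)$ and $\Right(q)$ on the $\bF$ side, yielding a three-factor max-plus product whose outer terms come from \cref{thm:fed} and whose middle term is the given \DISED Input~\eqref{it:sed:input2}. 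Without the \cref{lem:sim_w_cut} case, your argument cannot be made to close.
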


\begin{restatable}{lemma}{usedbasecase}\label{lem:used_basecase}
    Consider an $(s,s',q,q')$\text{-}\UDISED instance of size $(m, n')$
    such that $s$ immediately precedes $q$.
    Then, we can solve the $(s,s',q,q')$\text{-}\UDISED instance in time $(m/n')^{1+o(1)} \cdot (\sT_{\MUL}(n') + n'^{2 + o(1)})$.

    \unbalanced{Furthermore, if the \UDISED instance is unweighted, the algorithm requires time $(m/n')^{1+o(1)} \cdot (\sT_{\MonMUL}(n') + n'^{2 + o(1)} g(n'))$ where $\sT_{\MonMUL}(n', n', n', \D) = \Oh(f(n') g(\D))$}.
\end{restatable}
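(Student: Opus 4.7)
The plan is to mirror the case analysis of \Cref{lem:sed_basecase} for the sub-case ``$s$ immediately precedes $q$'', but to replace each invocation of \FED (\Cref{thm:fed}) by the unbalanced \UFED algorithm (\Cref{thm:fed-unbalance}); this substitution is what yields the improved dependence on $m$. Let $L \coloneqq \bF\fragmentco{\Left(s)}{\Left(q)}$ and $R \coloneqq \bF\fragmentco{\Right(q)}{\Right(s)}$. Since $s$ is the only spine node of $\bS$ inside $\sub(s) \setminus \sub(q)$, the forests $L$ and $R$ contain no spine node of $\bS$, so the \SED input provides $\similarity(\sub(v), \sub(v'))$ for every $v \in L \cup R$ and $v' \in \bF'$. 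I would first apply \UFED to $(L, \bF')$ and $(R, \bF')$, obtaining $\similarity(L, \bF'\fragmentco{a'}{z'})$ and $\similarity(R, \bF'\fragmentco{z'}{b'})$ for all $a', z', b' \in \fragment{1}{(2|\bF'|+1)}$ in time $(m/n')^{1+o(1)} \cdot (\sT_{\MUL}(n') + n'^{2+o(1)})$.

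Next, because $\sub(s)$ has $s$ as its only root, every mapping realizing $\similarity(\sub(s), \bF'\fragmentco{x'}{y'})$ either leaves $s$ unmapped, reducing the value to $\similarity(L + \sub(q) + R, \bF'\fragmentco{x'}{y'})$, or maps $s$ to some $v' \in \bF'\fragmentco{x'}{y'}$, contributing $\eta(s, v') + \similarity(L + \sub(q) + R, \bF'\fragmentco{\Left(v')+1}{\Right(v')-1})$. Applying \Cref{rmk:mao} twice at the bi-order positions $\Left(q)$ and $\Right(q)$ yields the decomposition
\[
    \similarity(L + \sub(q) + R, \bF'\fragmentco{a'}{b'}) = \max_{w' \leq z',\, w', z' \in \fragment{a'}{b'}} \Big\{ \similarity(L, \bF'\fragmentco{a'}{w'}) + \similarity(\sub(q), \bF'\fragmentco{w'}{z'}) + \similarity(R, \bF'\fragmentco{z'}{b'}) \Big\}.
\]
The outer summands are supplied by the \UFED outputs above, while the middle summand $\similarity(\sub(q), \bF'\fragmentco{w'}{z'})$ is drawn from the \UDISED inputs by case analysis on whether $w', z'$ both lie in $\fragment{\Left(s')}{\Left(q')}$, both lie in $\fragment{\Right(q')}{\Right(s')}$, or straddle $\sub(q')$: the three cases use Input~\eqref{it:u_sed:input2}, Input~\eqref{it:u_sed:input3}, and the third component of Input~\eqref{it:u_sed:input1} (with $x = \Left(q)$, $y = \Right(q)$), respectively. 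Each sub-case is evaluated via two max-plus products of $\Oh(n') \times \Oh(n')$ matrices in total time $\Oh(\sT_{\MUL}(n') + n'^{2+o(1)})$, and together they produce $\similarity(L + \sub(q) + R, \bF'\fragmentco{a'}{b'})$ for all $a', b'$ needed downstream.

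Finally, I would handle the ``$s$ mapped'' case by forming, for every $v' \in \bF'$, the value $\tilde{\similarity}(v') \coloneqq \eta(s, v') + \similarity(L + \sub(q) + R, \bF'\fragmentco{\Left(v')+1}{\Right(v')-1})$ from the array above, updating $\similarity(\sub(s), \bF'\fragmentco{\Left(v')}{\Right(v')}) \leftarrow \max(\cdot, \tilde{\similarity}(v'))$, and then performing a two-dimensional monotonization in $\Oh(n'^{2+o(1)})$ time so that $\similarity(\sub(s), \bF'\fragmentco{x'}{y'}) \geq \similarity(\sub(s), \bF'\fragmentco{\Left(v')}{\Right(v')})$ whenever $x' \leq \Left(v') \leq \Right(v') \leq y'$, as described in \Cref{sec:unweighted-ted-overview}. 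The two \UFED calls dominate, giving the claimed $(m/n')^{1+o(1)} \cdot (\sT_{\MUL}(n') + n'^{2+o(1)})$ bound. The unweighted version follows verbatim by substituting \Cref{thm:fed-unbalance-unweighted} for \UFED and $\sT_{\MonMUL}$ for $\sT_{\MUL}$, which is valid because the similarity matrices involved are monotone with entries bounded by $\Oh(n')$. The main obstacle will be tracking the three anchor sub-cases for each of the three \UDISED output ranges and ensuring that the monotonization correctly captures alignments of $s$ to spine nodes $v' \in \bS'$, whose similarities $\similarity(\sub(s), \sub(v'))$ are themselves outputs being computed rather than directly available from the input.
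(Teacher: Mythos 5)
Your proposal is correct and takes essentially the same route as the paper: both apply \UFED\ to the spine-node-free forests $L = \bF\fragmentco{\Left(s)}{\Left(q)}$ and $R = \bF\fragmentco{\Right(q)}{\Right(s)}$, combine the results with the $\sub(q)$-similarities supplied as \UDISED\ inputs via a max-plus product of $\Oh(n')\times\Oh(n')$ matrices to get $A[a',b'] := \similarity(L+\sub(q)+R, \bF'\fragmentco{a'}{b'})$, and then handle the case that $s$ itself is aligned. The difference is how the aligned case is handled: the paper performs a five-way case analysis on where $s$ maps (non-spine left, non-spine right, $\sub(q')$, spine node, unmapped) for Output~\eqref{it:u_sed:output1} and uses \cref{lem:sim_w_cut} for Outputs~\eqref{it:u_sed:output2}--\eqref{it:u_sed:output3}, while you fold all of these into a single point-update $\tilde{\similarity}(v') = \eta(s,v') + A[\Left(v')+1,\Right(v')-1]$ followed by a two-dimensional monotonization, which in fact matches more closely the description in \cref{sec:unweighted-ted-overview} of the technical overview than the paper's own detailed proof. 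This buys you a more unified argument and avoids invoking \cref{lem:sim_w_cut}, which in the paper's proof is applied at a running time not literally stated in that lemma and requires an implicit unbalanced (\LRBBD-style) variant.

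Your closing worry about alignments of $s$ to spine nodes $v'\in\bS'$ is unfounded: $\tilde{\similarity}(v')$ depends only on $\eta(s,v')$ and on $A[\Left(v')+1,\Right(v')-1]$, and the entries of $A$ are computed entirely from the \UFED\ outputs for $L,R$ and from the \UDISED\ inputs giving $\similarity(\sub(q), \bF'\fragmentco{w'}{z'})$ — none of which require $\similarity(\sub(s),\sub(v'))$ for spine $v'$. Two minor technicalities you should make explicit: (i) when $|L| < n'$ (or $|R| < n'$), \cref{thm:fed-unbalance} as stated requires the first forest to be the larger one, so you should fall back to \cref{thm:fed} or swap arguments, which still costs $\Oh(\sT_{\MUL}(n')+n'^{2+o(1)})$; and (ii) the single bi-order index $\Left(q')+1$ lies in neither input range of \UDISED, but since $q'$ is a leaf one has $\bF'\fragmentco{w'}{\Left(q')+1} = \bF'\fragmentco{w'}{\Left(q')}$ and $\bF'\fragmentco{\Left(q')+1}{z'} = \bF'\fragmentco{\Right(q')}{z'}$, so the matrix $M_2$ is fully determined by Inputs~\eqref{it:u_sed:input1}--\eqref{it:u_sed:input3}.
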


\begin{restatable}{lemma}{usedpatch}\label{lem:used_patch}
    Suppose we are given an $(s,s',q,q')$\text{-}\UDISED instance of size $m$, and let
    $r \in \bS$ be such that $s \prec r \prec q$.
    Then, we can reduce the $(s,s',q,q')$\text{-}\UDISED instance to the
    $(s,s',r,q')$\text{-}\UDISED and $(r,s',q,q')$\text{-}\UDISED instances
    in time $(m/n')^{1+o(1)} \cdot (\sT_{\MUL}(n') + n'^{2 + o(1)})$.

    \unbalanced{Furthermore, if the \UDISED instance is unweighted, the algorithm requires time $(m/n')^{1+o(1)} \cdot (\sT_{\MonMUL}(n') + n'^{2 + o(1)} g(n'))$ where $\sT_{\MonMUL}(n', n', n', \D) = \Oh(f(n') g(\D))$}.
\end{restatable}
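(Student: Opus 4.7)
The plan is to mirror the three-stage structure of the balanced \cref{lem:sed_patch}. First, observe that the inputs of the $(r,s',q,q')$-\UDISED instance form a subset of those of the $(s,s',q,q')$-\UDISED instance: the $\bF'$-side index ranges are identical (since $s'$ and $q'$ are fixed across all \UDISED instances arising from the unbalanced decomposition), and each $\bF$-side range in the smaller instance is contained in the corresponding range of the larger one. Hence we may invoke the recursive algorithm on $(r,s',q,q')$-\UDISED and treat its outputs as already available. Second, we assemble the inputs of the $(s,s',r,q')$-\UDISED instance from the original $(s,s',q,q')$-\UDISED inputs, the just-computed $(r,s',q,q')$-\UDISED outputs, and a collection of auxiliary similarity values. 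Third, we recurse on $(s,s',r,q')$-\UDISED and combine its outputs with those of $(r,s',q,q')$-\UDISED to yield the final output of $(s,s',q,q')$-\UDISED.

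The auxiliary similarity values come from unbalanced analogs of \cref{claim:sed_specialcase}, \cref{lem:sed_specialcase}, and \cref{lem:sed_specialcase2}. The balanced proofs invoke \cref{thm:fed} on a forest $\bF\fragmentco{\cdot}{\cdot}$ of size $\Oh(m)$ and a forest $\bF'\fragmentco{\cdot}{\cdot}$ of size $\Oh(n')$; we replace every such call by \cref{thm:fed-unbalance} (or \cref{thm:fed-unbalance-unweighted} in the unweighted setting), which supplies exactly the whole-infix, prefix-suffix and suffix-prefix similarities used to populate the entries of the subsequent max-plus products. The cost per call is $(m/n')^{1+o(1)} \cdot (\sT_{\MUL}(n') + n'^{2+o(1)})$, matching the target, and analogously $(m/n')^{1+o(1)} \cdot (\sT_{\MonMUL}(n') + n'^{2+o(1)} g(n'))$ in the unweighted setting.

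The max-plus products themselves become rectangular: one factor is an $\Oh(n') \times \Oh(n')$ matrix built from $\bF'$ alone, while the outer factors have one $\bF$-side dimension of size $\Oh(m)$ and the remaining dimensions $\Oh(n')$. Each product evaluates in $\Oh((m/n') \cdot \sT_{\MUL}(n'))$ time, and in the unweighted case the outer matrices remain monotone with entries bounded by $\Oh(n')$, giving $\Oh((m/n') \cdot \sT_{\MonMUL}(n'))$. Crucially, since the \UDISED output consists solely of the $\Oh(n'^2)$ values $\similarity(\sub(s), \bF'\fragmentco{x'}{y'})$ for restricted $(x',y')$, we never materialize $\Omega(m^2)$ entries: in each case analysis either the $\bF$-index collapses to one of the fixed boundary positions $\Left(s)$ or $\Right(s)$ in the final output, or it is absorbed by the max-plus maximization before reaching the output layer.

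The principal obstacle is the case-by-case bookkeeping: one must verify that every case split in the balanced \cref{cl:retrieveinputs1} and \cref{cl:retrieveoutputs} arguments continues to supply each required similarity value after \cref{thm:fed} is replaced by \cref{thm:fed-unbalance} (which omits the $\similarity(\bF\fragmentco{\cdot}{\cdot}, \bF'\fragmentco{\cdot}{\cdot})$ output of \FED) and after the output is trimmed to the \UDISED shape. In particular, the anchor-set arguments using \cref{rmk:mao}, \cref{rmk:anchor_corner}, and \cref{rmk:paired_anchors} must be replayed to confirm that each output block admits a decomposition into a matrix coming from the \UFED output times a matrix coming from the smaller \UDISED outputs, so that the resulting rectangular max-plus product produces exactly the $\Oh(n'^2)$ required entries. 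Once this is done, the total runtime is dominated by $\Oh(1)$ \UFED calls and $\Oh(1)$ rectangular max-plus products, giving the claimed bound.
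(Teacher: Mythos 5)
Your high-level three-stage plan (use the $(r,s',q,q')$-\UDISED outputs for free, assemble the inputs of $(s,s',r,q')$-\UDISED, then recurse) matches the paper's structure, and your instinct to replace calls to \cref{thm:fed} by calls to \cref{thm:fed-unbalance} (resp.\ \cref{thm:fed-unbalance-unweighted}) and keep every max-plus factor with at most one $\Oh(m)$-sized dimension is the right resource accounting. However, there is a genuine conceptual gap in your third stage, and as a consequence the second stage is planned with far more machinery than is needed.

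The gap: you propose to \emph{combine} the outputs of $(s,s',r,q')$-\UDISED with those of $(r,s',q,q')$-\UDISED, mirroring the balanced \cref{cl:retrieveoutputs}. But the \UDISED output---unlike the \DISED output---depends only on $s$, $s'$, and $q'$: all three output blocks are of the form $\similarity(\sub(s), \bF'\fragmentco{x'}{y'})$ for $(x',y')$ ranging over index sets determined by $s'$ and $q'$. The parameter $q$ does not appear. Since the two subinstances $(s,s',r,q')$ and $(s,s',q,q')$ agree on $s$, $s'$, and $q'$, the output of the $(s,s',r,q')$-\UDISED recursion is \emph{literally identical} to the desired output of the $(s,s',q,q')$-\UDISED instance---there is nothing to combine and no analog of \cref{cl:retrieveoutputs} to run. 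Your remark that ``either the $\bF$-index collapses to one of the fixed boundary positions $\Left(s)$ or $\Right(s)$ \ldots or it is absorbed by the max-plus maximization'' suggests you believe some nontrivial collapse happens at the output layer; in fact the output has no varying $\bF$-index by definition, and the $\Omega(m \cdot n')$-sized quantities that do arise live only in the \emph{input} assembly stage.

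Once this is seen, the second stage collapses to a single missing block: of the three subsets of Input~\eqref{it:u_sed:input1} for the $(s,s',r,q')$-\UDISED instance, one is a subset of the output of the $(r,s',q,q')$-\UDISED instance, one is obtained by reverse symmetry, and only one (the set $\fragmentco{\Left(s)}{\Left(r)} \times \Left(q')$ by $\Right(r) \times \fragmentco{\Right(q')}{\Right(s')}$) requires new work. The paper handles it with a single tailored lemma (\cref{lem:used_recurse_input}), writing $\bF\fragmentco{x}{\Right(r)} = \bF\fragmentco{x}{\Left(r)} + \sub(r)$, invoking \cref{rmk:mao} to get the anchor set $\Left(r) \times \fragment{\Left(q')}{y'}$, and then taking one rectangular $(m \times n') \times (n' \times n')$ max-plus product whose factors come respectively from \cref{thm:fed-unbalance} on $\bF\fragmentco{\Left(s)}{\Left(r)}$ vs.\ $\bF'\fragmentco{\Left(q')}{\Right(s')}$ (spine-free on the $\bF$-side, so all inputs are available) and from Outputs~\eqref{it:u_sed:output1}, \eqref{it:u_sed:output3} of the $(r,s',q,q')$-\UDISED instance. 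Inputs~\eqref{it:u_sed:input2} and \eqref{it:u_sed:input3} of $(s,s',r,q')$-\UDISED are directly Outputs~\eqref{it:u_sed:output2} and \eqref{it:u_sed:output3} of $(r,s',q,q')$-\UDISED, with no extra computation. Porting over unbalanced analogs of \cref{claim:sed_specialcase}, \cref{lem:sed_specialcase}, \cref{lem:sed_specialcase2} in full, as you propose, would likely also work, but it reconstructs far more similarity values than the restricted \UDISED input shape requires, and you would still have to discover the key simplification above before the runtime bookkeeping closes cleanly.
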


We now give an algorithm for the \UDISED problem.

\begin{lemma}
    \label{lem:dised_unbalanced}
    There exists an algorithm $\mA_{\UDISED}$ solving \UDISED
    which runs in time $(m/n')^{1 + o(1)} \cdot \left( \sT_{\MUL}(n') + n'^{2 + o(1)} \right)$ on instances of size $(m, n')$.

    \unbalanced{Furthermore, if the \UDISED instance is unweighted, then the algorithm only requires time $(m/n')^{1+o(1)} \cdot \left( \sT_{\MonMUL}(n') + n'^{2 + o(1)} g(n') \right)$ where $\sT_{\MonMUL}(n', n', n', \D) = \Oh(f(n') g(\D))$.}
\end{lemma}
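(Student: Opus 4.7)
The plan is to mimic the proof of \cref{cor:dised}, using \cref{lem:used_patch} in place of \cref{lem:sed_patch} and terminating the recursion at two base cases: \cref{lem:used_basecase} whenever $s$ immediately precedes $q$, and \cref{lem:used_balance} once the instance size drops to $m = \Oh(n')$. The crucial difference from \cref{cor:dised} is that \cref{lem:used_patch} costs only $(m/n')^{1+o(1)} \cdot (\sT_{\MUL}(n') + n'^{2+o(1)})$ per invocation, rather than $\Oh(\sT_{\MUL}(m) + m^{2+o(1)})$; this linear-in-$m$ (rather than quadratic) dependence is exactly what will deliver the final $(m/n')^{1+o(1)}$ factor.

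First, I would establish an unbalanced counterpart of \cref{lem:dised_divide}: given an $(s,s',q,q')$-\UDISED instance of size $(m, n')$ and a threshold $\Delta$, in $\Oh(m)$ time one produces at most $3m/\Delta$ subinstances of size at most $(\Delta, n')$ whose solutions combine, via at most $3m/\Delta$ applications of \cref{lem:used_patch}, into a solution of the full instance in total merge time $\Oh(m/\Delta) \cdot (m/n')^{1+o(1)} \cdot (\sT_{\MUL}(n') + n'^{2+o(1)})$. The construction reuses \cref{alg:sed_count} verbatim on $(s, q)$ with threshold $\Delta$, but now applied only to the larger forest $\bF$; this yields a spine chain $s = r_1 \prec \cdots \prec r_d = q$ with $d - 1 \leq 2m/\Delta + 1$, and pairs where $r_i$ immediately precedes $r_{i+1}$ are handled directly by \cref{lem:used_basecase}, while all other pairs yield \UDISED subinstances of size at most $(\Delta, n')$.

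Next, I would apply this decomposition recursively with threshold $\Delta = m/\alpha$ for a constant $\alpha$ to be fixed, terminating at $m = \Theta(n')$ where \cref{lem:used_balance} solves the instance in time $\Oh(\sT_{\MUL}(n') + n'^{2+o(1)})$. Writing $C := \sT_{\MUL}(n') + n'^{2+o(1)}$, the running time satisfies
\begin{align*}
    \sT(m, n') \;\le\; c_1 \alpha \cdot \sT(m/\alpha, n') \;+\; c_2 \alpha \cdot (m/n')^{1+o(1)} \cdot C,
\end{align*}
with base case $\sT(\Theta(n'), n') = \Oh(C)$. Unrolling to $K = \log_\alpha(m/n')$ levels, the leaves contribute $(c_1 \alpha)^K \cdot \Oh(C) = (m/n')^{1 + \log_\alpha c_1} \cdot \Oh(C)$, and the interior levels contribute a sum bounded by $\Oh((m/n')^{1+o(1)} \cdot C)$ once $\alpha$ is chosen large enough that $\log_\alpha c_1 < \epsilon$ for any desired $\epsilon > 0$. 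This is the same pattern as the recurrence analyses in \cref{cor:bbd_algo} and \cref{cor:lrbbd_algo}. The unweighted case follows identically, invoking the unweighted versions of \cref{lem:used_balance}, \cref{lem:used_basecase}, and \cref{lem:used_patch} and substituting $C := \sT_{\MonMUL}(n') + n'^{2+o(1)} g(n')$ throughout.

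The main subtlety is that the constants $c_1, c_2$ (the number of subinstances and merges per level) must be absorbed into $(m/n')^{o(1)}$ via the choice of $\alpha$; this absorption works precisely because \cref{lem:used_patch} is linear rather than quadratic in $m$, so that the per-level merge cost does not overwhelm the geometric sum across levels. If instead \cref{lem:sed_patch} were used, the recurrence would degrade to $(m/n')^{2+o(1)}$-type behavior, losing the sensitivity to both dimensions. Beyond this point the bookkeeping is mechanical, and the cleanest exposition simply verifies that the recurrence parameters match the template already used in \cref{cor:bbd_algo} and \cref{cor:lrbbd_algo}.
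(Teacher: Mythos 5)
Your proposal is correct and follows essentially the same approach as the paper's own proof: the paper also applies the simplified decomposition (via \cref{alg:sed_count}) only to the larger forest $\bF$ with threshold $\Delta = m/\alpha$, handles immediate-precedence pairs via \cref{lem:used_basecase}, terminates the recursion at $m = \Oh(n')$ with \cref{lem:used_balance}, recombines via $\Oh(m/\Delta)$ applications of \cref{lem:used_patch}, and solves the same recurrence by choosing $\alpha$ large enough to make the branching exponent $1+\epsilon$. Your extraction of an explicit "unbalanced counterpart of \cref{lem:dised_divide}" is a modest stylistic difference from the paper, which inlines that decomposition directly into the proof of \cref{lem:dised_unbalanced}.
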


\begin{proof}
    It suffices to apply recursively (the simplified version of) \cref{lem:dised_divide} with threshold $\Delta(m) = m / \alpha$
    for a constant $\alpha \geq 1$ to be determined later.
    In particular, we decompose only the larger tree $\bF$.

    If $m = \Oh(n')$, we simply apply \cref{lem:used_balance} to the instance.
    Otherwise, we proceed to construct a set $I \subseteq \bS \times \bS$ using \Cref{alg:sed_count} with threshold $\Delta$ yielding a sequence of spine nodes $s = r_1 \prec r_2 \prec \cdots \prec r_{d} = q$.
    The algorithm guarantees for each $i \in \fragmentco{1}{d}$, either $(r_{i}, s', r_{i + 1}, q')$\text{-}\UDISED has size at most $(\Delta, n')$ or $r_i$ immediately precedes $r_{i + 1}$.
    To complete the reduction we recursively solve $(r_{i}, s', r_{i + 1}, q')$\text{-}\UDISED instances, which requires total time $\Oh(d \sT(\Delta, n'))$ in the former case and $(m/n')^{1 + o(1)} \cdot \sT_{\MUL}(n')$ in the latter case, since the combined size of the instances in the latter case is at most $(m, n')$.
    Note that there are at most $d$ instances to recombine, and thus recombining the instances require $\Oh(d (m/n')^{1 + o(1)} \cdot \sT_{\MUL}(n'))$.

    Thus, we obtain the following recurrence by using $d \leq 3m/\Delta$ so that for some constant $C$,
    \begin{align*}
        \sT(m) &= \Oh(3 m/\Delta \sT(\Delta, n') + (m/n')^{1+o(1)} \cdot (\sT_{\MUL}(n') + n'^{2 + o(1)})) \\
        &\leq 3 C \alpha \sT(m/\alpha, n') + (m/n')^{1+o(1)} \cdot (\sT_{\MUL}(n') + n'^{2 + o(1)}).
    \end{align*}
    For any $\epsilon > 0$,
    we can choose a sufficiently large constant $\alpha$ such that
    $\log_{\alpha}(3 C \alpha) < 1 + \epsilon$.
    Similarly as in \Cref{cor:lrbbd_algo} we conclude that $\sT(m, n') = (m/n')^{1+o(1)} \cdot \left( \sT_{\MUL}(n') + n'^{2 + o(1)} \right)$.

    \unbalanced{In the unweighted setting, we instead have the recurrence
    \begin{align*}
        \sT(m) &= \Oh(3 m/\Delta \sT(\Delta, n') + (m/n')^{1+o(1)} \cdot (\sT_{\MonMUL}(n') + n'^{2 + o(1)} g(n')))
    \end{align*}
    which yields the desired result.}
\end{proof}

It remains to prove the necessary lemmas.

\subsubsection{Handling the Recursive Case}
\label{sec:sed_patch_unbalanced}

We begin with the recursive case where we patch together two sub-problems.

\usedpatch*

As in balanced case, we begin with the proof of a useful lemma.

\begin{lemma} 
    \label{lem:used_recurse_input}
    Suppose we are provided with the input of an $(s,s',q,q')$\text{-}\UDISED instance of size $(m, n')$.
    Further, let $r \in \bS$ such that $s \prec r \prec q$.
    Then, we can compute $\similarity(\bF\fragmentco{x}{y}, \bF'\fragmentco{x'}{y'})$
    for all
    $(x,x') \in (\fragmentco{\Left(s)}{\Left(r)} \ \times \ \Left(q'))$
    and
    $(y,y') \in  (\Right(r) \times \fragmentco{\Right(q')}{\Right(s')})$
    in time $(m/n')^{1+o(1)} \cdot (\sT_{\MUL}(n') + n'^{2 + o(1)})$.

    \unbalanced{Furthermore, if the \UDISED instance is unweighted, the algorithm requires time $(m/n')^{1+o(1)} \cdot (\sT_{\MonMUL}(n') + n'^{2 + o(1)} g(n"))$ where $\sT_{\MonMUL}(n', n', n', \D) = \Oh(f(n') g(\D))$}.
\end{lemma}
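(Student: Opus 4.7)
The plan is to adapt the template of \cref{lem:sed_specialcase2}, exploiting the fact that, in the unbalanced setting, we only need outputs at $y = \Right(r)$ and $x' = \Left(q')$. The key structural observation is that for $\Left(s) \leq x < \Left(r)$ and $y = \Right(r)$, no spine node of $\{s, s+1, \ldots, r-1\}$ lies in $\bF\fragmentco{x}{y}$: each such node has its right occurrence strictly beyond $\Right(r) = y$. Hence $\bF\fragmentco{x}{y} = \bF\fragmentco{x}{\Left(r)} + \sub(r)$ as forests, and by \cref{rmk:mao} the set $\Left(r) \times \fragment{1}{2\abs{\bF'}+1}$ is an anchor set of $\similarity(\bF\fragmentco{x}{y}, \bF'\fragmentco{x'}{y'})$. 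This yields the clean decomposition
\begin{align*}
    \similarity(\bF\fragmentco{x}{y}, \bF'\fragmentco{x'}{y'}) = \max_{z' \in \fragment{\Left(q')}{y'}} \Big\{ &\similarity(\bF\fragmentco{x}{\Left(r)}, \bF'\fragmentco{\Left(q')}{z'}) \\ &\quad + \similarity(\sub(r), \bF'\fragmentco{z'}{y'}) \Big\},
\end{align*}
with no need for the two-case split appearing in the balanced version.

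For the first summand, I will exploit that $\bF\fragmentco{\Left(s)}{\Left(r)}$ contains no node of $\bS$, so all pairwise subtree similarities needed to run \UFED between this forest and $\bF'\fragmentco{\Left(q')}{\Right(s')}$ are available from the \SED input. Applying \cref{thm:fed-unbalance} (respectively \cref{thm:fed-unbalance-unweighted} in the unweighted case) to these two forests, of sizes at most $m$ and $n'$ respectively, I will read the first summands from Output~\eqref{it:ufed:output3} of \UFED: $\bF\fragmentco{x}{\Left(r)}$ is precisely a suffix of $\bF\fragmentco{\Left(s)}{\Left(r)}$ in bi-order, and $\bF'\fragmentco{\Left(q')}{z'}$ is a prefix of $\bF'\fragmentco{\Left(q')}{\Right(s')}$. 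This call costs $(m/n')^{1+o(1)} \cdot (\sT_{\MUL}(n') + n'^{2+o(1)})$, within the target budget.

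The second summand, $\similarity(\sub(r), \bF'\fragmentco{z'}{y'})$, is retrieved directly from the outputs of the already-solved $(r, s', q, q')$\text{-}\UDISED sub-instance: Output~\eqref{it:u_sed:output1} supplies the case $z' = \Left(q')$, Output~\eqref{it:u_sed:output3} the case $\Right(q') \leq z' \leq y'$, and the only remaining value $z' = \Left(q')+1$ coincides with $z' = \Right(q')$ since no node of $\bF'$ has left-index equal to $\Left(q')+1$ ($q'$ is a leaf throughout the \UDISED recursion because $q' = b'$). Combining the two summands then reduces to a rectangular $(\max,+)$-product of an $\Oh(m) \times \Oh(n')$ matrix $A$ (storing first summands, row-monotone non-decreasing in $z'$) with an $\Oh(n') \times \Oh(n')$ matrix $B$ (storing second summands). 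Partitioning $A$ into $\ceil{m/n'}$ row-blocks of side $\Oh(n')$ and multiplying each with $B$ gives total product cost $\Oh((m/n') \cdot \sT_{\MUL}(n'))$. In the unweighted case, entries of both matrices are bounded by $\Oh(n')$ and $A$ is monotone along the contraction dimension, so I will substitute \MonMUL for \MUL, paying $\Oh((m/n') \cdot \sT_{\MonMUL}(n'))$.

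The main obstacle is that a direct port of the balanced proof would require infix-vs-whole similarities between arbitrary $\bF\fragmentco{x}{z}$ and $\sub(q')$ --- precisely the \FED output excluded from \UFED because it has size $\Omega(m^2)$. The structural identity $\bF\fragmentco{x}{y} = \bF\fragmentco{x}{\Left(r)} + \sub(r)$ collapses that expensive query into a suffix-vs-prefix one on the spine-free portion above $r$, which \UFED handles efficiently, and this collapse is what makes the unbalanced running time attainable.
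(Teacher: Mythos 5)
Your proof is correct and follows the paper's own argument nearly verbatim: the same decomposition $\bF\fragmentco{x}{\Right(r)} = \bF\fragmentco{x}{\Left(r)} + \sub(r)$, the same anchor set at $\Left(r)$, the same use of \Cref{thm:fed-unbalance} on the spine-free forest for the first summand, the same retrieval of the second summand from Outputs~\eqref{it:u_sed:output1} and \eqref{it:u_sed:output3} of the $(r,s',q,q')$-\UDISED sub-instance, and the same rectangular max-plus product to combine them. Your explicit handling of the edge index $z' = \Left(q')+1$ (observing $\bF'\fragmentco{\Left(q')+1}{y'} = \bF'\fragmentco{\Right(q')}{y'}$ because $q'$ is a leaf) is slightly more careful than the paper's terse treatment, but the substance is identical.
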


\begin{proof}
    Observe that 
    $x \in \fragmentco{\Left(s)}{\Left(r)}$ so that 
    \[
        \bF\fragmentco{x}{y} = \bF\fragmentco{x}{\Right(r)} = \bF\fragmentco{x}{\Left(r)} + \sub(r)
    \]
    By \cref{rmk:mao} there exists $z' \in \fragment{\Left(q')}{y'}$ such that $(\Left(r), z')$ is an anchor of
    $\similarity(\bF\fragmentco{x}{\Right(r)}, \bF'\fragmentco{\Left(q')}{y'})$.
    Then we compute for the similarity $\similarity(\bF\fragmentco{x}{\Right(r)}, \bF'\fragmentco{\Left(q')}{y'})$ as
    \begin{align*}
        &\similarity(\bF\fragmentco{x}{\Right(r)}, \bF'\fragmentco{\Left(q')}{y'}) \\
        &= \max\nolimits_{z' \in \fragment{\Left(q')}{y'}} \Big\{ \ \similarity(\bF\fragmentco{x}{\Left(r)}, \bF'\fragmentco{\Left(q')}{z'})
        + \similarity(\bF\fragmentco{\Left(r)}{\Right(r)}, \bF'\fragmentco{z'}{y'}) \ \Big\},
    \end{align*}
    where the first summand can be computed from the output of \Cref{thm:fed-unbalance} applied to $\bF\fragmentco{\Left(s)}{\Left(r)}$ (which contains no spine nodes) and $\bF'\fragmentco{\Left(q')}{\Right(s')}$. 
    Thus, since the inputs are at our disposal, we may apply \Cref{thm:fed-unbalance} in time $\Oh((m/n)^{1 + o(1)} \cdot (\sT_{\MUL}(n') + n'^{2 + o(1)}))$.
    We obtain the second summand in two steps.
    First, note $\similarity(\sub(r), \bF'\fragmentco{\Left(q')}{y'})$ is a subset of Output~\eqref{it:u_sed:output1} of the $(r, s', q, q')$-\UDISED instance.
    To obtain the remaining values, we note that the remaining values $\similarity(\sub(r), \bF'\fragmentco{z'}{y'})$ are given by Output~\eqref{it:u_sed:output3} of the $(r, s', q, q')$-\UDISED instance.
    Finally, the computation can be performed by a max-plus product of an $(m + 1) \times (n' + 1)$ matrix and a $(n' + 1) \times (n' + 1)$ matrix in time $\Oh(\sT_{\MUL}(m, n', n')) = \Oh(m/n' \cdot \sT_{\MUL}(n'))$.

    \unbalanced{Note that if the input instance is unweighted, the latter matrix is monotone, as for each fixed $y'$ we have that similarity must not increase as $z'$ increases (i.e.\ each column is non-increasing). 
    Furthermore, since both matrices have entries corresponding to similarities between forests where the smaller forest has at most $\Oh(n')$ nodes, both matrices have entries bounded by $\Oh(n')$.
    In particular, we can combine them in time $\Oh(\sT_{\MonMUL}(m, n', n', n')) = \Oh(m/n) \cdot \sT_{\MonMUL}(n')$.
    Furthermore, we instead apply \Cref{thm:fed-unbalance-unweighted} which only takes time $(m/n')^{1+o(1)} \cdot (\sT_{\MonMUL}(n') + n'^{2 + o(1)} g(n'))$.}
\end{proof}

We can finally prove \Cref{lem:used_patch}.

\begin{proof}[Proof of \Cref{lem:used_patch}]
    As before, the inputs of the $(r,s',q,q')$\text{-}\UDISED instance are a subset of the inputs of the $(s,s',q,q')$\text{-}\UDISED instance,
    so we can already assume the outputs of the $(r,s',q,q')$\text{-}\UDISED at our disposal.

    \begin{claim} 
        \label{claim:uretrieveinputs1}
        Given the outputs to the $(r,s',q,q')$\text{-}\UDISED instance,
        we can compute the inputs for the $(s,s',r,q')$\text{-}\UDISED instance in time $(m/n')^{1+o(1)} \cdot \sT_{\MUL}(n')$.

        \unbalanced{Furthermore, if the \UDISED instance is unweighted, the algorithm requires time $(m/n')^{1+o(1)} \cdot \sT_{\MonMUL}(n')$}.
    \end{claim}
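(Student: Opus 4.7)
The plan is to mirror the strategy of \cref{cl:retrieveinputs1} from the balanced setting, but exploit the fact that \UDISED only requires three (rather than four) subsets of index pairs for Input~(i), namely those where at least one of $(x,x')$ or $(y,y')$ lies on a border associated with the ``upper'' node of the recursed instance ($r$ rather than $s$). I would begin by noting that Inputs~(ii) and~(iii) of the $(s,s',r,q')$\text{-}\UDISED instance are exactly the families $\similarity(\sub(r), \bF'\fragmentco{x'}{y'})$ for $x',y' \in \fragment{\Left(s')}{\Left(q')}$ and for $x',y' \in \fragment{\Right(q')}{\Right(s')}$, respectively. These are literally Outputs~(ii) and~(iii) of the $(r,s',q,q')$\text{-}\UDISED instance, which we already have at our disposal, so no computation is required.

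For Input~(i), I would treat the three subsets separately. The third subset, with $(x,x') \in \Left(r) \times \fragment{\Left(s')}{\Left(q')}$ and $(y,y') \in \Right(r) \times \fragment{\Right(q')}{\Right(s')}$, consists precisely of values $\similarity(\sub(r), \bF'\fragmentco{x'}{y'})$ and is therefore contained in Output~(i) of the $(r,s',q,q')$\text{-}\UDISED instance. The first subset, with $(x,x') \in \fragment{\Left(s)}{\Left(r)} \times \Left(q')$ and $(y,y') \in \Right(r) \times \fragment{\Right(q')}{\Right(s')}$, coincides (modulo trivial endpoint conventions handled by monotonicity of $\similarity$) with the family produced by \cref{lem:used_recurse_input} applied with the present $r$. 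The second subset, with $(x,x') \in \Left(r) \times \fragment{\Left(s')}{\Left(q')}$ and $(y,y') \in \fragment{\Right(r)}{\Right(s)} \times \Right(q')$, then follows by the reverse symmetry: reversing both $\bF$ and $\bF'$ swaps the roles of $\Left$ and $\Right$, turns the second subset into the first, and does not disrupt the unbalanced structure because $s'$ is the root of $\bF'$ and remains the root after reversal. Hence one further application of \cref{lem:used_recurse_input} to the reversed instance suffices.

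The main obstacle, and the only point requiring care, is to verify that each invocation of \cref{lem:used_recurse_input} has all its declared prerequisites, namely the inputs of an $(s,s',q,q')$\text{-}\UDISED instance; these are provided to us by assumption (for the original instance) and, after reversing $\bF$ and $\bF'$, the inputs of the reversed instance are the reverse-symmetric images of the same data, so they are available as well. The total cost is dominated by the two applications of \cref{lem:used_recurse_input}, each of which runs in time $(m/n')^{1+o(1)} \cdot (\sT_{\MUL}(n') + n'^{2+o(1)})$, together with lookups into the outputs of the $(r,s',q,q')$\text{-}\UDISED instance that cost $\Oh(m \cdot n')$ in total. In the unweighted setting, one simply replaces $\sT_{\MUL}(n')$ by $\sT_{\MonMUL}(n')$ and augments the additive $n'^{2+o(1)}$ term by a factor $g(n')$, using the unweighted version of \cref{lem:used_recurse_input}, yielding the bound stated in \cref{claim:uretrieveinputs1}.
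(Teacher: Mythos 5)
Your proposal is correct and follows essentially the same route as the paper: Inputs~(ii) and (iii) are read off directly as Outputs~(ii) and (iii) of the $(r,s',q,q')$-\UDISED instance; the third subset of Input~(i) is contained in Output~(i) of that instance; the first subset is produced by \cref{lem:used_recurse_input}; and the second subset is obtained from the first by reverse symmetry. Your extra sanity check that reverse symmetry preserves the unbalanced structure (since $s'$ is the root and remains so after reversal) and your flagging of the half-open vs.\ closed endpoint discrepancy between the range in \cref{lem:used_recurse_input} and the range required by Input~(i) (resolved via monotonicity, or equivalently noting the boundary points land in the third subset) are both valid and are implicitly glossed over in the paper's one-line treatment.
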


    \begin{claimproof}
        We rewrite the indices for Input~\eqref{it:u_sed:input1} as
        \begin{align}
            & \{\ (x,x',y,y') \mid (x,x') \in \fragment{\Left(s)}{\Left(r)} \  \times \ \Left(q'), \
            (y,y') \in \Right(r) \ \times \ \fragment{\Right(q')}{\Right(s')}\ \} \label{eq:subinput:2u}\\
            \cup \quad
            & \{\ (x,x',y,y') \mid (x,x') \in \Left(r) \ \times \ \fragment{\Left(s')}{\Left(q')}, \
            (y,y') \in \fragment{\Right(r)}{\Right(s)} \ \times \ \Right(q')\ \} \label{eq:subinput:3u}\\
            \cup \quad
            & \{\ (x,x',y,y') \mid (x,x') \in \Left(r) \ \times \ \fragment{\Left(s')}{\Left(q')}, \
            (y,y') \in \Right(r) \ \times \ \fragment{\Right(q')}{\Right(s')}\ \}. \label{eq:subinput:4u}
        \end{align}

        Now, we describe how to compute $\similarity(\bF\fragmentco{x}{y}, \bF\fragmentco{x'}{y'})$
        for the index sets \eqref{eq:subinput:2u}, \eqref{eq:subinput:3u}, and \eqref{eq:subinput:4u}.
        \begin{itemize}
            \item For~\eqref{eq:subinput:2u}, note that~\eqref{eq:subinput:2u} is computed by \cref{lem:used_recurse_input}.

            \item For~\eqref{eq:subinput:3u}, note that~\eqref{eq:subinput:3u} is equal to~\eqref{eq:subinput:2u}
            under reverse symmetry.
            
            \item For~\eqref{eq:subinput:4u}, note that~\eqref{eq:subinput:4u} is a subset of the output of the $(r,s',q,q')$-\UDISED instance.
        \end{itemize}
        It remains to explain how to retrieve Inputs~\eqref{it:u_sed:input2}, \eqref{it:u_sed:input3}
        of the $(s,s',r,q')$-\DISED instance.
        Note, Inputs~\eqref{it:u_sed:input2} and \eqref{it:u_sed:input3} are Outputs~\eqref{it:u_sed:output2} and \eqref{it:u_sed:output3} of the $(r,s',q,q')$-\UDISED instance.
    \end{claimproof}

    With all the inputs at our disposal, we can call also the subroutine on the  $(s,s',r,q')$-\UDISED instance, and retrieve the outputs.
    Towards this, we observe that the $(s,s',r,q')$-\UDISED provides the outputs to the $(s,s',q,q')$-\UDISED instance.
\end{proof}

\subsubsection{Handling the Base Cases}
\label{sec:sed_basecase_unbalanced}

Recall that $s' = r'$ is the root of $\bF'$ and $q' = b'$ is the bottom node of the spine of $\bF'$.
In particular, $\sub(s') \setminus \sub(q') = \bF' \setminus \{q'\}$. 
First, we show that when $m = \Oh(n')$, we can use our \DISED to compute the \UDISED instance.

\usedbalanced

\begin{proof}
    In this case, we can treat the instance as a $(s, s', q, q')$-\DISED instance and apply \Cref{cor:dised}.
    Note that such a instance is a \DISED instance of size $\max(m, n') = \Oh(n')$ and global tree size $\Oh(n')$ regardless of the size of $\sub(s)$.
    To apply \Cref{cor:dised}, note that we must supply the missing inputs.
    First, we complete Input~\eqref{it:sed:input1} by computing $\similarity(\bF\fragmentco{x}{y}, \sub(q'))$ for $x \in \fragment{\Left(s)}{\Left(q)}$ and $y \in \fragment{\Right(q)}{\Right(s)}$.
    Since $\sub(q')$ is a single node, we can compute these inputs using dynamic programming as in the \DISED algorithm (see the computation of Input~\eqref{eq:sedinput:1} in \Cref{thm:sed}) in time $\Oh(m^2) = \Oh(n'^2)$, noting that we are already given $\similarity(\sub(q), \sub(q'))$ as part of the input to the \UDISED instance.
    Similarly, since $\sub(q')$ is a single node, we can supply the missing Inputs~\eqref{it:sed:input2} and \eqref{it:sed:input3} for \DISED.
    
    Finally, we show that we can obtain the outputs to the \UDISED instance given outputs to the \DISED instance.
    In particular, for Output~\eqref{it:u_sed:output1}, note $(\Left(s), x') \in \sB^{\bot}_{\Left}(s, s', q, q')$ and $(\Right(s), y') \in \sB^{\bot}_{\Right}(s, s', q, q')$.
    Note that computing the \DISED instance requires time $\Oh(\sT_{\MUL}(n') + n'^{2 + o(1)})$ by \Cref{cor:dised}.
    \unbalanced{In the unweighted setting, the \DISED instance with both size and global tree size bounded by $\Oh(n')$ requires time $\Oh(\sT_{\MonMUL}(n') + n'^{2 + o(1)} g(n'))$.}
    
    For Output~\eqref{it:u_sed:output2}, we consider two cases.
    If $t \prec q$ is aligned by $\similarity(\sub(s), \bF'\fragmentco{\Left(s')}{\Left(q')}$, we apply \Cref{lem:sim_w_cut} on $\sub(s), \bF'\fragmentco{\Left(s')}{\Left(q')}, \bS, q$ which requires time $\Oh(\sT_{\MUL}(m) + m^{2 + o(1)}) = \Oh(\sT_{\MUL}(n') + n'^{2 + o(1)})$.
    Observe that all the required inputs are given in the \SED instance since $\bF'\fragmentco{\Left(s')}{\Left(q')}$ has no spine nodes.
    
    On the other hand, if no spine node $t \prec q$ is aligned, then we can write
    \[
        \similarity(\sub(s), \bF'\fragmentco{x'}{y'})
        = \similarity(\bF\fragmentco{\Left(s)}{\Left(q)} + \bF\fragmentco{\Left(q)}{\Right(s)}, \bF'\fragmentco{x'}{y'})
    \]
    as no nodes $t \prec q$ contribute to the similarity of the two forests.
    Via \cref{rmk:paired_anchors}\eqref{it:paired_anchors:a} we compute
    \begin{align*}
        \similarity(\sub(s), \bF\fragmentco{x'}{y'})
        = \max\nolimits_{\substack{
            z' \in \bF'\fragmentco{\Left(s')}{\Left(q')} \mid w' \geq z' \geq x'\\
            w' \in \bF'\fragmentco{\Left(s')}{\Left(q')} \mid z' \leq w' \leq y'}} \Big\{ \
        &\similarity(\bF\fragmentco{\Left(s)}{\Left(q)}, \bF\fragmentco{x'}{z'}) \\
        &+ \similarity(\bF\fragmentco{\Left(q)}{\Right(q)}, \bF\fragmentco{z'}{w'}) \\
        &+ \similarity(\bF\fragmentco{\Right(q)}{\Right(s)}, \bF\fragmentco{w'}{y'}) \ \Big\},
    \end{align*}
    where the first and third summand can be obtained from \cref{thm:fed} and \Cref{thm:unweighted-fed-balance} on
    the forests
    $\bF\fragmentco{\Left(s)}{\Left(q)}$, $\bF'\fragmentco{\Left(s')}{\Left(q')}$
    and $\bF\fragmentco{\Right(q)}{\Right(s)}$, $\bF'\fragmentco{\Left(s')}{\Left(q')}$,
    respectively, and
    the second summand from Input~\eqref{it:sed:input2} of the $(s,s',q,q')$\text{-}\DISED instance.
    The computation can then be performed by max-plus products of three $(n' + 1) \times (n'+ 1)$ matrices.
    Overall, the time to compute Output~\eqref{it:u_sed:output2} is
    \begin{equation*}
        \Oh(\sT_{\MUL}(n') + n'^{2 + o(1)})
    \end{equation*}

    Finally, we note that Output~\eqref{it:u_sed:output3} can be obtained using similar arguments as Output~\eqref{it:u_sed:output2}.
\end{proof}

Next, we consider the case where $s$ immediately precedes $q$.

\usedbasecase

\begin{proof}
    We begin with Output~\eqref{it:u_sed:output1}.
    We perform a case distinction on where/whether $s$ is mapped by $\similarity(\sub(s), \bF'\fragmentco{x'}{y'})$ for all $x' \in \fragment{\Left(s')}{\Left(q')}$ and $y' \in \fragment{\Right(q')}{\Right(s')}$.
    \begin{itemize}
        \item If $s$ is not mapped to any node of $\bF'\fragmentco{x'}{y'}$,
        then
        \[
            \similarity(\sub(s), \bF'\fragmentco{x'}{y'})
            = \similarity(\bF\fragmentco{\Left(s)}{\Left(q)} + \bF\fragmentco{\Left(q)}{\Right(s)}, \bF'\fragmentco{x'}{y'})
        \]
        as the spine node $s$ does not contribute to the similarity of the two forests.
        Via \cref{rmk:paired_anchors}\eqref{it:paired_anchors:a} we compute
        \begin{align*}
            \similarity(\sub(s), \bF\fragmentco{x'}{y'})
            = \max\nolimits_{\substack{
                z' \in \bF'\fragmentco{x'}{y'} \mid w' \geq z' \geq x'\\
                w' \in \bF'\fragmentco{x'}{y'} \mid z' \leq w' \leq y'}} \Big\{ \
            &\similarity(\bF\fragmentco{\Left(s)}{\Left(q)}, \bF\fragmentco{x'}{z'}) \\
            &+ \similarity(\bF\fragmentco{\Left(q)}{\Right(q)}, \bF\fragmentco{z'}{w'}) \\
            &+ \similarity(\bF\fragmentco{\Right(q)}{\Right(s)}, \bF\fragmentco{w'}{y'}) \ \Big\}.
        \end{align*}

        We can combine these values using a max-plus product of three $(n' + 1) \times (n' + 1)$ matrices, which requires time
        $\Oh(\sT_{\MUL}(n'))$.
        \unbalanced{In the unweighted setting, we note that all three matrices are row-monotone or column-monotone and have entries bounded by $O(n')$.
        In particular, the matrix multiplication requires time $\Oh(\sT_{\MonMUL}(n'))$.}
        It remains to describe how to obtain the appropriate inputs.
        To do so, we proceed by case analysis on $z', w'$.

        \begin{itemize}
            \item If $z', w' \in \fragment{\Left(s')}{\Left(q')}$, then the first summand can be obtained from applying \cref{thm:fed-unbalance} to $\bF\fragmentco{\Left(s)}{\Left(q)}$ and $\bF'$ where the required inputs are supplied as the first forest contains no spine nodes.
            The second summand can be obtained from Input~\eqref{it:u_sed:input2} of the $(s, s', q, q')$-\UDISED instance.
            The last summand can be obtained from applying \cref{thm:fed-unbalance} to $\bF\fragmentco{\Right(q)}{\Right(s)}$ and $\bF'$ where again the first forest contains no spine nodes.
    
            \item The case $z', w' \in \fragment{\Left(q') + 1}{\Right(s')}$ can be handled analogously.
            We note that the range $\fragment{\Left(q') + 1}{\Right(s')}$ is equivalent to the range $\fragment{\Right(q')}{\Right(s')}$ with respect to $\similarity$ since $\bF'\fragmentco{\Left(q') + 1}{\Right(s')} = \bF'\fragmentco{\Right(q')}{\Right(s')}$.

            \item If $z' \in \fragment{\Left(s')}{\Left(q')}, w' \in \fragment{\Left(q') + 1}{\Right(s')}$ then the first and third summand can be obtained from \cref{thm:fed-unbalance} and
            the second summand from Input~\eqref{it:u_sed:input1} of the $(s,s',q,q')$\text{-}\UDISED instance.
            For the second summand, we note that $\bF'\fragmentco{z'}{\Left(q') + 1} = \bF'\fragmentco{z'}{\Left(q')}$ so we can retrieve this input from Input~\eqref{it:u_sed:input2} instead.
            For the first and third summand, we compute \UFED on forests $\bF\fragmentco{\Left(s)}{\Left(q)}, \bF'$ and $\bF\fragmentco{\Right(q)}{\Right(s)}, \bF'$, respectively, noting that in both cases the first forest contain no spine nodes so the inputs are at our disposal.
        \end{itemize}

        In all cases, the computational bottleneck is the application of \Cref{thm:fed-unbalance}, which requires time $(m/n')^{1 + o(1)} \cdot (\sT_{\MUL}(n') + n'^{2 + o(1)})$.
        \unbalanced{In the unweighted case, we use \Cref{thm:fed-unbalance-unweighted} which requires time $(m/n')^{1 + o(1)} \cdot (\sT_{\MonMUL}(n') + n'^{2 + o(1)} g(n'))$.}

        \item If $s$ is mapped to a node $r' \in \bS'$ such that $s' \preceq r' \prec q'$, then
        \begin{align*}
            \MoveEqLeft \similarity(\sub(s), \bF\fragmentco{x'}{y'}) = \\
            &\max_{r' \in \bS' \mid s' \preceq r' \prec q'} \Big\{ \
            \eta(s,r')
            +\similarity(\bF\fragmentco{x}{\Left(q)} + \sub(q) + \bF\fragmentco{\Right(q)}{y}, \bF'\fragmentco{\Left(r')+1}{\Right(r')-1})
            \ \Big\}.
        \end{align*}
        Note, we already computed the values
        $\similarity(\bF\fragmentco{x}{\Left(q)} + \sub(q) + \bF\fragmentco{\Right(q)}{y}, \bF'\fragmentco{\Left(r')+1}{\Right(r')-1}$
        in the previous case, so there is nothing more left to calculate in this case.

        \item If $s$ is mapped to any node contained in $q' = \sub(q')$,
        then $\similarity(\sub(s), \bF\fragmentco{x'}{y'}) = \similarity(\sub(s), \sub(q')) = \eta(s, q')$ which can easily be computed.

        \item If $s$ is mapped to any node contained in $\bF'\fragmentco{x'}{\Left(q')}$,
        then we have that
        $\similarity(\sub(s), \bF'\fragmentco{x'}{y'}) = \similarity(\sub(s), \sub(v'))$
        for some $v' \in \bF'\fragmentco{x'}{\Left(q')}$ and $v' \notin \bS'$.
        Note, all these values are already at our disposal in the \SED Problem.

        \item Lastly, if $s$ is mapped to any node contained in $\bF'\fragmentco{\Right(q')}{y'}$,
        then we can apply reverse symmetry obtaining the previous case.
    \end{itemize}

    We now discuss Outputs~\eqref{it:u_sed:output2} and \eqref{it:u_sed:output3}.
    We will explicitly describe the algorithm for computing Output~\eqref{it:u_sed:output2}, noting that Output~\eqref{it:u_sed:output3} can be computed analogously.
    If $s \prec q$ is aligned by $\similarity(\sub(s), \bF'\fragmentco{x'}{y'})$, then we apply \Cref{lem:sim_w_cut} to $\sub(s), \bF', \bS, q$ as in \Cref{lem:used_balance} noting that on our unbalanced instance, the algorithm requires time $(m/n')^{1 + o(1)} \cdot (\sT_{\MUL}(n') + n'^{2 + o(1)})$.
    On the other hand, if $s$ is not aligned, we argue similarly as in \Cref{lem:used_balance}.
    We can write
    \[
        \similarity(\sub(s), \bF'\fragmentco{x'}{y'})
        = \similarity(\bF\fragmentco{\Left(s)}{\Left(q)} + \bF\fragmentco{\Left(q)}{\Right(s)}, \bF'\fragmentco{x'}{y'})
    \]
    as $s$ does not contribute to the similarity.
    Via \cref{rmk:paired_anchors}\eqref{it:paired_anchors:a} we compute
    \begin{align*}
        \similarity(\sub(s), \bF\fragmentco{x'}{y'})
        = \max\nolimits_{\substack{
            z' \in \bF'\fragmentco{\Left(s')}{\Left(q')} \mid w' \geq z' \geq x'\\
            w' \in \bF'\fragmentco{\Left(s')}{\Left(q')} \mid z' \leq w' \leq y'}} \Big\{ \
        &\similarity(\bF\fragmentco{\Left(s)}{\Left(q)}, \bF\fragmentco{x'}{z'}) \\
        &+ \similarity(\bF\fragmentco{\Left(q)}{\Right(q)}, \bF\fragmentco{z'}{w'}) \\
        &+ \similarity(\bF\fragmentco{\Right(q)}{\Right(s)}, \bF\fragmentco{w'}{y'}) \ \Big\},
    \end{align*}
    where the first and third summand can be obtained from \cref{thm:fed} and \Cref{thm:unweighted-fed-balance} on
    the forests
    $\bF\fragmentco{\Left(s)}{\Left(q)}$, $\bF'\fragmentco{\Left(s')}{\Left(q')}$
    and $\bF\fragmentco{\Right(q)}{\Right(s)}$, $\bF'\fragmentco{\Left(s')}{\Left(q')}$,
    respectively, and
    the second summand from Input~\eqref{it:sed:input2} of the $(s,s',q,q')$\text{-}\DISED instance.
    The computation can then be performed by max-plus products of three $(n' + 1) \times (n'+ 1)$ matrices.
    
    To bound the overall running time, we note that obtaining the outputs of \Cref{thm:fed-unbalance} takes time $(m/n')^{1+o(1)} \cdot (\sT_{\MUL}(n') + n'^{2 + o(1)})$.
    \unbalanced{In the unweighted setting, we instead obtain $(m/n')^{1+o(1)} \cdot (\sT_{\MonMUL}(n') + n'^{2 + o(1)}g(n'))$.}
\end{proof}

\subsubsection{Computing \SED on Unbalanced Instances with \UDISED}

We give the result for both weighted and unweighted \SED.

\begin{restatable}{theorem}{used}
    \label{thm:used}
    There is an $(n/n')^{1+o(1)} \cdot \left(\sT_{\MUL}(n') + n'^{2 + o(1)} \right)$ time algorithm for \SED, where $n = \abs{\bF}, n' = \abs{\bF'}$ and $n \geq n'$. 
\end{restatable}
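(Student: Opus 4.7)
The plan is to mirror the proof of \cref{thm:sed}, substituting the unbalanced subroutines $\mA_{\UDISED}$ (\cref{lem:dised_unbalanced}) and \UFED (\cref{thm:fed-unbalance}) for $\mA_{\DISED}$ and \FED. Let $s$ be the root of $\bF$ (the first node in $\bS$) and $q$ the bottom leaf of $\bS$; analogously fix $s', q'$ in $\bS'$. We invoke $\mA_{\UDISED}$ on the top-level $(s,s',q,q')$\text{-}\UDISED instance, whose size is $(m,n')$ with $m \leq n$. By \cref{lem:dised_unbalanced} this single call runs in $(n/n')^{1+o(1)}(\sT_{\MUL}(n') + n'^{2+o(1)})$ time. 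Exactly as in the proof of \cref{thm:sed}, for every pair $(v,v') \in \bS \times \bS'$ the recursion eventually reaches a base-case instance in which $v$ and $v'$ appear among the coordinates of Output~\eqref{it:u_sed:output1}--\eqref{it:u_sed:output3}, so every similarity $\similarity(\sub(v), \sub(v'))$ demanded by \SED is produced somewhere along the way.

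The remaining task is to assemble the inputs to the top-level \UDISED instance within the same budget. Inputs~\eqref{it:u_sed:input2} and \eqref{it:u_sed:input3} ask for $\similarity(\sub(q), \bF'\fragmentco{x'}{y'})$ on certain ranges; since $q$ is a leaf, $\sub(q)$ is a single node and each such value equals $\max\{0, \max_{v' \in \bF'\fragmentco{x'}{y'}} \eta(q,v')\}$, computable for every relevant range by a bottom-up dynamic program on $\bF'$ in $\Oh(n'^2)$ time. For Input~\eqref{it:u_sed:input1} the key structural observation (identical to the one used in the proof of \cref{thm:sed}) is that $\bS \cap \bF\fragmentco{\Left(s)}{\Right(q)} = \{q\}$ and $\bS' \cap \bF'\fragmentco{\Left(q')}{\Right(s')} = \{q'\}$, because every spine node strictly above $q$ has its second bi-order occurrence strictly past $\Right(q)$. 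Together with the trivial value $\similarity(\sub(q), \sub(q')) = \max\{0, \eta(q,q')\}$, the \SED input therefore supplies the full table $\similarity(\sub(v), \sub(v'))$ needed to invoke \UFED on the pairs $(\bF\fragmentco{\Left(s)}{\Right(q)}, \bF'\fragmentco{\Left(q')}{\Right(s')})$ and $(\bF\fragmentco{\Left(q)}{\Right(s)}, \bF'\fragmentco{\Left(s')}{\Right(q')})$. The three subcases of Input~\eqref{it:u_sed:input1} are then read off as, respectively, the suffix-prefix output of the first \UFED call, the prefix-suffix output of the second, and (for the degenerate case where the $\bF$-side equals $\sub(q)$) the same bottom-up DP used for Inputs~\eqref{it:u_sed:input2} and \eqref{it:u_sed:input3}. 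By \cref{thm:fed-unbalance}, each \UFED call costs $(n/n')^{1+o(1)}(\sT_{\MUL}(n') + n'^{2+o(1)})$, matching the overall bound.

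The main obstacle is essentially bookkeeping: one must check that each of the three subcases of Input~\eqref{it:u_sed:input1} maps, after re-indexing and possibly a reverse-symmetry flip of the roles of left/right border, exactly onto one of the border-to-border outputs of \UFED on the appropriate forest pair, and that the \UDISED recursion invariant really does cover every pair $(v,v') \in \bS \times \bS'$. Both are direct adaptations of the corresponding arguments already spelled out for the balanced scheme in \cref{thm:sed}; no new combinatorial ingredients are required, only the systematic replacement of the balanced black-boxes ($\mA_{\DISED}$ and \cref{thm:fed}) by their unbalanced counterparts ($\mA_{\UDISED}$ and \cref{thm:fed-unbalance}).
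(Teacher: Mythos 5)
Your proposal is correct and follows essentially the same route as the paper's proof: run $\mA_{\UDISED}$ on the top-level $(s,s',q,q')$ instance, observe that the recursion's base cases produce every $\similarity(\sub(v),\sub(v'))$ for $(v,v')\in\bS\times\bS'$, compute Inputs~\eqref{it:u_sed:input2} and \eqref{it:u_sed:input3} via an $\Oh(n'^2)$ DP (since $\sub(q)$ is a leaf), and assemble Input~\eqref{it:u_sed:input1} from two \UFED calls on $\bF\fragmentco{\Left(s)}{\Right(q)},\,\bF'\fragmentco{\Left(q')}{\Right(s')}$ and $\bF\fragmentco{\Left(q)}{\Right(s)},\,\bF'\fragmentco{\Left(s')}{\Right(q')}$ plus the degenerate $\sub(q)$ case. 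The identification of the three Input~\eqref{it:u_sed:input1} subcases with the suffix-prefix/prefix-suffix \UFED outputs and the single-node DP matches the paper's breakdown into \eqref{eq:usedinput:2}, \eqref{eq:usedinput:3}, \eqref{eq:usedinput:4} exactly.
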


\unweightused*

Since the proof is essentially identical, we prove both results, pointing out modifications where necessary.

\begin{proof}[\lipicsStart Proof of \cref{thm:used} and \cref{thm:unweighted-used}]
    As in \Cref{thm:sed}, we fix $s, s'$ as the roots of $\bF, \bF'$ and $q, q'$ as the last nodes in the spines of $\bS, \bS'$.
    The algorithm then runs $\mA_{\UDISED}$ on the $(s, s', q, q')$\text{-}\UDISED instance and thus takes time $(n/n')^{1+o(1)} \cdot \left( \sT_{\MUL}(n') + n'^{2 + o(1)} \right)$.
    \unbalanced{In the unweighted setting, \UDISED only takes time $(n/n')^{1+o(1)} \cdot \left( \sT_{\MonMUL}(n') + n'^{2 + o(1)} g(n') \right)$.}

    We claim that this computes all required outputs $\similarity(\sub(v), \sub(v'))$ for $(v, v') \in \bS \times \bS'$.
    Fix a pair $(v, v') \in \bS \times \bS'$. 
    Assume we always recurse on the instance satisfying $u \preceq v \prec w$ and $u' \preceq v' \prec w'$ until we reach one of the following:
    \begin{itemize}
        \item A $(u, u', w, w')$\text{-}\UDISED where $u$ immediately precedes $w$ and $u' \preceq v' \prec w'$.
        In this case $\similarity(\sub(v), \sub(v'))$ is among the outputs of the \UDISED instance.

        \item A $(u, u', w, w')$\text{-}\DISED. 
        In this case, the outputs are computed following identical arguments as \Cref{thm:sed}.
    \end{itemize}

    To conclude the proof, we bound the run-time of obtaining the inputs to the $(s, s', q, q')$\text{-}\UDISED instance.
    We begin with Input~\eqref{it:u_sed:input1} and rewrite the input indices as
    \begin{align}
        & \{\ (x,x',y,y') \mid (x,x') \in \fragment{\Left(s)}{\Left(q)} \times \Left(q'),
        (y,y') \in \Right(q) \times \fragment{\Right(q')}{\Right(s')}\ \} \label{eq:usedinput:2}\\
        \cup \quad
        & \{\ (x,x',y,y') \mid (x,x') \in \Left(q) \times \fragment{\Left(s')}{\Left(q')}, \
        (y,y') \in \fragment{\Right(q)}{\Right(s)} \times \Right(q')\ \} \label{eq:usedinput:3}\\
        \cup \quad
        & \{\ (x,x',y,y') \mid (x,x') \in \Left(q) \times \fragment{\Left(s')}{\Left(q')}, \
        (y,y') \in \Right(q) \times \fragment{\Right(q')}{\Right(s')}\ \}. \label{eq:usedinput:4}
    \end{align}
    We can compute the similarity for the various subsets of indices \eqref{eq:usedinput:2}, \eqref{eq:usedinput:3} and \eqref{eq:usedinput:4} as follows.
    \begin{itemize}
        \item For~\eqref{eq:usedinput:2}, we apply \Cref{thm:fed-unbalance} to the forests $\bF\fragmentco{\Left(s)}{\Right(q)}$ and $\bF'\fragmentco{\Left(q')}{\Right(s')}$.
        Note that both forests do not contain any spine nodes so we are given as inputs to the \SED instance all required inputs to the \UFED instance, and thus we may compute $\similarity(\bF\fragmentco{x}{\Right(q)}, \bF'\fragmentco{\Left(q')}{\Right(s')})$ in time $(n/n')^{1+o(1)} \cdot \left( \sT_{\MUL}(n') + n'^{2 + o(1)} \right)$.

        \item For~\eqref{eq:usedinput:3}, we likewise apply \Cref{thm:fed-unbalance} but this time to the forests $\bF\fragmentco{\Left(q)}{\Right(s)}$ and $\bF'\fragmentco{\Left(s')}{\Right(q')}$.
        Following similar arguments as \eqref{eq:usedinput:2}, we note that the required similarities $\similarity(\bF\fragmentco{\Left(q)}{y}, \bF'\fragmentco{x'}{\Right(q')})$ are given by Output~\eqref{it:ufed:output2} in time $(n/n')^{1+o(1)} \cdot \sT_{\MUL}(n')$.

        \item For~\eqref{eq:usedinput:4}, we observe that $\sub(q) = q$ is a single node so we can compute the necessary entries using dynamic programming in $O(n'^2)$ time.
    \end{itemize}
    Finally, for Inputs~\eqref{it:u_sed:input2},~\eqref{it:u_sed:input3} we again note that $\sub(q) = q$ is a single node so we can obtain the necessary entries in $O(n'^2)$ time.
    
    \unbalanced{In the unweighted setting, computing \UFED takes time $(n/n')^{1+o(1)} \cdot \left( \sT_{\MonMUL}(n') + n'^{2 + o(1)} g(n') \right)$.}
\end{proof}

\section{Reduction from Spine Edit Distance to Tree Edit Distance}
\label{sec:ted}

In this section, we prove \cref{lem:sed_to_ted}, i.e.,
we reduce \TED on \SED.\footnote{Although \cite{BGHS19} briefly mentions the existence of this reduction, it does so only in a footnote.}

\sedtoted*

We will first reduce the following variant of \SED to \SED: 

\defproblem{Spine to All-Subtree Tree Edit Distance ($\SASED$)}{Two forests $\bF,\bF'$, a spine $\bS \subseteq \bF$, and $\similarity(\sub(v),\sub(v'))$ for all $(v,v') \in (\bF \setminus \bS) \times \bF'$}{$\similarity(\sub(v),\sub(v'))$ for all $(v,v') \in \bF \times \bF'$.}

Utilizing the above reduction, we further reduce \ETED to \SED, which we define as a more general problem than \TED.

\defproblem{All Subtrees Tree Edit Distance ($\ETED)$}{Two forests $\bF,\bF'$.}{$\similarity(\sub(v),\sub(v'))$ for all $(v,v') \in \bF \times \bF'$.}

Our reduction is based on a tree decomposition
similar to the well-known \emph{heavy path decomposition} introduced by Harel and Tarjan \cite{HeavyLight}.
For our purposes, we can formulate this decomposition as follows.

\begin{proposition}\label{prp:heavy-light}
    Let $\bF$ be a forest. Then, there exists a spine $\bS\subseteq \bF$ ending at leaf $v$ satisfying
    \[
        \abs{\bF\fragmentco{0}{\Left(v)}} \leq \abs{\bF}/2
        \quad
        \text{and}
        \quad
        \abs{\bF\fragmentco{\Right(v)}{2\abs{\bF}+1}} \leq \abs{\bF}/2.
    \]
    Here, $\bF\fragmentco{0}{\Left(v)}$ and $\bF\fragmentco{\Right(v)}{2\abs{\bF}+1}$ are exactly the nodes right and left w.r.t.~$\bS$, respectively.
\end{proposition}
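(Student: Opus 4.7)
My plan is to prove a strengthened claim by induction and then specialize. Concretely, I will establish: \emph{for any nonempty forest $\bH$ of size $n$ and any nonnegative integers $a,b$ with $a+b\geq n-1$, there exists a spine $\bS\subseteq\bH$ ending at a leaf $v$ such that $\abs{\bH\fragmentco{0}{\Left(v)}} \leq a$ and $\abs{\bH\fragmentco{\Right(v)}{2n+1}} \leq b$.} The proposition then follows by taking $a=b=\lfloor n/2\rfloor$, since $2\lfloor n/2\rfloor \geq n-1$ and $\lfloor n/2\rfloor \leq n/2$. Throughout, I will use that $\bH\fragmentco{0}{\Left(v)}$ is precisely the set of nodes whose entire subtree is traversed before entering $v$ (equivalently, whole subtrees lying strictly ``before'' the root-to-$v$ path), and symmetrically for $\bH\fragmentco{\Right(v)}{2n+1}$.

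The induction has two nested stages. First, I reduce to the single-tree case. Write $\bH = \bT_1 + \cdots + \bT_k$ with $n_i := \abs{\bT_i}$ and $s_i := n_1+\cdots+n_i$. Pick $i^{*}$ to be the smallest index with $s_{i^{*}} > a$, or $i^{*}=k$ if no such index exists. By minimality, $s_{i^{*}-1}\leq a$; and in the nontrivial case $s_{i^{*}}>a$ gives $n - s_{i^{*}} < n - a \leq b + 1$, i.e.\ $n - s_{i^{*}} \leq b$ (the degenerate case is even easier). Since every leaf $v \in \bT_{i^{*}}$ contributes exactly $s_{i^{*}-1}$ nodes to $\bH\fragmentco{0}{\Left(v)}$ from the earlier trees and exactly $n - s_{i^{*}}$ nodes to $\bH\fragmentco{\Right(v)}{2n+1}$ from the later trees, it suffices to find a root-to-leaf spine in $\bT_{i^{*}}$ whose strict-left and strict-right masses are bounded by $a' := a - s_{i^{*}-1}$ and $b' := b - (n-s_{i^{*}})$ respectively. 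A direct computation yields $a' + b' = a+b - n + n_{i^{*}} \geq n_{i^{*}} - 1$, so the strengthened invariant is preserved.

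Second, the single-tree case is handled by recursion on depth. Let $r=\Root(\bT)$ with children $c_1,\ldots,c_m$ of subtree sizes $m_1,\ldots,m_m$, so $\sum_i m_i = n-1$. The root $r$ lies on the spine, so I must extend into some $c_j$. I pick the smallest $j$ with $m_1+\cdots+m_j > a$, taking $j=m$ if no such index exists. Then $m_1+\cdots+m_{j-1}\leq a$ (left siblings fit the left budget), and in the nontrivial case $m_{j+1}+\cdots+m_m = (n-1) - (m_1+\cdots+m_j) < (n-1) - a \leq b$, so the right siblings fit the right budget. Recursing into $c_j$ with new budgets $a - \sum_{i<j} m_i$ and $b - \sum_{i>j} m_i$ whose sum equals $a+b-(n-1-m_j)\geq m_j \geq m_j - 1$ preserves the invariant; the recursion bottoms out at a leaf, where both residual budgets are nonnegative and the spine is complete.

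The argument is essentially a centroid-style construction adapted to forests, so there is no deep obstacle; the only thing that requires care is the bookkeeping of budgets across the two stages, in particular verifying that the inequality $a+b\geq n-1$ is preserved (rather than the slightly stronger but convenient $a+b\geq n$) when passing from $\bH$ to $\bT_{i^{*}}$ and then from $\bT$ to $c_j$. Once this invariant is checked, the induction closes and the proposition is immediate.
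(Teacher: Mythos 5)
Your proof is correct, and it takes a genuinely different (though cousinly) route from the paper's. The paper attaches a virtual root to reduce the forest to a tree and then performs a single greedy walk downward, maintaining the \emph{absolute} invariant that both $\abs{\bF\fragmentco{0}{\Left(u)}}$ and $\abs{\bF\fragmentco{\Right(u)}{2\abs{\bF}+1}}$ stay at most $\abs{\bF}/2$; the key case split is whether the chosen child $w$ has a right sibling (if it does, the right count is already below threshold because that sibling was \emph{not} chosen). You instead prove a strengthened, \emph{parameterized} statement: for any budgets $a,b$ with $a+b \geq n-1$ one can find a spine with strict-left mass $\leq a$ and strict-right mass $\leq b$. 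This lets you avoid the virtual root entirely (your first stage selects which component tree to descend into) and turns the greedy walk into a clean budget-passing induction where the invariant $a'+b' \geq n'-1$ is preserved mechanically. The two approaches are comparable in length; yours is arguably easier to verify step-by-step because the budgets localize all the arithmetic, whereas the paper's walk is more concise but requires one to track global quantities relative to the fixed tree throughout. One small remark: your specialization at the end uses $a=b=\lfloor n/2\rfloor$ and the inequality $2\lfloor n/2\rfloor \geq n-1$; this is exactly why you needed the slightly tighter hypothesis $a+b\geq n-1$ rather than $a+b\geq n$, and you handle that correctly.
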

\begin{proof}
    Attach to $\bF$ a new root such that $\bF$ becomes a tree.
    We start by walking from the (new) root of \(\bF\) to a leaf.
    At each node \(u\), we continue the walk to the rightmost child \(w\) of \(u\)
    that satisfies \(\abs{\bF\fragmentco{0}{\Left(w)}} \leq \abs{\bF}/2\).

    We aim to maintain the invariant \(\abs{\bF\fragmentco{0}{\Left(u)}} \leq \abs{\bF}/2\)
    and \(\abs{\bF\fragmentco{\Right(u)}{2\abs{\bF}+1}} \leq \abs{\bF}/2\).
    This invariant clearly holds at the beginning, and ultimately,
    the walk traces out a spine \(\bS\) with the properties stated.

    For the inductive step, observe that, by the definition of this walk,
    moving from \(u\) to the next node \(w\) ensures \(\abs{\bF\fragmentco{0}{\Left(w)}} \leq \abs{\bF}/2\).
    We must show that \(\abs{\bF\fragmentco{\Right(w)}{2\abs{\bF}+1}} \leq \abs{\bF}/2\).
    To do this, we consider two cases.
    If \(w\) has a right sibling \(w'\), then \(\abs{\bF\fragmentco{0}{\Right(w)}} = \abs{\bF\fragmentco{0}{\Left(w')}} \geq \abs{\bF}/2\),
    which leads to \(\abs{\bF\fragmentco{\Right(w)}{2\abs{\bF}+1}} = \abs{\bF} - \abs{\bF\fragmentco{0}{\Right(w)}} \leq \abs{\bF}/2\).
    On the other hand, if \(w\) is the rightmost child of \(u\),
    then \(\abs{\bF\fragmentco{\Right(w)}{2\abs{\bF}+1}} = \abs{\bF\fragmentco{\Right(u)}{2\abs{\bF}+1}} \leq \abs{\bF}/2\).
\end{proof}

\begin{definition}\label{def:light_right}
    For each forest $\bF$,
    fix an arbitrary spine $\bS$ ending in a leaf $v$ satisfying \cref{prp:heavy-light}.
    Define $\bL(\bF) \coloneqq \bF\fragmentco{0}{\Left(v)}$ and $\bR(\bF) \coloneqq \bF\fragmentco{\Right(v)}{2\abs{\bF}+1}$,
    for the subforest containing the nodes on the left and right w.r.t.~the fixed spine $\bS$, respectively.
    If $\bF=\emptyset$ is empty, then, we set $\bL(\bF)=\bR(\bF)=\emptyset$.
\end{definition}

We are now ready to show the reduction from \SASED to \SED.

\begin{lemma}
\label{lem:sed_to_sased}
Suppose there exists an algorithm for \SED on two forests $\bH,\bH'$
running in time $\sT_{\SED}(m, m') = \Oh(f(m) g(m'))$,
where $m = |\bH|, m' = |\bH'|$ and $f(m) = \Omega(m), g(m') = \Omega(m')$ are some functions.
Then, there is an algorithm for \SASED on two forests
$\bF,\bF'$ running in time $\Oh(f(n) g(n')\log n')$,
where $n = |\bF|, n' = |\bF'|$.
\end{lemma}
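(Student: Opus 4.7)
The plan is to reduce \SASED to \SED by recursively decomposing only the forest $\bF'$ via the heavy-path decomposition of \cref{prp:heavy-light}, keeping $\bF$ and its spine $\bS$ fixed throughout. The outputs of \SASED that are not already inputs are exactly the similarities $\similarity(\sub(v), \sub(v'))$ for $v \in \bS$ and $v' \in \bF'$, so the task is to produce these.

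First I would pick a spine $\bS' \subseteq \bF'$ using \cref{prp:heavy-light}, yielding sub-forests $\bL(\bF')$ and $\bR(\bF')$ of size at most $n'/2$ each and with $\bF' \setminus \bS' = \bL(\bF') \cup \bR(\bF')$. The key structural observation, which I expect to be the main (though mild) point to verify, is that because $\bS'$ is a root-to-leaf path, every $v' \in \bL(\bF')$ satisfies $\sub(v') \subseteq \bL(\bF')$, and symmetrically for $\bR(\bF')$. This makes the similarities of subtrees rooted inside $\bL(\bF')$ (respectively $\bR(\bF')$) unchanged when we view $\bL(\bF')$ (resp.\ $\bR(\bF')$) as a forest in its own right, which is what makes the recursion well-defined.

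Next I would recursively solve \SASED on $(\bF, \bL(\bF'), \bS)$ and $(\bF, \bR(\bF'), \bS)$. The inputs these recursive calls require, namely $\similarity(\sub(v), \sub(v'))$ for $v \in \bF \setminus \bS$ and $v' \in \bL(\bF') \cup \bR(\bF')$, form a subset of the inputs to the outer \SASED instance and so need no extra computation. Their outputs give $\similarity(\sub(v), \sub(v'))$ for all $v \in \bF$ and $v' \in \bF' \setminus \bS'$. Combining these with the original \SASED inputs covering $(\bF \setminus \bS) \times \bF'$, we now hold every value in $(\bF \times \bF') \setminus (\bS \times \bS')$, which is exactly the input expected by a \SED call on $(\bF, \bF', \bS, \bS')$. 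Invoking the \SED algorithm fills in the missing block $\bS \times \bS'$, and together with the previously computed values this yields the full \SASED output.

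For the running time, writing $T(n, n')$ for the complexity of \SASED, this scheme gives the recurrence $T(n, n') \le 2\, T(n, n'/2) + \Oh(f(n) g(n')) + \Oh(n n')$, where the last term accounts for bookkeeping and the trivial base case $n' \le 1$. Using $f(n) = \Omega(n)$ and $g(n') = \Omega(n')$, the master theorem yields $T(n, n') = \Oh(f(n) g(n') \log n')$, matching the claim. Everything beyond the structural observation above reduces to routine bookkeeping of which inputs are supplied by the outer instance and which by the recursive calls.
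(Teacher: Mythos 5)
Your proposal is correct and follows essentially the same approach as the paper: recursively decomposing only $\bF'$ via \cref{prp:heavy-light}, recursing on $(\bF,\bL(\bF'))$ and $(\bF,\bR(\bF'))$ with the spine $\bS$ fixed, then invoking \SED once at each level, yielding the recurrence $\sT(n,n')=2\sT(n,n'/2)+\Oh(f(n)g(n'))$. The structural remark that $\sub(v')\subseteq\bL(\bF')$ for $v'\in\bL(\bF')$ (and symmetrically for $\bR(\bF')$) is a correct and worthwhile sanity check that the paper leaves implicit.
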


\begin{proof}
In the base case, if $n' = 0$, we can trivially return an empty set. 

In general, we find a spine $\bS' \subseteq \bF'$ using \cref{prp:heavy-light}, and obtain $\bL(\bF')$ and $\bR(\bF')$ according to \cref{def:light_right}. Then we recursively solve \SASED on $(\bF, \bL(\bF'))$ and $(\bF, \bR(\bF'))$. 

Note that with the input of the \SASED instance $(\bF, \bF')$, and the outputs of the \SASED instances $(\bF, \bL(\bF'))$ and $(\bF, \bR(\bF'))$, we have collected all inputs for \SED between $(\bF, \bF')$, so we can run the \SED algorithm to finish the algorithm in $\Oh(f(n) g(n'))$ time. 

The running time $\sT(n, n')$ of the algorithm can be formulated as the following formula:
\[
\sT(n, n') = 2\sT(n, n' / 2) + \Oh(f(n) g(n')),
\]
which can be upper bounded by $\sT(n, n') = \Oh(f(n) g(n') \log n')$. 
\end{proof}

Now we proceed to show the reduction from \ETED to \SED.

\begin{lemma}\label{lem:sed_to_eted}
Suppose there exists an algorithm for \SED on two forests $\bH,\bH'$
running in time $\sT_{\SED}(m, m') = \Oh(f(m) g(m'))$,
where $m = |\bH|, m' = |\bH'|$ and $f(m) = \Omega(m), g(m') = \Omega(m')$ are some functions.
Then, there is an algorithm for \ETED on two forests
$\bF,\bF'$ running in time $\Oh(f(n) g(n') \log^2 \max(n', n))$,
where $n = \abs{\bF}, n' = \abs{\bF'}$.
\end{lemma}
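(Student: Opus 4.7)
The plan is to apply a heavy-path style decomposition of $\bF$ and recurse on the two ``light'' side pieces, using Lemma~\ref{lem:sed_to_sased} to handle the nodes lying on the chosen spine at each level. The base case is $\abs{\bF} = 0$, where the output is empty.

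For the recursive step, first invoke \cref{prp:heavy-light} on $\bF$ to obtain a spine $\bS \subseteq \bF$ together with the two subforests $\bL(\bF),\bR(\bF)$ of \cref{def:light_right}, each of size at most $n/2$. The key structural observation is that $\bL(\bF) = \bF\fragmentco{0}{\Left(v)}$ and $\bR(\bF) = \bF\fragmentco{\Right(v)}{2\abs{\bF}+1}$ are unions of whole subtrees of $\bF$ that are disjoint from $\bS$; in particular, for every $u \in \bL(\bF) \cup \bR(\bF) = \bF \setminus \bS$ the subtree $\sub(u)$ lies entirely within $\bL(\bF)$ or $\bR(\bF)$. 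Hence recursively solving \ETED on $(\bL(\bF), \bF')$ and $(\bR(\bF), \bF')$ produces exactly the values $\similarity(\sub(v), \sub(v'))$ for all $(v,v') \in (\bF \setminus \bS) \times \bF'$.

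These are precisely the inputs required for an \SASED instance on $(\bF, \bF')$ with spine $\bS$. Applying \cref{lem:sed_to_sased} to this instance yields the remaining values $\similarity(\sub(v), \sub(v'))$ for $v \in \bS$ and $v' \in \bF'$, and together with the recursive outputs this covers all of $\bF \times \bF'$, as required for \ETED.

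For the running time, let $\sT(n, n')$ denote the cost of the algorithm on an instance of sizes $(n, n')$. The recurrence becomes
\[
\sT(n, n') \;\le\; \sT(n_1, n') + \sT(n_2, n') + \Oh\bigl(f(n)\, g(n')\, \log n'\bigr),
\]
where $n_1, n_2 \le n/2$ and $n_1 + n_2 \le n$, and the additive term comes from Lemma~\ref{lem:sed_to_sased}. Since $f$ is at least linear and the subproblems partition $\bF \setminus \bS$, the total additive work across all nodes of the recursion tree at a single level is $\Oh(f(n)\, g(n')\, \log n')$; with depth $\Oh(\log n)$ this solves to $\Oh(f(n)\, g(n')\, \log n \log n') = \Oh(f(n)\, g(n')\, \log^2 \max(n, n'))$, matching the claimed bound.

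The only nontrivial step is the structural claim that $\sub(u) \subseteq \bL(\bF)$ (resp. $\bR(\bF)$) for $u$ in the corresponding side, which follows directly from the bi-order characterization $u \in \bF\fragmentco{\ell}{r} \iff \ell \le \Left(u)$ and $\Right(u) \le r$ applied to the endpoints of the intervals defining $\bL(\bF)$ and $\bR(\bF)$. Everything else reduces to bookkeeping of the recursion and invoking the already-established lemma.
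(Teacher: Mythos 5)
Your proof is correct, and it takes a genuinely different --- and arguably cleaner --- route than the paper. The paper performs a heavy-path decomposition of \emph{both} $\bF$ and $\bF'$ simultaneously: it recurses on the four quadrants $(\bL(\bF),\bL(\bF'))$, $(\bL(\bF),\bR(\bF'))$, $(\bR(\bF),\bL(\bF'))$, $(\bR(\bF),\bR(\bF'))$, then applies \cref{lem:sed_to_sased} four times (on $(\bF,\bL(\bF'))$, $(\bF,\bR(\bF'))$, $(\bF',\bL(\bF))$, $(\bF',\bR(\bF))$) to fill in $\bF\times\bF' \setminus (\bS\times\bS')$, and finally calls \SED on $(\bF,\bF')$ for the spine-versus-spine pairs. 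The recurrence is $\sT(n,n') = 4\sT(n/2,n'/2) + \Oh(f(n)g(n')\log\max(n,n'))$. You instead decompose only $\bF$, recurse on two subforests $(\bL(\bF),\bF')$ and $(\bR(\bF),\bF')$, and then make a single \SASED call on $(\bF,\bF')$ with spine $\bS$: since \SASED by definition already upgrades $(\bF\setminus\bS)\times\bF'$ inputs to full $\bF\times\bF'$ outputs, no separate \SED call and no second-forest decomposition are needed. Your two-way recurrence $\sT(n,n')\le\sT(n_1,n')+\sT(n_2,n')+\Oh(f(n)g(n')\log n')$ with $n_1,n_2\le n/2$ yields $\Oh(f(n)g(n')\log n\log n') \le \Oh(f(n)g(n')\log^2\max(n,n'))$, matching the paper's bound. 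Both analyses implicitly use a mild ``niceness'' assumption on $f$ (e.g. superadditivity or $f(m)=\Theta(m^a)$ with $a\ge 1$) so that the per-level sums telescope; this is not implied by $f(m)=\Omega(m)$ alone, but the paper's recurrence shares the same implicit hypothesis, so you lose nothing. Your structural observation --- that $\bL(\bF)\cup\bR(\bF) = \bF\setminus\bS$ consists of whole subtrees, so $\sub_{\bL(\bF)}(u)=\sub_{\bF}(u)$ and the recursive \ETED outputs are exactly the \SASED inputs --- is the right justification and is correct. Net effect: fewer recursive calls and fewer \SASED invocations per level, at no cost in the final bound.
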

\begin{proof}
In the base case, if $n = 0$ or $n'=0$, we can simply return an empty set. 

In general, given $\bF, \bF'$, we use \cref{prp:heavy-light} and \cref{def:light_right} to prepare $\bS, \bL(\bF), \bR(\bF), \bS', \bL(\bF'), \bR(\bF')$. We recursively solve \ETED in $4$ instances $(\bL(\bF), \bL(\bF')), (\bL(\bF), \bR(\bF')), (\bR(\bF), \bL(\bF')), (\bR(\bF), \bR(\bF'))$. 

Next, we apply \cref{lem:sed_to_sased} on inputs $(\bF, \bL(\bF')), (\bF, \bR(\bF')), (\bF', \bL(\bF)), (\bF', \bR(\bF))$. Note that applying \cref{lem:sed_to_sased} on the first two instances clearly takes time $\Oh(f(n) g(n') \log n')$. Applying \cref{lem:sed_to_sased} directly on the last two instances would take time $\Oh(f(n') g(n) \log n)$. However, notice that by symmetry, if there exists a \SED algorithm with running time $\Oh(f(m) g(m'))$, there is also a \SED algorithm with running time $\Oh(g(m) f(m'))$ by swapping the two input forests. Applying \cref{lem:sed_to_sased} on the last two instances assuming the existence of a \SED algorithm with running time $\Oh(g(m) f(m'))$ gives us a running time $\Oh(f(n) g(n') \log n)$. Hence, the overall running time of applying \cref{lem:sed_to_sased} is $\Oh(f(n) g(n') \log \max(n', n))$. 

Finally, at this point we have collected all required inputs for \SED on input $(\bF, \bF')$, so we can apply the assume \SED algorithm in time $\Oh(f(n) g(n'))$. 

Let $\sT(n, n')$ be the running time of this algorithm, then it can be written as 
\[
\sT(n, n') = 4\sT(n/2, n'/2) + \Oh(f(n) g(n') \log \max(n', n)), 
\]
which can be upper bounded by $\sT(n, n') = \Oh(f(n) g(n') \log^2 \max(n', n))$. 
\end{proof}

Indeed, \cref{lem:sed_to_ted} follows directly from \cref{lem:sed_to_eted}.

\sedtoted

\begin{proof}
   We attach new roots \(v\) and \(v'\) to the two forests \(\bF\) and \(\bF'\) in the \TED instance, resulting in the trees \(\bT\) and \(\bT'\)
   (if the forests are already trees, the new root becomes the parent of the original root).
   We set weight such that we enforce that the two roots are both deleted.
   This ensures that \(\similarity(\bF,\bF') = \similarity(\bT,\bT') = \similarity(\sub(v),\sub(v'))\),
   which can be computed using \cref{lem:sed_to_eted}.
\end{proof}

As corollaries, we obtain the following results on tree edit distances, by applying \Cref{lem:sed_to_ted} to \Cref{thm:used} or \Cref{thm:unweighted-used}.

\TEDtheorem*

\unweightedTED*

\bibliographystyle{alpha}
\bibliography{main}

\end{document}